\theoremstyle{definition}
\newtheorem{definition}{Definition}
\newtheorem{lemma}{Lemma}
\newtheorem{proposition}[lemma]{Proposition}
\newtheorem{corollary}[lemma]{Corollary}
\newtheorem{theorem}{Theorem}
\theoremstyle{remark}
\newtheorem{remark}{Remark}
\theoremstyle{plain}
\newtheorem{example}{Example}
\begin{document}

\title{Branching Bisimilarity on Normed BPA Is EXPTIME-complete}

% author names and affiliations
% use a multiple column layout for up to three different
% affiliations

\author{\IEEEauthorblockN{Chaodong He \ \ \ \  \ \ \   Mingzhang Huang}
\IEEEauthorblockA{BASICS, Department of Computer Science and Engineering,  Shanghai Jiao Tong University
}
}

\maketitle

% As a general rule, do not put math, special symbols or citations
% in the abstract
\begin{abstract}
We put forward an exponential-time algorithm for deciding branching bisimilarity on normed BPA (Bacis Process Algebra) systems.  The decidability of branching (or weak)  bisimilarity on normed BPA was once a long standing open problem which was closed by Yuxi Fu in~\cite{DBLP:conf/icalp/Fu13}. The $\mathrm{EXPTIME}$-hardness is an inference of a slight modification of the reduction presented by Richard Mayr~\cite{Mayr2005}. Our result claims that this problem is $\mathrm{EXPTIME}$-complete.
\end{abstract}

\IEEEpeerreviewmaketitle

\section{Introduction}

Basic process algebra (BPA)~\cite{Baeten:1991:PA:103272} is a fundamental model of infinite state systems, with its famous counterpart in the theory of formal languages: context free grammars in Greibach normal forms, which generate the entire context free languages.
In 1987, Baeten, Bergstra and Klop~\cite{BaetenBergstraKlop1987} proved a
surprising result at the time that strong bisimilarity on normed BPA is decidable.  This result is in sharp contrast to the classical fact that language equivalence is undecidable for context free grammar~\cite{Hopcroft:1990:IAT:574901}.
After this remarkable discovery, decidability and complexity issues of bisimilarity checking on infinite state systems have been intensively investigated. See~\cite{DBLP:conf/concur/JancarM99,Burkart00verificationon,DBLP:journals/iandc/MollerSS04,Srba2004,Kucera2006}
for a number of surveys.

As regards strong bisimilarity on normed BPA,
H\"uttel and Stirling~\cite{DBLP:conf/lics/HuttelS91} improved
the result of Baeten, Bergstra and Klop using a more simplified proof by relating
the strong bisimilarity of two normed BPA processes to the existence of a successful
tableau system.  Later, Huynh and Tian~\cite{DBLP:journals/tcs/HuynhT94} showed
that the problem is in $\Sigma_{2}^{\mathrm{P}}$, the second level of the polynomial hierarchy.
Before long, another significant discovery was made by Hirshfeld,
Jerrum and Moller~\cite{DBLP:journals/tcs/HirshfeldJM96} who showed that the problem can even be decided in  polynomial
time.  Improvements on running time was made later in~\cite{DBLP:conf/mfcs/LasotaR06,DBLP:conf/fsttcs/CzerwinskiL10,CzerwinskiPhD}.

The decidability of strong bisimilarity on general BPA is affirmed by Christensen,
H\"{u}ttel and Stirling~\cite{ChristensenHuttelStirling1992}. 2-$\mathrm{EXPTIME}$ is claimed to be an upper bound by Burkart, Caucal and Steffen~\cite{DBLP:conf/mfcs/BurkartCS95} and is explicitly proven recently by Jan\v{c}ar~\cite{DBLP:journals/corr/abs-1207-2479}.  As to the lower bound, Kiefer~\cite{DBLP:journals/ipl/Kiefer13} achieves $\mathrm{EXPTIME}$-hardness, which is an improvement of the previous $\mathrm{PSPACE}$-hardness obtained by Srba~\cite{DBLP:conf/icalp/Srba02}.

In the presence of silent actions, however, the picture is less clear.  The decidability for both weak bisimilarity and branching bisimilarity on normed BPA was once long standing open problems. For  weak bisimilarity~\cite{Milner1989}, the problem is still open, while for  branching bisimilarity~\cite{GlabbeekW89,GlabbeekW96}, a remarkable discovery is made by Fu~\cite{DBLP:conf/icalp/Fu13} recently that the problem is decidable.   Very recently, using the key property developed in~\cite{DBLP:conf/icalp/Fu13},  Czerwi\'{n}ski and Jan\v{c}ar shows that there exists an exponentially large bisimulation base for branching bisimilarity on normed BPA, and by guessing the base, they show that the complexity of this problem is in $\mathrm{NEXPTIME}$~\cite{DBLP:journals/corr/CzerwinskiJ14}.  The current best lowerbound for weak bisimilarity  is the $\mathrm{EXPTIME}$-hardness established by Mayr~\cite{Mayr2005}, whose proof can be slightly modified to show
the $\mathrm{EXPTIME}$-hardness for branching bisimilarity as well. As to the general BPA, decidability of branching bisimilarity is still unknown.

In this paper, we confirm that an exponential time algorithm exists for checking branching bisimilarity on normed BPA.  Comparing with the known $\mathrm{EXPTIME}$-hardness result, we get the result of $\mathrm{EXPTIME}$-completeness. Thus the complexity class of branching bisimilarity on normed BPA is completely determined.

Basically, we introduce a family of relative bisimilarities parameterized by the reference sets, which can be represented by a decomposition base defined in this paper.  The branching bisimilarity is exactly the relative bisimilarity whose reference set is the empty set.   We show that this base can be approximated.  The approximation procedure starts from an initial base, which is relatively trivial, and is carried on by repeatedly refining the current base.  In order to define the approximation procedure and to ensure that the  family of relative bisimilarities  is achieved at last, a lot of technical difficulties need overcoming. Some of them are listed here:
\begin{itemize}
\item
Despite the seeming resemblance, the
relative bisimilarities (Section~\ref{sec:Relativized_bisimilarity}) defined in this paper is significantly superior to the corresponding concepts in~\cite{DBLP:journals/corr/CzerwinskiJ14}. The relative bisimilarities in this paper is {\em suffix independent}. This property is extremely crucial for our algorithm.  The correctness of definition is characterized in Theorem~\ref{thm:relative_bis_str}.

\item
We show that a generalized unique decomposition property holds for the family of relative bisimilarities (Theorem~\ref{thm:QUDP_RBisimularity}). In the decompositions, bisimilarities with different reference sets depend and impact on each other.  The notion of decomposition bases (Section~\ref{sec:base}) provides an effective representation of an arbitrary family of process equivalences that satisfies the unique decomposition property.

\item
In an iteration of refinement operation, a new decomposition base is constructed from the old (Section~\ref{sec:naive-algorithm}). That is, a new family of equivalences is obtained from the old one. Besides, comparing with all the previous algorithms~\cite{DBLP:journals/mscs/HirshfeldJM96,DBLP:conf/fsttcs/CzerwinskiL10,DBLP:journals/corr/He14a} which take partition refinement approach,  our refinement procedure possesses several hallmarks:
\begin{itemize}
\item
The new base is constructed via a {\em globally} greedy strategy, which means that
all the relevant equivalences with different reference sets are dealt with as a whole.

\item
The refinement operation in previous works heavily depends on predefined notions of norms and decreasing transitions. These notions can be determined from the normed BPA definition immediately. Such a method does not work at present. Our solution is to define norms in a semantic way (Section~\ref{sec:norms}). Norms, relying on the relevant equivalence relations, together with decreasing transitions, can change dynamically in every iteration.   When we start to construct a new base, no information on norms is available. Thus at this time we cannot determine whether a transition is decreasing.  Our solution is to incorporate the task of computing norms into the global iteration procedure via the greedy strategy.

\item
In previous works the order of process constants can be determined in advance.  Every time a new base is constructed from the old, the constants are treated in the same order.   There is no such predefined order in our algorithm. The treating order is dynamically determined in every iteration.
\end{itemize}
\end{itemize}

Equivalence checking on normed BPA is significantly harder than the related problem on {\em totally} normed BPA.    For totally normed BPA, branching bisimilarity is recently shown polynomial-time decidable~\cite{DBLP:journals/corr/He14a}.
What is obtained in this paper is significantly stronger than previous results~\cite{DBLP:conf/cav/Huttel91,DBLP:journals/iandc/CaucalHT95,DBLP:journals/corr/He14a}.

\section{Preliminaries}\label{sec:preliminaries}

\subsection{Normed Basic Process Algebra}
A {\em basic process algebra} (BPA) system $\Gamma$ is a triple $(\mathbf{C}, \mathcal{A}, \Delta)$, where $\mathbf{C}$ is a finite set of process constants  ranged over by $X,Y,Z,U,V,W$, $\mathcal{A}$ is a finite set of actions, and $\Delta$ is a finite set of transition rules.
The {\em processes}, ranged over by $\alpha,\beta,\gamma,\delta,\zeta,\eta$,  are generated by the following grammar:
\[
\alpha \  \Coloneqq  \  \epsilon  \  \mid \  X \  \mid \  \alpha_1 . \alpha_2.
\]
The syntactic equality is denoted by $=$.
We assume that the sequential composition $\alpha_1.\alpha_2$ is associative up to $=$ and $\epsilon . \alpha = \alpha . \epsilon = \alpha$. Sometimes $\alpha . \beta$ is shortened as $\alpha\beta$.  The set of processes is exactly $\mathbf{C}^{*}$, the finite strings over $\mathbf{C}$.
There is a special symbol $\tau$ in $\mathcal{A}$ for silent transition.   $\ell$ is invariably used to denote an arbitrary action, while $a$ is used to denote a visible (i.e. non-silent) action.
The transition rules in $\Delta$ are of the form $X \stackrel{\ell}{\longrightarrow} \alpha$.
The operational semantics of the processes are defined by  the following labelled transition rules.
\[
\begin{array}{c}
    \cfrac{ ( X\stackrel{\ell}{\longrightarrow} \alpha ) \in\Delta}{X\stackrel{\ell}{\longrightarrow}\alpha} \
                    \qquad \cfrac{ \alpha\stackrel{\ell}{\longrightarrow}\alpha'}{\alpha.\beta \stackrel{\ell}{\longrightarrow} \alpha' . \beta}
 \end{array}
\]
A central dot `$\cdot$' is often used to indicate an arbitrary process. For example, we write $\alpha \stackrel{\ell_1}{\longrightarrow} \cdot \stackrel{\ell_2}{\longrightarrow} \beta$, or even $\alpha \stackrel{\ell_1}{\longrightarrow} \stackrel{\ell_2}{\longrightarrow} \beta$, to mean that there exists some $\gamma$ such that $\alpha \stackrel{\ell_1}{\longrightarrow} \gamma$ and $\gamma \stackrel{\ell_2}{\longrightarrow} \beta$.

If $\asymp$ is an equivalence relation on processes, then
we will use $\alpha\stackrel{\asymp}{\longrightarrow}\alpha'$ to denote the fact $\alpha\stackrel{\tau}{\longrightarrow}\alpha'$ and $\alpha \asymp \alpha'$,  and use
$\alpha\stackrel{\not\asymp}{\longrightarrow}\alpha'$ to denote the fact $\alpha\stackrel{\tau}{\longrightarrow}\alpha'$ and $\alpha \not\asymp \alpha'$. We write $\Longrightarrow $ for the reflexive transitive closure of $\stackrel{\tau}{\longrightarrow}$, and  $\Longleftrightarrow$ for the symmetric closure of $\Longrightarrow$ (i.e.~${\Longleftrightarrow} \stackrel{\mathrm{def}}{=}  {\Longrightarrow} \cup {\Longrightarrow}^{-1}$). Accordingly,
$\stackrel{\asymp}{\Longrightarrow}$ is understood as the  reflexive transitive closure of
$\stackrel{\asymp}{\longrightarrow}$.  That is, $\alpha \stackrel{\asymp}{\Longrightarrow}\alpha'$ if and only if
$\alpha \stackrel{\asymp}{\longrightarrow}\cdot \ldots \cdot  \stackrel{\asymp}{\longrightarrow} \alpha'$.

\begin{remark}
$\alpha \stackrel{\asymp}{\Longrightarrow}\alpha'$ is slightly different from $\alpha \Longrightarrow \alpha' \asymp \alpha$.  If Computation Lemma (Lemma~\ref{lem:computation_lemma}) holds for $\asymp$, then $\alpha \stackrel{\asymp}{\Longrightarrow}\alpha'$ if and only if $\alpha \Longrightarrow \alpha' \asymp \alpha$.

\end{remark}

% the reflexive transitive closure of $\alpha\stackrel{\asymp}{\longrightarrow}\alpha'$.

A process $\alpha$ is {\em normed} if $\alpha \stackrel{\ell_1}{\longrightarrow} \cdot \ldots   \cdot \stackrel{\ell_n}{\longrightarrow} \epsilon$ for some $\ell_1, \ldots, \ell_n$.
A BPA system $\Gamma = (\mathbf{C}, \mathcal{A}, \Delta)$ is  normed if all the processes defined in $\Gamma$ are  normed. In other words, $\Gamma$ is normed if $X$ is normed for every $X \in \mathbf{C}$.
%We write nBPA for normed BPA.
In the rest of the paper, we will invariably use $\Gamma = (\mathbf{C}, \mathcal{A}, \Delta)$ to indicate the concerned normed BPA system.  A BPA system  $\Gamma$ is called {\em realtime} if for every $(X \stackrel{\ell}{\longrightarrow} \alpha) \in \Delta$, we have $\ell \neq \tau$.

A process $\alpha$ is called a {\em ground} process if $\alpha \Longrightarrow \epsilon$.  The set of ground constants is denoted by $\mathbf{C}_{\mathrm{G}}$. Apparently $\mathbf{C}_{\mathrm{G}} \subseteq \mathbf{C}$ and $\alpha$ is ground if and only if $\alpha \in \mathbf{C}_{\mathrm{G}}^{*}$.

\begin{remark}
A BPA system $\Gamma = (\mathbf{C}, \mathcal{A}, \Delta)$ is  {\em totally normed} if and only if rules of the form $X  \stackrel{\tau}{\longrightarrow} \epsilon$ are forbidden.  $\Gamma = (\mathbf{C}, \mathcal{A}, \Delta)$ is totally normed if and only if $\mathbf{C}_{\mathrm{G}} = \emptyset$.
\end{remark}

\subsection{Bisimulation and Bisimilarity}
In the presence of silent actions, branching bisimilarity of van Glabbeek and Weijland~\cite{GlabbeekW89,GlabbeekW96} is well-known.
\begin{definition}\label{def:beq}
Let $\asymp$ be an equivalence relation on processes. $\asymp$ is called a {\em branching bisimulation},  if the following {\em bisimulation property} hold:  whenever $\alpha \asymp \beta$,
\begin{itemize}
\item
If $\alpha \stackrel{a}{\longrightarrow} \alpha'$, then
    $\beta \stackrel{\asymp} \Longrightarrow \cdot \stackrel{a}{\longrightarrow} \beta'$ for some $\beta'$ such that  $\alpha'\asymp\beta'$.
\item
If $\alpha \stackrel{\not\asymp}{\longrightarrow} \alpha'$, then
   $\beta \stackrel{\asymp}{\Longrightarrow} \cdot \stackrel{\not\asymp}{\longrightarrow} \beta'$ for some $\beta'$ such that  $\alpha'\asymp\beta'$.
\end{itemize}
The {\em branching bisimilarity}  $\simeq$ is the largest branching bisimulation.
\end{definition}

\begin{remark}
In this paper, branching bisimulations in Definition~\ref{def:beq} and other bisimulation-like relations in later chapters are forced to be  equivalence relations.  This technical convention does not affect the notion of branching bisimilarity.
\end{remark}

The branching bisimilarity is a congruence relation, and it
satisfies the following famous  lemma.
\begin{lemma}[Computation Lemma~\cite{GlabbeekW89}]\label{lem:computation_lemma}
If $\alpha \Longrightarrow \alpha'\Longrightarrow  \alpha'' \simeq \alpha$, then $\alpha'\simeq \alpha$.
\end{lemma}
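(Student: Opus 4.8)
The plan is to reduce the statement to its ``stuttering'' form and then to exhibit an explicit branching bisimulation relating $\alpha'$ and $\alpha$. Concatenating the two $\tau$-sequences gives a single chain
\[
\alpha \;=\; \gamma_0 \stackrel{\tau}{\longrightarrow} \gamma_1 \stackrel{\tau}{\longrightarrow} \cdots \stackrel{\tau}{\longrightarrow} \gamma_n \;=\; \alpha'' ,
\]
in which $\alpha'$ occurs as some $\gamma_m$ and, by hypothesis, $\gamma_0 \simeq \gamma_n$; so it suffices to prove $\gamma_i \simeq \gamma_0$ for every $i$. I would fix this chain and let $R$ be the equivalence relation obtained from $\simeq$ by fusing the finitely many classes $[\gamma_0]_\simeq, [\gamma_1]_\simeq, \ldots, [\gamma_n]_\simeq$ into one class $K$ (because $\gamma_0 \simeq \gamma_n$ this merely enlarges $[\gamma_0]_\simeq$ by $[\gamma_1]_\simeq, \ldots, [\gamma_{n-1}]_\simeq$, leaving every other $\simeq$-class untouched). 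Then $\simeq \subseteq R$, each $\gamma_i$ is $R$-related to $\gamma_0$, and $K$ is a union of whole $\simeq$-classes --- a point that will matter below. If $R$ turns out to be a branching bisimulation, then $R \subseteq {\simeq}$ by maximality of $\simeq$, and the lemma follows.

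It remains to check the bisimulation property for an arbitrary pair of $R$. For a pair already in $\simeq$ everything is inherited from $\simeq$; the only observation needed is that a ``$\stackrel{\not\simeq}{\longrightarrow}$'' step remains a ``$\stackrel{\not R}{\longrightarrow}$'' step: if $\beta \stackrel{\tau}{\longrightarrow} \zeta$ with $(\beta,\zeta) \notin R$ then $\beta \not\simeq \zeta$, so $\simeq$ supplies $\beta' \stackrel{\simeq}{\Longrightarrow} \eta \stackrel{\not\simeq}{\longrightarrow} \zeta'$ with $\zeta \simeq \zeta'$, and since $(\eta,\beta) \in R$ and $(\zeta',\zeta) \in R$, having $(\eta,\zeta') \in R$ would give $(\beta,\zeta) \in R$, a contradiction; thus $\eta \stackrel{\not R}{\longrightarrow} \zeta'$. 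For a pair with a member in $K$ it is, by symmetry of $R$, enough to treat the representative case $(\gamma_i,\gamma_j)$ with $i \le j$ (the case of a $\gamma_i$ paired with an arbitrary element of $[\gamma_0]_\simeq$ is handled identically). A move of the \emph{later} state $\gamma_j$ is matched by the \emph{earlier} state $\gamma_i$ at once: the chain segment $\gamma_i \Longrightarrow \gamma_j$ runs through $K$, hence is a $\stackrel{R}{\Longrightarrow}$, and $\gamma_i$ then just repeats $\gamma_j$'s move, the target being $R$-related to itself.

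The real obstacle is the opposite direction, matching a move of the \emph{earlier} state $\gamma_i$ by the \emph{later} state $\gamma_j$, which cannot retrace the chain. Here I would use the cycle $\gamma_n \simeq \gamma_0$: from $\gamma_j$ follow the chain to $\gamma_n$ (a $\stackrel{R}{\Longrightarrow}$), and then let $\gamma_n$ imitate, step by step, the computation of $\gamma_0$ along $\gamma_0 \Longrightarrow \gamma_i$, reaching some $\delta_i \simeq \gamma_i$. Every intermediate state produced by this imitation is $\simeq$-equivalent to one of $\gamma_0,\ldots,\gamma_i$ or to $\gamma_n$, hence lies in $K$ --- this is exactly where it matters that $K$ is a union of whole $\simeq$-classes --- so the imitation is again a $\stackrel{R}{\Longrightarrow}$. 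Finally $\delta_i \simeq \gamma_i$ matches the outgoing move of $\gamma_i$ itself (directly if it is visible; via $\delta_i \stackrel{\simeq}{\Longrightarrow} \eta \stackrel{\not\simeq}{\longrightarrow} \zeta'$ together with the transitivity argument above if it is a $\stackrel{\not R}{\longrightarrow}$ step). Splicing the three pieces yields the required $\stackrel{R}{\Longrightarrow}\cdot\stackrel{a}{\longrightarrow}$ or $\stackrel{R}{\Longrightarrow}\cdot\stackrel{\not R}{\longrightarrow}$ move from $\gamma_j$ with the correct $R$-related target. I expect this earlier-by-later matching, together with the bookkeeping that the imitation never leaves $K$ and that ``$\not\simeq$'' upgrades to ``$\not R$'', to be where essentially all the work is; everything else is routine once $R$ is set up as above.
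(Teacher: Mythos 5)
Your proposal is correct and is essentially the paper's own argument: your relation $R$ (fusing the $\simeq$-classes of the chain states into one class $K$) is exactly the paper's $\asymp=(\mathcal{S}\cup{\simeq})^{*}$, and your two matching cases (the earlier state matches the later by walking forward along the chain; the later matches the earlier by passing through $\gamma_n\simeq\gamma_0$ and imitating the computation inside $K$) coincide with the paper's key property. In particular, your construction always delivers targets related by $\simeq$ rather than merely by $R$, which is precisely the point the paper's remark flags as necessary to avoid the ``bisimulation up-to'' pitfall.
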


If $\Gamma$ is realtime,  the branching bisimilarity is the same as the {\em strong bisimilarity}.  In this paper, branching bisimilarity will be abbreviated as {\em bisimilarity}. For realtime systems, the term bisimilarity will also be used to indicate strong bisimilarity.

\section{Relativized Bisimilarities on Normed BPA}\label{sec:Relativized_bisimilarity}

\subsection{Retrospection}\label{subsec:retrospection}
In~\cite{DBLP:conf/icalp/Fu13}, Yuxi Fu creates the notion of redundant processes, and discover the following Proposition~\ref{prop:redundantset_characterize_bisimilarity}, which is crucial to the proof of decidability of bisimilarity for normed BPA.

\begin{definition}\label{def:redundant_process}
A process $\alpha$ is a {\em $\simeq$-redundant over $\gamma$} if $\alpha\gamma \simeq \gamma$.
\end{definition}
We use $\mathsf{Rd}(\gamma) = \{ X \,|\, X \gamma \simeq \gamma \}$ to indicate the set of all constants that is $\simeq$-redundant over $\gamma$. Clearly, $\mathsf{Rd}(\gamma) \subseteq \mathbf{C}_{\mathrm{G}}$.

The following lemma confirms that the redundant processes over $\gamma$ are completely determined by the redundant constants.
\begin{lemma}\label{lem:redundant_constants}
$\alpha\gamma \simeq \gamma$ if and only if $\alpha \in (\mathsf{Rd}(\gamma))^{*}$.
\end{lemma}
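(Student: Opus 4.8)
The plan is to prove the two directions separately, with the ``if'' direction being routine and the ``only if'' direction carrying the real content.

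\smallskip

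\noindent\textbf{The ``if'' direction.} Suppose $\alpha \in (\mathsf{Rd}(\gamma))^{*}$, say $\alpha = X_1 X_2 \cdots X_n$ with each $X_i \in \mathsf{Rd}(\gamma)$, i.e.\ $X_i \gamma \simeq \gamma$. I would argue by induction on $n$. The base case $n = 0$ is $\epsilon \gamma = \gamma \simeq \gamma$. For the inductive step, write $\alpha = X_1 \beta$ with $\beta = X_2 \cdots X_n \in (\mathsf{Rd}(\gamma))^{*}$. By the induction hypothesis $\beta \gamma \simeq \gamma$. Since $\simeq$ is a congruence, prefixing with $X_1$ gives $X_1 \beta \gamma \simeq X_1 \gamma$, and $X_1 \gamma \simeq \gamma$ by assumption. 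Transitivity yields $\alpha \gamma = X_1 \beta \gamma \simeq \gamma$.

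\smallskip

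\noindent\textbf{The ``only if'' direction.} Assume $\alpha \gamma \simeq \gamma$; I must show every constant occurring in $\alpha$ is $\simeq$-redundant over $\gamma$. The natural approach is induction on the length of $\alpha$, peeling off the leftmost constant. Write $\alpha = X \beta$. The first goal is to show $X\gamma \simeq \gamma$, after which I would like to deduce $\beta\gamma \simeq \gamma$ and invoke the induction hypothesis on $\beta$ (which is shorter), concluding $\beta \in (\mathsf{Rd}(\gamma))^{*}$ and hence $\alpha = X\beta \in (\mathsf{Rd}(\gamma))^{*}$. To obtain $X\gamma \simeq \gamma$ from $X\beta\gamma \simeq \gamma$: since $\gamma$ is bisimilar to $X\beta\gamma$ and the latter can only terminate by first reducing $X$, and $\mathsf{Rd}(\gamma) \subseteq \mathbf{C}_{\mathrm{G}}$ is already known (so redundant constants are ground), the key observation is that $X\gamma \simeq \gamma$ must hold because $\gamma \simeq X\beta\gamma$ forces $X$ to be ``invisible'' in front of $\gamma$; more precisely one shows $X$ is ground (using that $X\beta\gamma \Longrightarrow \epsilon$ is impossible unless $X \Longrightarrow \epsilon$, as $\gamma$ need not be ground, so instead one uses the bisimulation game directly: every move of $\gamma$ is matched by $X\beta\gamma$ and vice versa, and a standard cancellation/pumping argument on the $X$-prefix gives $X\gamma \simeq \gamma$). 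Once $X\gamma\simeq\gamma$ is established, congruence of $\simeq$ gives $X\beta\gamma \simeq \beta\gamma$ (prefixing $\beta\gamma \simeq \gamma$? no — rather, from $X\gamma\simeq\gamma$ prefix with nothing and use $X\beta\gamma$; the clean route is: $\beta\gamma \simeq X\beta\gamma$ because $X$ is redundant over $\beta\gamma$ once we know redundancy is ``left-absorptive,'' or directly $X\beta\gamma\simeq\gamma$ together with $X\gamma\simeq\gamma$ and a cancellation of the common prefix $X$), and then transitivity with $X\beta\gamma \simeq \gamma$ yields $\beta\gamma \simeq \gamma$.

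\smallskip

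\noindent\textbf{Main obstacle.} The delicate point is the cancellation step: inferring $X\gamma \simeq \gamma$ and then $\beta\gamma \simeq \gamma$ from $X\beta\gamma \simeq \gamma$. Cancellation of a common suffix is false for BPA bisimilarity in general, and cancellation of a common \emph{prefix} is exactly the kind of property that normedness and the structure of redundant (ground) constants must be used to recover. I expect the proof to lean on the fact that $X\beta\gamma \simeq \gamma$ means $X\beta$ behaves like a redundant process over $\gamma$, and the heart of the matter is showing that a prefix of a redundant word is itself redundant and the corresponding suffix is redundant too — essentially that $\mathsf{Rd}(\gamma)$ being a subset of $\mathbf{C}_{\mathrm{G}}$ and $\simeq$ being a congruence let one run the bisimulation game ``one constant at a time.'' This decomposition of the redundant word into redundant constants is the only nontrivial ingredient; everything else is congruence, transitivity, and induction on word length.
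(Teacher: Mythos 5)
Your ``if'' direction is fine. The ``only if'' direction, however, has a genuine gap exactly where you flag the ``main obstacle'': you never actually derive $X\gamma \simeq \gamma$ or $\beta\gamma \simeq \gamma$ from $X\beta\gamma \simeq \gamma$, but instead appeal to ``a standard cancellation/pumping argument on the $X$-prefix''. No such general cancellation is available for branching bisimilarity on normed BPA, and the parenthetical back-and-forth in your write-up (``prefixing $\beta\gamma\simeq\gamma$? no --- rather \ldots'') shows the step is not closed. Worse, you consider and then discard the one route that works: you reject arguing via groundness ``as $\gamma$ need not be ground'', but groundness of $\gamma$ is irrelevant --- what you need is groundness of the prefix $X\beta$, and that follows because $\simeq$ preserves the weak norm, so $|X\beta\gamma|_{\mathrm{wk}} = |\gamma|_{\mathrm{wk}}$ forces $|X\beta|_{\mathrm{wk}} = 0$, i.e.\ $X\beta \Longrightarrow \epsilon$.

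The correct order of deduction is the reverse of the one you attempt. From $X \Longrightarrow \epsilon$ and $\beta \Longrightarrow \epsilon$ you get $X\beta\gamma \Longrightarrow \beta\gamma \Longrightarrow \gamma \simeq X\beta\gamma$, and the Computation Lemma (Lemma~\ref{lem:computation_lemma}) --- not a pumping argument --- yields $\beta\gamma \simeq X\beta\gamma \simeq \gamma$ in one stroke. Only now do you recover $X\gamma \simeq \gamma$: congruence applied to $\beta\gamma \simeq \gamma$ gives $X\beta\gamma \simeq X\gamma$, hence $X\gamma \simeq X\beta\gamma \simeq \gamma$, so $X \in \mathsf{Rd}(\gamma)$; the induction hypothesis applied to $\beta\gamma \simeq \gamma$ (with $\beta$ shorter than $\alpha$) finishes the proof. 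So your skeleton --- induction on the length of $\alpha$, congruence, transitivity --- is right, but the heart of the argument is the pair ``weak-norm preservation plus Computation Lemma'', neither of which appears in your proposal.
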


The crucial observation in~\cite{DBLP:conf/icalp/Fu13} is the following fact.
\begin{proposition}\label{prop:redundantset_characterize_bisimilarity}
Assume that $\mathsf{Rd}(\gamma_1) = \mathsf{Rd}(\gamma_2)$, then  $\alpha\gamma_1 \simeq \beta\gamma_1$ if and only if $\alpha\gamma_2 \simeq \beta\gamma_2$.
\end{proposition}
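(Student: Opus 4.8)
The plan is to exploit the fact that redundant constants can be freely inserted or deleted without changing bisimilarity class, which is exactly what Lemma~\ref{lem:redundant_constants} gives us once we know $\mathsf{Rd}(\gamma_1) = \mathsf{Rd}(\gamma_2)$. Write $R = \mathsf{Rd}(\gamma_1) = \mathsf{Rd}(\gamma_2)$. By symmetry it suffices to prove one direction, say $\alpha\gamma_1 \simeq \beta\gamma_1 \implies \alpha\gamma_2 \simeq \beta\gamma_2$. The natural route is to build a branching bisimulation containing the pair $(\alpha\gamma_2, \beta\gamma_2)$ out of the known branching bisimilarity $\simeq$ restricted to $\{\gamma_1\}$-related processes, by ``swapping'' the suffix $\gamma_1$ for $\gamma_2$ everywhere.

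First I would set up the candidate relation. Let
\[
\mathcal{S} = \{ (\zeta\gamma_2, \eta\gamma_2) \mid \zeta\gamma_1 \simeq \eta\gamma_1 \} \cup {\simeq}.
\]
Taking the union with $\simeq$ and then the equivalence closure handles the processes that no longer carry the suffix (this is where Lemma~\ref{lem:redundant_constants} and the congruence property of $\simeq$ enter: once a derivative is of the form $\zeta'\gamma_2$ with $\zeta'$ possibly $\epsilon$ or a string of redundant constants, we need $\zeta'\gamma_2 \simeq \gamma_2$ to follow from $\zeta'\gamma_1 \simeq \gamma_1$, which holds because $\zeta'\gamma_1 \simeq \gamma_1$ iff $\zeta' \in R^*$ iff $\zeta'\gamma_2 \simeq \gamma_2$). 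I would need to check $\mathcal{S}$ (or its equivalence closure) is a branching bisimulation.

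The core verification is the bisimulation property for a pair $(\zeta\gamma_2, \eta\gamma_2)$ with $\zeta\gamma_1 \simeq \eta\gamma_1$. A transition $\zeta\gamma_2 \stackrel{\ell}{\longrightarrow} \delta$ either fires inside $\zeta$, giving $\delta = \zeta'\gamma_2$ with $\zeta \stackrel{\ell}{\longrightarrow} \zeta'$, or (when $\zeta$ is ground and reduces to $\epsilon$) fires inside the $\gamma_2$ part. In the first case, $\zeta\gamma_1 \stackrel{\ell}{\longrightarrow} \zeta'\gamma_1$, and since $\zeta\gamma_1 \simeq \eta\gamma_1$ we get a matching branching transition sequence $\eta\gamma_1 \stackrel{\simeq}{\Longrightarrow} \cdot \stackrel{\ell}{\longrightarrow} \eta''\gamma_1$; one then argues these transitions all occur in the $\eta$-prefix (or reach into $\gamma_1$ in a controlled way) so the same sequence is available from $\eta\gamma_2$ with the suffix swapped, and the reached pairs land back in $\mathcal{S}$. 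The subtle case is when the matching computation from $\eta\gamma_1$ must descend into the suffix $\gamma_1$, or when $\zeta$ itself becomes ground mid-computation; here one uses that $\zeta\gamma_1 \simeq \eta\gamma_1$ together with $\mathsf{Rd}(\gamma_1) = \mathsf{Rd}(\gamma_2)$ to transfer the behaviour of $\gamma_1$ to $\gamma_2$, and the ``$\stackrel{\not\asymp}{\longrightarrow}$ vs.\ $\stackrel{\asymp}{\longrightarrow}$'' bookkeeping in Definition~\ref{def:beq} has to be tracked carefully because whether an internal step changes the $\mathcal{S}$-class can differ between the $\gamma_1$-world and the $\gamma_2$-world — though the redundant-set hypothesis is precisely what rules out a genuine discrepancy.

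I expect the main obstacle to be exactly this last point: showing that a transition that is ``class-preserving'' (silent, staying in $\simeq$) on the $\gamma_1$ side remains class-preserving on the $\gamma_2$ side, and likewise for class-changing steps, so that the branching condition — not merely the weak one — is respected after the suffix swap. Handling the boundary between prefix and suffix (the moment $\zeta$ or $\eta$ is consumed to $\epsilon$, or a redundant residue) is where Lemma~\ref{lem:redundant_constants} does the decisive work, and getting the bisimulation to be an equivalence relation may require taking reflexive–symmetric–transitive closure and re-checking that closure is still a branching bisimulation (a standard but non-trivial step in this setting).
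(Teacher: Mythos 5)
Your overall strategy --- a direct ``suffix-swap'' bisimulation built from $B=\{(\zeta\gamma_2,\eta\gamma_2)\mid \zeta\gamma_1\simeq\eta\gamma_1\}$ together with $\simeq$ --- is genuinely different from the paper's route: the paper cites this fact from Fu and, in its strengthened form (Proposition~\ref{prop:redundantset_characterize_bisimilarity_relative}), derives it by factoring through the suffix-independent relation $\simeq_{R'}$ with $R'=\mathsf{Rd}(\gamma_1)=\mathsf{Rd}(\gamma_2)$, i.e.\ as a corollary of Theorem~\ref{thm:relative_bis_str}. Your closure $(B\cup{\simeq})^{*}$ is structurally the same object as the relation $({\bumpeq}\cup{\simeq_R})^{*}$ used in the paper's proof of that theorem, so the plan is viable and, if completed, yields a self-contained proof that never needs $\simeq_R$; what it loses is the reusable intermediate relation that the rest of the paper is built on.

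There is, however, one step where your sketch points in the wrong direction. For the case where the matching of $\zeta\gamma_1\stackrel{\ell}{\longrightarrow}\zeta'\gamma_1$ from $\eta\gamma_1$ descends into the suffix, you propose to ``transfer the behaviour of $\gamma_1$ to $\gamma_2$''. No such transfer exists: $\mathsf{Rd}(\gamma_1)=\mathsf{Rd}(\gamma_2)$ constrains only which constants are redundant, and $\gamma_1,\gamma_2$ may be far from bisimilar, so a derivative of $\gamma_1$ reached by the matching run cannot be imitated by $\gamma_2$. What actually rescues this case is a collapse, not a transfer: if the run $\eta\gamma_1\stackrel{\simeq}{\Longrightarrow}\cdots$ ever consumes all of $\eta$, it passes through the state $\gamma_1$ itself, so $\gamma_1\simeq\eta\gamma_1$ and hence $\eta\in(\mathsf{Rd}(\gamma_1))^{*}$ by Lemma~\ref{lem:redundant_constants}; since also $\zeta\gamma_1\simeq\eta\gamma_1\simeq\gamma_1$, the same lemma gives $\zeta\in(\mathsf{Rd}(\gamma_1))^{*}$, and applying it once more to $\gamma_2$ yields $\zeta\gamma_2\simeq\gamma_2\simeq\eta\gamma_2$. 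The pair is therefore already in $\simeq$, and you invoke the bisimulation property of $\simeq$ directly on the $\gamma_2$ side without ever touching $\gamma_1$'s transitions --- this is why including $\simeq$ in your candidate relation is essential rather than merely convenient. Finally, the closure step you defer as ``standard but non-trivial'' is exactly the bisimulation-up-to trap the paper flags in the remark after Lemma~\ref{lem:computation_lemma} and in the proof of Proposition~\ref{prop:R_monotone}: you must prove the matching clauses for the generating pairs with targets landing in the controlled relation $B\cup{\simeq}$ (not in the full closure $\asymp$) and only then chain them; matching directly into $\asymp$ makes the argument circular.
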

Proposition~\ref{prop:redundantset_characterize_bisimilarity} inspires us to define a relativized version of bisimilarity $\simeq_{R}$ for   a given suitable {\em reference set $R$},  which will satisfy the following theorem.
\begin{theorem}\label{thm:relative_bis}
Let $\gamma$ be a process satisfying $\mathsf{Rd}(\gamma) = R$.  Then $\alpha \simeq_{R} \beta$ if and only if $\alpha \gamma \simeq \beta\gamma$.
\end{theorem}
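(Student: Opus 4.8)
The plan is to define $\simeq_R$ so that Theorem~\ref{thm:relative_bis} becomes almost a tautology, and then verify that the definition is well-posed — i.e.~independent of the particular witness $\gamma$ with $\mathsf{Rd}(\gamma) = R$ — using Proposition~\ref{prop:redundantset_characterize_bisimilarity}. Concretely, I would first isolate which sets $R \subseteq \mathbf{C}$ are admissible as reference sets: these should be exactly the sets of the form $\mathsf{Rd}(\gamma)$ for some process $\gamma$. One direction is immediate; for the other, I would observe that $\mathsf{Rd}(\gamma) \subseteq \mathbf{C}_{\mathrm{G}}$ always, and that redundancy is monotone and closed under the transitions forced by the rules in $\Delta$, so the admissible sets are those ground-constant sets closed under a natural saturation condition. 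Then, given an admissible $R$, pick any witness $\gamma_R$ with $\mathsf{Rd}(\gamma_R) = R$ and \emph{define} $\alpha \simeq_R \beta$ iff $\alpha\gamma_R \simeq \beta\gamma_R$.

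The next step is well-definedness: if $\gamma_1, \gamma_2$ are two processes with $\mathsf{Rd}(\gamma_1) = \mathsf{Rd}(\gamma_2) = R$, then Proposition~\ref{prop:redundantset_characterize_bisimilarity} says $\alpha\gamma_1 \simeq \beta\gamma_1 \iff \alpha\gamma_2 \simeq \beta\gamma_2$, so the relation $\simeq_R$ does not depend on the chosen witness. Since $\simeq$ is an equivalence relation and composition on the right with a fixed $\gamma_R$ preserves reflexivity, symmetry and transitivity, $\simeq_R$ is automatically an equivalence relation on processes, matching the standing convention in the paper. With this definition in hand, Theorem~\ref{thm:relative_bis} is now exactly the statement that for \emph{any} $\gamma$ with $\mathsf{Rd}(\gamma) = R$ we have $\alpha \simeq_R \beta \iff \alpha\gamma \simeq \beta\gamma$ — which is precisely what well-definedness delivers, since $\simeq_R$ was defined via one such $\gamma$ and Proposition~\ref{prop:redundantset_characterize_bisimilarity} transports the equivalence to every other such $\gamma$.

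The main obstacle, and the part deserving genuine care, is showing that the class of admissible reference sets is rich enough and correctly characterized — in particular that whenever we later want to speak of $\simeq_R$ for an $R$ arising in the algorithm (e.g.~$R = \emptyset$, witnessed by $\gamma = \epsilon$, recovering $\simeq_\emptyset = \simeq$), there genuinely exists a witnessing process. For $R = \emptyset$ this is clear from $\mathsf{Rd}(\epsilon) = \emptyset$, but for the general suffix-independent development the subtle point is that the \emph{same} $R$ can be realized by wildly different suffixes, and one must know the realizability condition intrinsically (a closure property on subsets of $\mathbf{C}_{\mathrm{G}}$, provable via Lemma~\ref{lem:redundant_constants}) rather than by exhibiting $\gamma$ each time. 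A secondary technical point is congruence: one should check $\simeq_R$ is preserved under right composition and behaves well under the operational semantics, so that the later bisimulation-style reasoning (refinement, decomposition) applies; this follows by pushing the corresponding properties of $\simeq$ through the translation $\alpha \mapsto \alpha\gamma_R$, but the silent-action clauses require invoking the Computation Lemma (Lemma~\ref{lem:computation_lemma}) to handle $\stackrel{\simeq}{\Longrightarrow}$ correctly.
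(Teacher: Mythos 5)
Your proposal does not prove the theorem as it stands in this paper; it sidesteps it by changing the definition of $\simeq_R$. In the paper, $\simeq_R$ is \emph{not} defined as ``$\alpha\gamma \simeq \beta\gamma$ for a chosen witness $\gamma$''; it is defined intrinsically in Definition~\ref{def:R_beq} as the largest $R$-bisimulation (an equivalence containing $=_R$, satisfying ground preservation and the two matching clauses over the relativized transitions $\longrightarrow_R$), and this definition is deliberately made to work for \emph{every} $R \subseteq \mathbf{C}_{\mathrm{G}}$ with no reference to any $\gamma$. Section~\ref{subsec:retrospection} explicitly explains why taking Theorem~\ref{thm:relative_bis} as the definition (which is exactly the Czerwi\'{n}ski--Jan\v{c}ar route you are reproducing) is rejected: one cannot decide whether a given $R$ is of the form $\mathsf{Rd}(\gamma)$ without already deciding $\simeq$, the refinement algorithm needs $\simeq_R$ (and its approximants) for reference sets that need not be realizable as $\mathsf{Rd}^{\asymp}(\delta)$, and a later remark notes that even admissible $R$'s need not equal $\mathsf{Rd}_{R'}(\gamma)$ for any $\gamma$. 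Your claimed ``natural saturation condition'' characterizing the realizable $R$'s intrinsically is asserted, not established, and the paper's own remarks indicate no such effective characterization is available short of deciding $\simeq$ itself. So under your reading the theorem is a tautology, but the relation you have defined is not the relation the rest of the paper uses, and the subsequent development (suffix independence, Theorem~\ref{thm:relative_bis_str}, the decomposition bases) would not go through.

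What actually has to be proved is that the largest $R$-bisimulation of Definition~\ref{def:R_beq} coincides with the quotient relation $\{(\alpha,\beta) \mid \alpha\gamma \simeq \beta\gamma\}$ whenever $\mathsf{Rd}(\gamma) = R$. The paper does this as the special case $R' = \mathsf{Rd}_{\emptyset}(\gamma)$ of Theorem~\ref{thm:relative_bis_str}, via two nontrivial bisimulation constructions: for one direction it shows that $(\{(\alpha\gamma,\beta\gamma) \mid \alpha \simeq_{S} \beta\} \cup {\simeq_R})^{*}$ is an $R$-bisimulation (with a delicate case analysis around the moment the left factor $\alpha$ is consumed and transitions start being induced by $\gamma$, which is where the hypothesis $S = \mathsf{Rd}_R(\gamma)$ and the Computation Lemma for $\simeq_R$ are used); for the converse it shows that $\{(\alpha,\beta) \mid \alpha\gamma \simeq_R \beta\gamma\}$ is an $S$-bisimulation, which requires verifying ${=_S} \subseteq {\asymp}$ and ground preservation from the redundancy hypothesis. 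None of this work appears in your proposal, so the essential content of the theorem is missing.
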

Proposition~\ref{prop:redundantset_characterize_bisimilarity} confirms that $\simeq_{R}$ does not depend on the special choice of $\gamma$ under the assumption of the existence of $\gamma$ such that $R = \mathsf{Rd}(\gamma)$.  However, it is much  wiser not to take Theorem~\ref{thm:relative_bis} as the definition of $\simeq_{R}$ from a computational point of view. Here are the reasons.
\begin{itemize}
\item
We cannot tell beforehand (except when we can decide $\simeq$) whether, for a given $R$, there exists $\gamma$ such that $R = \mathsf{Rd}(\gamma)$, nor can we tell whether $R = \mathsf{Rd}(\gamma)$ even if both $R$ and $\gamma$ are given.

\item
The algorithm developed in this paper takes the refinement approach. Imagine that $\asymp$ is an approximation of  $\simeq$, we can define, for example, the $\asymp$-redundant constants $\mathsf{Rd}^{\asymp}(\gamma)$ accordingly. It is quite possible to run into the situation where, for a specific $R$, there is no  $\delta$ such that $R = \mathsf{Rd}^{\asymp}(\delta)$ even if $R = \mathsf{Rd}(\gamma)$ for some $\gamma$.
\end{itemize}
Therefore it is advisable to make $\simeq_R$ well-defined for every $R$ satisfying $R\subseteq \mathbf{C}_{G}$. Importantly,
$\simeq_{R}$ should  be defined without the knowledge of the existence of $\gamma$.

\begin{remark}
In~\cite{DBLP:journals/corr/CzerwinskiJ14},  Czerwi\'{n}ski and Jan\v{c}ar also define a relativized version of bisimilarities. The difference is that they directly take Theorem~\ref{thm:relative_bis} as the definition. After that, they establish a {\em weaker} version of  unique decomposition property.
In~\cite{DBLP:journals/corr/CzerwinskiJ14},  $\simeq_R$ is defined only for those $R$'s such that $R = \mathsf{Rd}(\gamma)$ for some $\gamma$.  Using Theorem~\ref{thm:relative_bis}, a property of $\simeq_R$ can be proved by a translation of a property of $\simeq$.  Though seemingly similar,  the properties which will be developed in this section are much stronger than those properties in~\cite{DBLP:journals/corr/CzerwinskiJ14}.
\end{remark}

\subsection{Definition of $R$-Bisimilarities}

Now we elaborate on the definition of $\simeq_R$.
Some auxiliary notations are introduced  to make things clear.
\begin{definition}\label{def:R_equal}
Let $R \subseteq \mathbf{C}_{\mathrm{G}}$.
Two processes $\alpha$ and $\beta$ are {\em $R$-equal}, denoted by $\alpha =_{R} \beta$ if there exist  $\zeta, \alpha', \beta'$ such that $\alpha = \zeta\alpha'$, $\beta = \zeta\beta'$, and $ \alpha', \beta' \in R^{*}$.
\end{definition}
Two processes are $R$-equal if they differ only in suffixes in $R^{*}$.
$R$-equality is an equivalence relation.   Eliminating a suffix in $R^{*}$ from a process does not change the $=_{R}$-class.
\begin{lemma}\label{lem:R_equal_elimination}
\begin{enumerate}
\item
$\alpha =_{R}  \alpha\gamma$ if and only if $\gamma \in R^{*}$.

\item
$\alpha =_{R} \epsilon$ if and only if $\alpha \in R^{*}$.
\end{enumerate}
\end{lemma}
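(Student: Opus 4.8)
The plan is to argue directly from Definition~\ref{def:R_equal}, using only two structural facts: that the set of processes $\mathbf{C}^{*}$ is a free monoid under sequential composition, hence both left- and right-cancellative, and that $R^{*}$ is closed under taking suffixes (a suffix of a string over $R$ is again a string over $R$). Everything then reduces to elementary word manipulation, so I do not expect a genuine obstacle; the only subtlety is that the decomposition witnessing $=_{R}$ is not unique, so the nontrivial directions must be run from an arbitrary choice of witnesses rather than a canonical one.

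For item~1, the ``if'' direction I would dispatch by exhibiting witnesses: if $\gamma \in R^{*}$, take $\zeta = \alpha$, $\alpha' = \epsilon$, $\beta' = \gamma$, so that $\alpha = \zeta\alpha'$, $\alpha\gamma = \zeta\beta'$, and $\alpha',\beta' \in R^{*}$ (recall $\epsilon \in R^{*}$); hence $\alpha =_{R} \alpha\gamma$. For the ``only if'' direction, suppose $\alpha =_{R} \alpha\gamma$ and fix witnesses $\zeta,\alpha',\beta'$ with $\alpha = \zeta\alpha'$, $\alpha\gamma = \zeta\beta'$, and $\alpha',\beta' \in R^{*}$. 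Substituting the first equation into the second gives $\zeta\alpha'\gamma = \zeta\beta'$; left-cancelling $\zeta$ yields $\alpha'\gamma = \beta'$, so $\gamma$ is a suffix of $\beta' \in R^{*}$, and suffix-closure of $R^{*}$ gives $\gamma \in R^{*}$.

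Item~2 I would then obtain as the instance $\alpha = \epsilon$ of item~1, using that $=_{R}$ is an equivalence relation (in particular symmetric) and that $\epsilon\gamma = \gamma$: thus $\alpha =_{R} \epsilon$ iff $\epsilon =_{R} \epsilon\alpha$ iff $\alpha \in R^{*}$. A direct argument is equally short: the witness equation $\epsilon = \zeta\beta'$ forces $\zeta = \beta' = \epsilon$, hence $\alpha = \alpha' \in R^{*}$, and the converse is witnessed by $\zeta = \beta' = \epsilon$, $\alpha' = \alpha$. As indicated above, the one point requiring care is not assuming a privileged decomposition in the ``only if'' part; once an arbitrary one is fixed, cancellation in the free monoid closes the argument.
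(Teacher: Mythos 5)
Your proof is correct; the paper states this lemma without proof, treating it as routine, and your argument via free-monoid cancellation and suffix-closure of $R^{*}$ is exactly the intended elementary verification from Definition~\ref{def:R_equal}. Both directions of item~1 and the derivation of item~2 check out.
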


\begin{definition}\label{def:R_nf}
Let $R \subseteq \mathbf{C}_{\mathrm{G}}$.
$\alpha$ is in {\em $R$-normal-form} ({\em $R$-nf}) if
\begin{enumerate}
\item
either $\alpha = \epsilon$,

\item
or there exist  $\alpha'$ and $X$ such that $\alpha = \alpha' X$ and $X\not\in R$.
\end{enumerate}
If $\alpha =_{R} \alpha'$ and $\alpha'$ is in $R$-nf, then $\alpha'$ is called an {\em $R$-nf} of $\alpha$. The (unique) $R$-nf of $\alpha$ is denoted by $\alpha_R$.
\end{definition}
From Definition~\ref{def:R_nf}, taking the $R$-nf of $\alpha$ is nothing but removing any suffix of $\alpha$ in $R$.
$R$-equality is the syntactic equality on $R$-nf's. In particular, $\emptyset$-equality is exactly the ordinary syntactic equality.

\begin{lemma}
$\alpha =_R \beta$ if and only if $\alpha_R = \beta_R$. In particular,
$\alpha =_{\emptyset} \beta$ if and only if $\alpha = \beta$.
\end{lemma}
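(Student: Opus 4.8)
The plan is to route everything through one structural sublemma: \emph{if both $\beta$ and $\gamma$ are in $R$-nf and $\beta =_R \gamma$, then $\beta = \gamma$}. This single fact will deliver the uniqueness of $\alpha_R$ (asserted in Definition~\ref{def:R_nf}) and carry the non-trivial direction of the stated equivalence. To prove the sublemma I would unfold $\beta =_R \gamma$ via Definition~\ref{def:R_equal} to obtain a common prefix $\zeta$ and suffixes $\beta_0, \gamma_0 \in R^{*}$ with $\beta = \zeta\beta_0$ and $\gamma = \zeta\gamma_0$, and then split on whether $\beta$ or $\gamma$ equals $\epsilon$. The key observation is a last-letter dichotomy: a non-empty $R$-nf ends in a constant outside $R$, whereas a non-empty word of $R^{*}$ ends in a constant of $R$. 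Hence if $\beta \neq \epsilon$ then $\beta_0$ cannot be non-empty (its last letter would be the last letter of $\beta$), so $\beta_0 = \epsilon$ and $\beta = \zeta$; symmetrically $\gamma = \zeta$ when $\gamma \neq \epsilon$; and the mixed case $\beta = \epsilon \neq \gamma$ is impossible, since $\zeta\beta_0 = \epsilon$ forces $\zeta = \epsilon$ and then $\gamma = \gamma_0 \in R^{*}$ contradicts that $\gamma$ is a non-empty $R$-nf.

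Next I would establish existence of an $R$-nf for each $\alpha$: writing $\alpha = X_1 \cdots X_n$ and letting $m$ be the greatest index with $X_m \notin R$ (with $m = 0$ if no such index exists), the prefix $X_1 \cdots X_m$ is in $R$-nf and $X_{m+1} \cdots X_n \in R^{*}$, so Lemma~\ref{lem:R_equal_elimination}(1) yields $X_1 \cdots X_m =_R \alpha$, i.e.\ it is an $R$-nf of $\alpha$. Combined with the sublemma (two $R$-nf's of $\alpha$ are mutually $=_R$ and hence syntactically equal), this also justifies speaking of \emph{the} $R$-nf $\alpha_R$.

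The main equivalence now follows by transitivity of $=_R$. If $\alpha_R = \beta_R$ then $\alpha =_R \alpha_R = \beta_R =_R \beta$, so $\alpha =_R \beta$. Conversely, if $\alpha =_R \beta$ then $\alpha_R =_R \alpha =_R \beta =_R \beta_R$ with both $\alpha_R$ and $\beta_R$ in $R$-nf, so the sublemma gives $\alpha_R = \beta_R$. For the special case $R = \emptyset$, one has $\emptyset^{*} = \{\epsilon\}$, so every process is already in $\emptyset$-nf and $\alpha_{\emptyset} = \alpha$; the general statement then specializes to $\alpha =_{\emptyset} \beta \iff \alpha = \beta$ (and this can equally be read straight off Definition~\ref{def:R_equal}, where $\alpha', \beta' \in \{\epsilon\}$ forces $\alpha = \zeta = \beta$).

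I expect the only non-routine point to be the last-letter case analysis inside the sublemma; the remainder is bookkeeping with the equivalence relation $=_R$ together with the already-proved Lemma~\ref{lem:R_equal_elimination}.
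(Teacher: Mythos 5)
Your proposal is correct. The paper states this lemma without proof, treating it as an immediate consequence of Definitions~\ref{def:R_equal} and~\ref{def:R_nf}; your argument — the last-letter dichotomy showing that two $R$-equal processes in $R$-nf coincide syntactically, plus the existence of an $R$-nf obtained by stripping the maximal suffix in $R^{*}$, and then transitivity of $=_R$ — is exactly the routine verification the authors have in mind, and the $R=\emptyset$ specialization via $\emptyset^{*}=\{\epsilon\}$ is also right.
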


The transition relations can be relativized as follows.
\begin{definition}\label{def:R_transition}
The {\em $R$-transition relations} between $R$-nf's are defined as follows:
We write $\zeta \stackrel{\ell}{\longrightarrow}_{R} \eta$ if there exists $\alpha$ and $\beta$ such that
$\zeta = \alpha_R$, $\eta = \beta_R$, and $\alpha \stackrel{\ell}{\longrightarrow} \beta$.
\end{definition}
According to Definition~\ref{def:R_transition}, the relation $\stackrel{\ell}{\longrightarrow}_R$ is defined only on the set of processes in $R$-nf.  When we write $\alpha \stackrel{\ell}{\longrightarrow}_R \beta$, $\alpha$ and $\beta$ are implicitly supposed to be $R$-nf's.
\begin{lemma}\label{lem:R_transition}
$\alpha_R \stackrel{\ell}{\longrightarrow}_R  \beta_R$  if and only if $\alpha  =_{R} \cdot \stackrel{\ell}{\longrightarrow}  \cdot =_{R} \beta$.
\end{lemma}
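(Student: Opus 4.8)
The plan is to unfold both directions directly from Definition~\ref{def:R_transition} and Lemma~\ref{lem:R_equal_elimination} (or the subsequent lemma identifying $=_R$ with syntactic equality on $R$-nf's). Recall that $\alpha_R$ and $\beta_R$ are by definition $R$-nf's, and $\alpha =_R \alpha_R$, $\beta =_R \beta_R$.

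For the forward direction, suppose $\alpha_R \stackrel{\ell}{\longrightarrow}_R \beta_R$. By Definition~\ref{def:R_transition} there exist $\mu,\nu$ with $\mu_R = \alpha_R$, $\nu_R = \beta_R$, and $\mu \stackrel{\ell}{\longrightarrow} \nu$. Then $\alpha =_R \alpha_R = \mu_R =_R \mu \stackrel{\ell}{\longrightarrow} \nu =_R \nu_R = \beta_R =_R \beta$, using transitivity of $=_R$; this is exactly $\alpha =_R \cdot \stackrel{\ell}{\longrightarrow} \cdot =_R \beta$.

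For the converse, suppose $\alpha =_R \alpha' \stackrel{\ell}{\longrightarrow} \beta' =_R \beta$. I would first argue that $\alpha'$ and $\beta'$ already witness the $R$-transition, except that they need not be in $R$-nf. Here the key observation is that taking $R$-nf commutes with transitions in the relevant sense: if $\alpha' =_R \alpha'_R$, then writing $\alpha' = \alpha'_R\zeta$ with $\zeta \in R^\ast$ (legitimate by Lemma~\ref{lem:R_equal_elimination}(1), after noting that one of the two is obtained from the other by appending an $R^\ast$-suffix — if instead $\alpha'_R = \alpha'\xi$ the argument is symmetric and in fact forces $\xi=\epsilon$ since $\alpha'_R$ is the one in normal form), any transition $\alpha' \stackrel{\ell}{\longrightarrow} \beta'$ is either a transition of the prefix, $\alpha'_R \stackrel{\ell}{\longrightarrow} \eta$ with $\beta' = \eta\zeta$, hence $\beta'_R = \eta_R$ and so $\alpha'_R \stackrel{\ell}{\longrightarrow}_R \beta'_R$; or it is "inside $\zeta$", but since $\zeta \in R^\ast \subseteq \mathbf{C}_{\mathrm G}^\ast$, a transition from inside $\zeta$ only rewrites $R$-constants and leaves a process still $=_R \alpha'$, so $\beta' =_R \alpha'$ and $\beta'_R = \alpha'_R$, giving $\alpha'_R \stackrel{\ell}{\longrightarrow}_R \alpha'_R$ provided there is at least one such transition — which there is, namely this one. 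In either case $\alpha'_R \stackrel{\ell}{\longrightarrow}_R \beta'_R$. Finally, since $\alpha =_R \alpha'$ we have $\alpha_R = \alpha'_R$, and likewise $\beta_R = \beta'_R$, so $\alpha_R \stackrel{\ell}{\longrightarrow}_R \beta_R$.

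I expect the main obstacle to be the bookkeeping in the converse direction: one must handle both the case where the chosen transition $\alpha' \stackrel{\ell}{\longrightarrow} \beta'$ acts on the "visible" prefix and the case where it acts on the $R^\ast$-suffix that gets erased when passing to the $R$-nf, and check that in the latter case the source and target have the same $R$-nf. This amounts to a small case analysis on where in the string $\alpha'$ the applied rule $X \stackrel{\ell}{\longrightarrow} \delta$ fires, combined with the fact that any process reachable by one step from an element of $R^\ast$ is again $=_R \epsilon$ when its own proper prefix already lies in $R^\ast$. Everything else is a routine chain of $=_R$-equalities via Lemma~\ref{lem:R_equal_elimination} and the characterization of $=_R$ on $R$-nf's.
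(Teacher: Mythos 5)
Your forward direction is correct and is the intended argument: the witnesses $\mu,\nu$ supplied by Definition~\ref{def:R_transition} satisfy $\alpha =_R \alpha_R = \mu_R =_R \mu$ and $\nu =_R \nu_R = \beta_R =_R \beta$, and that is all there is to it. For the converse, however, you have overlooked that Definition~\ref{def:R_transition} is an existential over \emph{arbitrary} representatives: $\zeta \stackrel{\ell}{\longrightarrow}_R \eta$ holds as soon as \emph{some} $\mu,\nu$ with $\mu_R=\zeta$, $\nu_R=\eta$ satisfy $\mu \stackrel{\ell}{\longrightarrow} \nu$. So from $\alpha =_R \alpha' \stackrel{\ell}{\longrightarrow} \beta' =_R \beta$ you may take $\mu=\alpha'$ and $\nu=\beta'$ themselves and conclude $\alpha'_R \stackrel{\ell}{\longrightarrow}_R \beta'_R$ at once; since $\alpha =_R \alpha'$ gives $\alpha_R=\alpha'_R$ and likewise $\beta_R=\beta'_R$, the proof is finished. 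No ``commutation of $R$-nf with transitions'' is needed, and the lemma is essentially a restatement of the definition, which is presumably why the paper offers no proof.

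The detour you take instead contains a step that is actually false. In your second case ($\alpha'_R=\epsilon$, so $\alpha'=\zeta\in R^*$ and the rule fires at the leftmost constant $X\in R$ of $\zeta$), you claim that the transition ``only rewrites $R$-constants and leaves a process still $=_R \alpha'$,'' concluding $\beta'_R=\alpha'_R$. But $X\in R\subseteq\mathbf{C}_{\mathrm{G}}$ only guarantees that $X$ has \emph{some} silent path to $\epsilon$; it says nothing about where its other transitions lead. If $X \stackrel{\tau}{\longrightarrow} \epsilon$ and $X \stackrel{a}{\longrightarrow} Y$ with $Y\notin R$, then for $\alpha'=X$ and $\beta'=Y$ we get $\alpha'_R=\epsilon$ but $\beta'_R=Y\neq\epsilon$, so $\beta'\not=_R\alpha'$, and your case would wrongly report the target as $\alpha'_R$ rather than $\beta'_R$. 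This is exactly the situation the paper isolates in Lemma~\ref{lem:char_R_arrow}, case 1, where a transition $\epsilon \stackrel{\ell}{\longrightarrow}_R \beta_R$ is induced by $X\in R$ with $X \stackrel{\ell}{\longrightarrow} \beta'$ and $\beta'=_R\beta$ for a possibly nonempty $\beta_R$. Because the lemma follows directly from the definition anyway, your conclusion survives, but the written argument, taken as a self-contained proof, has a wrong step that would matter if the definition really did require a transition of the normal form itself.
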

Let $\alpha \stackrel{\ell}{\longrightarrow}_{R} \beta$. Intuitively, if $\alpha \neq \epsilon$, then  $\alpha \stackrel{\ell}{\longrightarrow}_{R} \beta$ is induced by $\alpha$; if $\alpha = \epsilon$, then $\alpha \stackrel{\ell}{\longrightarrow}_{R} \beta$ is induced by one of the constants in $R$. This important fact is formalized in the following lemma.
\begin{lemma}\label{lem:char_R_arrow}
$\alpha_R \stackrel{\ell}{\longrightarrow}_{R} \beta_R$ if and only if
\begin{enumerate}
\item
either $\alpha_R  = \epsilon$ and $X \stackrel{\ell}{\longrightarrow} \beta'$ for some $\beta'$ and $X \in R$ such that $\beta' =_R \beta$.

\item
or  $\alpha_R  \neq \epsilon$ and $\alpha \stackrel{\ell}{\longrightarrow} \beta'$ for some $\beta'$  such that $\beta' =_R \beta$.
\end{enumerate}
\end{lemma}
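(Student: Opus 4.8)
The plan is to reduce both directions to Lemma~\ref{lem:R_transition}, which already characterises $\alpha_R \stackrel{\ell}{\longrightarrow}_R \beta_R$ as the existence of $\mu,\nu$ with $\alpha =_R \mu \stackrel{\ell}{\longrightarrow} \nu =_R \beta$. The backward implication is then immediate. In case~2 we simply observe $\alpha =_R \alpha \stackrel{\ell}{\longrightarrow} \beta' =_R \beta$ and apply Lemma~\ref{lem:R_transition} (the hypothesis $\alpha_R \neq \epsilon$ is not even used). In case~1, since $X \in R \subseteq R^{*}$ we have $X =_R \epsilon$ by Lemma~\ref{lem:R_equal_elimination}(2); combined with $\alpha =_R \alpha_R = \epsilon$ this gives $\alpha =_R X \stackrel{\ell}{\longrightarrow} \beta' =_R \beta$, and Lemma~\ref{lem:R_transition} applies again.

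For the forward implication --- the substantive part --- I would fix, via Lemma~\ref{lem:R_transition}, processes $\mu,\nu$ with $\alpha =_R \mu$, $\mu \stackrel{\ell}{\longrightarrow}\nu$, $\nu =_R \beta$. From $\alpha =_R \mu$ we get $\mu_R = \alpha_R$, so we may write $\mu = \alpha_R\sigma$ and $\alpha = \alpha_R\rho$ with $\sigma,\rho \in R^{*}$, and then split on whether $\alpha_R = \epsilon$. If $\alpha_R = \epsilon$, then $\mu = \sigma \in R^{*}$ and $\mu \neq \epsilon$ because it has an outgoing transition; hence $\mu = X\sigma'$ with $X \in R$ and $\sigma' \in R^{*}$, and, since the transitions of a nonempty process are exactly those of its leading constant, $X \stackrel{\ell}{\longrightarrow}\delta$ with $\nu = \delta\sigma'$ for some $\delta$. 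Then $\beta =_R \nu = \delta\sigma' =_R \delta$ by Lemma~\ref{lem:R_equal_elimination}(1), so $\beta' := \delta$ and this $X$ witness case~1. If $\alpha_R \neq \epsilon$, write $\alpha_R = Z\zeta$; then $\mu = Z\zeta\sigma$, the transition is induced by $Z$, say $Z \stackrel{\ell}{\longrightarrow}\eta$ with $\nu = \eta\zeta\sigma$, and applying the \emph{same} rule to $\alpha = Z\zeta\rho$ gives $\alpha \stackrel{\ell}{\longrightarrow} \eta\zeta\rho =: \beta'$. Finally $\beta =_R \nu = \eta\zeta\sigma =_R \eta\zeta =_R \eta\zeta\rho = \beta'$ by two applications of Lemma~\ref{lem:R_equal_elimination}(1), which is case~2.

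The only genuinely delicate point is the step that transports a transition between $\mu$ and $\alpha$: one must observe that $\mu$ and $\alpha$ share the leading constant $Z$ (resp.\ that the transition of $\mu \in R^{*}$ is fired by a constant of $R$), so that the transition rule used by $\mu$ applies verbatim to $\alpha$ --- without this synchronisation of leading constants the two cases of the lemma would not separate cleanly. Everything else, notably reconciling $\nu$ with $\beta$ and $\beta'$, is routine bookkeeping of $R^{*}$-suffixes via Lemma~\ref{lem:R_equal_elimination}.
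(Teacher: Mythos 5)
Your proof is correct; the paper states Lemma~\ref{lem:char_R_arrow} without proof, treating it as a routine consequence of Lemma~\ref{lem:R_transition} and the definition of $R$-normal forms, and your argument is exactly the expected verification. The one point you rightly flag as non-trivial --- that $\mu$ and $\alpha$ share the same $R$-normal form, hence (when it is nonempty) the same leading constant, so the transition rule used by $\mu$ applies verbatim to $\alpha$, while in the $\alpha_R=\epsilon$ case the firing constant must lie in $R$ --- is indeed the only step beyond bookkeeping of $R^{*}$-suffixes.
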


As usual, we write $\Longrightarrow_R$ for the reflexive transitive closure of $\stackrel{\tau}{\longrightarrow}_R$, and $\Longleftrightarrow_R$ for the symmetric closure of $\Longrightarrow_R$ (i.e.~${\Longleftrightarrow_R} \stackrel{\mathrm{def}}{=}  {\Longrightarrow_R} \cup {\Longrightarrow_R}^{-1}$). Accordingly,
$\alpha \stackrel{\asymp}{\longrightarrow}_R \alpha'$ is understood as
$\alpha \stackrel{\tau}{\longrightarrow}_R \alpha'\asymp \alpha$, and $\stackrel{\asymp}{\Longrightarrow}_R $ is the reflexive transitive closure of $\stackrel{\asymp}{\longrightarrow}_R$.

The ground processes are robust under relativization.
\begin{lemma}\label{lem:groud_preserve}
$\alpha \Longrightarrow \epsilon$ if and only if $\alpha_R \Longrightarrow_R \epsilon$.
\end{lemma}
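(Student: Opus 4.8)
\emph{Proof proposal.} The plan is to peel off the trivial part of the equivalence and reduce everything to a statement about ordinary groundness. First I would observe, from Definition~\ref{def:R_nf}, that $\alpha_R$ is actually a \emph{prefix} of $\alpha$: if $\alpha_R=\epsilon$ this is Lemma~\ref{lem:R_equal_elimination}(2), and otherwise $\alpha_R$ ends in a constant outside $R$, so in the decomposition witnessing $\alpha=_{R}\alpha_R$ the $R^{*}$-suffix on the $\alpha_R$ side must be empty; hence $\alpha=\alpha_R\delta$ with $\delta\in R^{*}$. Since $R\subseteq\mathbf{C}_{\mathrm{G}}$ we get $\delta\in\mathbf{C}_{\mathrm{G}}^{*}$, so $\delta\Longrightarrow\epsilon$, and therefore $\alpha\Longrightarrow\epsilon$ iff $\alpha_R\Longrightarrow\epsilon$ (one direction prepends the reduction of $\delta$ via the sequential-composition rule, the other uses that a prefix of a word in $\mathbf{C}_{\mathrm{G}}^{*}$ stays in $\mathbf{C}_{\mathrm{G}}^{*}$). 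The same common-prefix argument shows that groundness is invariant under $=_{R}$: if $\alpha=\zeta\alpha'$, $\beta=\zeta\beta'$ with $\alpha',\beta'\in R^{*}\subseteq\mathbf{C}_{\mathrm{G}}^{*}$, then $\alpha\in\mathbf{C}_{\mathrm{G}}^{*}$ iff $\zeta\in\mathbf{C}_{\mathrm{G}}^{*}$ iff $\beta\in\mathbf{C}_{\mathrm{G}}^{*}$. So it remains to prove, for an $R$-nf $\zeta$, that $\zeta\Longrightarrow\epsilon$ if and only if $\zeta\Longrightarrow_{R}\epsilon$.

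For the forward direction I would relativize a witnessing computation $\zeta=\beta_0\stackrel{\tau}{\longrightarrow}\cdots\stackrel{\tau}{\longrightarrow}\beta_n=\epsilon$ step by step: each $\beta_i\stackrel{\tau}{\longrightarrow}\beta_{i+1}$ yields $(\beta_i)_R\stackrel{\tau}{\longrightarrow}_{R}(\beta_{i+1})_R$ by Lemma~\ref{lem:R_transition} (using the reflexive instances of $=_{R}$), and since $\zeta$ is already an $R$-nf and $\epsilon_R=\epsilon$, concatenating these gives $\zeta\Longrightarrow_{R}\epsilon$; the occasional stuttering step $(\beta_i)_R=(\beta_{i+1})_R$ is harmless.

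For the backward direction I would induct on the length $n$ of an $R$-computation $\zeta\Longrightarrow_{R}\epsilon$, the case $n=0$ being immediate. For $n\geq 1$, write $\zeta\stackrel{\tau}{\longrightarrow}_{R}\zeta'\Longrightarrow_{R}\epsilon$; by the induction hypothesis $\zeta'$ is ground. By Lemma~\ref{lem:R_transition} the step is witnessed by $\zeta=_{R}\mu\stackrel{\tau}{\longrightarrow}\nu=_{R}\zeta'$ for some $\mu,\nu$; then $\nu$ is ground (invariance of groundness under $=_{R}$), so $\mu$ is ground (because $\mu\stackrel{\tau}{\longrightarrow}\nu\Longrightarrow\epsilon$), so $\zeta$ is ground ($=_{R}$-invariance again), i.e.\ $\zeta\Longrightarrow\epsilon$. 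The step I expect to carry the actual content is precisely this bookkeeping: an $R$-transition out of $\zeta$ need not come from $\zeta$ itself but only from some $\mu$ agreeing with $\zeta$ up to an $R^{*}$-suffix, so groundness has to be transported across $=_{R}$ twice in each inductive step; what makes it painless is the hypothesis $R\subseteq\mathbf{C}_{\mathrm{G}}$, under which an $R^{*}$-suffix consists of ground constants and hence cannot affect whether a process is ground. Everything else is routine. One could equally propagate groundness backward along the given chain, or invoke Lemma~\ref{lem:char_R_arrow} in place of Lemma~\ref{lem:R_transition} at the cost of splitting off the $\zeta=\epsilon$ sub-case.
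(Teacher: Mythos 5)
Your argument is correct. The paper states this lemma without proof, and your reduction — peel off the ground suffix $\delta\in R^{*}\subseteq\mathbf{C}_{\mathrm{G}}^{*}$ to reduce to the $R$-nf case, relativize a witnessing $\tau$-computation step by step via Lemma~\ref{lem:R_transition} for the forward direction, and transport groundness across $=_{R}$ in an induction on the length of the $\Longrightarrow_R$-computation for the converse — is exactly the routine verification the authors evidently had in mind.
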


Now it is time for defining $R$-bisimilarity.

\begin{definition}\label{def:R_beq}
Let $R \subseteq \mathbf{C}_{\mathrm{G}}$ and let $\asymp$ be an equivalence relation such that
${=_{R}} \subseteq {\asymp}$. We say $\asymp$ is an {\em $R$-bisimulation},  if the following conditions are satisfied whenever $\alpha \asymp \beta$:
\begin{enumerate}
\item
{\em ground preservation}:  If $\alpha \Longrightarrow \epsilon$,  then $\beta \Longrightarrow \epsilon$.

\item
If $\alpha \stackrel{\not\asymp}{\longrightarrow} \alpha'$, then
$\beta_R  \stackrel{\asymp}{\Longrightarrow}_{R} \cdot \stackrel{\not\asymp}{\longrightarrow}_{R} \beta'$ for some $\beta'$ such that $\alpha' \asymp \beta'$.

\item
If  $\alpha \stackrel{a}{\longrightarrow} \alpha'$, then
$\beta_R \stackrel{\asymp}{\Longrightarrow}_{R} \cdot \stackrel{a}{\longrightarrow}_{R} \beta'$ for some $\beta'$ such that $\alpha' \asymp \beta'$.
\end{enumerate}
The {\em $R$-bisimilarity} $\simeq_{R}$ is the largest $R$-bisimulation.
\end{definition}

\begin{remark}
$R$-bisimilarity $\simeq_R$ is well-defined, based on the following observations:
\begin{itemize}
\item
$=_R$ is an $R$-bisimulation.

\item
If $\asymp_1$ and $\asymp_2$ are both $R$-bisimilations, then $({\asymp_1} \cup {\asymp_2})^{*}$ is an $R$-bisimulation.
\end{itemize}
\end{remark}

If $R = \emptyset$, then $\simeq_{\emptyset}$ is exactly the ordinary bisimilarity $\simeq$.

$R$-bisimulations  can actually be understood as the bisimulations on $R$-nf's under $R$-transitions, as is stated below.
\begin{proposition}\label{prop:R_bisimilarity}
Let $R \subseteq \mathbf{C}_{\mathrm{G}}$ and let $\asymp$ be an equivalence relation such that
${=_{R}} \subseteq {\asymp}$. Then $\asymp$ is an {\em $R$-bisimulation} if and only if whenever $\alpha \asymp \beta$,
\begin{enumerate}
\item
if  $\alpha_R \Longrightarrow_R \epsilon$,  then $\beta_R \Longrightarrow_R \epsilon$;

\item
if $ \alpha_R \stackrel{\not\asymp}{\longrightarrow}_R \alpha'_R$, then
$\beta_R  \stackrel{\asymp}{\Longrightarrow}_{R} \cdot \stackrel{\not\asymp}{\longrightarrow}_{R} \beta'_R$ for some $\beta'$ such that $\alpha' \asymp \beta'$;

\item
if $\alpha_R \stackrel{a}{\longrightarrow}_R \alpha'_R$, then
$\beta_R \stackrel{\asymp}{\Longrightarrow}_{R} \cdot \stackrel{a}{\longrightarrow}_{R} \beta'_R$ for some $\beta'$ such that $\alpha' \asymp \beta'$.
\end{enumerate}
\end{proposition}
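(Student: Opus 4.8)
The plan is to prove both implications by a routine \emph{translation} between the ordinary transitions of a process $\alpha$ and the $R$-transitions of its normal form $\alpha_R$, exploiting the hypothesis ${=_R}\subseteq{\asymp}$. First I would record the preliminary observation that, since $\asymp$ is an equivalence relation containing $=_R$ and $\gamma =_R \gamma_R$ always holds, we have $\gamma \asymp \gamma_R$ for every $\gamma$; consequently $\alpha \asymp \beta$ if and only if $\alpha_R \asymp \beta_R$, and therefore $\gamma \not\asymp \gamma'$ if and only if $\gamma_R \not\asymp \gamma'_R$. In other words, $\asymp$ on arbitrary processes is completely determined by its restriction to $R$-nf's, and the $\not\asymp$-labelling of $\tau$-steps is stable under passage to $R$-nf's. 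The two further ingredients are Lemma~\ref{lem:R_transition} (which says $\alpha_R \stackrel{\ell}{\longrightarrow}_R \beta_R$ iff $\alpha =_R \cdot \stackrel{\ell}{\longrightarrow} \cdot =_R \beta$) and Lemma~\ref{lem:groud_preserve} (which turns $\Longrightarrow\epsilon$ into $\Longrightarrow_R\epsilon$ and back). Note also that, by Definition~\ref{def:R_transition}, the target of any $R$-transition is already in $R$-nf, so the responses in Definition~\ref{def:R_beq} and in the proposition are syntactically of the same shape.

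For the ``only if'' direction, assume $\asymp$ is an $R$-bisimulation in the sense of Definition~\ref{def:R_beq} and let $\alpha \asymp \beta$. Condition~1 of the proposition is immediate from Lemma~\ref{lem:groud_preserve} and ground preservation. For condition~2, suppose $\alpha_R \stackrel{\not\asymp}{\longrightarrow}_R \alpha'_R$. By Lemma~\ref{lem:R_transition} there are $\gamma,\gamma'$ with $\alpha =_R \gamma$, $\gamma \stackrel{\tau}{\longrightarrow}\gamma'$ and $\gamma' =_R \alpha'$; using ${=_R}\subseteq{\asymp}$ we get $\gamma \asymp \alpha \asymp \beta$, while $\gamma \asymp \alpha_R$, $\gamma' \asymp \alpha'_R$ together with $\alpha_R \not\asymp \alpha'_R$ give $\gamma \not\asymp \gamma'$, i.e.\ $\gamma \stackrel{\not\asymp}{\longrightarrow}\gamma'$. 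Applying Definition~\ref{def:R_beq}.2 to the pair $\gamma \asymp \beta$ yields $\beta_R \stackrel{\asymp}{\Longrightarrow}_R \cdot \stackrel{\not\asymp}{\longrightarrow}_R \delta$ with $\gamma'\asymp\delta$; since $\delta$ is an $R$-nf and $\alpha'\asymp\gamma'\asymp\delta$, taking $\beta':=\delta$ gives exactly the conclusion of condition~2 of the proposition. Condition~3 is obtained identically with $a$ in place of $\tau$, dropping the $\not\asymp$-bookkeeping.

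For the ``if'' direction, assume the three conditions of the proposition and let $\alpha \asymp \beta$. Ground preservation follows from the same Lemma~\ref{lem:groud_preserve} round trip. For Definition~\ref{def:R_beq}.2, suppose $\alpha \stackrel{\not\asymp}{\longrightarrow}\alpha'$. Since $\alpha =_R \alpha \stackrel{\tau}{\longrightarrow}\alpha' =_R \alpha'$, Lemma~\ref{lem:R_transition} gives $\alpha_R \stackrel{\tau}{\longrightarrow}_R \alpha'_R$, and $\alpha_R \not\asymp \alpha'_R$ follows from $\alpha\not\asymp\alpha'$ by the preliminary observation, so $\alpha_R \stackrel{\not\asymp}{\longrightarrow}_R \alpha'_R$. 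Condition~2 of the proposition then produces $\beta_R \stackrel{\asymp}{\Longrightarrow}_R \cdot \stackrel{\not\asymp}{\longrightarrow}_R \beta'_R$ with $\alpha'\asymp\beta'$, which is precisely the response required by Definition~\ref{def:R_beq}.2 (with witness $\beta'_R$, noting $\alpha'\asymp\beta'\asymp\beta'_R$). Condition~3 of Definition~\ref{def:R_beq} follows from condition~3 of the proposition in the same way.

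I do not expect a genuine obstacle here: the whole argument is bookkeeping. The one point that needs care is the systematic use of ${=_R}\subseteq{\asymp}$ to move back and forth between a process and its $R$-normal form while preserving both $\asymp$-classes and the $\not\asymp$-labelling of $\tau$-steps, so that the bisimulation game played on $R$-nf's under $\stackrel{\ell}{\longrightarrow}_R$ lines up verbatim with the game of Definition~\ref{def:R_beq}.
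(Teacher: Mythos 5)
Your proof is correct, and it is exactly the routine translation the paper has in mind: the paper states Proposition~\ref{prop:R_bisimilarity} without proof, treating it as immediate from ${=_R}\subseteq{\asymp}$, Lemma~\ref{lem:R_transition}, and Lemma~\ref{lem:groud_preserve}, which are precisely the three ingredients you use. Your preliminary observation that $\gamma\asymp\gamma_R$ (so that $\asymp$ and the $\not\asymp$-labelling are stable under passage to $R$-nf's) is the one point worth making explicit, and you make it.
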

Comparing with the definition of bisimulation (Definition~\ref{def:beq}),  Definition~\ref{def:R_beq} and Proposition~\ref{prop:R_bisimilarity} contains an extra {\em ground preservation} condition which guarantees that a ground process cannot be related to a non-ground process in an $R$-bisimilation.
In the definition of bisimulation, this condition is also satisfied, for it can be derived from other bisimulation conditions. As to $R$-bisimulation, this is not always the case, as is illustrated in the following example.
\begin{example}
Consider the following normed BPA $(\mathbf{C}, \mathcal{A}, \Delta)$:
\begin{itemize}
\item
 $\mathbf{C} = \{A_0, A_1\}$;

\item
$\mathcal{A} = \{a,  \tau\}$;

\item
$\Delta$ is the set containing the following rules:
\begin{center}
$A_0 \stackrel{a}{\longrightarrow} A_1$, \quad $A_1 \stackrel{a}{\longrightarrow} A_1$, \quad $A_1 \stackrel{\tau}{\longrightarrow} \epsilon$
\end{center}
\end{itemize}
Let $R = \{A_1\}$, and let $\asymp$ be the equivalence relation which relates every processes defined in $\Gamma$ to  $\epsilon$. Clearly $A_0 \not\Longrightarrow_R \epsilon$. However, we can show that $(A_0, \epsilon)$ satisfies $R$-bisimulation conditions except for the ground preserving condition:

Considering that the $R$-transitions of $\epsilon$ can be trivially matched by $A_0$, it remains to show that  $\epsilon$ can match the $R$-transitions of $A_0$.  The unique $R$-transition of $A_0$ is $A_0  \stackrel{a}{\longrightarrow}_R \epsilon$, which can be matched by $\epsilon \stackrel{a}{\longrightarrow}_R \epsilon$ since $A_1 \stackrel{a}{\longrightarrow} A_1$ and $(A_1)_R = \epsilon$.
\end{example}
The relative bisimilarity $\simeq_R$ is not a congruence in general. For example, we may not have $\alpha\gamma \simeq_R \beta\gamma$ even if $\alpha \simeq_R \beta$.  However, we have the following result.
\begin{lemma}\label{lem:congruence_R_bis}
If $\gamma \simeq_R \delta$ and $\alpha \simeq \beta$, then $\alpha\gamma \simeq_R \beta\delta$.  In particular, If $\gamma \simeq_R \delta$, then $\alpha \gamma \simeq_R \alpha\delta$.
\end{lemma}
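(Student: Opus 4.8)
The plan is to prove the two assertions in sequence, deriving the second from the first by taking $\beta=\alpha$ and observing that $\alpha\simeq\alpha$ trivially. So the whole work is in showing: if $\gamma\simeq_R\delta$ and $\alpha\simeq\beta$, then $\alpha\gamma\simeq_R\beta\delta$. The natural strategy is to exhibit an explicit $R$-bisimulation containing the pair $(\alpha\gamma,\beta\delta)$. I would take
\[
\asymp \;=\; \bigl({=_R}\,\cup\,\{(\zeta\gamma',\eta\delta')\mid \zeta\simeq\eta,\ \gamma'\simeq_R\delta'\}\bigr)^{*},
\]
the smallest equivalence relation generated by $=_R$ together with all such ``product'' pairs, and then verify that $\asymp$ satisfies the three clauses of Definition~\ref{def:R_beq}. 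Since $\simeq_R$ is the largest $R$-bisimulation, this yields $\alpha\gamma\asymp\beta\delta$ implies $\alpha\gamma\simeq_R\beta\delta$. Note ${=_R}\subseteq{\asymp}$ holds by construction, so $\asymp$ is a legitimate candidate.

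The heart of the verification is the clause for $\tau$- and $a$-transitions. Suppose $\zeta\gamma'\asymp\eta\delta'$ via a generating pair with $\zeta\simeq\eta$ and $\gamma'\simeq_R\delta'$, and suppose $\zeta\gamma'\stackrel{\ell}{\longrightarrow}\xi$. There are two cases depending on whether the transition is ``fired'' by $\zeta$ or, when $\zeta=\epsilon$ (or more generally $\zeta$ is ground and has evolved to $\epsilon$), by $\gamma'$. In the first case $\zeta\stackrel{\ell}{\longrightarrow}\zeta''$ and $\xi=\zeta''\gamma'$; using $\zeta\simeq\eta$ and the bisimulation property of $\simeq$ (plus congruence of $\simeq$, Lemma~\ref{lem:computation_lemma} for the silent-step bookkeeping) I get a matching move $\eta\Longrightarrow\cdot\stackrel{\ell}{\longrightarrow}\eta''$ with $\zeta''\simeq\eta''$, whence $\eta''\delta'\asymp\zeta''\gamma'$; I then need to lift $\eta\Longrightarrow\cdot\stackrel{\ell}{\longrightarrow}\eta''$ to an $R$-transition sequence $(\eta\delta')_R\Longrightarrow_R\cdot\stackrel{\ell}{\longrightarrow}_R(\eta''\delta')_R$ using Lemma~\ref{lem:R_transition} and Lemma~\ref{lem:char_R_arrow}, and check that all intermediate states stay $\asymp$-related (they are, being again of product form with the $\gamma',\delta'$ components untouched and the first components $\simeq$-related by the Computation Lemma). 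In the second case the transition comes from the suffix: $\zeta$ is ground, $\zeta\Longrightarrow\epsilon$, and $\gamma'\stackrel{\ell}{\longrightarrow}$ something — but since $\zeta\simeq\eta$ we also have $\eta$ ground (ground preservation for $\simeq$, which holds as $\simeq$ is an $\emptyset$-bisimulation), and since $\gamma'\simeq_R\delta'$ the move of $\gamma'$ is matched $R$-wise by $\delta'$; combining via Lemma~\ref{lem:char_R_arrow} gives the matching $R$-transition from $(\eta\delta')_R$. The ground-preservation clause of Definition~\ref{def:R_beq} is comparatively easy: if $\zeta\gamma'\Longrightarrow\epsilon$ then $\zeta\Longrightarrow\epsilon$ and $\gamma'\Longrightarrow\epsilon$; ground preservation of $\simeq$ gives $\eta\Longrightarrow\epsilon$ and ground preservation of $\simeq_R$ (together with Lemma~\ref{lem:groud_preserve}) gives $\delta'\Longrightarrow\epsilon$, so $\eta\delta'\Longrightarrow\epsilon$. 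One also has to handle pairs that arise purely from $=_R$ or from transitively chaining generating pairs, but those reduce to the above plus the fact that $=_R$ is itself an $R$-bisimulation and that $R$-bisimulations are closed under the operations used to build $\asymp$ (the Remark after Definition~\ref{def:R_beq}).

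The step I expect to be the main obstacle is the careful case analysis of which component fires a transition when the ``head'' $\zeta$ can silently reduce to $\epsilon$ — i.e.\ correctly threading the interaction between $\simeq$-moves of the head and $\simeq_R$-moves of the tail through the $R$-normal-form machinery of Lemmas~\ref{lem:R_transition} and~\ref{lem:char_R_arrow}, while making sure every intermediate configuration in the matching $\stackrel{\asymp}{\Longrightarrow}_R$ sequence is genuinely $\asymp$-related (this is where the Computation Lemma for $\simeq$ and the definition of $\stackrel{\asymp}{\Longrightarrow}_R$ must be invoked with care). A secondary subtlety is confirming that $\asymp$ as defined is actually an equivalence relation whose pairs all decompose into the stated product form after taking $R$-normal forms, so that the case analysis above is exhaustive; this is essentially a normal-form bookkeeping argument using Lemma~\ref{lem:R_equal_elimination} and the uniqueness of $R$-nf's.
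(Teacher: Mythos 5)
Your proposal is correct and follows essentially the same approach the paper uses for this family of results: the paper states Lemma~\ref{lem:congruence_R_bis} without an explicit proof, but its proofs of Proposition~\ref{prop:R_monotone} and of Lemma~\ref{lem:relative_bis_onlyif} construct exactly this kind of closure $(\mathcal{S}\cup{\simeq_R})^{*}$ of a generating ``product'' relation and verify the clauses of Definition~\ref{def:R_beq} by a key property for generating pairs, iterated along a chain. The one thing you must make fully explicit when writing this out is what your parenthetical about ``product form'' residuals already gestures at: the key property has to return residuals related by the generating relation composed with $\simeq_R$, not merely by the full closure $\asymp$, since otherwise the chaining step degenerates into the ``weak bisimulation up-to weak bisimilarity'' fallacy the paper itself warns about after its proofs of Lemma~\ref{lem:computation_lemma} and Proposition~\ref{prop:R_monotone} --- and your appeal to the Remark after Definition~\ref{def:R_beq} does not discharge this, because the generating product relation is not known in advance to be an $R$-bisimulation. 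A minor tidy-up: for BPA the case split is simply on whether the head $\zeta$ equals $\epsilon$ (a one-step transition of $\zeta\gamma'$ with $\zeta\neq\epsilon$ is always fired by $\zeta$); the ``ground head evolving to $\epsilon$'' situation only arises across several matching steps and is handled by the Computation Lemma exactly as you indicate.
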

The computation lemma also holds for $\simeq_R$.
\begin{lemma}[Computation Lemma for $\simeq_R$]\label{lem:computation_lemma_R}
If $\alpha \Longrightarrow_R \alpha'\Longrightarrow_R  \alpha'' \simeq_R \alpha$ then $\alpha'\simeq_R \alpha$.
\end{lemma}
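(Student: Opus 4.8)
The plan is to reprove it along exactly the lines of the classical Computation Lemma (Lemma~\ref{lem:computation_lemma}), but carried out inside the transition system of $R$-normal-forms under the $R$-transitions, i.e.\ through the reformulation of $R$-bisimulations given by Proposition~\ref{prop:R_bisimilarity}. Since $\Longrightarrow_R$ acts on $R$-nf's, the three processes $\alpha,\alpha',\alpha''$ are $R$-nf's and the whole argument lives in one fixed labelled transition system. I would exhibit an $R$-bisimulation $\mathcal{S}$ with $\alpha \mathrel{\mathcal{S}} \alpha'$; since $\simeq_R$ is the largest $R$-bisimulation, $\alpha' \simeq_R \alpha$ follows.

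For the candidate take $\mathcal{S} = ({\simeq_R} \cup \mathcal{S}_0)^{*}$, where $\mathcal{S}_0 = \{(\delta,\eta) : \delta \Longrightarrow_R \eta \text{ and } \eta \Longrightarrow_R \zeta \simeq_R \delta \text{ for some } \zeta\}$ and $(\cdot)^{*}$ is equivalence closure (the remark after Definition~\ref{def:R_beq}). Then $(\alpha,\alpha') \in \mathcal{S}_0$ with witness $\zeta = \alpha''$, and ${=_R} \subseteq {\simeq_R} \subseteq \mathcal{S}$. The structural fact that makes everything work is: if $(\delta,\eta)\in\mathcal{S}_0$ and $s$ lies on any $\tau$-path from $\delta$ to $\eta$, then $(\delta,s)\in\mathcal{S}_0$; hence such a path stays inside the $\mathcal{S}$-class of $\delta$, so $\delta \stackrel{\mathcal{S}}{\Longrightarrow}_R \eta$. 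I would then check conditions (2),(3) of Proposition~\ref{prop:R_bisimilarity} link by link: along a $\simeq_R$-link the answers are those $\simeq_R$ already supplies; along an $\mathcal{S}_0$-link $\delta \Longrightarrow_R \eta \Longrightarrow_R \zeta \simeq_R \delta$ a move of $\delta$ is answered by going $\eta \stackrel{\mathcal{S}}{\Longrightarrow}_R \zeta$ and then letting $\zeta \simeq_R \delta$ respond, and a move of $\eta$ by $\delta \stackrel{\mathcal{S}}{\Longrightarrow}_R \eta$; combining the single links along an arbitrary $\mathcal{S}$-chain is routine weak-transition bookkeeping. The one delicate point --- already the delicate point of the classical proof --- is that passing from $\simeq_R$ to the coarser $\mathcal{S}$ must not turn a $\stackrel{\not\mathcal{S}}{\longrightarrow}_R$ step that still needs an answer into an internal one: if $\simeq_R$ answers $\delta \stackrel{\not\mathcal{S}}{\longrightarrow}_R \delta'$ by $\eta \stackrel{\simeq_R}{\Longrightarrow}_R r \stackrel{\not\simeq_R}{\longrightarrow}_R \eta'$ with $\delta' \simeq_R \eta'$, then $(r,\eta')$ is still a $\stackrel{\not\mathcal{S}}{\longrightarrow}_R$ step, because $r\simeq_R\delta$ and $\eta'\simeq_R\delta'$, so $r \mathrel{\mathcal{S}} \eta'$ would force $\delta \mathrel{\mathcal{S}} \delta'$.

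The genuinely new ingredient, not present in Lemma~\ref{lem:computation_lemma}, is the ground-preservation clause (1) of Proposition~\ref{prop:R_bisimilarity} --- and this turns out to be the easy part. Along a $\simeq_R$-link it is ground preservation of $\simeq_R$; along an $\mathcal{S}_0$-link $\delta \Longrightarrow_R \eta \Longrightarrow_R \zeta \simeq_R \delta$, if $\delta \Longrightarrow_R \epsilon$ then $\zeta \Longrightarrow_R \epsilon$ by ground preservation of $\simeq_R$ and hence $\eta \Longrightarrow_R \zeta \Longrightarrow_R \epsilon$, while if $\eta \Longrightarrow_R \epsilon$ then $\delta \Longrightarrow_R \eta \Longrightarrow_R \epsilon$; in both directions this is only transitivity of $\Longrightarrow_R$. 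I expect the main obstacle to be not any single step but the combined consistency check that merging the $\simeq_R$-classes along the $\tau$-path simultaneously preserves the transfer of visible and $\stackrel{\not\simeq_R}{\longrightarrow}_R$ moves (the re-typing point above) and the ground-preservation clause; granting these, the statement follows. As a shortcut that reproves nothing, one can instead note --- via Proposition~\ref{prop:R_bisimilarity} and a ``tick'' gadget that adjoins a fresh visible action $\checkmark$ and a transition $\zeta \stackrel{\checkmark}{\longrightarrow} \epsilon$ from every $R$-nf $\zeta$ with $\zeta \Longrightarrow_R \epsilon$ --- that $\simeq_R$ is ordinary branching bisimilarity on the resulting system, whose $\tau$-transitions are exactly $\stackrel{\tau}{\longrightarrow}_R$ and whose $\Longrightarrow$ is therefore $\Longrightarrow_R$, so that the general-transition-system form of the Computation Lemma~\cite{GlabbeekW89} applies verbatim.
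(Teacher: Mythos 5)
Your proof is correct and follows the same basic template as the paper's: form the equivalence closure of $\simeq_R$ with a ``computation'' relation, verify the $R$-bisimulation conditions via Proposition~\ref{prop:R_bisimilarity} while keeping the targets related by $\simeq_R$ rather than by the coarser candidate relation (the up-to discipline the paper warns about in its remark after Lemma~\ref{lem:computation_lemma}), and note that ground preservation reduces to transitivity of $\Longrightarrow_R$ plus ground preservation of $\simeq_R$. The differences are worth recording. The paper fixes the concrete path $\alpha=\alpha_0\stackrel{\tau}{\longrightarrow}_R\cdots\stackrel{\tau}{\longrightarrow}_R\alpha_k\simeq_R\alpha$ and closes up $\{(\alpha_i,\alpha_j)\}$; because $\alpha_k$ is only known to answer $\alpha_0$'s moves, the case $i<j$ forces it to replay the bisimulation game step by step from $\alpha_k$ back to some $\alpha_i'\simeq_R\alpha_i$ before the actual matching can begin. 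Your $\mathcal{S}_0$ is anchored at the start of the loop ($\delta\Longrightarrow_R\eta\Longrightarrow_R\zeta\simeq_R\delta$), so the responder for a move of $\delta$ is directly $\zeta\simeq_R\delta$ after sliding $\eta\stackrel{\mathcal{S}}{\Longrightarrow}_R\zeta$; this removes the backward-iteration step, at the price of shifting work into the link-by-link composition across the equivalence closure, which you leave at the same level of detail as the paper does. Your explicit check that a $\stackrel{\not\simeq_R}{\longrightarrow}_R$ answer remains a $\stackrel{\not\mathcal{S}}{\longrightarrow}_R$ answer is a point the paper's ``routine work'' glosses over, and is correct as you argue it. Finally, the tick-gadget reduction is a genuinely different and legitimate route: once one accepts Proposition~\ref{prop:R_bisimilarity} and checks that weak matching of the fresh visible $\checkmark$ is equivalent to the ground-preservation clause, $\simeq_R$ on $R$-nf's is branching bisimilarity of an ordinary labelled transition system and the general form of the Computation Lemma applies; what this buys is avoiding any re-verification, at the cost of having to justify that the gadget changes neither the equivalence nor the silent transitions.
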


\subsection{$R$-identities and Admissible Reference Sets}
  Clearly, $R$-bisimilarity has  the following basic property.
\begin{lemma}\label{lem:Rbisimilarity_R}
Let $R \subseteq \mathbf{C}_{\mathrm{G}}$. If $X \in R$, then $X \simeq_R \epsilon$.
\end{lemma}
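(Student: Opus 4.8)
The plan is to derive the statement directly from the fact, recorded in the Remark following Definition~\ref{def:R_beq}, that $=_R$ is itself an $R$-bisimulation. Since $\simeq_R$ is by definition the largest $R$-bisimulation, every $R$-bisimulation — in particular $=_R$ — is contained in $\simeq_R$; hence ${=_R} \subseteq {\simeq_R}$.

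It then remains only to observe that $X =_R \epsilon$. This is immediate from Lemma~\ref{lem:R_equal_elimination}(2): $\alpha =_R \epsilon$ holds precisely when $\alpha \in R^{*}$, and since $X \in R$ we trivially have $X \in R^{*}$. Combining the two observations yields $X \simeq_R \epsilon$, as desired.

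I expect no real obstacle here: the content of the lemma has essentially been built into the design of the definitions (the requirement ${=_R} \subseteq {\asymp}$ in Definition~\ref{def:R_beq}, together with the well-definedness Remark that identifies $=_R$ as the least $R$-bisimulation). If one preferred a self-contained argument instead of invoking that Remark as a black box, one could take $\asymp$ to be any equivalence relation extending $=_R$ by the pair $(X,\epsilon)$ and verify the three clauses of Definition~\ref{def:R_beq} directly, using Lemma~\ref{lem:char_R_arrow} and Lemma~\ref{lem:groud_preserve} to translate the ordinary transitions of $X$ — which, since $X \in \mathbf{C}_{\mathrm{G}}$, all proceed through processes $=_R$-equal to $\epsilon$ — into matching $R$-transitions of $\epsilon$, and noting that ground preservation is vacuous because both $X$ and $\epsilon$ are ground. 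This second route, however, amounts to re-proving that $=_R$ is an $R$-bisimulation, so the one-line argument above is the cleaner one.
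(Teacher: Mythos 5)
Your proof is correct and is essentially the argument the paper intends: the paper states this lemma without proof (``Clearly, $R$-bisimilarity has the following basic property''), and your one-line derivation is exactly the content being taken for granted, since Definition~\ref{def:R_beq} already forces ${=_R}\subseteq{\asymp}$ for every $R$-bisimulation $\asymp$ (in particular for $\simeq_R$ itself), and $X\in R$ gives $X=_R\epsilon$ by Lemma~\ref{lem:R_equal_elimination}. No gap.
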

Be aware that the converse of Lemma~\ref{lem:Rbisimilarity_R} does not hold in general. That is, if $X \simeq_R \epsilon$, there is no guarantee that $X \in R$. This basic observation leads to further discussion.
\begin{definition}\label{def:Id_Rd_R}
Let $R \subseteq \mathbf{C}_{\mathrm{G}}$.
A process $\alpha$ is  called a  {\em $\simeq_{R}$-identity} if $\alpha  \simeq_{R} \epsilon$. We use $\mathsf{Id}_{R}$ to denote $\{ X \;|\;  X  \simeq_{R} \epsilon \}$.
\end{definition}
By Lemma~\ref{lem:Rbisimilarity_R} and Definition~\ref{def:R_beq},
$R \subseteq \mathsf{Id}_{R} \subseteq \mathbf{C}_{\mathrm{G}}$.  Moreover,
\begin{lemma}\label{lem:Id_def}
$\alpha  \simeq_{R} \epsilon$ if and only if $\alpha \in (\mathsf{Id}_{R})^{*}$.
\end{lemma}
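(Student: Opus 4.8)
I would prove the two implications separately, both by induction on the number of process constants occurring in $\alpha$; the direction from left to right is the substantial one, the other is routine.

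For $\alpha \in (\mathsf{Id}_{R})^{*} \Rightarrow \alpha \simeq_{R} \epsilon$: if $\alpha = \epsilon$ there is nothing to do, and otherwise I would write $\alpha = X\beta$ with $X \in \mathsf{Id}_{R}$ and $\beta \in (\mathsf{Id}_{R})^{*}$. The induction hypothesis gives $\beta \simeq_{R} \epsilon$, so Lemma~\ref{lem:congruence_R_bis} yields $X\beta \simeq_{R} X\epsilon = X$; since $X \in \mathsf{Id}_{R}$ is by definition $X \simeq_{R} \epsilon$, transitivity gives $\alpha \simeq_{R} \epsilon$.

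For the converse, assume $\alpha \simeq_{R} \epsilon$. First I would note that $\alpha$ is ground: from $\epsilon \Longrightarrow \epsilon$ and $\epsilon \simeq_{R} \alpha$, the ground-preservation clause of Definition~\ref{def:R_beq} forces $\alpha \Longrightarrow \epsilon$, so $\alpha \in \mathbf{C}_{\mathrm{G}}^{*}$ and every constant occurring in $\alpha$ is ground. If $\alpha = \epsilon$ we are done; otherwise write $\alpha = X\beta$, where $X$ and every constant of $\beta$ is ground, so $\beta$ is ground too. Since $X \Longrightarrow \epsilon$ we have $\alpha = X\beta \Longrightarrow \beta$, and applying Lemma~\ref{lem:R_transition} step by step lifts this $\tau$-sequence to $\alpha_{R} \Longrightarrow_{R} \beta_{R}$; and since $\beta$ is ground, Lemma~\ref{lem:groud_preserve} gives $\beta_{R} \Longrightarrow_{R} \epsilon$. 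Because ${=_{R}} \subseteq {\simeq_{R}}$ we also have $\alpha_{R} \simeq_{R} \alpha \simeq_{R} \epsilon$, so the chain $\alpha_{R} \Longrightarrow_{R} \beta_{R} \Longrightarrow_{R} \epsilon \simeq_{R} \alpha_{R}$ together with the Computation Lemma for $\simeq_{R}$ (Lemma~\ref{lem:computation_lemma_R}) forces $\beta_{R} \simeq_{R} \alpha_{R}$, hence $\beta \simeq_{R} \epsilon$. Now, exactly as in the easy direction, Lemma~\ref{lem:congruence_R_bis} gives $X = X\epsilon \simeq_{R} X\beta = \alpha \simeq_{R} \epsilon$, so $X \in \mathsf{Id}_{R}$; and the induction hypothesis applied to the shorter ground process $\beta$ gives $\beta \in (\mathsf{Id}_{R})^{*}$, whence $\alpha = X\beta \in (\mathsf{Id}_{R})^{*}$.

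The one genuinely delicate point is the peeling step in the converse, namely deducing from $X\beta \simeq_{R} \epsilon$ that the tail $\beta$ is already a $\simeq_{R}$-identity. This is exactly where the ground-preservation condition is indispensable: it is what guarantees that $\alpha$, and therefore $\beta$, is ground, which is precisely what makes both $\alpha_{R} \Longrightarrow_{R} \beta_{R}$ and $\beta_{R} \Longrightarrow_{R} \epsilon$ available so that the Computation Lemma can collapse $\beta_{R}$ onto $\epsilon$ (the example following Proposition~\ref{prop:R_bisimilarity} illustrates why this clause cannot be omitted). Everything else is bookkeeping with $R$-normal forms and repeated use of the left-congruence property of $\simeq_{R}$.
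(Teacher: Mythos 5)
Your proof is correct. The paper states Lemma~\ref{lem:Id_def} without proof, so there is nothing to compare against, but your argument uses exactly the ingredients the surrounding development makes available: the inclusion $(\mathsf{Id}_R)^* \subseteq$ the $\simeq_R$-class of $\epsilon$ follows from the suffix-replacement clause of Lemma~\ref{lem:congruence_R_bis} plus transitivity, and the converse peeling step is correctly justified by ground preservation (which makes $\alpha_R \Longrightarrow_R \beta_R \Longrightarrow_R \epsilon$ available) together with the Computation Lemma for $\simeq_R$ (Lemma~\ref{lem:computation_lemma_R}) to collapse $\beta_R$ onto $\epsilon$, after which the same congruence argument isolates $X \simeq_R \epsilon$. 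Your remark that the ground-preservation clause of Definition~\ref{def:R_beq} is what makes the peeling step work is exactly right.
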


Below we will demonstrate that, as a reference set, $\mathsf{Id}_{R}$ plays an important role. At first we state a useful proposition for relative bisimilarities. It says that $\simeq_R$ is monotone.
\begin{proposition}\label{prop:R_monotone}
Let $R_1 \subseteq R_2 \subseteq \mathbf{C}_{\mathrm{G}}$.  If $\alpha \simeq_{R_1} \beta$, then $\alpha \simeq_{R_2} \beta$.
\end{proposition}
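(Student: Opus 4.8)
The plan is to exhibit an $R_2$-bisimulation that contains $\simeq_{R_1}$; since $\simeq_{R_2}$ is the largest $R_2$-bisimulation, this at once gives $\simeq_{R_1}\subseteq\simeq_{R_2}$, which is the claim. The candidate I would use is the equivalence relation $\asymp \stackrel{\mathrm{def}}{=} (\simeq_{R_1}\cup{=_{R_2}})^{*}$ (the relation $\simeq_{R_1}$ by itself will not do, since it need not contain ${=_{R_2}}$). It is eligible to be an $R_2$-bisimulation in the sense of Definition~\ref{def:R_beq}, being an equivalence with ${=_{R_2}}\subseteq\asymp$. Before checking the bisimulation clauses I would record the elementary consequences of $R_1\subseteq R_2$: $R_1^{*}\subseteq R_2^{*}$, hence ${=_{R_1}}\subseteq{=_{R_2}}\subseteq\asymp$; every $R_2$-nf is also an $R_1$-nf; and $(\alpha_{R_1})_{R_2}=\alpha_{R_2}$ for every $\alpha$. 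Combining ${=_{R_1}}\subseteq{=_{R_2}}$ with Lemma~\ref{lem:R_transition} gives the transition-lifting fact that every $R_1$-transition projects to an $R_2$-transition: if $\zeta\stackrel{\ell}{\longrightarrow}_{R_1}\zeta'$, then $\zeta_{R_2}\stackrel{\ell}{\longrightarrow}_{R_2}\zeta'_{R_2}$.

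First I would verify the clauses of Definition~\ref{def:R_beq} (equivalently Proposition~\ref{prop:R_bisimilarity}) for a single generating link. An ${=_{R_2}}$-link is trivial: if $\alpha{=_{R_2}}\beta$, then $\alpha_{R_2}=\beta_{R_2}$, so in the $R_2$-LTS the two processes are literally the same state, and ground preservation follows from Lemma~\ref{lem:groud_preserve}. The substantive case is a $\simeq_{R_1}$-link $\alpha\simeq_{R_1}\beta$, where ground preservation is already one of the defining clauses of the $R_1$-bisimulation $\simeq_{R_1}$. Given $\alpha\stackrel{\ell}{\longrightarrow}\alpha'$ (with $\ell=a$, or $\ell=\tau$ and $\alpha\not\asymp\alpha'$, in which case also $\alpha\not\simeq_{R_1}\alpha'$), the $R_1$-bisimulation property yields a matching $R_1$-path
\[
\beta_{R_1}=\eta_0\stackrel{\tau}{\longrightarrow}_{R_1}\cdots\stackrel{\tau}{\longrightarrow}_{R_1}\eta_{k-1}\stackrel{\ell}{\longrightarrow}_{R_1}\eta_k=\beta'
\]
with $\eta_0\simeq_{R_1}\cdots\simeq_{R_1}\eta_{k-1}$, $\alpha'\simeq_{R_1}\beta'$, and (for $\ell=\tau$) $\eta_{k-1}\not\simeq_{R_1}\eta_k$. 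Projecting by $(\cdot)_{R_2}$ and using transition lifting turns this into an $R_2$-path from $(\beta_{R_1})_{R_2}=\beta_{R_2}$ to $(\beta')_{R_2}$; its intermediate $\tau$-steps are internal for $\asymp$ (each $\eta_i\simeq_{R_1}\eta_{i+1}$ gives $(\eta_i)_{R_2}\asymp(\eta_{i+1})_{R_2}$, with possible $\tau$-self-loops being harmless), and $\alpha'\asymp\beta'\asymp(\beta')_{R_2}$. When $\ell=\tau$ the last edge is genuinely a $\not\asymp$-edge, for otherwise $(\beta')_{R_2}\asymp(\eta_{k-1})_{R_2}\asymp\beta_{R_2}\asymp\alpha$, contradicting $\alpha\not\asymp\alpha'\asymp(\beta')_{R_2}$. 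This settles every single link, and ground preservation along a whole chain then follows by propagating it link by link.

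The remaining and, I expect, hardest step is to lift the clauses from single links to the full relation $\asymp=(\simeq_{R_1}\cup{=_{R_2}})^{*}$, i.e.\ to show that $\asymp$ itself --- not merely its generating links --- is an $R_2$-bisimulation. This is the analogue, in the present mixed-reference-set setting, of the closure statement in the Remark following Definition~\ref{def:R_beq}, and the difficulty is intrinsic to branching bisimilarity: a transition of $\alpha$ is matched only by a \emph{weak} $R_2$-path of the neighbour, and to thread such a path along a chain $\alpha\,L_1\,\gamma_1\,L_2\,\cdots\,L_n\,\beta$ one must, at each link, absorb the internal ($\asymp$-) portion of the path --- which is matched by doing nothing --- and re-match only the single decisive (visible, or $\not\asymp$-) step, iterating while keeping the $\asymp$-classes of the endpoints under control. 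Concretely I would prove, by induction on the chain length, the strengthened statement: whenever $\alpha\asymp\beta$ and $\alpha\stackrel{\ell}{\longrightarrow}\alpha'$ (with the usual $\not\asymp$ proviso when $\ell=\tau$), there is $\beta'$ with $\beta_{R_2}\stackrel{\asymp}{\Longrightarrow}_{R_2}\cdot\stackrel{\ell}{\longrightarrow}_{R_2}\beta'_{R_2}$ (the $\not\asymp$-version when $\ell=\tau$) and $\alpha'\asymp\beta'$, together with ground preservation; the tool that makes the bookkeeping work is stuttering-type reasoning, for which the Computation Lemma for $\simeq_R$ (Lemma~\ref{lem:computation_lemma_R}) is available. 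Once this is in place, $\asymp$ is an $R_2$-bisimulation, hence $\asymp\subseteq\simeq_{R_2}$, and in particular $\simeq_{R_1}\subseteq\simeq_{R_2}$.
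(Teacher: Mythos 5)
Your overall strategy is the same as the paper's: close $\simeq_{R_1}$ under an $R_2$-compatible relation, verify the clauses of Definition~\ref{def:R_beq} for the generating links, and then lift to the transitive closure. (The paper takes $\asymp\;=\;(\simeq_{R_1}\cup\simeq_{R_2})^{*}$ rather than your $(\simeq_{R_1}\cup{=_{R_2}})^{*}$, but that difference is cosmetic.) Your single-link analysis is sound, and in fact it delivers more than you record: matching a transition across a $\simeq_{R_1}$-link produces a residue of the specific shape $\alpha'\simeq_{R_1}\beta'=_{R_2}(\beta')_{R_2}$, i.e.\ a chain $\simeq_{R_1}\cdot\simeq_{R_2}$ of bounded, known structure.

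The genuine gap is in the lifting step, and it is precisely where you discard that extra structure. Your proposed induction hypothesis --- ``whenever $\alpha\asymp\beta$ and $\alpha$ performs a decisive step, $\beta$ matches it with residue $\alpha'\asymp\beta'$'' --- does not close under induction on chain length. In the inductive step you must push a \emph{weak} matching path $\gamma_{R_2}\stackrel{\asymp}{\Longrightarrow}_{R_2}\cdot\stackrel{\ell}{\longrightarrow}_{R_2}\gamma'$ across one more link; the internal $\asymp$-preserving $\tau$-steps are ``matched by staying put,'' but this degrades the relation between the current state on the left and the stationary partner on the right to an arbitrary $\asymp$-pair, whose witnessing chain may be longer than the one you started with. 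Matching the final decisive step then requires exactly the statement you are trying to prove, for an uncontrolled chain length. This is the ``weak bisimulation up to weak bisimilarity'' trap, and the paper's appendix contains an explicit remark (after its proof of this proposition) identifying your formulation --- residue $\alpha'\asymp\beta'$ --- as the mistake in an earlier version. The paper's fix is to state the single-link properties with the structured residues $\alpha'\simeq_{R_1}\cdot\simeq_{R_2}\beta'$ (for a $\simeq_{R_1}$-link) and $\alpha'\simeq_{R_2}\beta'$ (for a $\simeq_{R_2}$-link), so that every pair arising during the iteration is a bounded chain of links to which the bisimulation properties of $\simeq_{R_1}$ and $\simeq_{R_2}$ themselves apply, rather than an arbitrary $\asymp$-pair. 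Your appeal to the Computation Lemma (Lemma~\ref{lem:computation_lemma_R}) cannot repair this, since that lemma is a property of $\simeq_R$ and cannot be invoked for the candidate relation $\asymp$ before $\asymp$ is known to be a bisimulation. To fix the proof, keep the residue $\simeq_{R_1}\cdot=_{R_2}$ (or $\simeq_{R_1}\cdot\simeq_{R_2}$) from your single-link case as part of the invariant carried through the chain.
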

\begin{corollary}
Let $R_1 \subseteq R_2 \subseteq \mathbf{C}_{\mathrm{G}}$.    Then,  $\mathsf{Id}_{R_1} \subseteq \mathsf{Id}_{R_2}$.
\end{corollary}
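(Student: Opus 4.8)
The plan is to derive this immediately from the monotonicity of relative bisimilarity, Proposition~\ref{prop:R_monotone}, which has just been established. First I would unfold the definition of $\mathsf{Id}_{R_1}$: take an arbitrary $X \in \mathsf{Id}_{R_1}$, so that by Definition~\ref{def:Id_Rd_R} we have $X \simeq_{R_1} \epsilon$. Since $R_1 \subseteq R_2 \subseteq \mathbf{C}_{\mathrm{G}}$, both $\simeq_{R_1}$ and $\simeq_{R_2}$ are well-defined $R$-bisimilarities, and Proposition~\ref{prop:R_monotone} applies to the pair $(X,\epsilon)$, yielding $X \simeq_{R_2} \epsilon$. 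By Definition~\ref{def:Id_Rd_R} again this says $X \in \mathsf{Id}_{R_2}$, and since $X$ was arbitrary we conclude $\mathsf{Id}_{R_1} \subseteq \mathsf{Id}_{R_2}$.

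There is essentially no obstacle at the level of the corollary itself: all the genuine content sits in Proposition~\ref{prop:R_monotone}, whose proof is the nontrivial part (one must check that an appropriate equivalence relation built from $\simeq_{R_1}$ and $=_{R_2}$ satisfies the $R_2$-bisimulation clauses of Definition~\ref{def:R_beq}, in particular transporting $R_1$-transitions to $R_2$-transitions and preserving groundness). Granting that, the corollary is a one-line specialization to the base process $\epsilon$. If one wished to avoid quoting Proposition~\ref{prop:R_monotone} directly, the identical conclusion can be reached via the characterization in Lemma~\ref{lem:Id_def}, but that only repackages the same monotonicity argument and gains nothing.
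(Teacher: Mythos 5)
Your proof is correct and is exactly the intended argument: the paper states this as an immediate corollary of Proposition~\ref{prop:R_monotone}, obtained by applying monotonicity to the pair $(X,\epsilon)$ for each $X \in \mathsf{Id}_{R_1}$. All the substance indeed resides in Proposition~\ref{prop:R_monotone} itself.
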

Intuitively, $\simeq_R$ is the relative bisimilarity which is induced by regarding the constants in $R$ as $\epsilon$ purposely. It is reasonable to expect that $X \simeq_{ \mathsf{Id}_{R}} \epsilon$ if and only if $X \in \mathsf{Id}_{R}$. This intuition is confirmed by Proposition~\ref{prop:R_bis_vs_IDR} and its corollaries.
\begin{proposition}\label{prop:R_bis_vs_IDR}
$\alpha \simeq_R \beta$ if and only if $\alpha \simeq_{\mathsf{Id}_{R}} \beta$.
\end{proposition}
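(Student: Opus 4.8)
The plan is to establish the two inclusions ${\simeq_{R}}\subseteq{\simeq_{\mathsf{Id}_{R}}}$ and ${\simeq_{\mathsf{Id}_{R}}}\subseteq{\simeq_{R}}$ separately. The first is immediate: since $R\subseteq\mathsf{Id}_{R}\subseteq\mathbf{C}_{\mathrm{G}}$, Proposition~\ref{prop:R_monotone} yields $\alpha\simeq_{R}\beta\Rightarrow\alpha\simeq_{\mathsf{Id}_{R}}\beta$. For the second, which is where the work lies, I would show that $\simeq_{\mathsf{Id}_{R}}$, regarded as a relation on all processes, is itself an $R$-bisimulation; maximality of $\simeq_{R}$ then gives ${\simeq_{\mathsf{Id}_{R}}}\subseteq{\simeq_{R}}$.

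Two of the three requirements in Definition~\ref{def:R_beq} are cheap. The inclusion ${=_{R}}\subseteq{\simeq_{\mathsf{Id}_{R}}}$ holds because ${=_{R}}\subseteq{=_{\mathsf{Id}_{R}}}\subseteq{\simeq_{\mathsf{Id}_{R}}}$, using $R\subseteq\mathsf{Id}_{R}$. Ground preservation is inherited for free, since $\simeq_{\mathsf{Id}_{R}}$ is an $\mathsf{Id}_{R}$-bisimulation and the ground-preservation clause is already part of that notion. The substantive part is the transition-matching clauses: given $\alpha\simeq_{\mathsf{Id}_{R}}\beta$ and a move $\alpha\stackrel{\ell}{\longrightarrow}\alpha'$ (with $\alpha\not\simeq_{\mathsf{Id}_{R}}\alpha'$ when $\ell=\tau$), I must produce $\beta_{R}\stackrel{\simeq_{\mathsf{Id}_{R}}}{\Longrightarrow}_{R}\cdot\stackrel{\ell}{\longrightarrow}_{R}\beta^{*}$ (with a $\not\simeq_{\mathsf{Id}_{R}}$ last step when $\ell=\tau$) together with $\alpha'\simeq_{\mathsf{Id}_{R}}\beta^{*}$. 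My strategy is: first invoke the $\mathsf{Id}_{R}$-bisimulation property of $\simeq_{\mathsf{Id}_{R}}$ to obtain a matching play in the $\mathsf{Id}_{R}$-world, $\beta_{\mathsf{Id}_{R}}\stackrel{\simeq_{\mathsf{Id}_{R}}}{\Longrightarrow}_{\mathsf{Id}_{R}}\gamma\stackrel{\ell}{\longrightarrow}_{\mathsf{Id}_{R}}\beta'$ with $\alpha'\simeq_{\mathsf{Id}_{R}}\beta'$; then lift it, step by step, to an $R$-play starting from $\beta_{R}$ (note $\beta_{R}=\beta_{\mathsf{Id}_{R}}\,w$ for some $w\in\mathsf{Id}_{R}^{*}$). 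A step that, by Lemma~\ref{lem:char_R_arrow}, is realised by a genuine transition of a nonempty $\mathsf{Id}_{R}$-normal form lifts verbatim: an $\mathsf{Id}_{R}$-normal form is in particular an $R$-normal form, and the $R$-normal form of the residual differs from its $\mathsf{Id}_{R}$-normal form only by an $\mathsf{Id}_{R}^{*}$-suffix, hence is $=_{\mathsf{Id}_{R}}$- and so $\simeq_{\mathsf{Id}_{R}}$-equivalent to it; all markers and the concluding $\alpha'\simeq_{\mathsf{Id}_{R}}\beta^{*}$ survive, appealing to monotonicity where needed, and the trailing suffix $w$ just rides along.

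The core of the verification — and the step I expect to demand the most care — concerns $\mathsf{Id}_{R}$-transitions that depart from $\epsilon$: by Lemma~\ref{lem:char_R_arrow} such a transition is triggered by some constant $X\in\mathsf{Id}_{R}$, and as $X$ need not lie in $R$ it cannot trigger an $R$-transition from $\epsilon$. The observation that resolves this is that whenever the $\mathsf{Id}_{R}$-play sits at $\epsilon$ the synchronised $R$-side process is a word over $\mathsf{Id}_{R}$, hence $\simeq_{R}\epsilon$ by Lemma~\ref{lem:Id_def}; and since $X\in\mathsf{Id}_{R}$ gives $X\simeq_{R}\epsilon$ as well, the $R$-side process is $\simeq_{R}X$, so one can let it simulate the transition of $X$ by using that $\simeq_{R}$ is \emph{itself} an $R$-bisimulation, turning the resulting $\simeq_{R}$-residual back into an $\simeq_{\mathsf{Id}_{R}}$-residual by monotonicity (a $\simeq_{\mathsf{Id}_{R}}$-preserving step meanwhile costs no $R$-move at all). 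One then has to check that the markers transfer — the source stays a word over $\mathsf{Id}_{R}$, hence $\simeq_{\mathsf{Id}_{R}}\epsilon$, while the target is $\simeq_{R}$-, hence $\simeq_{\mathsf{Id}_{R}}$-equivalent to a process that is not $\simeq_{\mathsf{Id}_{R}}\epsilon$ — and, more seriously, that the $R$-play can be kept synchronised with the $\mathsf{Id}_{R}$-play throughout: a priori the $\mathsf{Id}_{R}$-play might pass through a nonempty $\mathsf{Id}_{R}$-normal form that is still $\simeq_{\mathsf{Id}_{R}}\epsilon$, which would strand the $R$-side inside $\mathsf{Id}_{R}^{*}$. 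Ruling this pathology out (it amounts precisely to the idempotence $\mathsf{Id}_{\mathsf{Id}_{R}}=\mathsf{Id}_{R}$ recorded as a corollary of this proposition) and making the whole lift go through seems to require a well-founded induction — for instance on $|\mathbf{C}_{\mathrm{G}}\setminus R|$ or on the length of the $\mathsf{Id}_{R}$-play — and I regard that induction as the main technical hurdle. Combining the two inclusions then gives $\simeq_{R}={\simeq_{\mathsf{Id}_{R}}}$.
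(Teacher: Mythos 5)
Your overall route is exactly the paper's: Proposition~\ref{prop:R_monotone} gives ${\simeq_R}\subseteq{\simeq_{\mathsf{Id}_R}}$ for free, and the converse is obtained by verifying that $\simeq_{\mathsf{Id}_R}$ is an $R$-bisimulation, lifting the matching play from the $\mathsf{Id}_R$-world to the $R$-world step by step and treating steps that depart from $\epsilon$ by exploiting $X\simeq_R\epsilon$ for $X\in\mathsf{Id}_R$ together with the fact that $\simeq_R$ is itself an $R$-bisimulation. You have also put your finger on the genuine crux. The problem is that you leave that crux undischarged, and the one concrete idea you offer for discharging it is circular: the idempotence $\mathsf{Id}_{\mathsf{Id}_R}=\mathsf{Id}_R$ is obtained in the paper from Corollary~\ref{coro:monotone_2}, which is itself a corollary of the proposition you are proving, so it is not available to you here.

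The fact that actually closes the gap is weaker than idempotence and is available independently. Write $S=\mathsf{Id}_R$. When the matching $S$-play sits at $\epsilon$, a $\simeq_S$-preserving step $\epsilon\stackrel{\simeq_S}{\longrightarrow}_S\beta''$ is, by Lemma~\ref{lem:char_R_arrow}, induced by some $X\in S$ with $X\stackrel{\tau}{\longrightarrow}\gamma$ and $\gamma_S=\beta''$; ground preservation gives $\beta''\Longrightarrow\epsilon$ and hence $\gamma\Longrightarrow\epsilon$, so $\epsilon\simeq_R X_R\stackrel{\tau}{\longrightarrow}_R\gamma_R\Longrightarrow_R\epsilon$, and the Computation Lemma for $\simeq_R$ (Lemma~\ref{lem:computation_lemma_R}) yields $\gamma\simeq_R\epsilon$, whence $\gamma\in(\mathsf{Id}_R)^{*}$ by Lemma~\ref{lem:Id_def} and $\beta''=\gamma_S=\epsilon$. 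Thus once the $S$-play reaches $\epsilon$ it stays at $\epsilon$ for the whole preserving prefix, the $\simeq_R$-simulation trick is invoked exactly once, at the final non-preserving (or visible) step, and the bad interleaving you feared cannot occur. Your other worry is a non-issue: while the $S$-play is at a \emph{nonempty} $S$-normal form, even one that happens to be $\simeq_S$-equivalent to $\epsilon$, clause~2 of Lemma~\ref{lem:char_R_arrow} realises each step as a genuine transition of the process itself, so the verbatim lift (with the $S^{*}$-suffix riding along) goes through and the $R$-side is never stranded in $S^{*}$. A straightforward induction on the length of the $S$-play then stitches the nonempty phase and the $\epsilon$ phase together; supply the Computation Lemma argument above and your proof coincides with the paper's.
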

\begin{corollary}\label{coro:monotone_1}
Let $R_1 \subseteq \mathbf{C}_{\mathrm{G}}$ and $R_2 \subseteq \mathbf{C}_{\mathrm{G}}$. If  $\mathsf{Id}_{R_1} = \mathsf{Id}_{R_2}$, then $\alpha \simeq_{R_1} \beta$ if and only if $\alpha \simeq_{R_2} \beta$.
\end{corollary}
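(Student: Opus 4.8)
The plan is to read the corollary off Proposition~\ref{prop:R_bis_vs_IDR} with essentially no extra argument. Instantiating that proposition at $R := R_1$ gives $\alpha \simeq_{R_1}\beta$ iff $\alpha\simeq_{\mathsf{Id}_{R_1}}\beta$, and instantiating it at $R := R_2$ gives $\alpha\simeq_{R_2}\beta$ iff $\alpha\simeq_{\mathsf{Id}_{R_2}}\beta$. The hypothesis $\mathsf{Id}_{R_1}=\mathsf{Id}_{R_2}$ makes the two middle relations literally the same relation, so chaining the three equivalences yields $\alpha\simeq_{R_1}\beta$ iff $\alpha\simeq_{R_2}\beta$. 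Put in one phrase: for every $R\subseteq\mathbf{C}_{\mathrm{G}}$ the relation $\simeq_R$ depends only on $\mathsf{Id}_R$.

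Since this reduces everything to Proposition~\ref{prop:R_bis_vs_IDR}, the substance is in that proposition, i.e. in proving $\simeq_R = \simeq_{\mathsf{Id}_R}$, which I would do by two inclusions. The inclusion $\simeq_R\subseteq\simeq_{\mathsf{Id}_R}$ is the easy half: from $R\subseteq\mathsf{Id}_R\subseteq\mathbf{C}_{\mathrm{G}}$, monotonicity of relative bisimilarity (Proposition~\ref{prop:R_monotone}) delivers it at once. For the reverse inclusion I would show that $\simeq_{\mathsf{Id}_R}$ is an $R$-bisimulation in the sense of Definition~\ref{def:R_beq} (equivalently, a bisimulation between $R$-normal forms under $\stackrel{\ell}{\longrightarrow}_R$, via Proposition~\ref{prop:R_bisimilarity}); since $\simeq_R$ is the largest $R$-bisimulation, this gives $\simeq_{\mathsf{Id}_R}\subseteq\simeq_R$. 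The side condition ${=_R}\subseteq{\simeq_{\mathsf{Id}_R}}$ is immediate from $R\subseteq\mathsf{Id}_R$ (whence ${=_R}\subseteq{=_{\mathsf{Id}_R}}\subseteq{\simeq_{\mathsf{Id}_R}}$), and the ground-preservation clause is literally the same statement for every reference set, so it transfers verbatim.

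The \emph{main obstacle} is the transition-matching clauses~(2) and~(3). Given $\alpha\simeq_{\mathsf{Id}_R}\beta$ and a move $\alpha\stackrel{\ell}{\longrightarrow}\alpha'$, what the $\mathsf{Id}_R$-bisimulation property of $\simeq_{\mathsf{Id}_R}$ supplies is a matching $\mathsf{Id}_R$-computation out of $\beta_{\mathsf{Id}_R}$, whereas Definition~\ref{def:R_beq} asks for a matching $R$-computation out of $\beta_R$; reconciling the two normal forms and the two families of transition relations is where the work lies. I would do it with Lemma~\ref{lem:char_R_arrow}, which splits each $\mathsf{Id}_R$-step into one ``induced by the head of the process'' and one ``induced by a constant of the reference set'': steps of the first kind lift directly to $R$-steps, while a step of the second kind, triggered by some $X\in\mathsf{Id}_R$, is dispatched by invoking the $R$-bisimulation property of $\simeq_R$ itself --- legitimate because $X\in\mathsf{Id}_R$ means exactly $X\simeq_R\epsilon$, and harmless because $\simeq_R\subseteq\simeq_{\mathsf{Id}_R}$ (the easy inclusion already proved), so an $R$-match furnished by $\simeq_R$ is a fortiori an $R$-match for $\simeq_{\mathsf{Id}_R}$. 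The Computation Lemma (Lemma~\ref{lem:computation_lemma_R}), the congruence property (Lemma~\ref{lem:congruence_R_bis}) and Lemma~\ref{lem:Id_def} (the $\simeq_R$-identities are exactly $(\mathsf{Id}_R)^{*}$) are the tools for absorbing the discarded identity suffixes along the way. Once this bookkeeping is in place, clauses~(2) and~(3) become routine, Proposition~\ref{prop:R_bis_vs_IDR} follows, and the corollary drops out as in the first paragraph.
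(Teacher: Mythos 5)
Your proposal is correct and matches the paper's intent exactly: the corollary is obtained by instantiating Proposition~\ref{prop:R_bis_vs_IDR} at $R_1$ and $R_2$ and chaining the three equivalences through the common relation $\simeq_{\mathsf{Id}_{R_1}}=\simeq_{\mathsf{Id}_{R_2}}$. Your sketch of the underlying proposition (easy inclusion via Proposition~\ref{prop:R_monotone}, hard inclusion by showing $\simeq_{\mathsf{Id}_R}$ is an $R$-bisimulation using Lemma~\ref{lem:char_R_arrow}, the Computation Lemma, and Lemma~\ref{lem:Id_def}) also follows the same route as the paper's appendix proof.
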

\begin{corollary}\label{coro:monotone_2}
 Let $R,S \subseteq \mathbf{C}_{\mathrm{G}}$ such that $R \subseteq S \subseteq \mathsf{Id}_{R}$, then $\mathsf{Id}_{S} =  \mathsf{Id}_{R}$.
\end{corollary}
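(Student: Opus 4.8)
The plan is to derive the equality $\mathsf{Id}_S = \mathsf{Id}_R$ from the monotonicity of the operator $R \mapsto \mathsf{Id}_R$ (the Corollary following Proposition~\ref{prop:R_monotone}) together with the ``idempotence'' $\mathsf{Id}_{\mathsf{Id}_R} = \mathsf{Id}_R$, which is an immediate consequence of Proposition~\ref{prop:R_bis_vs_IDR}. Throughout, I would keep track of the fact that all the sets involved sit inside $\mathbf{C}_{\mathrm{G}}$, since $R \subseteq S \subseteq \mathsf{Id}_R \subseteq \mathbf{C}_{\mathrm{G}}$; this is exactly what is needed to invoke the cited results.

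First I would prove the inclusion $\mathsf{Id}_R \subseteq \mathsf{Id}_S$. This is immediate from the hypothesis $R \subseteq S$ and the monotonicity corollary, applied to $R \subseteq S \subseteq \mathbf{C}_{\mathrm{G}}$. Second I would prove the reverse inclusion $\mathsf{Id}_S \subseteq \mathsf{Id}_R$. From the hypothesis $S \subseteq \mathsf{Id}_R$ and the same monotonicity corollary (now applied to $S \subseteq \mathsf{Id}_R \subseteq \mathbf{C}_{\mathrm{G}}$) we obtain $\mathsf{Id}_S \subseteq \mathsf{Id}_{\mathsf{Id}_R}$, so it suffices to show $\mathsf{Id}_{\mathsf{Id}_R} = \mathsf{Id}_R$. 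By Proposition~\ref{prop:R_bis_vs_IDR}, $\alpha \simeq_R \beta$ if and only if $\alpha \simeq_{\mathsf{Id}_R} \beta$ for all $\alpha, \beta$; specializing to $\alpha = X$ and $\beta = \epsilon$ for each constant $X$ shows that $\{X \mid X \simeq_{\mathsf{Id}_R}\epsilon\} = \{X \mid X \simeq_R \epsilon\}$, i.e. $\mathsf{Id}_{\mathsf{Id}_R} = \mathsf{Id}_R$. Combining the two inclusions yields $\mathsf{Id}_S = \mathsf{Id}_R$.

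I do not expect a genuine obstacle here: the substantive content has already been packaged into Proposition~\ref{prop:R_bis_vs_IDR} and the monotonicity corollary, which we may assume. The only point requiring mild care is bookkeeping on the reference sets — making sure each application of the monotonicity corollary is to a chain of subsets of $\mathbf{C}_{\mathrm{G}}$ — and this is guaranteed by the chain $R \subseteq S \subseteq \mathsf{Id}_R \subseteq \mathbf{C}_{\mathrm{G}}$ in the hypothesis. (Alternatively, one could phrase the argument using Corollary~\ref{coro:monotone_1} in place of the explicit idempotence step, but the route above is the most direct.)
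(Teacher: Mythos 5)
Your proof is correct and follows the route the paper intends: the paper states this as an immediate corollary of the monotonicity of $R \mapsto \mathsf{Id}_R$ and Proposition~\ref{prop:R_bis_vs_IDR}, and your two-inclusion squeeze ($\mathsf{Id}_R \subseteq \mathsf{Id}_S \subseteq \mathsf{Id}_{\mathsf{Id}_R} = \mathsf{Id}_R$) is exactly that derivation, with the idempotence step correctly obtained directly from Proposition~\ref{prop:R_bis_vs_IDR} so there is no circularity with the lemma the paper later deduces from this corollary.
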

A direct inference of Corollary~\ref{coro:monotone_2} is the following fact.
\begin{lemma}
$X \simeq_{ \mathsf{Id}_{R}} \epsilon$ if and only if $X \in \mathsf{Id}_{R}$. In other words,
$\mathsf{Id}_{\mathsf{Id}_{R}} = \mathsf{Id}_{R}$.
\end{lemma}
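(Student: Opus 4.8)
The plan is to instantiate Corollary~\ref{coro:monotone_2} with the particular choice $S := \mathsf{Id}_R$ and then read off the equivalence. First I would recall, from the remark following Lemma~\ref{lem:Rbisimilarity_R} (equivalently, from Lemma~\ref{lem:Id_def} together with $R\subseteq\mathbf{C}_{\mathrm{G}}$), that $R \subseteq \mathsf{Id}_R \subseteq \mathbf{C}_{\mathrm{G}}$. Hence the triple $R \subseteq S \subseteq \mathsf{Id}_R$ required by Corollary~\ref{coro:monotone_2} is satisfied trivially when $S = \mathsf{Id}_R$: the inclusion $R \subseteq S$ is exactly $R \subseteq \mathsf{Id}_R$, the inclusion $S \subseteq \mathsf{Id}_R$ is the identity, and $S \subseteq \mathbf{C}_{\mathrm{G}}$ is again $\mathsf{Id}_R \subseteq \mathbf{C}_{\mathrm{G}}$, which in particular guarantees that $\simeq_{\mathsf{Id}_R}$ is a well-defined relativized bisimilarity. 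Corollary~\ref{coro:monotone_2} then gives $\mathsf{Id}_S = \mathsf{Id}_R$, that is, $\mathsf{Id}_{\mathsf{Id}_R} = \mathsf{Id}_R$.

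Next I would translate this set identity into the stated ``if and only if''. By Definition~\ref{def:Id_Rd_R} applied with reference set $\mathsf{Id}_R$ (which is legitimate since $\mathsf{Id}_R \subseteq \mathbf{C}_{\mathrm{G}}$), one has $X \in \mathsf{Id}_{\mathsf{Id}_R}$ precisely when $X \simeq_{\mathsf{Id}_R} \epsilon$. Combining this with $\mathsf{Id}_{\mathsf{Id}_R} = \mathsf{Id}_R$ yields $X \simeq_{\mathsf{Id}_R} \epsilon$ iff $X \in \mathsf{Id}_R$, which is the claim. (Alternatively, and even more directly, one can take $\alpha = X$ and $\beta = \epsilon$ in Proposition~\ref{prop:R_bis_vs_IDR} to get $X \simeq_R \epsilon \iff X \simeq_{\mathsf{Id}_R} \epsilon$, and then use $X \in \mathsf{Id}_R \iff X \simeq_R \epsilon$ from Definition~\ref{def:Id_Rd_R}; I would mention this as a cross-check.)

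I do not expect any genuine obstacle here: the statement is purely a packaging of Corollary~\ref{coro:monotone_2}, whose real content in turn rests on Proposition~\ref{prop:R_bis_vs_IDR} (the collapse $\simeq_R{=}\simeq_{\mathsf{Id}_R}$) and on Proposition~\ref{prop:R_monotone} (monotonicity of $\simeq_{(\cdot)}$). The only point deserving a moment's care is bookkeeping of the well-definedness side condition, namely that every reference set appearing in the argument, $R$, $S$, and $\mathsf{Id}_R$, is a subset of $\mathbf{C}_{\mathrm{G}}$; this is immediate from $R \subseteq \mathsf{Id}_R \subseteq \mathbf{C}_{\mathrm{G}}$.
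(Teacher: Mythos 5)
Your proposal is correct and matches the paper exactly: the paper states this lemma as ``a direct inference of Corollary~\ref{coro:monotone_2}'', and your instantiation $S := \mathsf{Id}_R$ (justified by $R \subseteq \mathsf{Id}_R \subseteq \mathbf{C}_{\mathrm{G}}$) followed by unpacking Definition~\ref{def:Id_Rd_R} is precisely that inference. No gaps.
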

The above discussions lead to the following definition.
\begin{definition}\label{def:admissible}
An $R \subseteq \mathbf{C}_{\mathrm{G}}$ is  called {\em admissible} if $R = \mathsf{Id}_{R}$.
\end{definition}
The significance of Proposition~\ref{prop:R_monotone}, Proposition~\ref{prop:R_bis_vs_IDR}, and their corollaries is the revelation of the following fact: The set $\{\simeq_R\}_{R  \subseteq \mathbf{C}_{\mathrm{G}}}$ of all relative bisimilarities is completely determined by those  $\simeq_R$'s in which $R$ is admissible.
\begin{lemma}
For every $R \subseteq \mathbf{C}_{\mathrm{G}}$,  $\mathsf{Id}_{R}$ is admissible.
$\mathsf{Id}_{R}$ is the smallest admissible set which contains $R$.
\end{lemma}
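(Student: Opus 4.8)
The plan is to deduce both assertions purely from the results already collected in this subsection; no new argument about bisimulations is needed, only a short chase through the corollaries.

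First I would settle admissibility of $\mathsf{Id}_R$. By Definition~\ref{def:admissible} this amounts to checking $\mathsf{Id}_R = \mathsf{Id}_{\mathsf{Id}_R}$, which is exactly the content of the lemma stated immediately above (itself a direct inference of Corollary~\ref{coro:monotone_2}, applied with $S := \mathsf{Id}_R$ and using $R \subseteq \mathsf{Id}_R \subseteq \mathbf{C}_{\mathrm{G}}$, a consequence of Lemma~\ref{lem:Rbisimilarity_R} and Definition~\ref{def:R_beq}). So $\mathsf{Id}_R$ is admissible, and it is a superset of $R$ for the same reason.

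It remains to prove minimality. Let $S \subseteq \mathbf{C}_{\mathrm{G}}$ be an arbitrary admissible set with $R \subseteq S$; the goal is $\mathsf{Id}_R \subseteq S$. Since $R \subseteq S \subseteq \mathbf{C}_{\mathrm{G}}$, the corollary of Proposition~\ref{prop:R_monotone} (monotonicity of the map $R \mapsto \mathsf{Id}_R$) yields $\mathsf{Id}_R \subseteq \mathsf{Id}_S$, and admissibility of $S$ gives $\mathsf{Id}_S = S$. Hence $\mathsf{Id}_R \subseteq S$. Together with the previous paragraph this shows that $\mathsf{Id}_R$ is the least admissible set, under inclusion, among those containing $R$.

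I do not anticipate a genuine obstacle here: the statement is a repackaging of Proposition~\ref{prop:R_monotone}, Corollary~\ref{coro:monotone_2} and the $\mathsf{Id}_{\mathsf{Id}_R} = \mathsf{Id}_R$ lemma. The one point deserving a word of care is that the monotonicity corollary is applied to two sets lying in $\mathbf{C}_{\mathrm{G}}$; this is automatic, since by Definition~\ref{def:admissible} admissibility is only spoken of for subsets of $\mathbf{C}_{\mathrm{G}}$ and $R$ was assumed to be one. If a self-contained derivation of $\mathsf{Id}_{\mathsf{Id}_R} = \mathsf{Id}_R$ were preferred, one could note $\supseteq$ from monotonicity (as $R \subseteq \mathsf{Id}_R$) and $\subseteq$ from Corollary~\ref{coro:monotone_2}, but quoting the already-stated lemma is the cleanest route.
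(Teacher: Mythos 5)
Your proof is correct and follows exactly the route the paper intends: admissibility of $\mathsf{Id}_R$ is the already-stated identity $\mathsf{Id}_{\mathsf{Id}_R} = \mathsf{Id}_R$ (via Corollary~\ref{coro:monotone_2}), and minimality is monotonicity of $R \mapsto \mathsf{Id}_R$ applied to any admissible $S \supseteq R$. The paper gives no separate proof precisely because it regards the lemma as this immediate consequence of Proposition~\ref{prop:R_monotone} and its corollaries, so nothing is missing.
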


\subsection{$R$-redundant Constants}

The properties of $\simeq$-redundant processes (Definition~\ref{def:redundant_process} and Proposition~\ref{prop:redundantset_characterize_bisimilarity}) in Section~\ref{subsec:retrospection}  can now be generalized for the relative bisimilarity $\simeq_R$.

\begin{definition}\label{def:Id_Rd_R}
Let $R \subseteq \mathbf{C}_{\mathrm{G}}$.
A process $\alpha$ is {\em $\simeq_{R}$-redundant over $\gamma$} if $\alpha\gamma  \simeq_{R} \gamma$.   We use $\mathsf{Rd}_{R}(\gamma)$ to denote $\{ X \;|\;  X \gamma \simeq_{R} \gamma \}$.
\end{definition}
Note that $\mathsf{Rd}(\gamma)$ defined in Section~\ref{subsec:retrospection} is exactly $\mathsf{Rd}_{\emptyset}(\gamma)$. Also note that $\mathsf{Id}_{R}$ is the same as $\mathsf{Rd}_{R}(\epsilon)$.

\begin{lemma}\label{lem:Rd_R_basic_facts}
If $\gamma \simeq_R \delta$, then $\mathsf{Rd}_{R}(\gamma) = \mathsf{Rd}_{R}(\delta)$.
\end{lemma}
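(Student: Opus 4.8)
The plan is to reduce the statement to the congruence-type property already recorded in Lemma~\ref{lem:congruence_R_bis}, together with the fact that $\simeq_{R}$ is an equivalence relation. Since the hypothesis $\gamma \simeq_{R} \delta$ is symmetric in $\gamma$ and $\delta$, and the conclusion is an equality of sets, it suffices to prove a single inclusion, say $\mathsf{Rd}_{R}(\gamma) \subseteq \mathsf{Rd}_{R}(\delta)$; the reverse inclusion then follows by swapping the roles of $\gamma$ and $\delta$.

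So I would fix a constant $X \in \mathsf{Rd}_{R}(\gamma)$, which by definition means $X\gamma \simeq_{R} \gamma$, and aim to show $X\delta \simeq_{R} \delta$. First I would apply the ``in particular'' clause of Lemma~\ref{lem:congruence_R_bis}: from $\gamma \simeq_{R} \delta$, instantiating with the process $X$ on the left gives $X\gamma \simeq_{R} X\delta$. Then I would simply chain equivalences, using symmetry and transitivity of $\simeq_{R}$: $X\delta \simeq_{R} X\gamma \simeq_{R} \gamma \simeq_{R} \delta$, where the middle step is the assumption $X \in \mathsf{Rd}_{R}(\gamma)$ and the last step is the hypothesis. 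Hence $X\delta \simeq_{R} \delta$, i.e. $X \in \mathsf{Rd}_{R}(\delta)$, as required.

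I do not anticipate any genuine obstacle: this is essentially a direct corollary of Lemma~\ref{lem:congruence_R_bis}. The one point worth noting is that $\mathsf{Rd}_{R}$ is defined only through process constants, so the argument only ever prepends a single constant $X$ on the \emph{left} of $\gamma$ and $\delta$; it never appends anything on the right. This means the known failure of right-congruence for $\simeq_{R}$ (remarked just before Lemma~\ref{lem:congruence_R_bis}, where $\alpha\gamma \simeq_{R} \beta\gamma$ may fail even when $\alpha \simeq_{R} \beta$) is irrelevant here, and the only instance of Lemma~\ref{lem:congruence_R_bis} we invoke is precisely its safe, left-context form.
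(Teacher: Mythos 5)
Your proof is correct and matches the paper exactly: the paper itself notes that this lemma ``is a direct inference of Lemma~\ref{lem:congruence_R_bis},'' and your argument is precisely that inference, using the left-context instance $X\gamma \simeq_R X\delta$ followed by transitivity. Nothing further is needed.
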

\begin{lemma}\label{lem:redundant_constants_relative}
\begin{enumerate}
\item
$\alpha  \simeq_{R} \epsilon$ if and only if $\alpha \in (\mathsf{Id}_{R})^{*}$.

\item
$\alpha\gamma  \simeq_{R} \gamma$ if and only if $\alpha \in (\mathsf{Rd}_{R}(\gamma))^{*}$.
\end{enumerate}
\end{lemma}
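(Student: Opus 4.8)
The plan. Part~(1) is verbatim Lemma~\ref{lem:Id_def}, and part~(2) specialized to $\gamma=\epsilon$ reduces to it since $\mathsf{Rd}_R(\epsilon)=\mathsf{Id}_R$; so the substance is the general case of part~(2), whose ``if'' direction is routine and whose ``only if'' direction carries the weight.

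For the ``if'' direction of~(2) I would induct on the number of constants of $\alpha$. If $\alpha=\epsilon$ there is nothing to prove. If $\alpha=X\alpha'$ with $X\in\mathsf{Rd}_R(\gamma)$ and $\alpha'\in(\mathsf{Rd}_R(\gamma))^{*}$, the induction hypothesis gives $\alpha'\gamma\simeq_R\gamma$, so Lemma~\ref{lem:congruence_R_bis} (prepending $X$) yields $X\alpha'\gamma\simeq_R X\gamma$, and $X\gamma\simeq_R\gamma$ by assumption; transitivity closes the step. Applying Lemma~\ref{lem:congruence_R_bis} in the same way also records two auxiliary facts I want below: $\mathsf{Id}_R\subseteq\mathsf{Rd}_R(\gamma)$ (from $X\simeq_R\epsilon$ one gets $X\gamma\simeq_R\gamma$), and $\alpha\gamma\simeq_R\gamma$ implies $\alpha^{n}\gamma\simeq_R\gamma$ for every $n$.

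For the ``only if'' direction, assume $\alpha\gamma\simeq_R\gamma$; I must show every constant of $\alpha$ is $\simeq_R$-redundant over $\gamma$. The key intermediate claim is that $\alpha$ is ground, i.e. $\alpha\in\mathbf{C}_{\mathrm{G}}^{*}$. To prove it I would work in the $R$-transition system on $R$-normal forms (Proposition~\ref{prop:R_bisimilarity}) and use as a norm the minimal number of \emph{visible} $R$-transitions needed to reach $\epsilon$: this quantity is additive along sequential composition, it is zero exactly on ground processes (a purely silent $R$-reduction to $\epsilon$ drives the original constant down to a string in $R^{*}\subseteq\mathbf{C}_{\mathrm{G}}^{*}$, hence to $\epsilon$), and it is preserved by $\simeq_R$, the last point by the usual bisimulation-game argument matching a shortest norm-witnessing $R$-path and using that a $\simeq_R$-identity is ground by Lemma~\ref{lem:Id_def} and $\mathsf{Id}_R\subseteq\mathbf{C}_{\mathrm{G}}$. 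Feeding $\alpha\gamma\simeq_R\gamma$ through preservation and additivity forces the norm of $\alpha$, hence of each of its constants, to be $0$, so $\alpha$ is ground. Once $\alpha=X_1\cdots X_n$ is ground the rest is quick: $\alpha\gamma\Longrightarrow_R X_2\cdots X_n\gamma\Longrightarrow_R\cdots\Longrightarrow_R\gamma$ is a purely silent $R$-reduction through every suffix $X_i\cdots X_n\gamma$, and its endpoint $\gamma$ is $\simeq_R$-equivalent to its start $\alpha\gamma$, so the Computation Lemma for $\simeq_R$ (Lemma~\ref{lem:computation_lemma_R}) makes every intermediate process $\simeq_R\gamma$; in particular $X_i\cdots X_n\gamma\simeq_R\gamma$ for all $i$, and comparing the cases $i$ and $i+1$ via Lemma~\ref{lem:congruence_R_bis} gives $X_i\gamma\simeq_R\gamma$, i.e. $X_i\in\mathsf{Rd}_R(\gamma)$. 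Hence $\alpha\in(\mathsf{Rd}_R(\gamma))^{*}$.

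The main obstacle is the groundness claim: since $\simeq_R$ is not left-cancellative one cannot naively peel a leading constant, and the clean route is the norm argument above, whose delicate points are getting additivity and the zero-norm characterization right in the presence of $R$-normal-form truncation, and pushing the preservation argument through the branching clauses of Definition~\ref{def:R_beq}/Proposition~\ref{prop:R_bisimilarity} when the matching $R$-path may end at a nonempty $\simeq_R$-identity rather than at $\epsilon$. If a semantic norm with these properties is developed separately, the groundness claim can be quoted from there and the Computation-Lemma-plus-congruence endgame is unchanged.
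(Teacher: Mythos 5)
Your main line of argument is sound, and since the paper itself leaves this lemma unproved (it is presented only as ``the strengthened version of Lemma~\ref{lem:redundant_constants}''), your write-up supplies more detail than the authors do. Part~(1) is indeed just Lemma~\ref{lem:Id_def}; the ``if'' direction of~(2) by induction on the length of $\alpha$ plus the ``in particular'' clause of Lemma~\ref{lem:congruence_R_bis} is correct; and for the ``only if'' direction your route --- first forcing $\alpha\in\mathbf{C}_{\mathrm{G}}^{*}$ via the weak norm (additivity, the characterization $|\alpha|_{\mathrm{wk}}=0$ iff $\alpha$ is ground, and invariance of $|\cdot|_{\mathrm{wk}}$ under $\simeq_R$, all of which are stated as lemmas in Section~\ref{sec:norms} and can simply be cited rather than re-derived), then applying the Computation Lemma for $\simeq_R$ along the purely silent path through the suffixes $X_i\cdots X_n\gamma$ and peeling one constant at a time with Lemma~\ref{lem:congruence_R_bis} --- is the natural proof and it goes through.

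One side claim you record is false and should be deleted: $\mathsf{Id}_R\subseteq\mathsf{Rd}_R(\gamma)$ does \emph{not} hold in general, and the proposed derivation ``from $X\simeq_R\epsilon$ one gets $X\gamma\simeq_R\gamma$'' misapplies Lemma~\ref{lem:congruence_R_bis}, whose hypothesis on the prefixes is $\alpha\simeq\beta$ (full bisimilarity), not $\alpha\simeq_R\beta$ --- this is exactly the failure of congruence in the first argument that the paper warns about right after Lemma~\ref{lem:Rbisimilarity_R}. For a concrete counterexample take rules $B\stackrel{\tau}{\longrightarrow}\epsilon$, $B\stackrel{a}{\longrightarrow}C$, $C\stackrel{b}{\longrightarrow}\epsilon$, $D\stackrel{d}{\longrightarrow}\epsilon$ and $R=\{B\}$: then $B\in R\subseteq\mathsf{Id}_R$, but $BD\stackrel{a}{\longrightarrow}CD$ cannot be matched from $D$ (whose only transition is labelled $d$), so $BD\not\simeq_R D$ and $B\notin\mathsf{Rd}_R(D)$. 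Fortunately nothing load-bearing in your proof uses this inclusion --- the norm-preservation step needs only $\mathsf{Id}_R\subseteq\mathbf{C}_{\mathrm{G}}$ and the ground-preservation clause of Definition~\ref{def:R_beq}, and the endgame uses only the correctly derived facts --- so striking the spurious claim leaves the proof intact.
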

Lemma~\ref{lem:Rd_R_basic_facts} is a direct inference of Lemma~\ref{lem:congruence_R_bis}. Lemma~\ref{lem:redundant_constants_relative} is the strengthened version of  Lemma~\ref{lem:redundant_constants}.

Now we can state the fundamental theorem for $\simeq_{R}$.
\begin{theorem}\label{thm:relative_bis_str}
Let $R' = \mathsf{Rd}_{R}(\gamma)$, then $\alpha \simeq_{R'} \beta$ if and only if $\alpha \gamma \simeq_{R} \beta\gamma$.
\end{theorem}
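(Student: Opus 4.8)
The plan is to generalize the argument behind Theorem~\ref{thm:relative_bis} and Proposition~\ref{prop:redundantset_characterize_bisimilarity}, but working entirely inside the world of $R$-bisimilarity rather than ordinary bisimilarity. Set $R' = \mathsf{Rd}_{R}(\gamma)$. The first observation is that $R' \subseteq \mathbf{C}_{\mathrm{G}}$ (since $X\gamma \simeq_R \gamma$ forces $X$ to be ground by ground preservation in Definition~\ref{def:R_beq}), so $\simeq_{R'}$ is well-defined. For the ``only if'' direction ($\alpha \simeq_{R'}\beta \Rightarrow \alpha\gamma \simeq_R \beta\gamma$) I would exhibit an explicit $R$-bisimulation: take the equivalence closure of $\{(\alpha\gamma,\beta\gamma) : \alpha\simeq_{R'}\beta\}$ together with $=_R$, and verify it satisfies the three clauses of Definition~\ref{def:R_beq}. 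The key lemma powering this is Lemma~\ref{lem:redundant_constants_relative}(2), which says $\alpha\in(\mathsf{Rd}_R(\gamma))^{*} = (R')^{*}$ iff $\alpha\gamma\simeq_R\gamma$; combined with Lemma~\ref{lem:congruence_R_bis} this lets me translate an $R'$-transition of $\alpha$ (which may pass through $(R')$-nf normalization, i.e.\ may drop a suffix in $(R')^{*}$) into an honest $R$-transition of $\alpha\gamma$ (where the dropped suffix $\eta\in(R')^{*}$ satisfies $\eta\gamma\simeq_R\gamma$, so $\alpha'\eta\gamma\simeq_R\alpha'\gamma$ by Lemma~\ref{lem:congruence_R_bis}). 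Ground preservation transfers because $\alpha\Longrightarrow_{R'}\epsilon$ (for the $R'$-nf) corresponds via Lemma~\ref{lem:groud_preserve} to $\alpha$ being ground, hence $\alpha\gamma$ ground.

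For the ``if'' direction ($\alpha\gamma\simeq_R\beta\gamma \Rightarrow \alpha\simeq_{R'}\beta$) I would dually define $\mathcal{S} = $ the equivalence closure of $\{(\alpha,\beta) : \alpha\gamma\simeq_R\beta\gamma\}\cup {=_{R'}}$ and show it is an $R'$-bisimulation. Here one takes a transition $\alpha\stackrel{\ell}{\longrightarrow}\alpha'$, lifts it to $\alpha\gamma\stackrel{\ell}{\longrightarrow}\alpha'\gamma$, uses $\alpha\gamma\simeq_R\beta\gamma$ to get a matching $R$-transition sequence $\beta\gamma$, and then needs to ``project away'' the trailing $\gamma$ to recover an $R'$-transition of $\beta$. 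The subtlety is that the matching move from $\beta\gamma$ might be induced by the $\gamma$-part rather than the $\beta$-part (cf.\ Lemma~\ref{lem:char_R_arrow}); but any such move only produces a process $R$-equal to $\beta\gamma$ — actually $=_R$ to something of the form $\beta\delta$ with $\delta\gamma\simeq_R\gamma$ — and by Lemma~\ref{lem:Rd_R_basic_facts} and the characterization of $R'$ as exactly the constants redundant over $\gamma$ under $\simeq_R$, such a $\delta$ lies in $(R')^{*}$, so it is invisible to $=_{R'}$ and the $R'$-nf is unchanged. One also needs the Computation Lemma for $\simeq_R$ (Lemma~\ref{lem:computation_lemma_R}) to handle the $\stackrel{\simeq_R}{\Longrightarrow}_R$ prefixes in the matching and to show they descend to $\stackrel{\simeq_{R'}}{\Longrightarrow}_{R'}$ prefixes.

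The main obstacle, and where I would spend the most care, is the bookkeeping around normalization: an $R'$-transition between $(R')$-nf's is by Definition~\ref{def:R_transition} an ordinary transition modulo $=_{R'}$ on both ends, and I must be sure that appending $\gamma$ and then taking $R$-nf's commutes correctly with this — precisely, that $\xi =_{R'} \xi'$ implies $\xi\gamma \simeq_R \xi'\gamma$ (immediate from Lemma~\ref{lem:redundant_constants_relative}(2) and Lemma~\ref{lem:congruence_R_bis}), and conversely that a dropped $R$-suffix of $\beta\gamma$ that survives into the $\beta$-region must be an $(R')^{*}$-suffix. Establishing this correspondence cleanly — essentially showing the map $\xi \mapsto (\xi\gamma)_R$ induces a simulation in both directions between the $R'$-transition system on $(R')$-nf's and the $R$-transition system restricted to $\simeq_R$-classes of processes of the form $\xi\gamma$ — is the technical heart, after which the three bisimulation clauses fall out. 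A secondary point worth checking is that the relations I build are genuinely equivalence relations (per the standing convention in the paper), which is why I take equivalence closures throughout; one should verify the closure operation preserves the respective bisimulation properties, exactly as in the Remark following Definition~\ref{def:R_beq}.
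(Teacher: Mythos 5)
Your overall architecture matches the paper's: each direction is proved by exhibiting an explicit candidate bisimulation built from the hypothesis relation (the paper splits the theorem into Lemma~\ref{lem:relative_bis_onlyif} and Lemma~\ref{lem:relative_bis_if}), and your treatment of the ``if'' direction is essentially the paper's argument. But the ``only if'' direction has two genuine problems. First, your candidate relation $({\bumpeq}\cup{=_R})^{*}$, where ${\bumpeq}=\{(\alpha\gamma,\beta\gamma): \alpha\simeq_{R'}\beta\}$, is too small to be closed under matching. When $\beta\simeq_{R'}\epsilon$ (or when the matching path of $\beta\gamma$ passes through a state whose $R'$-nf is $\epsilon$), the move of $\alpha\gamma$ must ultimately be simulated by $\gamma$ itself: one uses $X\gamma\simeq_R\gamma$ for a redundant $X$ to transfer $X$'s transition to a transition of $\gamma$, and the resulting residual is related to $\alpha'\gamma$ only via ${\bumpeq}\circ{\simeq_R}$, not via ${\bumpeq}\circ{=_R}$. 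So the closure must be taken with $\simeq_R$ (as the paper does: ${\asymp}=({\bumpeq}\cup{\simeq_R})^{*}$), not with $=_R$.

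Second, and more importantly, once $\simeq_R$ is in the closure, ``verify the three clauses of Definition~\ref{def:R_beq}'' is not routine, and your appeal to the Remark following Definition~\ref{def:R_beq} does not cover it: that remark concerns unions of relations each of which is \emph{already} an $R$-bisimulation, whereas $\bumpeq$ alone is not one (its matchings land in ${\bumpeq}\circ{\simeq_R}$). A pair in $\asymp$ is a chain alternating $\bumpeq$ and $\simeq_R$, and if you compose matchings along the chain with targets stated in the full relation $\asymp$, you fall into exactly the ``weak bisimulation up to weak bisimilarity'' fallacy that the paper warns about twice (after the proof of Lemma~\ref{lem:computation_lemma}, and in the remark after the proof of Proposition~\ref{prop:R_monotone}, where the authors admit a previous version of their own proof made this mistake). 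The paper's proof avoids it by establishing two auxiliary properties whose matching targets are the \emph{specific} relations $({\bumpeq}\circ{\simeq_R})\cup\mathcal{I}$ (for $\bumpeq$-pairs) and $\simeq_R$ (for $\simeq_R$-pairs), which are stable under composition along a chain; only then does the bisimulation property of $\asymp$ follow. Your proposal needs this extra layer of bookkeeping to be sound.
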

\begin{proposition}\label{prop:redundantset_characterize_bisimilarity_relative}
Assume that $\mathsf{Rd}_{R}(\gamma_1) = \mathsf{Rd}_R(\gamma_2)$, then  $\alpha\gamma_1 \simeq_R \beta\gamma_1$ if and only if $\alpha\gamma_2 \simeq_R \beta\gamma_2$.
\end{proposition}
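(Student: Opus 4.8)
The plan is to treat Proposition~\ref{prop:redundantset_characterize_bisimilarity_relative} as a short corollary of Theorem~\ref{thm:relative_bis_str}, which I regard as the substantive statement here; once that theorem is available, essentially no new work is needed. So I would not attempt a direct bisimulation argument (building an $R$-bisimulation relating $\alpha\gamma_2$ and $\beta\gamma_2$ out of one relating $\alpha\gamma_1$ and $\beta\gamma_1$); instead I would route everything through the relative bisimilarity $\simeq_{R'}$ attached to the common redundancy set.

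Concretely, first I would set $R' := \mathsf{Rd}_{R}(\gamma_1)$ and note that the hypothesis gives $R' = \mathsf{Rd}_{R}(\gamma_2)$ as well. Since $\mathsf{Rd}_{R}(\gamma) \subseteq \mathbf{C}_{\mathrm{G}}$ for any $\gamma$, the set $R'$ is a legitimate reference set, so $\simeq_{R'}$ is a well-defined member of the family $\{\simeq_{S}\}_{S \subseteq \mathbf{C}_{\mathrm{G}}}$. Next I would invoke Theorem~\ref{thm:relative_bis_str} twice with the \emph{same} outer reference set $R$: once with the process $\gamma := \gamma_1$, which (because $R' = \mathsf{Rd}_R(\gamma_1)$) yields $\alpha\gamma_1 \simeq_{R} \beta\gamma_1 \iff \alpha \simeq_{R'} \beta$; and once with $\gamma := \gamma_2$, which (because $R' = \mathsf{Rd}_R(\gamma_2)$) yields $\alpha\gamma_2 \simeq_{R} \beta\gamma_2 \iff \alpha \simeq_{R'} \beta$. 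Chaining these two biconditionals through the common middle term $\alpha \simeq_{R'} \beta$ gives $\alpha\gamma_1 \simeq_{R} \beta\gamma_1 \iff \alpha\gamma_2 \simeq_{R} \beta\gamma_2$, which is exactly the claim.

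The only point that needs a word of care is matching the hypotheses of Theorem~\ref{thm:relative_bis_str} exactly: the theorem is stated for an arbitrary process $\gamma$ and defines $R'$ as $\mathsf{Rd}_{R}(\gamma)$, so both instantiations are legal provided we keep $R$ fixed and let $\gamma$ vary, and provided $R'$ really is the same set on both sides — which is precisely the hypothesis $\mathsf{Rd}_{R}(\gamma_1) = \mathsf{Rd}_R(\gamma_2)$. There is no genuine obstacle in this deduction; the difficulty of the proposition lives entirely upstream, in establishing Theorem~\ref{thm:relative_bis_str} (which in turn rests on the suffix-independence and congruence facts, Lemma~\ref{lem:congruence_R_bis} and Lemma~\ref{lem:Rd_R_basic_facts}, and on the decomposition machinery).

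Finally, for sanity I would record the two degenerate cases. Taking $R = \emptyset$ specializes the proposition to Proposition~\ref{prop:redundantset_characterize_bisimilarity}, so the statement is genuinely a generalization of Fu's observation. And if one of $\gamma_1,\gamma_2$ is $\epsilon$, then the corresponding redundancy set is $\mathsf{Rd}_R(\epsilon) = \mathsf{Id}_R$, so the hypothesis forces the other to have redundancy set $\mathsf{Id}_R$ too, and the conclusion reduces — via Lemma~\ref{lem:redundant_constants_relative} — to the evident fact that $\alpha \simeq_R \beta$ iff $\alpha \simeq_R \beta$; this consistency check costs nothing but confirms the formulation.
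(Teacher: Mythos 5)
Your proposal is correct and matches the paper's intent: the proposition is placed immediately after Theorem~\ref{thm:relative_bis_str} and is derived exactly as you do, by instantiating that theorem with $\gamma_1$ and $\gamma_2$ against the common set $R' = \mathsf{Rd}_R(\gamma_1) = \mathsf{Rd}_R(\gamma_2)$ and chaining the two biconditionals through $\alpha \simeq_{R'} \beta$. All of the substantive work indeed lives upstream in Theorem~\ref{thm:relative_bis_str}, so nothing further is needed.
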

\begin{proposition}\label{prop:Rd_congruence}
Suppose that $\gamma \simeq_R \delta$ and let  $R' = \mathsf{Rd}_{R}(\gamma) = \mathsf{Rd}_{R}(\delta)$.  Then $\alpha\gamma \simeq_R \beta\delta$ if and only if $\alpha \simeq_{R'} \beta$.
\end{proposition}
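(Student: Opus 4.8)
The plan is to obtain Proposition~\ref{prop:Rd_congruence} as a two-line consequence of the fundamental theorem for relative bisimilarity (Theorem~\ref{thm:relative_bis_str}) together with the partial-congruence fact recorded in Lemma~\ref{lem:congruence_R_bis}; essentially all of the work will already have been done in establishing those two results. First I would observe that the hypothesis is internally consistent: by Lemma~\ref{lem:Rd_R_basic_facts}, $\gamma \simeq_R \delta$ already implies $\mathsf{Rd}_R(\gamma) = \mathsf{Rd}_R(\delta)$, so it is legitimate to write $R' = \mathsf{Rd}_R(\gamma) = \mathsf{Rd}_R(\delta)$.

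The argument itself is a short chain of equivalences. From $\gamma \simeq_R \delta$, the ``in particular'' clause of Lemma~\ref{lem:congruence_R_bis}, applied with the common prefix $\beta$, gives $\beta\gamma \simeq_R \beta\delta$. Since $\simeq_R$ is an equivalence relation, this yields $\alpha\gamma \simeq_R \beta\delta$ if and only if $\alpha\gamma \simeq_R \beta\gamma$. Now Theorem~\ref{thm:relative_bis_str}, applied to the process $\gamma$ with $R' = \mathsf{Rd}_R(\gamma)$, states precisely that $\alpha\gamma \simeq_R \beta\gamma$ if and only if $\alpha \simeq_{R'} \beta$. Concatenating,
\[
\alpha\gamma \simeq_R \beta\delta \iff \alpha\gamma \simeq_R \beta\gamma \iff \alpha \simeq_{R'} \beta ,
\]
which is the assertion. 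One may instead route through $\alpha\gamma \simeq_R \alpha\delta$ and apply Theorem~\ref{thm:relative_bis_str} to $\delta$ (legitimate because $\mathsf{Rd}_R(\delta) = R'$); the two presentations agree.

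I do not expect any genuine obstacle inside this proof: the hard ingredient is Theorem~\ref{thm:relative_bis_str} itself, which is assumed here, and the difficulty is therefore entirely front-loaded. The only point deserving care is that $\simeq_R$ is \emph{not} a congruence in general, so throughout I must use it only in the form supplied by Lemma~\ref{lem:congruence_R_bis} --- swapping $\simeq_R$-equivalent suffixes under a common prefix --- and never attempt to rewrite the prefixes $\alpha,\beta$; fortunately the chain above never requires the latter. As a sanity check, taking $\delta = \gamma$ recovers exactly Theorem~\ref{thm:relative_bis_str}, so the proposition is a bona fide strengthening obtained at no additional cost beyond Lemma~\ref{lem:congruence_R_bis}.
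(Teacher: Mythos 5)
Your proof is correct and is exactly the argument the paper intends: the text states that Proposition~\ref{prop:Rd_congruence} "is an inference of Lemma~\ref{lem:congruence_R_bis} and Theorem~\ref{thm:relative_bis_str}," which is precisely the chain $\alpha\gamma \simeq_R \beta\delta \iff \alpha\gamma \simeq_R \beta\gamma \iff \alpha \simeq_{R'} \beta$ that you give. Your added care about $\simeq_R$ failing to be a congruence (using only the suffix-replacement form of Lemma~\ref{lem:congruence_R_bis}) is exactly the right precaution.
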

Theorem~\ref{thm:relative_bis_str} and Proposition~\ref{prop:redundantset_characterize_bisimilarity_relative} are the strengthened versions of
Theorem~\ref{thm:relative_bis} and  Proposition~\ref{prop:redundantset_characterize_bisimilarity}. Proposition~\ref{prop:Rd_congruence} is an inference of   Lemma~\ref{lem:congruence_R_bis} and Theorem~\ref{thm:relative_bis_str}. Theorem~\ref{thm:relative_bis_str} and Proposition~\ref{prop:Rd_congruence} act as the relativized version of the congruence property and the cancellation law.

The following lemma is an inference of Theorem~\ref{thm:relative_bis_str}.
\begin{lemma}\label{lem:redundant_chain}
$\mathsf{Rd}_{\mathsf{Rd}_{R}(\delta)}(\gamma) = \mathsf{Rd}_{R}(\gamma \delta)$.
\end{lemma}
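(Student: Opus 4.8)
The plan is to prove the set equality $\mathsf{Rd}_{\mathsf{Rd}_{R}(\delta)}(\gamma) = \mathsf{Rd}_{R}(\gamma\delta)$ by spelling out both sides via their definitions and bridging them with Theorem~\ref{thm:relative_bis_str}. Write $R' = \mathsf{Rd}_R(\delta)$ for brevity. By Definition~\ref{def:Id_Rd_R}, $X \in \mathsf{Rd}_{R'}(\gamma)$ means $X\gamma \simeq_{R'} \gamma$, and $X \in \mathsf{Rd}_R(\gamma\delta)$ means $X\gamma\delta \simeq_R \gamma\delta$. So the statement reduces to the equivalence
\[
X\gamma \simeq_{R'} \gamma \quad\Longleftrightarrow\quad X\gamma\delta \simeq_R \gamma\delta,
\]
and since $R' = \mathsf{Rd}_R(\delta)$, this is precisely an instance of Theorem~\ref{thm:relative_bis_str} (applied with the process `$\delta$' in the role of the suffix `$\gamma$' of that theorem, and with the pair of processes $X\gamma$ and $\gamma$ in the roles of `$\alpha$' and `$\beta$'). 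Indeed Theorem~\ref{thm:relative_bis_str} with $R'' = \mathsf{Rd}_R(\delta)$ states $\alpha \simeq_{R''} \beta \Leftrightarrow \alpha\delta \simeq_R \beta\delta$; taking $\alpha := X\gamma$ and $\beta := \gamma$ gives exactly the displayed equivalence.

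Concretely I would argue in two directions. For the inclusion $\mathsf{Rd}_{R'}(\gamma) \subseteq \mathsf{Rd}_R(\gamma\delta)$: take $X$ with $X\gamma \simeq_{R'} \gamma$; by Theorem~\ref{thm:relative_bis_str} (left-to-right, since $R' = \mathsf{Rd}_R(\delta)$) we get $X\gamma\delta \simeq_R \gamma\delta$, i.e.\ $X \in \mathsf{Rd}_R(\gamma\delta)$. For the reverse inclusion: take $X$ with $X\gamma\delta \simeq_R \gamma\delta$; by Theorem~\ref{thm:relative_bis_str} (right-to-left) we get $X\gamma \simeq_{R'} \gamma$, i.e.\ $X \in \mathsf{Rd}_{R'}(\gamma)$. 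Combining the two inclusions yields the claimed equality.

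I do not anticipate a serious obstacle here: the lemma is essentially a direct unfolding once Theorem~\ref{thm:relative_bis_str} is available, which is why the excerpt flags it as "an inference of Theorem~\ref{thm:relative_bis_str}." The only point requiring a little care is the bookkeeping of which reference set plays which role when instantiating the theorem — in particular making sure that the outer reference set on the left-hand side, $\mathsf{Rd}_R(\delta)$, is exactly the $R'$ that Theorem~\ref{thm:relative_bis_str} produces from the suffix $\delta$ and the ambient set $R$, and that $\mathsf{Rd}_R(\delta) \subseteq \mathbf{C}_{\mathrm{G}}$ so that $\simeq_{\mathsf{Rd}_R(\delta)}$ is well-defined. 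Both of these are immediate from the remark that $\mathsf{Rd}_R(\gamma) \subseteq \mathbf{C}_{\mathrm{G}}$ for every $\gamma$ and from the definition of $R$-redundancy, so no additional machinery beyond what is already in the excerpt is needed.
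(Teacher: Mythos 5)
Your proof is correct and follows essentially the same route as the paper: unfold both sides via Definition~\ref{def:Id_Rd_R} and apply Theorem~\ref{thm:relative_bis_str} with suffix $\delta$ and reference set $R' = \mathsf{Rd}_R(\delta)$ to the pair $X\gamma$, $\gamma$, noting the steps are reversible. No issues.
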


In the following we discuss the significance of the admissible reference sets. First it is easy to see the following fact according to Proposition~\ref{prop:R_bis_vs_IDR}.
\begin{lemma}\label{lem:Rd_id}
$\mathsf{Rd}_{R}(\gamma) = \mathsf{Rd}_{\mathsf{Id}_R}(\gamma)$ for every $\gamma$ and $R$.
\end{lemma}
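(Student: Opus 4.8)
The plan is to derive this directly from Proposition~\ref{prop:R_bis_vs_IDR}, which already states that $\simeq_R$ and $\simeq_{\mathsf{Id}_R}$ are the very same relation. First I would check that the right-hand side is even well-formed: $\mathsf{Rd}_{\mathsf{Id}_R}(\gamma)$ is only meaningful when $\mathsf{Id}_R$ is a legitimate reference set, i.e.\ $\mathsf{Id}_R \subseteq \mathbf{C}_{\mathrm{G}}$, which is exactly the inclusion $R \subseteq \mathsf{Id}_R \subseteq \mathbf{C}_{\mathrm{G}}$ recorded right after Definition~\ref{def:Id_Rd_R}. So both $\mathsf{Rd}_R(\gamma)$ and $\mathsf{Rd}_{\mathsf{Id}_R}(\gamma)$ are well-defined subsets of $\mathbf{C}_{\mathrm{G}}$.

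Then the argument is a chain of equivalences at the level of a single constant $X$: $X \in \mathsf{Rd}_R(\gamma)$ iff $X\gamma \simeq_R \gamma$ (Definition~\ref{def:Id_Rd_R}), iff $X\gamma \simeq_{\mathsf{Id}_R} \gamma$ (Proposition~\ref{prop:R_bis_vs_IDR}, instantiated at $\alpha = X\gamma$, $\beta = \gamma$), iff $X \in \mathsf{Rd}_{\mathsf{Id}_R}(\gamma)$ (again Definition~\ref{def:Id_Rd_R}, now with reference set $\mathsf{Id}_R$). Since the two sets have the same members, they are equal; as $\gamma$ and $R$ were arbitrary, the lemma follows.

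There is essentially no obstacle here: the entire content has been pushed into Proposition~\ref{prop:R_bis_vs_IDR}, and the only thing one must not overlook is the admissibility/well-definedness remark mentioned above so that $\mathsf{Rd}_{\mathsf{Id}_R}(\gamma)$ is allowed to be written in the first place. Accordingly I would present this as a two-line proof and not belabour it further.
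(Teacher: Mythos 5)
Your proof is correct and follows exactly the route the paper intends: the lemma is stated there as an immediate consequence of Proposition~\ref{prop:R_bis_vs_IDR}, and your elementwise chain of equivalences via Definition~\ref{def:Id_Rd_R} is precisely that argument spelled out. The well-definedness remark ($\mathsf{Id}_R \subseteq \mathbf{C}_{\mathrm{G}}$) is a sensible extra sanity check but adds nothing beyond what the paper already records.
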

The following lemma ensures that the admissible set is preserved under the `redundant' operation.
\begin{lemma}\label{lem:Rd_admissible}
If  $R = \mathsf{Rd}_{R'}(\gamma)$ for some $\gamma$, then $R$ is admissible.
\end{lemma}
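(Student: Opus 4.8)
The plan is to show $R=\mathsf{Id}_R$, which is precisely the definition of admissibility (Definition~\ref{def:admissible}). I expect this to be a short computation whose entire weight is already carried by Theorem~\ref{thm:relative_bis_str}; there is no real combinatorial obstacle here, only a little bookkeeping with the definitions $\mathsf{Id}_S=\{X\mid X\simeq_S\epsilon\}$ and $\mathsf{Rd}_S(\delta)=\{X\mid X\delta\simeq_S\delta\}$.

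Here is the core argument. By hypothesis $R=\mathsf{Rd}_{R'}(\gamma)$ for some $\gamma$, so Theorem~\ref{thm:relative_bis_str} applies with $R'$ as the base reference set and this very $\gamma$, yielding
\[
\alpha\simeq_{R}\beta\ \Longleftrightarrow\ \alpha\gamma\simeq_{R'}\beta\gamma
\]
for all processes $\alpha,\beta$. Since $\mathsf{Id}_R\subseteq\mathbf{C}$ and $R\subseteq\mathbf{C}$, it suffices to compare membership of an arbitrary constant $X$. Instantiating $\alpha=X$ and $\beta=\epsilon$ and using $\epsilon\gamma=\gamma$, we obtain
\[
X\simeq_{R}\epsilon\ \Longleftrightarrow\ X\gamma\simeq_{R'}\gamma .
\]
By definition the left-hand side says exactly $X\in\mathsf{Id}_R$, and the right-hand side says exactly $X\in\mathsf{Rd}_{R'}(\gamma)=R$. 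Hence $\mathsf{Id}_R=R$, i.e.\ $R$ is admissible. In effect the lemma is a repackaging of Theorem~\ref{thm:relative_bis_str} via the identity $\mathsf{Id}_R=\mathsf{Rd}_R(\epsilon)$ and the definition of $\mathsf{Rd}_{R'}(\gamma)$.

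The only point that still needs attention is the implicit side condition $R\subseteq\mathbf{C}_{\mathrm{G}}$, which is required for Definition~\ref{def:admissible} to even apply. This holds because redundant sets consist of ground constants, for the same reason as $\mathsf{Rd}(\gamma)\subseteq\mathbf{C}_{\mathrm{G}}$: from $X\gamma\simeq_{R'}\gamma$ one gets $X^{n}\gamma\simeq_{R'}\gamma$ for every $n\ge 0$ by iterating the congruence-type Lemma~\ref{lem:congruence_R_bis}, and, since every process of $\Gamma$ is normed, this is possible only if $X\Longrightarrow\epsilon$. I would either fold this observation into a short remark or simply cite it as already known at the point where $\mathsf{Rd}_{R}(\gamma)$ is introduced. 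Apart from this routine point, the proof is immediate.
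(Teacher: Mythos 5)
Your proof is correct and follows essentially the same route as the paper: both reduce admissibility to showing $\mathsf{Id}_R = R$ and obtain the nontrivial inclusion by applying Theorem~\ref{thm:relative_bis_str} with $\beta = \epsilon$ to translate $X \simeq_R \epsilon$ into $X\gamma \simeq_{R'} \gamma$, i.e.\ $X \in \mathsf{Rd}_{R'}(\gamma) = R$. The extra remark on $R \subseteq \mathbf{C}_{\mathrm{G}}$ is a harmless routine check (it also follows directly from invariance of the weak norm under $\simeq_{R'}$).
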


\begin{remark}
Even if $R$ is admissible, it is not guaranteed that $R = \mathsf{Rd}_{R'}(\gamma)$ for some $\gamma$ and $R'$.  This fact indicates that, even if $\simeq_R$ is only attractive for only admissible $R$'s, our notion of $R$-bisimilarities strictly generalizes the ones in~\cite{DBLP:journals/corr/CzerwinskiJ14}.
\end{remark}

\subsection{Unique Decomposition Property for $R$-bisimilarities}\label{subsec:decomposition}

When $\Gamma$ is realtime, the set $\mathbf{C}$ of process constants can be divided into two disjoint sets: {\em primes} $\mathsf{Pr}$ and {\em composites} $\mathsf{Cm}$.   Every process $\alpha$ is bisimilar to a sequential composition of prime constants $P_1 . \ldots . P_r$, and moreover, the prime decomposition is unique (up to bisimilarity). That is, if $P_1 . \ldots . P_r \simeq Q_1 . \ldots . Q_s$, then $r = s$ and $P_i \simeq Q_i$ for every $1 \leq i \leq r$. This property is called {\em unique decomposition property}, which is first established by Hirshfeld \textit{et al.} in~\cite{DBLP:journals/tcs/HirshfeldJM96}.
When $\Gamma$ is totally normed, the unique decomposition property still holds~\cite{DBLP:journals/corr/He14a}.

If $\Gamma$ is not totally normed, the unique decomposition property in the above sense does not hold due to the existence of redundant processes.   However,  we expound that, apart from the existence of redundant constants, the relative bisimilarities $\{\simeq_R\}$ enjoys a `weakened' version of unique decomposition property (Theorem~\ref{thm:QUDP_RBisimularity}), which is still called {\em unique decomposition property} in this paper.

\begin{definition}\label{def:R_prime}
Let $R \subseteq \mathbf{C}_{\mathrm{G}}$, and $X \in \mathbf{C}$.
\begin{itemize}
\item
$X$ is a {\em $\simeq_R$-composite} if $X \simeq_{R} \alpha X'$ for some $X'$ and $\alpha$ such that $X'  \not\in \mathsf{Id}_R $ and $\alpha  \not\in \mathsf{Rd}_R (X')$.

\item
$X$ is a {\em $\simeq_R$-prime} if  $X$ is neither a ${\simeq}_R$-identity nor a $\simeq_{R} $-composite.
\end{itemize}
\end{definition}
According to Definition~\ref{def:R_prime}, a constant $X \in \mathbf{C}$ must act as one of the three different roles:  ${\simeq}_R$-identity, ${\simeq}_R$-composite, or ${\simeq}_R$-prime. We will use $\mathsf{Pr}_R$ and
$\mathsf{Cm}_R$ to indicate the set of ${\simeq}_R$-primes and ${\simeq}_R$-composites, respectively. According to Proposition~\ref{prop:R_bis_vs_IDR}, $\mathsf{Pr}_R = \mathsf{Pr}_{\mathsf{Id}_{R}}$ and $\mathsf{Cm}_R = \mathsf{Cm}_{\mathsf{Id}_{R}}$.

\begin{definition}\label{def:R_prime_decomposition}
We call $P_r. P_{r-1}.  \ldots  .  P_{1}$ a {\em $\simeq_{R}$-prime decomposition} of $\alpha$, if $\alpha \simeq_{R} P_r.  P_{r-1}.  \ldots  .  P_{1}$, and $P_i$ is a $\simeq_{R_i}$-prime for $1 \leq i \leq r$, if $R_1 = \mathsf{Id}_{R}$ and $R_{i+1} = \mathsf{Rd}_{R_{i}}(P_{i})$ for $1 \leq i \leq r-1$.
\end{definition}
Note that according to Lemma~\ref{lem:Rd_admissible} every $R_i$ for $1 \leq i \leq r$ is admissible.

The following `relativized prime process property' is crucial to the unique decomposition property (Theorem~\ref{thm:QUDP_RBisimularity}).

\begin{lemma}\label{lem:R_prime_property}
Suppose that $X$, $Y$ are $\simeq_{R}$-primes and $\alpha X \simeq_{R} \beta Y$. Then $X \simeq_{R} Y$.
\end{lemma}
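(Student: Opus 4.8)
The plan is to argue by contradiction, assuming $X \not\simeq_R Y$, and to derive a contradiction from the equation $\alpha X \simeq_R \beta Y$ by a norm-style descent. First I would normalize the situation: by Proposition~\ref{prop:R_bis_vs_IDR} we may replace $R$ by $\mathsf{Id}_R$ and so assume $R$ is admissible, which makes $\epsilon$ the unique $\simeq_R$-identity among the relevant suffixes and lets us freely work with $R$-nf's via Proposition~\ref{prop:R_bisimilarity}. Since $X$ is a $\simeq_R$-prime it is in particular not a $\simeq_R$-identity, so $X \not\simeq_R \epsilon$ and likewise $Y \not\simeq_R \epsilon$; hence both $\alpha X$ and $\beta Y$ are "genuinely non-trivial" on the right.

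The core step is a decreasing-transition argument on the prime $X$. The idea: pick a transition sequence that "consumes" the copy of $X$ on the left-hand side, i.e.\ drives $\alpha X \Longrightarrow_R \cdot \stackrel{\ell}{\longrightarrow}_R \gamma X$ or, when $\alpha$ is $R$-redundant over $X$ (so $\alpha X \simeq_R X$), work directly from $X$. Because $X$ is a $\simeq_R$-prime, it is not a $\simeq_R$-composite, so whenever $X \simeq_R \zeta X'$ with $X' \notin \mathsf{Id}_R$ we must have $\zeta \in \mathsf{Rd}_R(X')$; this "primeness rigidity" is exactly what forbids $X$ from silently splitting off a non-redundant nontrivial head. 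Now match this move on the $\beta Y$ side using the $R$-bisimulation clauses of Definition~\ref{def:R_beq}: the matching transition either stays inside $\beta$ (leaving the trailing $Y$ untouched), in which case I can cancel using Theorem~\ref{thm:relative_bis_str} / Proposition~\ref{prop:Rd_congruence} to reduce to a strictly smaller instance $\alpha' X \simeq_{R'} \beta' Y$, or it reaches into the $Y$ part, in which case $\beta$ must have been $R$-redundant over $Y$ and we get $X \simeq_{R''} $ (something involving $Y$), from which primeness of $Y$ forces $X \simeq_R Y$. To make the descent well-founded I would induct on the sum of the (ordinary, underlying-$\Gamma$) norms $|\alpha X| + |\beta Y|$, or equivalently on the length of a shortest grounding sequence, using Lemma~\ref{lem:groud_preserve} to know these quantities are meaningful after relativization and using the congruence/cancellation package (Theorem~\ref{thm:relative_bis_str}, Proposition~\ref{prop:Rd_congruence}, Lemma~\ref{lem:redundant_chain}) to control how the reference set $R$ changes along the way.

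The subtle case, and the one I expect to be the main obstacle, is when the chosen move on the left is a $\tau$-transition that under $=_R$ acts nontrivially on the $X$ at the very end — i.e.\ when $X$ itself can do $\tau$-steps that mix with $R$ near the boundary — so that the naive "last symbol is $Y$ vs.\ last symbol is in $\alpha$" dichotomy on the right is not clean. Handling this requires the Computation Lemma for $\simeq_R$ (Lemma~\ref{lem:computation_lemma_R}) to collapse intermediate $\simeq_R$-equivalent states, together with Lemma~\ref{lem:char_R_arrow} to pin down which constant actually induced a boundary-crossing $R$-transition, and Lemma~\ref{lem:Rbisimilarity_R}/Lemma~\ref{lem:Id_def} to see that any such induced behaviour can be absorbed into the reference set. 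Once the boundary crossings are accounted for, a visible action $a$ that $X$ must eventually perform (it has one, being non-identity and the system normed) cannot be matched on the $\beta Y$ side without eventually exposing $Y$, and primeness of both $X$ and $Y$ pins the residuals together, yielding $X \simeq_R Y$ and contradicting the assumption. I would organize the write-up as: (i) reduce to admissible $R$ and $R$-nf's; (ii) set up the induction measure; (iii) the "stays in $\beta$" cancellation case; (iv) the "crosses into $Y$" case with the Computation Lemma clean-up; (v) conclude.
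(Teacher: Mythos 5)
Your central idea is the right one and is essentially the paper's: since $\Gamma$ is normed you can consume $\alpha$ along a transition sequence $\alpha X \longrightarrow \alpha_1 X \longrightarrow \cdots$ until you first reach a state $\alpha' X \simeq_R X$, match this on the $\beta Y$ side, and then use the fact that a $\simeq_R$-prime cannot be $\simeq_R$-equivalent to $\zeta Z$ with $Z \notin \mathsf{Id}_R$ and $\zeta \notin (\mathsf{Rd}_R(Z))^{*}$ to force $\beta' Y \simeq_R Y$ and hence $X \simeq_R Y$. The paper does exactly this directly, in two cases ($\alpha X \simeq_R X$ and $\alpha X \not\simeq_R X$), with no contradiction and no induction.

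The genuine gap is your induction measure. You propose to descend on $|\alpha X| + |\beta Y|$ (equivalently, a shortest grounding sequence for the pair), but the matching clauses of Definition~\ref{def:R_beq} only guarantee \emph{some} matching derivative on the $\beta Y$ side, and BPA transitions can lengthen a string arbitrarily (a rule $Z \stackrel{\tau}{\longrightarrow} Z_1 Z_2 Z_3$ increases $|\cdot|_{\mathrm{st}}$), so the strong norm of the matching side can grow at every round and your sum need not decrease; the descent is not well-founded. The repair is to not induct at all: fix the \emph{entire} norm-reducing path for $\alpha$ down to the first state that is $\simeq_R X$, match the whole path in one step, and apply primeness once at the end --- which is precisely the paper's argument. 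Relatedly, your appeal to Theorem~\ref{thm:relative_bis_str} and Proposition~\ref{prop:Rd_congruence} to ``cancel'' and obtain a smaller instance $\alpha' X \simeq_{R'} \beta' Y$ is misapplied: cancellation removes a \emph{common} suffix and changes the reference set, whereas here the trailing symbols $X$ and $Y$ are exactly the (a priori different) primes you are trying to relate, and the lemma's conclusion must be about $\simeq_R$ itself, not a derived $\simeq_{R'}$. Dropping the descent and the cancellation step while keeping your ``primeness rigidity'' observation yields the correct proof.
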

\begin{theorem}[Unique Decomposition Property for $R$-bisimilari\-ties]\label{thm:QUDP_RBisimularity}
Let $P_r  .  P_{r-1}  .  \ldots  .  P_{1}$ and $Q_s  .  Q_{s-1}  .  \ldots  .  Q_{1}$ be $\simeq_R$-prime decompositions. Let $R_1, S_1 =  \mathsf{Id}_{R}$ and let $R_{i+1} = \mathsf{Rd}_{R_{i}}(P_i)$ for $1 \leq i < r$ and $S_{j +1} = \mathsf{Rd}_{S_{j}}(Q_j)$ for $1 \leq j < s$.  Then, $r = s$, $R_i = S_i$ and $P_i \simeq_{R_{i}} Q_i$ for $1 \leq i \leq r$.
\end{theorem}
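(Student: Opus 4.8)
The plan is to argue by induction on $\min(r,s)$, peeling off the bottom-most prime from each decomposition and invoking the relativized prime process property (Lemma~\ref{lem:R_prime_property}) together with the cancellation/congruence law (Theorem~\ref{thm:relative_bis_str} and Proposition~\ref{prop:Rd_congruence}). First I would handle the base case: if $\min(r,s)=0$, say $r=0$, then $\alpha\simeq_R\epsilon$, so by Lemma~\ref{lem:Id_def} every bottom constant of $Q_s.\ldots.Q_1$ lies in $\mathsf{Id}_R$; but $\mathsf{Id}_{R}=S_1$ and $Q_1$ was assumed to be a $\simeq_{S_1}$-prime, hence not a $\simeq_{S_1}$-identity — so $s=0$ as well. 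For the inductive step, from $P_r.\ldots.P_1 \simeq_R Q_s.\ldots.Q_1$ we have $(P_r.\ldots.P_2)\,P_1 \simeq_R (Q_s.\ldots.Q_2)\,Q_1$ with $P_1$ a $\simeq_R$-prime (note $R=S_1$ at the bottom level since $R_1=S_1=\mathsf{Id}_R$ and $\simeq_{R}=\simeq_{\mathsf{Id}_R}$ by Proposition~\ref{prop:R_bis_vs_IDR}) and $Q_1$ a $\simeq_R$-prime. Lemma~\ref{lem:R_prime_property} then gives $P_1\simeq_R Q_1$.

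Next I would propagate the reference sets one step up. From $P_1\simeq_R Q_1$ and Lemma~\ref{lem:Rd_R_basic_facts} we get $\mathsf{Rd}_R(P_1)=\mathsf{Rd}_R(Q_1)$, i.e.\ $R_2=S_2$; since $R_1=\mathsf{Id}_R$ is admissible, Lemma~\ref{lem:Rd_admissible} makes $R_2$ admissible too. Now I would cancel $P_1$ (equivalently $Q_1$, using $P_1\simeq_R Q_1$) from both sides: by Proposition~\ref{prop:Rd_congruence} applied with $\gamma=P_1$, $\delta=Q_1$ and $R'=R_2$, the equivalence $(P_r.\ldots.P_2)P_1 \simeq_R (Q_s.\ldots.Q_2)Q_1$ is equivalent to $P_r.\ldots.P_2 \simeq_{R_2} Q_s.\ldots.Q_2$. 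Crucially, $P_r.\ldots.P_2$ is a $\simeq_{R_2}$-prime decomposition of length $r-1$ and $Q_s.\ldots.Q_2$ is a $\simeq_{S_2}=\simeq_{R_2}$-prime decomposition of length $s-1$, with the associated reference-set chains being exactly the tails $R_2,R_3,\ldots$ and $S_2,S_3,\ldots$ of the original chains — this is immediate from Definition~\ref{def:R_prime_decomposition} once $R_2=S_2$ is known. Applying the induction hypothesis to these shorter decompositions yields $r-1=s-1$, $R_{i}=S_{i}$ and $P_i\simeq_{R_i}Q_i$ for $2\le i\le r$, which together with the base-level facts $P_1\simeq_{R_1}Q_1$, $R_1=S_1$ completes the argument.

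The main obstacle I anticipate is making the "peel and recurse" step fully rigorous on two fronts. The first is verifying that the tail $P_r.\ldots.P_2$ genuinely remains a prime decomposition relative to the new base $R_2$ — one must check that each $P_i$ ($i\ge 2$) being a $\simeq_{R_i}$-prime in the original sense coincides with being prime in the chain started from $R_2$, which hinges on the consistency of the $\mathsf{Rd}$-iteration (Lemma~\ref{lem:redundant_chain}) and on $R_2$ being admissible. The second is the precise invocation of Proposition~\ref{prop:Rd_congruence}: it requires $\gamma\simeq_R\delta$, i.e.\ $P_1\simeq_R Q_1$, which is exactly what Lemma~\ref{lem:R_prime_property} supplies, so the two lemmas must be applied in the right order. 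A secondary subtlety is the asymmetry in the base case — one must rule out $r=0<s$ by exploiting that a prime is never an identity — but this is short. Everything else (associativity, monotonicity bookkeeping, the identification $\simeq_R=\simeq_{\mathsf{Id}_R}$) is routine given the lemmas already established.
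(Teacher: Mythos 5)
Your proposal is correct and follows essentially the same route as the paper: induction peeling off $P_1$ and $Q_1$, using Lemma~\ref{lem:R_prime_property} to get $P_1 \simeq_{R_1} Q_1$, Lemma~\ref{lem:Rd_R_basic_facts} to conclude $R_2 = S_2$, and Proposition~\ref{prop:Rd_congruence} to cancel and recurse. The extra details you supply (the $r=0$ base case via primes not being identities, and the check that the tails remain prime decompositions for the admissible set $R_2$) are points the paper's terse proof leaves implicit, and your treatment of them is sound.
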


\section{Norms and Decreasing Bisimulations}\label{sec:norms}

\subsection{Syntactic Norms vs. Semantic Norms}
When $\Gamma$ is realtime, a natural number called {\em norm} is assigned to every process. The {\em norm} of $\alpha$ is the least number $k$ such that $\alpha   \stackrel{a_1}{\longrightarrow} \cdot \stackrel{a_2}{\longrightarrow}  \ldots   \stackrel{a_k} \longrightarrow \epsilon$ for some $a_1, a_2,\ldots, a_k$.

The norm for realtime systems is both syntactic (static) and semantic (dynamic).  It is syntactic because its definition does not rely on bisimilarity, and it can be efficiently calculated via greedy strategy merely with the knowledge of rules in $\Delta$. It is semantic, because
the norm of a realtime process $\alpha$ is the least number $k$ such that $\alpha   \simeq  \cdot \stackrel{a_1}{\longrightarrow} \cdot \simeq \cdot \stackrel{a_2}{\longrightarrow} \cdot
   \simeq \ldots  \simeq \cdot \stackrel{a_k}\longrightarrow \cdot \simeq \epsilon$ for some $a_1, a_2,\ldots, a_k$.

Therefore, we get the coincidence of the {\em syntactic norm} and {\em semantic norm} for realtime systems.
For non-realtime systems, however, the syntactic norms and the semantic ones do not coincide any more. They must be studied separately.

\subsection{Strong Norms and Weak Norms}

We define two syntactic norms for non-realtime systems. The strong norm takes silent actions into account while the weak norm neglects the contribution of  silent actions.
\begin{definition}
The {\em strong norm} of $\alpha$, denoted by $|\alpha|_{\mathrm{st}}$, is the least number $k$ such that $\alpha   \stackrel{\ell_1}{\longrightarrow} \cdot \stackrel{\ell_2}{\longrightarrow}  \ldots   \stackrel{\ell_k} \longrightarrow \epsilon$ for some $\ell_1, \ell_2,\ldots, \ell_k$.

The {\em weak norm} of $\alpha$, denoted by $|\alpha|_{\mathrm{wk}}$, is the least number $k$ such that $\alpha   \stackrel{a_1}{\Longrightarrow} \cdot \stackrel{a_2}{\Longrightarrow}  \ldots   \stackrel{a_k} \Longrightarrow \epsilon$ for some $a_1, a_2,\ldots, a_k$.
\end{definition}
%

%\begin{lemma}
%\begin{enumerate}
%\item
%If $\alpha \stackrel{a}{\longrightarrow} \alpha'$, then $\| \alpha' \|_{p} \geq \| \alpha \|_{p} -1$.

%\item
%If $\alpha \stackrel{\tau}{\longrightarrow} \alpha'$, then $\| \alpha' \|_{p} \geq \| \alpha \|_{p}$.

%\end{enumerate}
%\end{lemma}

\begin{lemma}
\begin{enumerate}
\item
$| \epsilon |_{\mathrm{st}} = | \epsilon |_{\mathrm{wk}} = 0$;

\item
$| \alpha\beta |_{\mathrm{st}} = | \alpha |_{\mathrm{st}} +  | \beta |_{\mathrm{st}}$;
$| \alpha\beta|_{\mathrm{wk}} = | \alpha |_{\mathrm{wk}} +  | \beta |_{\mathrm{wk}}$.
\end{enumerate}
\end{lemma}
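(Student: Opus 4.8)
The plan is to dispatch part~(1) directly and to prove part~(2) by establishing the two inequalities $|\alpha\beta|_{\bullet}\le|\alpha|_{\bullet}+|\beta|_{\bullet}$ and $|\alpha\beta|_{\bullet}\ge|\alpha|_{\bullet}+|\beta|_{\bullet}$ for both $\bullet=\mathrm{st}$ and $\bullet=\mathrm{wk}$ at once. Part~(1) is immediate: no operational rule applies to $\epsilon$ (the axiom schema only fires on a constant, and the context rule needs a transition of its subject), so the empty computation is the unique computation from $\epsilon$ to $\epsilon$; it uses $0$ transitions and $0$ visible transitions, hence $|\epsilon|_{\mathrm{st}}=|\epsilon|_{\mathrm{wk}}=0$. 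Throughout the argument for part~(2) I will use the elementary reformulations $|\gamma|_{\mathrm{st}}=\min\{\,m: \gamma=c_0\stackrel{\ell_1}{\longrightarrow}c_1\stackrel{\ell_2}{\longrightarrow}\cdots\stackrel{\ell_m}{\longrightarrow}c_m=\epsilon\,\}$ and $|\gamma|_{\mathrm{wk}}=\min$ of the number of visible $\ell_i$ over all such single-step computations; these follow by unfolding $\stackrel{a}{\Longrightarrow}$ and grouping the $\tau$-steps, and all the minima exist because $\Gamma$ is normed. The degenerate cases $\alpha=\epsilon$ or $\beta=\epsilon$ are then handled by part~(1) together with $\epsilon.\gamma=\gamma.\epsilon=\gamma$, so I may assume $\alpha\ne\epsilon\ne\beta$.

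For the upper bound I would lift computations through the sequential-composition rule. Given a norm-minimal computation $\alpha=c_0\stackrel{\ell_1}{\longrightarrow}\cdots\stackrel{\ell_p}{\longrightarrow}c_p=\epsilon$, the context rule $\dfrac{\zeta\stackrel{\ell}{\longrightarrow}\zeta'}{\zeta.\beta\stackrel{\ell}{\longrightarrow}\zeta'.\beta}$ gives $\alpha\beta=c_0\beta\stackrel{\ell_1}{\longrightarrow}\cdots\stackrel{\ell_p}{\longrightarrow}c_p\beta=\beta$ with the same label sequence; appending a norm-minimal computation $\beta\stackrel{\ell_1'}{\longrightarrow}\cdots\stackrel{\ell_q'}{\longrightarrow}\epsilon$ yields a computation from $\alpha\beta$ to $\epsilon$ using $p+q$ transitions, of which exactly (number of visible $\ell_i$)${}+{}$(number of visible $\ell_j'$) are visible. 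Choosing the two pieces to witness $|\alpha|_{\mathrm{st}},|\beta|_{\mathrm{st}}$ (resp. $|\alpha|_{\mathrm{wk}},|\beta|_{\mathrm{wk}}$) gives $|\alpha\beta|_{\mathrm{st}}\le|\alpha|_{\mathrm{st}}+|\beta|_{\mathrm{st}}$ and $|\alpha\beta|_{\mathrm{wk}}\le|\alpha|_{\mathrm{wk}}+|\beta|_{\mathrm{wk}}$.

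For the lower bound I would exploit the fact that BPA transitions only ever rewrite the \emph{leftmost} constant, so the suffix $\beta$ is untouched until the prefix is exhausted. Fix a computation $\alpha\beta=c_0\stackrel{\ell_1}{\longrightarrow}\cdots\stackrel{\ell_m}{\longrightarrow}c_m=\epsilon$. By induction on $i$ one shows there is a least index $t$ with $c_t=\beta$, and that for all $i<t$ one has $c_i=\gamma_i\beta$ with $\gamma_i\ne\epsilon$, where $\gamma_0=\alpha$ and $\gamma_i\stackrel{\ell_{i+1}}{\longrightarrow}\gamma_{i+1}$: while $\gamma_i\ne\epsilon$, the head constant of $c_i$ is the head of $\gamma_i$, so $c_i\stackrel{\ell_{i+1}}{\longrightarrow}c_{i+1}$ is derived by the context rule from a transition of $\gamma_i$ and keeps $\beta$ as a suffix; and since $c_m=\epsilon$ cannot carry the nonempty suffix $\beta$, the prefix must become $\epsilon$ at some step $t\le m$. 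Then $c_0,\dots,c_t$ projects to a computation $\alpha=\gamma_0\stackrel{\ell_1}{\longrightarrow}\cdots\stackrel{\ell_t}{\longrightarrow}\epsilon$ of length $t$ with labels $\ell_1,\dots,\ell_t$, and $c_t=\beta,c_{t+1},\dots,c_m=\epsilon$ is a computation of $\beta$ of length $m-t$ with the remaining labels. Hence $t\ge|\alpha|_{\mathrm{st}}$, $m-t\ge|\beta|_{\mathrm{st}}$, and the number of visible labels among the first $t$ is $\ge|\alpha|_{\mathrm{wk}}$ while the number among the rest is $\ge|\beta|_{\mathrm{wk}}$; taking the original computation to be norm-minimal in each case yields $|\alpha\beta|_{\mathrm{st}}\ge|\alpha|_{\mathrm{st}}+|\beta|_{\mathrm{st}}$ and $|\alpha\beta|_{\mathrm{wk}}\ge|\alpha|_{\mathrm{wk}}+|\beta|_{\mathrm{wk}}$.

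I do not expect a genuine obstacle here: the only real content is the leftmost-rewriting invariant in the lower bound, which is the standard structural feature of BPA/context-free processes, and the rest is bookkeeping. The one subtlety worth stating explicitly is the reformulation of $|\cdot|_{\mathrm{wk}}$ as ``minimum number of visible single steps of a computation to $\epsilon$'', since it is precisely this that lets the same prefix/suffix split serve simultaneously for the strong and the weak norm.
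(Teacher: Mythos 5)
Your proof is correct; the paper states this lemma without proof, treating it as routine, and your argument is exactly the standard one that would be supplied (part (1) by the empty computation, the upper bound in (2) by lifting a computation of $\alpha$ through the context rule and appending one for $\beta$, and the lower bound by the leftmost-rewriting invariant that splits any computation of $\alpha\beta$ at the first instant the prefix is exhausted). The one point you rightly flag — recasting $|\cdot|_{\mathrm{wk}}$ as the minimum number of visible labels along a single-step computation to $\epsilon$, so that the same prefix/suffix split serves both norms — is the only place where care is needed, and you handle it correctly.
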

\begin{lemma}
\begin{enumerate}
\item
$| \alpha |_{\mathrm{st}} = 0$ if and only if $\alpha = \epsilon$.

\item
$| \alpha |_{\mathrm{wk}} = 0$ if and only if $\alpha \in \mathbf{C}_{\mathrm{G}}^{*}$ (i.e.~$\alpha \Longrightarrow \epsilon$).
\end{enumerate}
\end{lemma}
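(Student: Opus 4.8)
The plan is to handle the two clauses separately, reducing each to the already-fixed meaning of a length-zero computation rather than doing any induction on the structure of $\alpha$. For clause (1), the direction $\alpha = \epsilon \Rightarrow |\alpha|_{\mathrm{st}} = 0$ is exactly the first item of the preceding lemma: the empty transition sequence witnesses $\epsilon \stackrel{\ell_1}{\longrightarrow}\cdots\stackrel{\ell_k}{\longrightarrow}\epsilon$ with $k = 0$, and $0$ is minimal among the natural numbers. For the converse I would observe that a transition sequence of length $k = 0$ starting from $\alpha$ necessarily ends at $\alpha$ itself; hence if $|\alpha|_{\mathrm{st}} = 0$ the witnessing sequence ends at $\epsilon$, which forces $\alpha = \epsilon$ by syntactic equality. (Well-definedness of $|\alpha|_{\mathrm{st}}$, i.e.\ finiteness, comes for free since $\Gamma$ is normed.)

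For clause (2) I would invoke the characterisation of ground processes recorded in Section~\ref{sec:preliminaries}, namely $\alpha \Longrightarrow \epsilon \iff \alpha \in \mathbf{C}_{\mathrm{G}}^{*}$, so that it suffices to prove $|\alpha|_{\mathrm{wk}} = 0 \iff \alpha \Longrightarrow \epsilon$. If $|\alpha|_{\mathrm{wk}} = 0$, the witnessing weak computation $\alpha \stackrel{a_1}{\Longrightarrow}\cdots\stackrel{a_k}{\Longrightarrow}\epsilon$ has $k = 0$ visible actions, i.e.\ it consists of silent steps only, which is precisely $\alpha \Longrightarrow \epsilon$. Conversely, if $\alpha \Longrightarrow \epsilon$, then this very $\tau$-computation reaches $\epsilon$ with zero interspersed visible actions, so $|\alpha|_{\mathrm{wk}} = 0$, once again with $0$ minimal. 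Finiteness of $|\alpha|_{\mathrm{wk}}$ is not needed for the equivalence but follows anyway, since any strong computation $\alpha \stackrel{\ell_1}{\longrightarrow}\cdots\stackrel{\ell_k}{\longrightarrow}\epsilon$ collapses (by grouping maximal blocks of $\tau$-steps) to a weak one, giving $|\alpha|_{\mathrm{wk}} \le |\alpha|_{\mathrm{st}} < \infty$.

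The only point requiring care — and the closest thing to an obstacle — is pinning down the degenerate $k = 0$ case in the two definitions: for $|\cdot|_{\mathrm{st}}$ a length-$0$ computation is the identity on processes, whereas for $|\cdot|_{\mathrm{wk}}$ a computation ``with zero visible actions'' still permits an arbitrary block of $\tau$-steps, i.e.\ it is $\Longrightarrow$; this asymmetry is exactly why clause (1) pins $\alpha$ down to $\epsilon$ while clause (2) only pins it down to $\mathbf{C}_{\mathrm{G}}^{*}$. Once this reading of $\stackrel{a_1}{\Longrightarrow}\cdots\stackrel{a_k}{\Longrightarrow}$ at $k=0$ is made explicit, both equivalences are immediate.
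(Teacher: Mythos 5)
Your proof is correct; the paper in fact gives no proof of this lemma, treating it as immediate from the definitions, and your argument is exactly the direct unpacking the authors intend. Your explicit handling of the degenerate $k=0$ case — reading a weak computation with zero visible actions as $\alpha \Longrightarrow \epsilon$ rather than as syntactic identity — is the right (and necessary) interpretation, and it is confirmed by the paper's later remark identifying $|\alpha|_{\mathrm{wk}}$ with the shortest-path weight in $\mathcal{G}_{\mathrm{wk}}$, where silent edges have weight zero.
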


\begin{lemma}
If $\alpha \simeq_R \beta$, then $ | \alpha |_{\mathrm{wk}} = | \beta |_{\mathrm{wk}}$.
\end{lemma}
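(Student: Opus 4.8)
The plan is to push the statement through to the relativized transition system on $R$-normal-forms, where only a standard branching-bisimulation argument is needed. First I would reduce to the case where $\alpha$ and $\beta$ are already in $R$-nf: since ${=_R}\subseteq{\simeq_R}$ we have $\alpha\simeq_R\alpha_R$ and $\beta\simeq_R\beta_R$, and because the weak norm is additive (the preceding lemma) and every constant of $R\subseteq\mathbf{C}_{\mathrm{G}}$ has weak norm $0$ (the lemma before that), a process and its $R$-nf have the same weak norm. For an $R$-nf $\mu$, write $\|\mu\|$ for the least $k$ such that there is an $R$-transition path $\mu=\sigma_0\stackrel{\ell_1}{\longrightarrow}_R\cdots\stackrel{\ell_p}{\longrightarrow}_R\sigma_p$ with $\sigma_p$ ground and exactly $k$ of the $\ell_i$ visible. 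The proof then splits into $(i)$ $\|\mu\|=|\mu|_{\mathrm{wk}}$ for every $R$-nf $\mu$, and $(ii)$ $\mu_1\simeq_R\mu_2\ \Rightarrow\ \|\mu_1\|=\|\mu_2\|$ for $R$-nf's $\mu_1,\mu_2$; combining $(i)$ for $\alpha_R,\beta_R$ with $(ii)$ then gives $|\alpha|_{\mathrm{wk}}=\|\alpha_R\|=\|\beta_R\|=|\beta|_{\mathrm{wk}}$.

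For $(i)$, the inequality $\|\mu\|\le|\mu|_{\mathrm{wk}}$ comes from taking a weak-norm witnessing path of $\mu$, breaking its $\Longrightarrow$-segments into single transitions, and lifting each transition via Lemma~\ref{lem:R_transition} to an $R$-transition between $R$-nf's, the endpoint $\epsilon$ being ground. For the converse I would induct on the length $p$ of an $R$-path from $\mu$ to a ground $R$-nf: if $\mu$ is ground it is trivial; otherwise $\mu\neq\epsilon$, so Lemma~\ref{lem:char_R_arrow}(2) together with the operational rules yields an ordinary transition $\mu\stackrel{\ell_1}{\longrightarrow}\mu'$ with $\mu'=_R\sigma_1$, hence $|\mu'|_{\mathrm{wk}}=|\sigma_1|_{\mathrm{wk}}$; the induction hypothesis applied to the tail path from $\sigma_1$ bounds $|\sigma_1|_{\mathrm{wk}}$ by the number of visible transitions among $\ell_2,\dots,\ell_p$, and prefixing $\ell_1$ gives the bound for $\mu$.

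For $(ii)$ it suffices, by symmetry of $\simeq_R$, to show $\mu_1\simeq_R\mu_2\Rightarrow\|\mu_1\|\le\|\mu_2\|$, and I would prove by induction on $m$: whenever $\mu_1\simeq_R\mu_2$ are $R$-nf's and $\mu_2$ admits an $R$-path of length $\le m$ to a ground $R$-nf with $k$ visible transitions, then $\|\mu_1\|\le k$. If $\mu_2$ is ground, ground preservation (Proposition~\ref{prop:R_bisimilarity}(1)) forces $\mu_1$ ground, so $\|\mu_1\|=0\le k$. Otherwise, on a witnessing path $\mu_2=\sigma_0\stackrel{\ell_1}{\longrightarrow}_R\cdots\stackrel{\ell_p}{\longrightarrow}_R\sigma_p$ let $j$ be least with $\ell_j$ visible or $\sigma_j\not\simeq_R\mu_2$; such $j$ exists since $\sigma_p$ is ground while $\mu_2$ is not, whence $\sigma_p\not\simeq_R\mu_2$ by ground preservation. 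Then $\ell_1,\dots,\ell_{j-1}$ are silent and $\sigma_1,\dots,\sigma_{j-1}\simeq_R\mu_2$, so $\mu_2\stackrel{\simeq_R}{\Longrightarrow}_R\sigma_{j-1}$ and $\mu_1\simeq_R\sigma_{j-1}$. If $\ell_j=a$ is visible, Proposition~\ref{prop:R_bisimilarity}(3) gives $\mu_1\stackrel{\simeq_R}{\Longrightarrow}_R\cdot\stackrel{a}{\longrightarrow}_R\sigma_j'$ with $\sigma_j'\simeq_R\sigma_j$; the tail from $\sigma_j$ has length $\le m-1$ and $k-1$ visible transitions, so the induction hypothesis yields $\|\sigma_j'\|\le k-1$, hence $\|\mu_1\|\le k$. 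If $\ell_j$ is silent, then $\sigma_{j-1}\not\simeq_R\sigma_j$, so $\sigma_{j-1}\stackrel{\not\simeq_R}{\longrightarrow}_R\sigma_j$ and Proposition~\ref{prop:R_bisimilarity}(2) gives $\mu_1\stackrel{\simeq_R}{\Longrightarrow}_R\cdot\stackrel{\not\simeq_R}{\longrightarrow}_R\sigma_j'$ with $\sigma_j'\simeq_R\sigma_j$; the tail from $\sigma_j$ has length $\le m-1$ and still $k$ visible transitions, so $\|\sigma_j'\|\le k$ and again $\|\mu_1\|\le k$.

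The main obstacle is the silent branch just above, where $\sigma_j\not\simeq_R\mu_2$ but the number of remaining visible transitions does not decrease; an induction organized on $\|\mu_2\|$ rather than on path length would stall there. What makes it work is that the inert silent prefix $\sigma_0,\dots,\sigma_{j-1}$ is peeled off first, so the remaining path is strictly shorter, and — crucially — that the matching move supplied by Proposition~\ref{prop:R_bisimilarity}(2) is itself silent and contributes nothing to the visible count. Getting the visible-transition bookkeeping right across the three clauses of Proposition~\ref{prop:R_bisimilarity}, and checking that passing to $R$-nf's is harmless throughout (Lemmas~\ref{lem:R_equal_elimination}, \ref{lem:R_transition}, \ref{lem:char_R_arrow}, \ref{lem:groud_preserve}), is the bulk of the remaining routine work.
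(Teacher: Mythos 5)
Your proof is correct. The paper states this lemma without proof (there is no corresponding entry in the appendix), so there is nothing to compare against; your write-up in effect supplies the missing argument. The decomposition into the auxiliary quantity $\|\mu\|$ on $R$-normal-forms, with (i) $\|\mu\|=|\mu|_{\mathrm{wk}}$ and (ii) invariance of $\|\cdot\|$ under $\simeq_R$, is the right way to organize it: the only genuinely delicate point is that the matching moves supplied by Definition~\ref{def:R_beq}/Proposition~\ref{prop:R_bisimilarity} live in the $R$-transition system (where $\epsilon$ can fire transitions inherited from constants of $R$), so one must translate visible-action counts between $\longrightarrow$ and $\longrightarrow_R$, which is exactly what (i) does via Lemmas~\ref{lem:R_transition} and~\ref{lem:char_R_arrow}. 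Your handling of the silent branch in (ii) — peeling off the inert prefix so the induction is on remaining path length rather than on $\|\mu_2\|$, and observing that the matching move is itself silent — is the correct fix for the place a naive induction would stall. Note also that the shortcut via Theorem~\ref{thm:relative_bis} (reduce to the $R=\emptyset$ case by appending a $\gamma$ with $\mathsf{Rd}(\gamma)=R$ and using additivity) is not available in general, since the paper stresses that not every $R$ arises as $\mathsf{Rd}(\gamma)$; so a direct argument of the kind you give is in fact needed.
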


\begin{lemma}\label{exponential_bound_st}
$| X |_{\mathrm{st}}$ is exponentially bounded for every $X$.
\end{lemma}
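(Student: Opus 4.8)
The plan is to show that the strong norms of the constants, viewed as the least solution of a system of "min-plus" equations derived from $\Delta$, are bounded by $|\mathbf{C}|$ in a suitably unfolded sense, and hence bounded by a single exponential in the size of $\Gamma$. First I would recall the defining relations: for each $X \in \mathbf{C}$,
\[
|X|_{\mathrm{st}} \;=\; 1 + \min\bigl\{\, |\alpha|_{\mathrm{st}} \;:\; (X \stackrel{\ell}{\longrightarrow} \alpha) \in \Delta \,\bigr\},
\]
where $|\alpha|_{\mathrm{st}} = \sum_{i} |X_i|_{\mathrm{st}}$ for $\alpha = X_1 \ldots X_k$ by the additivity lemma just proved, and the minimum ranges over a nonempty set because $\Gamma$ is normed. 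Since $\Gamma$ is normed, every $|X|_{\mathrm{st}}$ is finite.

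The key step is a "shortest witnessing derivation" argument. Consider a constant $X$ and a derivation $X \stackrel{\ell_1}{\longrightarrow} \cdots \stackrel{\ell_k}{\longrightarrow} \epsilon$ of minimal length $k = |X|_{\mathrm{st}}$. I would organize this derivation as a (leftmost) reduction of a stack: at each step the head constant $Y$ of the current process is rewritten by some rule $Y \stackrel{\ell}{\longrightarrow} \beta$, pushing $\beta$ on top of the remaining stack. Tracking the "first time each constant is exposed at the head" yields, for each $X$, a witnessing structure that is essentially a tree whose internal nodes are labelled by rules; minimality forces that along any root-to-leaf path no constant repeats (otherwise one could splice out the loop and get a shorter derivation, contradicting minimality of $k$ — here one uses that the spliced-out portion contributes a strictly positive number of transition steps, since every rule fires at least one action). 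Hence each such path has length at most $|\mathbf{C}|$, and the branching is bounded by $d := \max_{(X \to \alpha) \in \Delta} |\alpha|$, so the tree has at most $d^{|\mathbf{C}|}$ leaves, giving $|X|_{\mathrm{st}} \le d^{|\mathbf{C}|}$, which is exponential in the size of $\Gamma$.

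Alternatively, and perhaps more cleanly, I would argue directly by a fixed-point iteration: define $n_0(X) = \infty$ and $n_{i+1}(X) = 1 + \min_{(X \to \alpha) \in \Delta} \sum_{Y \in \alpha} n_i(Y)$ (with the convention $\infty + c = \infty$), so that $n_i$ decreases to $(|{\cdot}|_{\mathrm{st}})$. One shows by induction on $i$ that any constant whose strong norm has "stabilized" by stage $i$ but not by stage $i-1$ has all its norm-realizing successors stabilized by stage $i-1$; since each stage can newly stabilize at least one constant, all constants stabilize within $|\mathbf{C}|$ stages, and at each stage a norm can at most multiply by roughly $d$ relative to the previous stage's values plus $1$, yielding again a bound of the form $d^{O(|\mathbf{C}|)}$. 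The main obstacle is making the "no repeated constant on a witnessing path" claim precise in the presence of the stack discipline: one must be careful that excising a sub-derivation that returns the head to a previously-seen constant really does leave a valid, strictly shorter terminating derivation of $\epsilon$, which requires observing that the excised block both starts and ends with an empty "extra suffix" relative to the outer context. I would handle this by phrasing everything in terms of norms of single constants via the additivity lemma, so that the excision is purely arithmetic: a repeated head constant on the minimal-length path contradicts minimality of the corresponding $|X|_{\mathrm{st}}$ directly.
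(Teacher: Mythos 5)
Your argument is correct; the paper in fact states this lemma without proof, and your bound is the standard one: a norm-realizing derivation for $X$ unfolds into a tree of rule applications in which the strong norm strictly decreases along every root-to-leaf path (since $|X|_{\mathrm{st}} = 1 + \sum_{Y \in \alpha}|Y|_{\mathrm{st}} > |Y_j|_{\mathrm{st}}$ for each $Y_j$ occurring in the chosen right-hand side $\alpha$), so the depth is at most $|\mathbf{C}|$, the branching is at most $d = \max_{(X\to\alpha)\in\Delta}|\alpha|$, and hence $|X|_{\mathrm{st}} \le d^{|\mathbf{C}|+1}$. This norm-monotonicity observation is also the cleanest way to discharge the "no repeated constant on a witnessing path" claim you were worried about, avoiding any surgery on the linear derivation; your alternative fixed-point iteration gives the same exponential bound.
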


\subsection{The Semantic Norms}
The semantic norms play an important role in our algorithm. They depend on the involved semantic equivalence.
Let $\asymp$ be a process equivalence. A transition $\alpha\stackrel{\ell}{\longrightarrow}\alpha'$ is called {\em $\asymp$-preserving}  if $\alpha' \asymp \alpha$.

\begin{definition}\label{def:semantic_norm}
Let $\asymp$ be a process equivalence. The {\em $\asymp$-norm} of $\alpha$, denoted by $\|\alpha\|_{\asymp}$, is the least number $k$, such that
\[
 \alpha   \asymp  \cdot \stackrel{\ell_1}{\longrightarrow} \cdot \asymp \cdot \stackrel{\ell_2}{\longrightarrow} \cdot
   \asymp \ldots  \asymp \cdot \stackrel{\ell_k}\longrightarrow \cdot \asymp \epsilon.
\]
for some $\ell_1, \ell_2, \ldots, \ell_k$.
If $\|\alpha\|_{\asymp} = k$, then any transition sequence of the form
\begin{equation}\label{eqn:witness_path}
\alpha \stackrel{\asymp}{\Longrightarrow} \cdot \stackrel{\ell_1}{\longrightarrow}  \cdot \stackrel{\asymp}{\Longrightarrow} \cdot \stackrel{\ell_2}{\longrightarrow} \cdot \stackrel{\asymp}{\Longrightarrow}\ldots \stackrel{\asymp}{\Longrightarrow} \cdot \stackrel{\ell_k}{\longrightarrow} \cdot \stackrel{\asymp}{\Longrightarrow} \epsilon
\end{equation}
is called a {\em witness path} of $\asymp$-norm for $\alpha$. The {\em length} of the witness path is $k$.
\end{definition}
Clearly, the $\asymp$-norms have the following basic fact:
\begin{lemma}
If $\alpha \asymp \beta$, then $\|\alpha\|_{\asymp} = \|\beta\|_{\asymp}$.
\end{lemma}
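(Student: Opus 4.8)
The plan is to exploit the fact that the transition scheme defining $\|\cdot\|_{\asymp}$ in Definition~\ref{def:semantic_norm} already opens with an $\asymp$-step, so that the set of admissible lengths $k$ for a process depends only on its $\asymp$-class. Since $\asymp$ is an equivalence relation, it is enough, by symmetry, to establish $\|\beta\|_{\asymp} \le \|\alpha\|_{\asymp}$; the reverse inequality then follows by interchanging $\alpha$ and $\beta$.

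First I would set $k = \|\alpha\|_{\asymp}$ and invoke Definition~\ref{def:semantic_norm} to fix a witnessing sequence
\[
\alpha \asymp \gamma_0 \stackrel{\ell_1}{\longrightarrow} \cdot \asymp \gamma_1 \stackrel{\ell_2}{\longrightarrow} \cdot \asymp \cdots \asymp \gamma_{k-1} \stackrel{\ell_k}{\longrightarrow} \cdot \asymp \epsilon .
\]
From $\alpha \asymp \beta$ together with $\alpha \asymp \gamma_0$, transitivity and symmetry of $\asymp$ give $\beta \asymp \gamma_0$. Hence the very same sequence, now prefixed by $\beta \asymp \gamma_0$ in place of $\alpha \asymp \gamma_0$, is again an instance of the scheme of Definition~\ref{def:semantic_norm}, this time for $\beta$; therefore $\|\beta\|_{\asymp} \le k = \|\alpha\|_{\asymp}$. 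Swapping the roles of $\alpha$ and $\beta$ yields $\|\alpha\|_{\asymp} \le \|\beta\|_{\asymp}$, and the two inequalities together give the claim.

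I do not anticipate any genuine obstacle here: the statement is an immediate structural consequence of the definition. The only point worth keeping in view is that the argument uses both transitivity and symmetry of $\asymp$, i.e.\ that $\asymp$ is a bona fide process equivalence as assumed; nothing about the particular system $\Gamma$, nor about $\asymp$ being bisimulation-like, is required.
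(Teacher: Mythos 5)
Your argument is correct: since Definition~\ref{def:semantic_norm} opens the witnessing scheme with an $\asymp$-step, transitivity and symmetry of $\asymp$ let you transplant any witness for $\alpha$ to one for $\beta$, giving both inequalities. The paper states this lemma without proof as an immediate consequence of the definition, and your write-up is exactly the intended unpacking of that observation.
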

If $\asymp$ is an arbitrary equivalence relation, the witness path does not always exist, because it is not always the case $\alpha \stackrel{\asymp}{\Longrightarrow} \beta$ whenever $\alpha \asymp \beta$.  This is one of the motivations of the forthcoming notion of {\em decreasing bisimulation} (Definition~\ref{def:decreasing_bisimulation}). For the moment, we introduce the $\asymp$-decreasing transitions.
\begin{definition}
A transition $\alpha \stackrel{\ell}{\longrightarrow} \alpha'$ is {\em $\asymp$-decreasing} if $\|\alpha'\|_{\asymp} < \|\alpha\|_{\asymp}$.
\end{definition}
According to Definition~\ref{def:semantic_norm},  $\|\alpha'\|_{\asymp} = \|\alpha\|_{\asymp} - 1$ if $\alpha \stackrel{\ell}{\longrightarrow} \alpha'$ is a $\asymp$-decreasing transition.  In witness path~(\ref{eqn:witness_path}), every transition $\stackrel{\ell_i}{\longrightarrow}$ must be $\asymp$-decreasing for $1 \leq i \leq k$.

\begin{definition}\label{def:decreasing_bisimulation}
A process equivalence $\asymp$ is a {\em decreasing bisimulation},  if the following conditions are satisfied:
\begin{enumerate}
\item
If $\alpha \asymp \epsilon$, then $\alpha \Longrightarrow \epsilon$.

\item
If $\alpha  \asymp  \beta$ and
$\alpha \stackrel{\ell}{\longrightarrow} \alpha'$ is a $\asymp$-decreasing transition, then there exist $\beta''$ and $\beta'$ such that
$\beta  \stackrel{\asymp}{\Longrightarrow} \beta'' \stackrel{\ell}{\longrightarrow} \beta'$ and $\alpha' \asymp \beta'$.
\end{enumerate}
\end{definition}
Decreasing bisimulation is a weaker version of bisimulation. The difference lies in that only decreasing transitions need to be matched. Be aware that the transition $\beta'' \stackrel{\ell}{\longrightarrow} \beta'$ in Definition~\ref{def:decreasing_bisimulation}  is forced to be $\asymp$-decreasing.

Let $\asymp$ be a decreasing bisimulation.  Then any $\asymp$-decreasing transition of $\alpha$ can be extended to a witness path of $\asymp$-norm of $\alpha$. The norm $\|\alpha\|_{\asymp}$ is equal to the least number of decreasing transitions from $\alpha$ to $\epsilon$.

Nearly all  equivalences appearing in this paper are decreasing bisimulation. For example:
\begin{proposition}
 $\simeq_R$ is a decreasing bisimulation for every $R \subseteq \mathbf{C}_{\mathrm{G}}$.
\end{proposition}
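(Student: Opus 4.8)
The plan is to verify the two clauses of Definition~\ref{def:decreasing_bisimulation} directly for $\asymp\, =\, \simeq_R$, using the $R$-bisimulation characterization (Definition~\ref{def:R_beq}, Proposition~\ref{prop:R_bisimilarity}) together with the Computation Lemma for $\simeq_R$ (Lemma~\ref{lem:computation_lemma_R}). Clause~(1) is immediate: if $\alpha \simeq_R \epsilon$, then by Lemma~\ref{lem:Id_def} we have $\alpha \in (\mathsf{Id}_R)^{*} \subseteq \mathbf{C}_{\mathrm G}^{*}$, hence $\alpha \Longrightarrow \epsilon$. (Alternatively one invokes ground preservation in Definition~\ref{def:R_beq} with $\beta = \epsilon$.) So the content is in clause~(2).

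For clause~(2), suppose $\alpha \simeq_R \beta$ and $\alpha \stackrel{\ell}{\longrightarrow} \alpha'$ is $\simeq_R$-decreasing, i.e. $\|\alpha'\|_{\simeq_R} < \|\alpha\|_{\simeq_R}$. First I would argue that such a transition cannot be $\simeq_R$-preserving: if $\alpha' \simeq_R \alpha$ then $\|\alpha'\|_{\simeq_R} = \|\alpha\|_{\simeq_R}$, contradicting decrease. Hence $\alpha \stackrel{\not\simeq_R}{\longrightarrow} \alpha'$ (in the case $\ell = \tau$) or $\alpha \stackrel{a}{\longrightarrow}\alpha'$ with $\alpha'\not\simeq_R\alpha$ (in the case $\ell = a$ visible). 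Now apply the relevant clause of Definition~\ref{def:R_beq} to get, on the $\beta$ side, a matching $R$-transition sequence
\[
\beta_R \stackrel{\simeq_R}{\Longrightarrow}_R \eta \stackrel{\ell}{\longrightarrow}_R \beta'_R
\]
with $\alpha' \simeq_R \beta'$. The remaining work is to \emph{lift} this back to genuine (non-relativized) transitions of $\beta$ and to check that the matching transition $\eta \stackrel{\ell}{\longrightarrow}_R \beta'_R$ is itself $\simeq_R$-decreasing. For the lifting: since $\beta \simeq_R \beta_R$ (as $\beta =_R \beta_R \subseteq \simeq_R$), and each step of $\stackrel{\simeq_R}{\Longrightarrow}_R$ and the step $\stackrel{\ell}{\longrightarrow}_R$ can by Lemma~\ref{lem:char_R_arrow} (the $\alpha_R \neq \epsilon$ alternative, which applies here because the norm of $\beta$ equals that of $\alpha$, which is positive, so $\beta$ is non-ground and $\beta_R \neq \epsilon$, and this persists along the $\simeq_R$-preserving prefix) be realized by an actual transition up to $=_R$, I would assemble an honest computation $\beta \stackrel{\simeq_R}{\Longrightarrow} \beta'' \stackrel{\ell}{\longrightarrow} \beta'$ with $\beta'' \simeq_R \eta$ and $\beta' =_R$ the witness, hence $\alpha' \simeq_R \beta'$. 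Here I use the Computation Lemma (Lemma~\ref{lem:computation_lemma_R}) to certify that each intermediate process along the $\tau$-chain is indeed $\simeq_R$-related to $\beta$, so that the chain really is of the form $\stackrel{\simeq_R}{\Longrightarrow}$.

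The main obstacle I anticipate is the \textbf{decrease} bookkeeping: I must show the produced transition $\beta'' \stackrel{\ell}{\longrightarrow} \beta'$ is $\simeq_R$-decreasing, not merely that $\beta'$ is reachable. This follows from $\|\beta''\|_{\simeq_R} = \|\beta\|_{\simeq_R} = \|\alpha\|_{\simeq_R}$ (the $\simeq_R$-preserving prefix does not change the norm, and $\beta \simeq_R \alpha$) combined with $\|\beta'\|_{\simeq_R} = \|\alpha'\|_{\simeq_R} < \|\alpha\|_{\simeq_R}$ (since $\alpha' \simeq_R \beta'$). A secondary subtlety is making sure the $R$-transitions are not ``spuriously'' generated by constants of $R$ via the $\alpha_R = \epsilon$ clause of Lemma~\ref{lem:char_R_arrow}; this is ruled out exactly because all the relevant processes along the matching path have positive $\simeq_R$-norm and are therefore $R$-nf's distinct from $\epsilon$. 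Once these two points are nailed down, both clauses of Definition~\ref{def:decreasing_bisimulation} hold, so $\simeq_R$ is a decreasing bisimulation, and since $R$ was arbitrary the proposition follows.
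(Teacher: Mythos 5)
Your proof is correct. The paper states this proposition without supplying a proof (the appendix only covers Sections~\ref{sec:Relativized_bisimilarity} and~\ref{sec:naive-algorithm} and Theorem~\ref{thm:correctness}), so there is no official argument to compare against, but yours is the natural one and it nails the two points where the claim is not purely formal: (i) since $\simeq_R$-equivalent processes have equal $\simeq_R$-norms, a $\simeq_R$-decreasing transition is automatically non-$\simeq_R$-preserving, so the appropriate matching clause of Definition~\ref{def:R_beq} applies; and (ii) the match supplied by Definition~\ref{def:R_beq} is a sequence of $R$-transitions on $R$-normal forms, which must be lifted to honest transitions of $\beta$ as required by Definition~\ref{def:decreasing_bisimulation} --- this lifting is legitimate exactly because every process along the $\simeq_R$-preserving prefix shares the positive $\simeq_R$-norm of $\alpha$, hence is a non-empty $R$-nf, so only the second alternative of Lemma~\ref{lem:char_R_arrow} can fire and no step is spuriously generated by a constant of $R$. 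Your closing norm computation, showing that the produced matching transition is itself $\simeq_R$-decreasing, reproduces the observation the paper makes in the remark following Definition~\ref{def:decreasing_bisimulation}.
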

There is no need to define the so-called {\em $R$-decreasing bisimulation}. The following lemma confirms that, for decreasing transitions, $\stackrel{\ell}{\longrightarrow}_{R}$ and $\stackrel{\ell}{\longrightarrow}$ are essentially the same.
\begin{lemma}
If $\alpha \stackrel{\ell}{\longrightarrow}_{R} \beta$ is $\simeq_R$-decreasing, then $\alpha \stackrel{\ell}{\longrightarrow} \beta'$ for some $\beta'$  such that $\beta_R' = \beta$.
\end{lemma}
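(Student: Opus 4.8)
The plan is to reduce the statement to the case analysis of Lemma~\ref{lem:char_R_arrow}, using the $\simeq_R$-decreasing hypothesis solely to discard the degenerate possibility that the source is $\epsilon$.

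First I would note that, by the standing convention after Definition~\ref{def:R_transition}, both $\alpha$ and $\beta$ are already in $R$-normal form, so $\alpha=\alpha_R$ and $\beta=\beta_R$. Applying Lemma~\ref{lem:char_R_arrow} to $\alpha\stackrel{\ell}{\longrightarrow}_R\beta$ yields two cases. In the non-degenerate case $\alpha\neq\epsilon$, the lemma provides a genuine transition $\alpha\stackrel{\ell}{\longrightarrow}\beta'$ with $\beta'=_R\beta$; since $\beta$ is in $R$-normal form (which is unique, Definition~\ref{def:R_nf}), $\beta'=_R\beta$ means precisely $\beta'_R=\beta_R=\beta$, so $\beta'$ is exactly the witness demanded by the statement. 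Thus in this case there is nothing more to do, and the decreasing hypothesis is not even needed.

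It remains to rule out the other case of Lemma~\ref{lem:char_R_arrow}, namely $\alpha=\epsilon$ with the $R$-transition induced by some $X\in R$. Here I would invoke the $\simeq_R$-norm: by Definition~\ref{def:semantic_norm}, $\|\epsilon\|_{\simeq_R}=0$ (the empty witness path suffices, as $\epsilon\simeq_R\epsilon$), while the assumption that $\alpha\stackrel{\ell}{\longrightarrow}_R\beta$ is $\simeq_R$-decreasing forces $\|\beta\|_{\simeq_R}<\|\alpha\|_{\simeq_R}=0$, which is impossible. Hence $\alpha\neq\epsilon$, and we are always in the first case.

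The only real content — such as it is — is the observation in the last paragraph: the $\epsilon$-induced $R$-transitions are exactly the $R$-transitions with no genuine-transition counterpart, and \emph{being $\simeq_R$-decreasing} is precisely the hypothesis that excludes them, because $\epsilon$ has $\simeq_R$-norm $0$. Everything else is routine bookkeeping with $R$-normal forms, so I do not anticipate a genuine obstacle.
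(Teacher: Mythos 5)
Your proof is correct, and it is the natural (and essentially only) argument: the paper states this lemma without proof, and your route --- reduce to the case analysis of Lemma~\ref{lem:char_R_arrow} and kill the $\alpha=\epsilon$ case by observing that $\|\epsilon\|_{\simeq_R}=0$ makes a $\simeq_R$-decreasing transition from $\epsilon$ impossible --- is exactly the intended justification. The only point worth double-checking, which you handle implicitly and correctly, is that in the $\alpha\neq\epsilon$ case of Lemma~\ref{lem:char_R_arrow} the genuine transition really emanates from the $R$-nf $\alpha$ itself (since sequential composition in BPA means the leading constant of $\alpha$ is what fires), so $\beta'=_R\beta$ together with $\beta$ being in $R$-nf gives $\beta'_R=\beta$ as required.
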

The following lemma provides a bound for semantic norms.
\begin{lemma}\label{lem:norm_size}
If $\asymp$ is a decreasing bisimulation, then $\|\alpha\|_{\asymp} \leq |\alpha|_{\mathrm{st}}$ for every $\alpha$.
\end{lemma}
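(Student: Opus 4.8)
The plan is to realize, for an arbitrary process $\alpha$, a zigzag of exactly the shape demanded by Definition~\ref{def:semantic_norm} whose length equals $|\alpha|_{\mathrm{st}}$; minimality in the definition of $\|\alpha\|_{\asymp}$ then forces $\|\alpha\|_{\asymp}\le|\alpha|_{\mathrm{st}}$.

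First I would unfold the strong norm: since $\Gamma$ is normed, there is a transition sequence $\alpha=\alpha_0\stackrel{\ell_1}{\longrightarrow}\alpha_1\stackrel{\ell_2}{\longrightarrow}\cdots\stackrel{\ell_k}{\longrightarrow}\alpha_k=\epsilon$ with $k=|\alpha|_{\mathrm{st}}$. Then I would insert reflexive $\asymp$-steps: because $\asymp$ is an equivalence relation we have $\alpha_i\asymp\alpha_i$ for every $i$, so placing $\alpha_{i-1}\asymp\alpha_{i-1}$ in front of each transition and $\alpha_k\asymp\epsilon$ at the end produces precisely a chain $\alpha\asymp\cdot\stackrel{\ell_1}{\longrightarrow}\cdot\asymp\cdots\asymp\cdot\stackrel{\ell_k}{\longrightarrow}\cdot\asymp\epsilon$ of length $k$. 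Hence $\|\alpha\|_{\asymp}\le k=|\alpha|_{\mathrm{st}}$. (An equivalent route, which makes the decreasing-bisimulation hypothesis visible, is to invoke the characterization recorded just above the lemma: for a decreasing bisimulation $\asymp$, the value $\|\alpha\|_{\asymp}$ is the least number of $\asymp$-decreasing transitions occurring on a path from $\alpha$ to $\epsilon$; applied to a shortest such path, of total length $|\alpha|_{\mathrm{st}}$, the count of decreasing transitions is at most its length, giving the same bound.)

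The only point that needs care — and it is hardly an obstacle — is checking that the padded chain matches the pattern of Definition~\ref{def:semantic_norm} verbatim: the leading ``$\asymp$'' and the trailing ``$\asymp\epsilon$'' are supplied by reflexivity together with $\alpha_0=\alpha$ and $\alpha_k=\epsilon$, and the intermediate ``$\asymp$'' links are reflexive identifications $\alpha_i\asymp\alpha_i$, so no bisimulation matching condition is ever used for the inequality itself. I would also remark that the witness (and minimality) only requires the \emph{existence} of one such chain, not a minimal one, so producing the length-$|\alpha|_{\mathrm{st}}$ chain above is enough to conclude.
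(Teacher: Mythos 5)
Your proof is correct, and the paper offers no explicit proof of this lemma (it is treated as immediate); your direct construction --- padding the strong-norm-realizing transition sequence with reflexive $\asymp$-links to obtain a chain of the exact shape required by Definition~\ref{def:semantic_norm}, then invoking minimality --- is precisely the intended one-line argument. Your side observation is also accurate: with the $\asymp$-norm defined via abstract $\asymp$-links between the nodes (rather than via witness paths built from $\stackrel{\asymp}{\Longrightarrow}$), the inequality holds for any equivalence relation, and the decreasing-bisimulation hypothesis is only needed elsewhere, namely to guarantee that a witness path realizing the norm actually exists.
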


\begin{remark}
The labelled transition graph defined by a normed BPA $\Gamma$ can be perceived as a {\em directed graph} $\mathcal{G}$ (with infinite number of nodes) whose nodes are the processes and whose edges are the labelled transitions. $\mathcal{G}$ can be extended to a {\em weighted direct graph} in different ways. Let $\mathcal{G}_{\mathrm{st}}$ be the weighted extension of $\mathcal{G}$ in which every edge of $\mathcal{G}$ has weight \textit{one}. Let $\mathcal{G}_{\mathrm{wk}}$ be the one which is the same as $\mathcal{G}_\mathrm{st}$ except that the weight of every silent transition is \textit{zero}.   The strong (resp.~weak) norm  of a process $\alpha$ is the length of the shortest path from $\alpha$ to $\epsilon$ in $\mathcal{G}_\mathrm{st}$ (resp.~ $\mathcal{G}_\mathrm{wk}$).
Let $\asymp$ be an equivalence relation on processes. We can define the graph $\mathcal{G}_{\asymp}$ which is the same as $\mathcal{G}_{\mathrm{st}}$ except that the weight of $\gamma \stackrel{\ell}{\longrightarrow} \gamma'$ is set \textit{zero} if $\gamma \asymp \gamma'$.   A witness path of $\asymp$-norm of $\alpha$ corresponds to a shortest path from $\alpha$ to $\epsilon$ in $\mathcal{G}_{\asymp}$.

If $\simeq_R$ is known for every $R$, then $\simeq_R$-norm of a BPA process (or constant) can be calculated via  the greedy strategy in an efficient way.  It depends on the following property:
\begin{enumerate}
\item
$\|\alpha\|_{\simeq_R} = 0$ if and only if $\alpha \simeq_R \epsilon$.

\item
$\|\alpha \beta\|_{\simeq_R}  =   \|\alpha\|_{\simeq_{R'}}   + \|\beta\|_{\simeq_R} $ in which $R' = {\mathsf{Rd}^{\simeq}_{R}(\beta)}$.
\end{enumerate}
It works like a generalization of Breadth-first search, or a variant of Dijkstra's algorithm.   The detail of the calculation is omitted, but this idea will be used to calculate the semantic norms $\|\cdot\|_{\stackrel{\mathcal{B}}{=}_{R}}$ later (Section~\ref{sec:naive-algorithm}) in the refinement procedure when constructing new base from the old.
\end{remark}

\subsection{Decreasing Bisimulation with $R$-Expansion of $\bumpeq$}\label{subsec:decreasing_expansion}

Based on decreasing transitions,  we can define a special notion  called {\em decreasing bisimulation with $R$-expansion of $\bumpeq$}, which will be taken as the refinement operation in our algorithm. This notion is crucial to the correctness of the refinement operation.
 The readers are suggested  to review Definition~\ref{def:R_beq} and Definition~\ref{def:decreasing_bisimulation} before going on.

\begin{definition}\label{def:_decreasing_bisimulation_with_expansion}
Let $\asymp$ and $\bumpeq$ be two equivalences on processes such that
${=_{R}} \subseteq {\asymp}  \subseteq {\bumpeq}$.
We say that $\asymp$ is an {\em $R$-expansion} of $\bumpeq$  if the following conditions hold whenever $\alpha \asymp \beta$:
\begin{enumerate}
\item
$\alpha \Longrightarrow \epsilon$ if and only if $\beta \Longrightarrow \epsilon$.

\item
If $\alpha \stackrel{\not\bumpeq}{\longrightarrow} \alpha'$, then either
$\beta_R  \stackrel{\asymp}{\Longrightarrow}_{R} \cdot \stackrel{\not\bumpeq}{\longrightarrow}_{R} \beta'$ for some $\beta'$ such that $\alpha' \bumpeq \beta'$.

\item
If $\alpha \stackrel{a}{\longrightarrow} \alpha'$, then
$\beta_R \stackrel{\asymp}{\Longrightarrow}_{R} \cdot \stackrel{a}{\longrightarrow}_{R} \beta'$ for some $\beta'$ such that $\alpha' \bumpeq \beta'$.

\end{enumerate}
We say that  $\asymp$ is a {\em decreasing bisimulation with $R$-expansion of $\bumpeq$} if
$\asymp$ is both a decreasing bisimulation and an $R$-expansion of $\bumpeq$.
\end{definition}
The following lemma provides another characterization of the decreasing bisimulation with $R$-expansion of $\bumpeq$.
\begin{lemma}\label{lem:expansion}
Assume that ${=_{R}} \subseteq {\asymp}  \subseteq {\bumpeq}$.
$\asymp$ is an decreasing bisimulation with $R$-expansion of $\bumpeq$  if and only if following conditions hold whenever $\alpha \asymp \beta$ and $\alpha, \beta$ are in $R$-nf:
\begin{enumerate}
\item
if  $\alpha\Longrightarrow_R \epsilon$,  then $\beta \Longrightarrow_R \epsilon$;

\item
if $ \alpha \stackrel{\not\asymp}{\longrightarrow}_R \alpha'$, being $\asymp$-decreasing, then
$\beta  \stackrel{\asymp}{\Longrightarrow}_{R} \cdot \stackrel{\not\asymp}{\longrightarrow}_{R} \beta'$ for some $\beta'$ such that $\alpha' \asymp \beta'$;

\item
if $ \alpha \stackrel{\not\bumpeq}{\longrightarrow}_R \alpha'$, not  being $\asymp$-decreasing, then
$\beta  \stackrel{\asymp}{\Longrightarrow}_{R} \cdot \stackrel{\not\bumpeq}{\longrightarrow}_{R} \beta'$ for some $\beta'$ such that $\alpha' \bumpeq \beta'$;

\item
if $\alpha \stackrel{a}{\longrightarrow}_R \alpha'$,   being $\asymp$-decreasing,  then
$\beta \stackrel{\asymp}{\Longrightarrow}_{R} \cdot \stackrel{a}{\longrightarrow}_{R} \beta'$ for some $\beta'$ such that $\alpha' \asymp \beta'$.

\item
if $\alpha \stackrel{a}{\longrightarrow}_R \alpha'$,   not  being $\asymp$-decreasing,  then
$\beta \stackrel{\asymp}{\Longrightarrow}_{R} \cdot \stackrel{a}{\longrightarrow}_{R} \beta'$ for some $\beta'$ such that $\alpha' \bumpeq \beta'$.
\end{enumerate}
\end{lemma}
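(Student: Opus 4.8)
The plan is to prove both implications by translating between ordinary transitions of arbitrary processes and $R$-transitions of their $R$-normal forms, exactly as in the proof of Proposition~\ref{prop:R_bisimilarity} but keeping track of the extra decreasing/non-decreasing bookkeeping. For the direction from Definition~\ref{def:_decreasing_bisimulation_with_expansion} to the five conditions, take $\alpha \asymp \beta$ with $\alpha,\beta$ in $R$-nf and an $R$-transition out of $\alpha$. If $\alpha \neq \epsilon$, Lemma~\ref{lem:char_R_arrow} turns it into an ordinary transition $\alpha \stackrel{\ell}{\longrightarrow} \hat\alpha$ with $\hat\alpha =_{R} \alpha'$; since $=_{R}\subseteq{\asymp}$ we get $\hat\alpha \asymp \alpha'$, so $\hat\alpha$ has the same $\asymp$-norm, hence the same decreasing status, as $\alpha'$. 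I then apply the matching clause of Definition~\ref{def:_decreasing_bisimulation_with_expansion} (the decreasing-bisimulation clause when $\hat\alpha$ is $\asymp$-smaller, the $R$-expansion clause otherwise), and project the resulting ordinary path out of $\beta$ back to an $R$-path by applying $(\cdot)_{R}$ componentwise; Lemma~\ref{lem:R_transition} certifies each projected step is a genuine $R$-transition and $=_{R}\subseteq{\asymp}\subseteq{\bumpeq}$ preserves all the $\asymp$/$\bumpeq$ labels. If $\alpha = \epsilon$, then $\|\alpha\|_{\asymp}=0$, so conditions~(2) and~(4) are vacuous, and the $R$-transition to be matched is induced by some $X \in R$ with $X \stackrel{\ell}{\longrightarrow} \hat\alpha$; since $X \asymp \epsilon \asymp \beta$ and, in the $\tau$-case, $X \not\bumpeq \hat\alpha$ (from $X \bumpeq \epsilon \not\bumpeq \alpha' \bumpeq \hat\alpha$), Definition~\ref{def:_decreasing_bisimulation_with_expansion} applies to the pair $(X,\beta)$ and again projects to the required $R$-path. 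Condition~(1) is immediate from Lemma~\ref{lem:groud_preserve}.

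For the converse, assume (1)--(5) and take an arbitrary pair $\alpha \asymp \beta$. Since $=_{R}\subseteq{\asymp}$ we have $\alpha_{R}\asymp\beta_{R}$ with both in $R$-nf. The ground clause of Definition~\ref{def:_decreasing_bisimulation_with_expansion} follows from condition~(1) applied to $(\alpha_R,\beta_R)$ and to $(\beta_R,\alpha_R)$ together with Lemma~\ref{lem:groud_preserve}, and Definition~\ref{def:decreasing_bisimulation}(1) from condition~(1) applied to the pair $(\epsilon,\alpha_{R})$ plus Lemma~\ref{lem:groud_preserve}. For the $R$-expansion clauses, an ordinary transition $\alpha\stackrel{\ell}{\longrightarrow}\alpha'$ gives $\alpha_{R}\stackrel{\ell}{\longrightarrow}_{R}\alpha'_{R}$ by Lemma~\ref{lem:R_transition}, with $\alpha_{R}\not\bumpeq\alpha'_{R}$ iff $\alpha\not\bumpeq\alpha'$ and with the same decreasing status as the original; splitting on whether this $R$-transition is $\asymp$-decreasing and invoking condition~(2) or~(3) (for $\tau$), resp.~(4) or~(5) (for a visible $a$), produces a matching $R$-path out of $\beta_{R}$ whose endpoint is $\asymp$- or $\bumpeq$-equivalent to $\alpha'_{R}$ and hence $\bumpeq$-equivalent to $\alpha'$. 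The one subtlety is that in the $\asymp$-decreasing $\tau$-case condition~(2) only guarantees the last matching step is labelled $\not\asymp$; but were that step $\gamma\stackrel{\tau}{\longrightarrow}_{R}\beta''$ instead $\bumpeq$-preserving, then $\alpha_{R}\asymp\beta_{R}\stackrel{\asymp}{\Longrightarrow}_{R}\gamma\bumpeq\beta''\asymp\alpha'_{R}$ would force $\alpha_{R}\bumpeq\alpha'_{R}$, contradicting $\alpha\not\bumpeq\alpha'$, so the step is in fact $\not\bumpeq$, as Definition~\ref{def:_decreasing_bisimulation_with_expansion} demands.

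It remains to verify the decreasing-bisimulation clause of Definition~\ref{def:decreasing_bisimulation}, and this is where the real work lies, because there the conclusion must be an \emph{ordinary} path $\beta\stackrel{\asymp}{\Longrightarrow}\beta''\stackrel{\ell}{\longrightarrow}\beta'$ (with the last step $\asymp$-decreasing), whereas conditions~(2) and~(4) only supply an $R$-path. Given an $\asymp$-decreasing $\alpha\stackrel{\ell}{\longrightarrow}\alpha'$, decreasingness forces $\|\alpha\|_{\asymp}\geq1$, so $\alpha_{R}\not\asymp\epsilon$ and $\beta_{R}\not\asymp\epsilon$; as before $\alpha_{R}\stackrel{\ell}{\longrightarrow}_{R}\alpha'_{R}$ is $\not\asymp$ and $\asymp$-decreasing, so condition~(2) or~(4) yields $\beta_{R}\stackrel{\asymp}{\Longrightarrow}_{R}\gamma\stackrel{\ell}{\longrightarrow}_{R}\beta''$ with $\alpha'_{R}\asymp\beta''$, and a norm count shows the last step is $\asymp$-decreasing and that every node of the $\stackrel{\asymp}{\Longrightarrow}_{R}$ prefix is $\not\asymp\epsilon$, hence $\neq\epsilon$. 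Writing $\beta=\beta_{R}\sigma$ with $\sigma\in R^{*}$, I lift this $R$-path step by step: at a node $\delta\neq\epsilon$ carrying the concrete process $\delta\sigma$, Lemma~\ref{lem:char_R_arrow}(2) furnishes a genuine transition realizing the next $R$-step, $=_{R}\subseteq{\asymp}$ keeps the chain $\asymp$-related to $\beta$, and the norms stay in step; the final step is lifted the same way and is $\asymp$-decreasing by the norm count. This gives the desired $\beta\stackrel{\asymp}{\Longrightarrow}\beta''\stackrel{\ell}{\longrightarrow}\beta'$ with $\alpha'\asymp\beta'$. The main obstacle is precisely this lifting: $R$-transitions live only between $R$-normal forms and the ordinary transitions witnessing consecutive $R$-steps need not compose, so one must re-anchor each step at the concrete process by reattaching the frozen suffix $\sigma\in R^{*}$, and check that the $R$-normal form never collapses to $\epsilon$ along the part of the path that still has to move — which is exactly what restricting attention to $\asymp$-decreasing transitions there guarantees.
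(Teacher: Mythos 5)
The paper states Lemma~\ref{lem:expansion} without proof, so there is no official argument to compare against; judged on its own, your proof is correct and complete. Both directions are handled properly: the translation between ordinary transitions and $R$-transitions via Lemmas~\ref{lem:R_transition} and~\ref{lem:char_R_arrow}, the vacuity of the decreasing clauses at $\alpha=\epsilon$, the re-anchoring at a constant $X\in R$ when the $R$-transition of $\epsilon$ must be matched, and the upgrade of the matched $\not\asymp$-step to a $\not\bumpeq$-step by the transitivity argument are all sound. You also correctly identified the one genuinely non-routine point — that Definition~\ref{def:decreasing_bisimulation} demands an \emph{ordinary} matching path while conditions~(2) and~(4) only deliver an $R$-path — and your lifting argument works: decreasingness forces every node of the $\stackrel{\asymp}{\Longrightarrow}_{R}$ prefix to be $\not\asymp\epsilon$, hence nonempty, so Lemma~\ref{lem:char_R_arrow}(2) realizes each $R$-step by a transition of the $R$-nf itself, which composes with the trailing $R^{*}$ suffix. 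One small imprecision: that suffix is not literally frozen — if the realizing transition takes the node $\delta$ to $\hat\delta=\delta''\rho$ with $\rho\in R^{*}$, the concrete process becomes $\delta''\rho\sigma$, so the suffix grows to $\rho\sigma$. The invariant you actually need (and implicitly maintain) is that the concrete process is always the current $R$-nf node followed by \emph{some} string in $R^{*}$; stated that way the induction closes, and nothing else in the argument changes.
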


\begin{remark}
The style of the definition of `decreasing bisimulation with expansion' also appears in~\cite{DBLP:journals/corr/He14a}. The main difference is that in this paper, the `decreasing transitions' are semantic, while in ~\cite{DBLP:journals/corr/He14a}, the `decreasing transitions' are syntactic.

The notion of `decreasing bisimulation with expansion' is a better understanding of the previous refinement operations on totally normed BPA and BPP~\cite{DBLP:conf/fsttcs/CzerwinskiL10,DBLP:journals/mscs/HirshfeldJM96}. Moreover, this notion is crucial to the development of a polynomial time algorithm for branching bisimilarity on totally normed BPA~\cite{DBLP:journals/corr/He14a}.
\end{remark}

\section{Decomposition Bases}\label{sec:base}

In this section, we define a way for finitely representing a family of equivalences which satisfies unique decomposition property in the sense of Theorem~\ref{thm:QUDP_RBisimularity}. Such family of equivalences include  $\{\simeq_R\}_R$ and all the intermediate families of equivalences constructed during the iterations. This finite representation is named decomposition base.

\subsection{$R$-blocks and $R$-orders}\label{subsec:blocks_orders}

To make our algorithm easy to formulate, we need some technical preparations.
The reason will be clear later.

\begin{definition}\label{def:R-blocks}
Let $R \subseteq \mathbf{C}_{\mathrm{G}}$.   We call that $\alpha$ is {\em $R$-associate} to $\beta$ if
$\alpha \Longleftrightarrow_{R}   \beta$.
Let $X\in \mathbf{C} \setminus R$. The {\em $R$-block} related to $X$, denoted by $[X]_R$ is the set of all the constants which is $R$-associate to $X$. Namely, $[X]_R \stackrel{\mathrm{def}}{=} \{Y \;|\;  X \Longleftrightarrow_{R}   Y \}$. We use the term {\em block} to specify any $R$-block for $R \subseteq \mathbf{C}_{\mathrm{G}}$.
\end{definition}
Clearly, two $R$-blocks coincide when they overlap. Thus $R$-blocks compose a partition of $\mathbf{C} \setminus R$. The partition is denoted by $\mathbf{C}_R \stackrel{\mathrm{def}}{=} \{[X]_R \;|\; X \in \mathbf{C} \setminus R\}$.

We will use the convention that the members of $[X]_R$ for different $R$'s are taken from different copy of $\mathbf{C}$. In other words, if $R_1 \neq R_2$, then $[X]_{R_1}$ and $[X]_{R_2}$ are always disjoint, and they are regarded as different objects, even if they indicate the same set.

\begin{remark}
The intuition of $R$-blocks is obvious. According to the  Computation Lemma (Lemma~\ref{lem:computation_lemma_R}), The $R$-associate constants are $R$-bisimilar to each other, thus they can be {\em contracted} into a single one.  In the work~\cite{DBLP:journals/corr/He14a} for totally normed BPA,  we prevent the occurrence of $X \Longleftrightarrow   Y$ via a preprocess in which mutually associate constants are contracted into a singe one. For normed BPA discussed in this paper, we take the same idea but the difficulty is that the contracting operation cannot be performed uniformly, for it depends on the reference set $R$. The only way we can take is to introduce the $R$-association and to contract $R$-associate constants into $R$-blocks for individual $R$'s. The members in an $R$-block are interchangeable.
\end{remark}

Be aware that it is possible that $X \simeq_R \epsilon$ even if $X \not\in R$.  In this case we must have $[X]_R \subseteq \mathsf{Id}_R$ by the Computation Lemma (Lemma~\ref{lem:computation_lemma_R}). Also note that it is possible that
$X  \Longleftrightarrow_{R}   \epsilon$ for some $R$. But these kinds of $R$ is uninteresting because by putting such $X$ into $R$ we can get a larger reference set $R'$ such that ${\simeq_{R'}} = {\simeq_{R}}$.

We call a reference set $R$ {\em qualified} if $X \Longleftrightarrow_{R}   \epsilon$ cannot happen for every $X \not\in R$. The unqualified $R$'s can be pre-determined. They are useless  from now to the end of this paper. From now on we assume that every reference set $R$ is qualified. For example when we write `for every $R \subseteq \mathbf{C}_{\mathrm{G}}$', we refer to every {\em qualified} $R$ which is a subset of $\mathbf{C}_{\mathrm{G}}$. In particular, every admissible set is qualified.
\begin{lemma}
All constants in a block $[X]_{R}$ are $R$-bisimilar.
\end{lemma}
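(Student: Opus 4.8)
The plan is to obtain this as an application of the Computation Lemma for $\simeq_R$ (Lemma~\ref{lem:computation_lemma_R}). Fix a qualified $R$ and a block $[X]_R$, and pick $Y,Z\in[X]_R$; we must show $Y\simeq_R Z$. Since $\simeq_R$ is an equivalence relation and since $R$-blocks partition $\mathbf{C}\setminus R$, the constants $Y$ and $Z$ are joined inside $[X]_R$ by a finite chain $Y=W_0,W_1,\dots,W_m=Z$ of $R$-association steps, $W_i\Longleftrightarrow_R W_{i+1}$. By transitivity of $\simeq_R$ it therefore suffices to treat a single step: I would prove that for constants $U,V\in\mathbf{C}\setminus R$, $U\Longleftrightarrow_R V$ implies $U\simeq_R V$, and then concatenate.

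Before the single-step argument I would dispose of the degenerate case. If one — equivalently (by the Computation Lemma, as already noted in the text just before the statement), every — constant of $[X]_R$ is a $\simeq_R$-identity, i.e.\ lies in $\mathsf{Id}_R$, then by Lemma~\ref{lem:Id_def} every constant of the block is $\simeq_R$-equivalent to $\epsilon$, hence all of them are mutually $\simeq_R$-equivalent and we are done. So assume $[X]_R\cap\mathsf{Id}_R=\emptyset$; this is precisely the situation forced by $R$ being qualified, and it is also what makes $R$-association a genuine equivalence relation on $\mathbf{C}\setminus R$ (so that the chain above really exists). Under this assumption the step $U\Longleftrightarrow_R V$ between two distinct constants of the same block amounts to a two-way silent reachability $U\Longrightarrow_R V$ and $V\Longrightarrow_R U$: a one-directional $\tau_R$-path inside a block cannot occur, since it would let us enlarge $R$ by $V$ without affecting $\simeq_R$, contradicting qualification. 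Granting $U\Longrightarrow_R V\Longrightarrow_R U$, we have $U\Longrightarrow_R V\Longrightarrow_R U\simeq_R U$, so Lemma~\ref{lem:computation_lemma_R} gives $V\simeq_R U$, completing the single step; chaining over $W_0,\dots,W_m$ yields $Y\simeq_R Z$.

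The main obstacle is the claim that membership of two constants in a common $R$-block entails a $\tau_R$-\emph{cycle} through them, rather than merely a one-directional $\tau_R$-path. I expect this to be the real content of the surrounding discussion on qualified reference sets: it should follow from the fact that a constant outside $R$ cannot silently collapse towards $\epsilon$ unless it is already a $\simeq_R$-identity, together with the observation recorded just above the statement that $X\simeq_R\epsilon$ with $X\notin R$ forces $[X]_R\subseteq\mathsf{Id}_R$. Once the $\tau_R$-cycle is in hand, the remainder is a routine invocation of the Computation Lemma for $\simeq_R$.
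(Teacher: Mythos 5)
Your closing step is the right one and is exactly what the paper intends: given a silent cycle $Y \Longrightarrow_R Z \Longrightarrow_R Y \simeq_R Y$, the Computation Lemma for $\simeq_R$ (Lemma~\ref{lem:computation_lemma_R}) yields $Y \simeq_R Z$. The gap is in how you manufacture that cycle. You derive the two-way reachability from qualification, arguing that a one-directional $\tau_R$-path inside a block "would let us enlarge $R$ by $V$ without affecting $\simeq_R$, contradicting qualification." This does not work: qualification only forbids $X \Longleftrightarrow_R \epsilon$ for $X \not\in R$, and it is perfectly consistent with a one-way situation $U \Longrightarrow_R V$, $V \not\Longrightarrow_R U$ between two constants outside $R\cup\mathsf{Id}_R$ (take $U \stackrel{\tau}{\longrightarrow} V$, $U \stackrel{a}{\longrightarrow} \epsilon$, $V \stackrel{b}{\longrightarrow} \epsilon$, $R=\emptyset$). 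Moreover, adding such a $V$ to $R$ changes $\simeq_R$ in general, so the "enlarge $R$" move is not available. If blocks really were classes of the merely symmetric relation ${\Longrightarrow_R}\cup{\Longrightarrow_R}^{-1}$, the lemma would be false on this very example, since $U \not\simeq V$ there; so qualification cannot be what saves the statement.

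What actually closes the gap is the reading of Definition~\ref{def:R-blocks} that the rest of Section~\ref{subsec:blocks_orders} presupposes: $R$-association is mutual $\Longrightarrow_R$-reachability (the strongly-connected-component relation), which is the only reading under which "two $R$-blocks coincide when they overlap" and Lemma~\ref{lem:R_order} hold. With that reading no chain decomposition, no identity-block case split, and no appeal to qualification are needed: $Y, Z \in [X]_R$ gives $Y \Longrightarrow_R X \Longrightarrow_R Z$ and $Z \Longrightarrow_R X \Longrightarrow_R Y$ outright, and a single application of Lemma~\ref{lem:computation_lemma_R} to $Y \Longrightarrow_R Z \Longrightarrow_R Y \simeq_R Y$ finishes the proof. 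Your chain $W_0,\dots,W_m$ and the $\mathsf{Id}_R$ case are then superfluous.
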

\begin{lemma}\label{lem:R_order}
If $[X]_R \neq [Y]_R$ and $X \Longrightarrow_{R} Y$, then $Y \not\Longrightarrow_{R} X$.
\end{lemma}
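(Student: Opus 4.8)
The plan is to establish the contrapositive: if $X\Longrightarrow_R Y$ and $Y\Longrightarrow_R X$ both hold, then $X$ and $Y$ belong to the same $R$-block, i.e.\ $[X]_R=[Y]_R$; so, assuming $[X]_R\neq[Y]_R$ together with $X\Longrightarrow_R Y$, we are forced to conclude $Y\not\Longrightarrow_R X$, which is exactly the claim.

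The only fact I would use is that $\Longrightarrow_R$, being the reflexive transitive closure of $\stackrel{\tau}{\longrightarrow}_R$, is transitive and closed under composition regardless of how long the intermediate $R$-nf's along a path are. So suppose toward a contradiction that $X\Longrightarrow_R Y$ and $Y\Longrightarrow_R X$. Let $Z\in\mathbf{C}\setminus R$ be an arbitrary constant that is $R$-associate to $X$, so $Z\Longrightarrow_R X$ or $X\Longrightarrow_R Z$; in the first case $Z\Longrightarrow_R X\Longrightarrow_R Y$ gives $Z\Longrightarrow_R Y$, in the second $Y\Longrightarrow_R X\Longrightarrow_R Z$ gives $Y\Longrightarrow_R Z$, so in either case $Z$ is $R$-associate to $Y$. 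Hence $[X]_R\subseteq[Y]_R$, and the symmetric argument yields $[Y]_R\subseteq[X]_R$, so $[X]_R=[Y]_R$, contradicting the hypothesis.

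Behind this short argument lies the reason Definition~\ref{def:R-blocks} is set up the way it is: the situation $X\Longrightarrow_R Y\Longrightarrow_R X$ is precisely an instance of the Computation Lemma for $\simeq_R$ (Lemma~\ref{lem:computation_lemma_R}) with $\alpha=\alpha''=X$ and $\alpha'=Y$, so mutual $\Longrightarrow_R$-reachability already forces $X\simeq_R Y$. Thus an $R$-block collapses an entire silent cycle into a single $\simeq_R$-class — consistent with the immediately preceding lemma that all constants in $[X]_R$ are $R$-bisimilar — and this is exactly what makes the quotient $\mathbf{C}_R$ by $R$-association carry a well-defined strict order, the \emph{$R$-order}, of which Lemma~\ref{lem:R_order} is the antisymmetry. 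I do not expect a genuine obstacle: the content is the textbook fact that a preorder induces a strict partial order on its strongly connected components, and the only care needed is to use $R$-association with its intended meaning of mutual $\Longrightarrow_R$-reachability between constants outside $R$.
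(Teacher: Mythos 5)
Your proof is correct: the contrapositive argument — that $X \Longrightarrow_R Y$ together with $Y \Longrightarrow_R X$ forces $[X]_R = [Y]_R$ via transitivity of $\Longrightarrow_R$ — is exactly the (unstated) justification the paper relies on, and your case analysis correctly handles the paper's literal definition of $\Longleftrightarrow_R$ as the one-sided union ${\Longrightarrow_R} \cup {\Longrightarrow_R}^{-1}$ rather than mutual reachability. The paper offers no explicit proof of this lemma, so there is nothing further to compare; the only cosmetic wobble is that your closing paragraph glosses $R$-association as \emph{mutual} reachability, which is the reading needed for the blocks to partition $\mathbf{C}\setminus R$ but not what Definition~\ref{def:R-blocks} literally says — your actual argument does not depend on which reading is taken.
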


The behaviours of $[X]_R$ can be more than the total behaviours of its member constants.  All the processes associate to $X$ should be taken into account.   It is possible that $X \Longleftrightarrow_R \zeta X'$ for some ground process $\zeta$.  For instance we can have
$X \Longrightarrow_R Z \Longrightarrow_R   \zeta Y \Longrightarrow_R Y \Longrightarrow_R X$. In this example, $X,Y,Z,\zeta X,\zeta Y, \zeta Z$ are mutually $R$-associate.
Thus the behaviour of $\zeta$ should also be taken into account.
\begin{definition}\label{def:R_propagating}
$Y$ is an {\em $R$-propagating} of $X$ (or of $[X]_R$) if $X \Longleftrightarrow_R Y\zeta X'$ for some $\zeta$ and $X'$. (In this case we must have $X' \Longleftrightarrow_R X$, and $Y \zeta$ is ground.)
\end{definition}
\begin{lemma}
$Y \in \mathsf{Rd}_R(X)$ if $Y$ is an $R$-propagating of $X$.
\end{lemma}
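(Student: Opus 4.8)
The claim is that if $Y$ is an $R$-propagating of $X$, meaning $X \Longleftrightarrow_R Y\zeta X'$ for some $\zeta$ and $X'$ (with $X' \Longleftrightarrow_R X$ and $Y\zeta$ ground), then $Y \in \mathsf{Rd}_R(X)$, i.e.\ $YX \simeq_R X$. The plan is to exploit the Computation Lemma for $\simeq_R$ (Lemma~\ref{lem:computation_lemma_R}) together with the congruence-type facts (Lemma~\ref{lem:congruence_R_bis}) and the characterization of redundant constants (Lemma~\ref{lem:redundant_constants_relative}).

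First I would unpack $X \Longleftrightarrow_R Y\zeta X'$. By definition $\Longleftrightarrow_R$ is the symmetric closure of $\Longrightarrow_R$, so there is a process $\delta$ with $X \Longrightarrow_R \delta$ and $Y\zeta X' \Longrightarrow_R \delta$ (or the roles reversed); in either case one gets a zig-zag of $\tau$-steps connecting $X$ and $Y\zeta X'$. Since $R$ is qualified and $X\not\in R$, we have $X\not\Longleftrightarrow_R\epsilon$, so $X\not\simeq_R\epsilon$ is not forced, but what matters is that all constants in $[X]_R$ are $\simeq_R$-bisimilar (the block lemma just stated, which itself rests on the Computation Lemma). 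The key move: from $X \Longleftrightarrow_R Y\zeta X'$ and the Computation Lemma, conclude $X \simeq_R Y\zeta X'$. To apply Lemma~\ref{lem:computation_lemma_R} cleanly I would first note that $X'\Longleftrightarrow_R X$ gives $X'\simeq_R X$, hence by Lemma~\ref{lem:congruence_R_bis} $Y\zeta X' \simeq_R Y\zeta X$; so it suffices to show $X \simeq_R Y\zeta X$. Now $X$ and $Y\zeta X$ lie in the same $\Longleftrightarrow_R$-class — this is exactly the $R$-propagating hypothesis rewritten — so they are $R$-associate, and since $R$-associate processes (not just constants) are $\simeq_R$-equivalent by the Computation Lemma argument, $X \simeq_R Y\zeta X$.

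From $X \simeq_R Y\zeta X$ the conclusion is a short step. We have $Y\zeta X \simeq_R X$, and $\zeta$ is ground, so $\zeta \Longrightarrow \epsilon$; I would then argue $\zeta X \simeq_R X$ — this follows because $\zeta$ being ground means $\zeta \in \mathbf{C}_{\mathrm{G}}^*$, and more to the point $Y\zeta X \simeq_R X$ forces, via the weak-norm lemma ($|Y\zeta X|_{\mathrm{wk}} = |X|_{\mathrm{wk}}$, hence $|Y|_{\mathrm{wk}} = |\zeta|_{\mathrm{wk}} = 0$) that $Y$ is ground too. Then $Y\zeta$ is a ground process with $Y\zeta X \simeq_R X$, i.e.\ $Y\zeta \in (\mathsf{Rd}_R(X))^*$ by Lemma~\ref{lem:redundant_constants_relative}(2). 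Since $\mathsf{Rd}_R(X)$ is closed downward in the sense that membership is tested constant-by-constant, and $Y$ appears as the first constant of the string $Y\zeta$, I would use that $Y\zeta \simeq_R$-redundant over $X$ together with cancellation (Theorem~\ref{thm:relative_bis_str} / Proposition~\ref{prop:Rd_congruence}) to peel off $\zeta$: $Y(\zeta X) \simeq_R \zeta X$ and $\zeta X \simeq_R X$ give $YX \simeq_R X$ directly by congruence (Lemma~\ref{lem:congruence_R_bis}), which is precisely $Y \in \mathsf{Rd}_R(X)$.

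The main obstacle I anticipate is the step asserting that $R$-associate \emph{processes} (here $X$ and $Y\zeta X$, which are not both constants) are $\simeq_R$-related — the Computation Lemma as stated (Lemma~\ref{lem:computation_lemma_R}) is phrased for a zig-zag of the shape $\alpha \Longrightarrow_R \alpha' \Longrightarrow_R \alpha'' \simeq_R \alpha$, so turning a general $\Longleftrightarrow_R$-chain into repeated applications of it requires an induction on the length of the chain, handling the alternating forward/backward $\tau$-steps. Once that bridging lemma is in hand (it is essentially the content of the preceding "All constants in a block $[X]_R$ are $R$-bisimilar" lemma, generalized to processes), the rest is routine bookkeeping with weak norms and the cancellation law. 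I would therefore state and prove that bridging fact first, then the three paragraphs above collapse to a few lines.
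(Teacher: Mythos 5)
Your overall skeleton is the right one (reduce to $Y\zeta X\simeq_R X$, then invoke Lemma~\ref{lem:redundant_constants_relative}(2) to read off $Y\in\mathsf{Rd}_R(X)$ constant-by-constant), and the bookkeeping in your last paragraph, while redundant, is harmless. The gap is at the step you yourself flag as the crux: you rest everything on a ``bridging lemma'' asserting that $R$-associate \emph{processes} are $\simeq_R$-equivalent, i.e.\ that $\alpha\Longleftrightarrow_R\beta$ implies $\alpha\simeq_R\beta$. That statement is false. The relation $\Longleftrightarrow_R$ is just $\Longrightarrow_R\cup\Longrightarrow_R^{-1}$, so it asserts only a one-directional silent reduction, and silent reductions do not preserve branching bisimilarity: with $A\stackrel{\tau}{\longrightarrow}\epsilon$ and $A\stackrel{a}{\longrightarrow}\epsilon$ one has $A\Longleftrightarrow\epsilon$ but $A\not\simeq\epsilon$. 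The Computation Lemma (Lemma~\ref{lem:computation_lemma_R}) only yields $\simeq_R$ for processes lying on a silent path that \emph{returns} to something $\simeq_R$ its starting point; no induction on the length of a one-way chain can manufacture that return leg. The constant-level lemma ``all constants in a block are $R$-bisimilar'' is not a special case of your bridging claim --- it holds precisely because block membership carries mutual reachability with it.

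What actually closes the cycle here, and what your proof never uses, are the two structural facts packaged into Definition~\ref{def:R_propagating}: $Y\zeta$ is \emph{ground}, so $Y\zeta X'\Longrightarrow_R X'$ by silently consuming the prefix, and $X'\in[X]_R$, so $X'\simeq_R X$ by the block lemma. This gives the silent loop
$X\Longrightarrow_R Y\zeta X'\Longrightarrow_R X'\simeq_R X$,
to which Lemma~\ref{lem:computation_lemma_R} applies and yields $Y\zeta X'\simeq_R X$; from there your steps (congruence via Lemma~\ref{lem:congruence_R_bis} to replace $X'$ by $X$, then Lemma~\ref{lem:redundant_constants_relative}(2)) go through verbatim. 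So the fix is local but essential: replace the false general bridging fact by this explicit cycle. As a side remark, the weak-norm detour to show $Y$ is ground is unnecessary --- groundness of $Y\zeta$ is already part of the definition of $R$-propagating, and Lemma~\ref{lem:redundant_constants_relative}(2) needs no such hypothesis.
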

\begin{lemma}\label{lem:char_propagating}
Suppose  $X \Longleftrightarrow_R \zeta X' \stackrel{\ell}{\longrightarrow}_R \zeta' X'$ such that  $\zeta' X' \not\Longrightarrow_R X$. Then $X' \in [X]_R$, and $\zeta = Y\gamma$ for some $Y$ and $\gamma$ such that
\begin{itemize}
\item
$Y$ is an $R$-propagating of $[X]_R$.

\item
$Y  \stackrel{\ell}{\longrightarrow} \alpha$ and $\zeta' = \alpha\gamma$.

\item
$X \simeq_R \gamma X$. (i.e.~$\gamma \in (\mathsf{Rd}_R(X))^{*}$)

\item
$Y.X \stackrel{\ell}{\longrightarrow}_R \alpha X$ with $Y.X \simeq_R \zeta X'  \simeq_R X$ and $\alpha X  \simeq_R \zeta' X \simeq_R \zeta' X'$.
\end{itemize}
\end{lemma}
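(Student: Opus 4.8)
The plan is to pull the final relativized transition $\zeta X' \stackrel{\ell}{\longrightarrow}_R \zeta' X'$ back to an ordinary transition and read the decomposition $\zeta = Y\gamma$ off the BPA operational semantics. First I would collect the structural facts that come for free from $X \Longleftrightarrow_R \zeta X'$: since the left side is a single constant, the observations recorded around Definition~\ref{def:R_propagating} give that $X' \Longleftrightarrow_R X$ — so $X' \in [X]_R$ and, by the block lemma together with Lemma~\ref{lem:computation_lemma_R}, $X' \simeq_R X$ and $X \simeq_R \zeta X'$ — and that $\zeta$ is ground. The degenerate case $\zeta = \epsilon$, where the escaping move would be a transition of the block-constant $X'$ itself, is disposed of separately (it is either excluded by the way the lemma is invoked or absorbed into the $\zeta\neq\epsilon$ case through the $R$-bisimulation game for $X'\simeq_R X$), so from now on write $\zeta = Y\gamma$ with $Y$ a constant; then $X \Longleftrightarrow_R Y\gamma X'$ is precisely a witness that $Y$ is an $R$-propagating of $[X]_R$ (Definition~\ref{def:R_propagating}), which is the first listed conclusion besides $X'\in[X]_R$.

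Next I would unfold the transition. Since $\zeta X' = Y\gamma X' \neq \epsilon$, Lemma~\ref{lem:char_R_arrow} yields an ordinary transition $Y\gamma X' \stackrel{\ell}{\longrightarrow} \beta'$ with $\beta' =_R \zeta' X'$; by the operational semantics the move is performed by the head, so $Y \stackrel{\ell}{\longrightarrow} \alpha$ and $\beta' = \alpha\gamma X'$ for some $\alpha$. Because $X' \notin R$ (it is the last symbol of an $R$-nf), both $\alpha\gamma X'$ and $\zeta' X'$ are already in $R$-nf, so $\alpha\gamma X' = \zeta' X'$ and hence $\zeta' = \alpha\gamma$; the same observation gives $Y.X \stackrel{\ell}{\longrightarrow}_R \alpha X$. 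This settles the clauses ``$Y \stackrel{\ell}{\longrightarrow} \alpha$ and $\zeta' = \alpha\gamma$'' and ``$Y.X \stackrel{\ell}{\longrightarrow}_R \alpha X$''.

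It then remains to verify the bisimilarity facts. Since $Y$ is an $R$-propagating of $X$, the lemma stating that an $R$-propagating of $X$ lies in $\mathsf{Rd}_R(X)$ gives $Y \in \mathsf{Rd}_R(X)$; combined with $X \simeq_R \zeta X' = Y\gamma X' \simeq_R Y\gamma X$ (using $X' \simeq_R X$ and Lemma~\ref{lem:congruence_R_bis}) and Lemma~\ref{lem:redundant_constants_relative}(2), we get $Y\gamma \in (\mathsf{Rd}_R(X))^{*}$, hence in particular $\gamma \in (\mathsf{Rd}_R(X))^{*}$, i.e.\ $X \simeq_R \gamma X$. All remaining identities are now routine applications of congruence (Lemma~\ref{lem:congruence_R_bis}) together with $X' \simeq_R X$ and $\gamma X \simeq_R X$: $Y.X \simeq_R Y\gamma X \simeq_R Y\gamma X' = \zeta X' \simeq_R X$; $\alpha X \simeq_R \alpha\gamma X = \zeta' X$; and $\zeta' X \simeq_R \zeta' X'$.

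The step I expect to be the real obstacle is the very first one: justifying that $X \Longleftrightarrow_R \zeta X'$ already forces $\zeta$ to be ground and $X'$ to be $R$-associate to $X$ — so that ``the head $Y$'' is even well-posed as an $R$-propagating constant — and cleanly ruling out or absorbing the degenerate case $\zeta = \epsilon$. Once those structural preliminaries are secured, the rest is bookkeeping on top of the operational semantics, the redundancy characterisation (Lemma~\ref{lem:redundant_constants_relative}), and the congruence lemma already established.
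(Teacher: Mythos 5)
The paper gives no proof of Lemma~\ref{lem:char_propagating} (its appendix skips all of Section~\ref{sec:base}), so there is nothing to compare against; judged on its own, your argument is sound and is surely the intended one: pull the $R$-transition back to an ordinary head transition via Lemma~\ref{lem:char_R_arrow}, note that $X'\notin R$ keeps everything in $R$-nf so that $\zeta'=\alpha\gamma$ falls out syntactically, and then discharge the $\simeq_R$ clauses with Lemma~\ref{lem:computation_lemma_R}, Lemma~\ref{lem:congruence_R_bis} and Lemma~\ref{lem:redundant_constants_relative}. Two remarks on the soft spots you flagged. First, the structural preliminaries you lean on ($X'\Longleftrightarrow_R X$ and $\zeta$ ground) are exactly the unproved parenthetical of Definition~\ref{def:R_propagating}; they do hold, but only under the reading of $\Longleftrightarrow_R$ as \emph{mutual} reachability (the SCC reading that the paper's block/partition claims and its worked example require), since under the literal union reading $X\Longrightarrow_R\zeta X'$ alone does not force $X'$ into $[X]_R$. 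With mutual reachability the cycle $X\Longrightarrow_R\zeta X'\Longrightarrow_R X$ must consume $\zeta$ through head transitions before touching $X'$, which yields both facts and, via the computation lemma, $\zeta X'\simeq_R X'\simeq_R X$. Second, the case $\zeta=\epsilon$ genuinely cannot be ``absorbed'': the conclusion $\zeta=Y\gamma$ is then false, so the lemma carries an implicit hypothesis $\zeta\neq\epsilon$ (it is only ever invoked for the propagating clause of Definition~\ref{def:R_derived_transition}, the $\zeta=\epsilon$ case being clause~1 there); saying so explicitly is cleaner than claiming the case is handled. Neither point is a defect of your proof relative to the paper, which asserts both without argument.
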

Lemma~\ref{lem:char_propagating} shows that the behaviours of $[X]_R$ are completely determined by the associate constants and the propagating constants of $X$, which leads to the following definition.
\begin{definition}\label{def:R_derived_transition}
The {\em $R$-derived transition} $\stackrel{\ell}{\longmapsto}_{R}$ is defined as follows:
\begin{enumerate}
\item
Let $\widehat{X} \in [X]_R$ and $\widehat{X} \stackrel{\ell}{\longrightarrow}_{R} \alpha$.  If
either $\ell \neq \tau$, or $\ell = \tau$ and $\alpha \not \Longrightarrow_R X$,
 then $[X]_R \stackrel{\ell}{\longmapsto}_{R} \alpha$.

\item
Let $Y$ be an $R$-propagating of $[X]_R$ and $Y \stackrel{\ell}{\longrightarrow}_{R} \alpha$. If
 either $\ell \neq \tau$,
 or $\ell = \tau$ and $\alpha \not \Longrightarrow_R \epsilon$,
then $[X]_R \stackrel{\ell}{\longmapsto}_{R} \alpha.X$.
\end{enumerate}
\end{definition}
\begin{lemma}
Suppose  $X  \Longleftrightarrow_R \cdot \stackrel{\ell}{\longrightarrow}_{R} \alpha$. If
 either $\ell \neq \tau$,
 or $\ell = \tau$ and $\alpha \not \Longrightarrow_R X$,
then
$[X]_R \stackrel{\ell}{\longmapsto}_{R} \cdot \simeq_R \alpha$.
\end{lemma}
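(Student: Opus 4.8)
The plan is to analyse the shape of the intermediate process in the hypothesis and match it against the two clauses of Definition~\ref{def:R_derived_transition}. Denote by $\eta$ the process with $X \Longleftrightarrow_R \eta \stackrel{\ell}{\longrightarrow}_R \alpha$, both $\eta$ and $\alpha$ being in $R$-nf. Since $X \Longleftrightarrow_R \eta$, the Computation Lemma for $\simeq_R$ (Lemma~\ref{lem:computation_lemma_R}) gives $\eta \simeq_R X$, and since $R$ is qualified we have $X \not\Longleftrightarrow_R \epsilon$, hence $\eta \neq \epsilon$. I would then split according to the length of $\eta$.

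If $\eta$ is a single constant, then $\eta \in [X]_R$ by Definition~\ref{def:R-blocks}, so we may take $\widehat X = \eta$ in clause~(1) of Definition~\ref{def:R_derived_transition}; the side condition required there — $\ell \neq \tau$, or $\ell = \tau$ and $\alpha \not\Longrightarrow_R X$ — is exactly the hypothesis of the lemma, hence $[X]_R \stackrel{\ell}{\longmapsto}_R \alpha$, and $\alpha \simeq_R \alpha$ closes this case.

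The substantial case is $|\eta| \geq 2$. Write $\eta = Y\gamma X'$ with $Y$ the leading constant and $X'$ the trailing constant; since $\eta$ is in $R$-nf, $X' \notin R$. From $X \Longleftrightarrow_R Y\gamma X'$ together with the structural facts recorded around Definition~\ref{def:R_propagating} and in Lemma~\ref{lem:char_propagating}, I would extract that $X' \in [X]_R$ (so $X' \simeq_R X$), that $Y\gamma$ is ground, that $Y$ is an $R$-propagating of $[X]_R$, and — via $X \simeq_R Y\gamma X' \simeq_R Y\gamma X$ (suffix-congruence, Lemma~\ref{lem:congruence_R_bis}) and Lemma~\ref{lem:redundant_constants_relative}(2) — that $\gamma \in (\mathsf{Rd}_{R}(X))^{*}$, i.e.\ $\gamma X \simeq_R X$. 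As $\eta \neq \epsilon$, Lemma~\ref{lem:char_R_arrow} forces the underlying transition behind $\eta \stackrel{\ell}{\longrightarrow}_R \alpha$ to be performed by the head $Y$: $Y \stackrel{\ell}{\longrightarrow} \mu$ and the underlying target is $\mu\gamma X'$, which, $X'$ being outside $R$, is already in $R$-nf; hence $\alpha = \mu\gamma X'$ and $Y \stackrel{\ell}{\longrightarrow}_R \mu_R$. Plugging $Y$ and this transition into clause~(2) of Definition~\ref{def:R_derived_transition} yields $[X]_R \stackrel{\ell}{\longmapsto}_R \mu_R.X$, and it remains to check $\mu_R.X \simeq_R \alpha$; this follows from the chain $\alpha = \mu\gamma X' \simeq_R \mu\gamma X \simeq_R \mu X \simeq_R \mu_R.X$, whose steps use $X' \simeq_R X$, then $\gamma X \simeq_R X$, then the fact that the $R^{*}$-suffix removed in passing from $\mu$ to $\mu_R$ is $\simeq_R$-redundant over $X$ (i.e.\ $R \subseteq \mathsf{Rd}_{R}(X)$), all combined through Lemma~\ref{lem:congruence_R_bis}.

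I expect the main obstacle to be the side condition of clause~(2) when $\ell = \tau$, namely that $\mu_R \not\Longrightarrow_R \epsilon$. The argument I would run is by contradiction: if $\mu_R \Longrightarrow_R \epsilon$, i.e.\ $\mu$ is ground, then $\gamma$ is ground too and hence $\alpha = \mu\gamma X' \Longrightarrow_R X'$; on the branch $X \Longrightarrow_R \eta$ of $X \Longleftrightarrow_R \eta$ one then has $X \Longrightarrow_R \alpha \Longrightarrow_R X'$, and since $[X]_R = [X']_R$ the $R$-order lemma (Lemma~\ref{lem:R_order}) forces $X' \Longrightarrow_R X$, hence $\alpha \Longrightarrow_R X$, against the hypothesis $\alpha \not\Longrightarrow_R X$. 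The genuinely delicate point is the complementary branch $\eta \Longrightarrow_R X$, where $X \Longrightarrow_R \alpha$ is not directly available; there one must combine the Computation Lemma, the order lemma, and qualified-ness, or else relocate the transition onto a single constant of $[X]_R$ and appeal to clause~(1) instead. One should also dispatch the minor bookkeeping arising when the extracted constants $Y$ or $X'$ happen to lie in $R$, which is absorbed by $Z \simeq_R \epsilon$ for $Z \in R$. When $\ell \neq \tau$ the side condition of clause~(2) is vacuous, so none of this is needed.
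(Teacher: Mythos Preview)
Your overall plan is right and is essentially the only reasonable one: split on whether the intermediate $\eta$ is a constant (clause~(1) of Definition~\ref{def:R_derived_transition}) or a longer string (clause~(2)); in the second case read off the propagating head $Y$, the middle $\gamma$ and the tail $X'$ from the structural facts around Definition~\ref{def:R_propagating} and Lemma~\ref{lem:char_propagating}, and conclude $[X]_R \stackrel{\ell}{\longmapsto}_R \mu_R.X$ together with $\mu_R.X \simeq_R \alpha$ by the congruence chain you wrote. The paper does not spell out a proof of this lemma, and your route is the intended one.

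The one place where your argument goes astray is the verification of the side condition of clause~(2) when $\ell=\tau$, namely $\mu_R \not\Longrightarrow_R \epsilon$. You split $X \Longleftrightarrow_R \eta$ into the two ``branches'' $X \Longrightarrow_R \eta$ and $\eta \Longrightarrow_R X$ and then treat them with opposite difficulty; this is where the confusion lies. In the paper's working use of $\Longleftrightarrow_R$ (witness the partition claim right after Definition~\ref{def:R-blocks}, the Computation-Lemma justification in the remark there, and the parenthetical in Definition~\ref{def:R_propagating}), having $X \Longleftrightarrow_R \eta$ in this context already gives you $\eta \Longrightarrow_R X$. Once you use that, the side-condition check is a one-liner: any path $\eta = Y\gamma X' \Longrightarrow_R X$ must, by the left-sequential shape of BPA transitions and $X'\notin R$, first consume $Y\gamma$ and pass through $X'$; hence $X'\Longrightarrow_R X$. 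If $\mu$ were ground, then $\alpha=\mu\gamma X'\Longrightarrow_R X'\Longrightarrow_R X$, contradicting the hypothesis. So the branch you labelled ``delicate'' is in fact the easy one.

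Conversely, the branch you called easy is where your argument actually breaks: from $[X]_R=[X']_R$ you invoke Lemma~\ref{lem:R_order} to deduce $X'\Longrightarrow_R X$, but that lemma says nothing of the sort---it is a statement about \emph{different} blocks, and its contrapositive only yields ``if both directions hold then same block'', not the converse. The conclusion $X'\Longrightarrow_R X$ is correct, but it comes from the structural fact above (equivalently, from the parenthetical in Definition~\ref{def:R_propagating}, which records precisely that $X'\Longleftrightarrow_R X$ in the strong, two-sided sense used throughout Section~\ref{subsec:blocks_orders}), not from Lemma~\ref{lem:R_order}. Once you replace that citation by the path-through-$X'$ observation, the branch split becomes unnecessary and the proof is complete.
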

It is technically convenient to treat the $R$-blocks as the basic objects in the algorithm, because of the following lemma.
\begin{lemma}
If $[X]_R \stackrel{\tau}{\longmapsto}_{R} \cdot \Longrightarrow_R Y$, then $[Y]_R \neq [X]_R$.
\end{lemma}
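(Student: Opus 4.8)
## Proof Plan for the Final Lemma

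The statement to prove is: if $[X]_R \stackrel{\tau}{\longmapsto}_R \cdot \Longrightarrow_R Y$, then $[Y]_R \neq [X]_R$. The plan is to argue by contradiction: suppose $[Y]_R = [X]_R$, so that $Y$ is $R$-associate to $X$, and derive a violation of the defining condition of the $R$-derived transition $\stackrel{\tau}{\longmapsto}_R$ (Definition~\ref{def:R_derived_transition}). First I would unpack the hypothesis $[X]_R \stackrel{\tau}{\longmapsto}_R \gamma$ where $\gamma \Longrightarrow_R Y$. By Definition~\ref{def:R_derived_transition} there are exactly two ways this derived transition can arise, and I would treat them separately.

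In \emph{Case 1}, there is some $\widehat{X} \in [X]_R$ with $\widehat{X} \stackrel{\tau}{\longrightarrow}_R \gamma$ and (since $\ell = \tau$) the side condition $\gamma \not\Longrightarrow_R X$. Now if $[Y]_R = [X]_R$, then in particular $Y \Longleftrightarrow_R X$, hence $Y \Longrightarrow_R X$ or $X \Longrightarrow_R Y$; but combined with $\gamma \Longrightarrow_R Y$ I would want to conclude $\gamma \Longrightarrow_R X$, directly contradicting the side condition. The subtle point is that $Y \Longleftrightarrow_R X$ only gives $Y$ \emph{associate} to $X$, not necessarily $Y \Longrightarrow_R X$. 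Here I expect Lemma~\ref{lem:R_order} (antisymmetry of $\Longrightarrow_R$ modulo blocks) to do the work: since $\gamma \Longrightarrow_R Y$ and $Y$ is in the same block as $X$, and $\widehat{X} \Longrightarrow_R \gamma \Longrightarrow_R Y$ with $\widehat{X} \Longleftrightarrow_R X \Longleftrightarrow_R Y$, all of $\widehat{X}, \gamma, Y$ lie on a cycle of $\Longrightarrow_R / {\Longrightarrow_R}^{-1}$ steps, which forces $\gamma \Longrightarrow_R X$ by the Computation Lemma (Lemma~\ref{lem:computation_lemma_R}) pinning down that everything on the cycle is mutually $R$-associate and hence $\gamma \Longleftrightarrow_R X$; combined with $Y \Longrightarrow_R X$ (the direction we can extract once we know $\gamma, Y$ are associate to $X$ and lie after $\widehat X$) we recover $\gamma \Longrightarrow_R X$. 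This contradicts $\gamma \not\Longrightarrow_R X$.

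In \emph{Case 2}, there is an $R$-propagating $Z$ of $[X]_R$ with $Z \stackrel{\tau}{\longrightarrow}_R \alpha$, the side condition $\alpha \not\Longrightarrow_R \epsilon$, and $\gamma = \alpha.X$. From $\gamma = \alpha X \Longrightarrow_R Y$ and the assumption $[Y]_R = [X]_R$, I would analyze how the reduction $\alpha X \Longrightarrow_R Y$ factors. Since $Y$ is a single constant associate to $X$ and $\alpha X \Longrightarrow_R Y$, the $X$-suffix must be consumed, which (using Lemma~\ref{lem:char_R_arrow} and Lemma~\ref{lem:groud_preserve} to track groundness) forces $\alpha \Longrightarrow_R \epsilon$ along the way together with some reduction $X \Longrightarrow_R Y$; but $\alpha \Longrightarrow_R \epsilon$ flatly contradicts the side condition $\alpha \not\Longrightarrow_R \epsilon$. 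Alternatively the reduction could reach $Y$ while $X$ is still present, i.e.\ $\alpha X \Longrightarrow_R \alpha' X \Longrightarrow_R Y$ is impossible unless $\alpha' X$ is a single constant, again forcing $\alpha' = \epsilon$; either branch kills the side condition.

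The main obstacle I anticipate is \emph{Case 1}: carefully justifying that $\gamma \Longrightarrow_R Y$ together with $Y \Longleftrightarrow_R X$ upgrades to $\gamma \Longrightarrow_R X$ rather than merely $\gamma \Longleftrightarrow_R X$. This is exactly the kind of step where one must combine Lemma~\ref{lem:R_order} with the Computation Lemma to collapse a zig-zag $\Longleftrightarrow_R$-path into a genuine forward $\Longrightarrow_R$-path, and it is where the hypothesis that $R$ is qualified (no constant outside $R$ satisfies $X \Longleftrightarrow_R \epsilon$) may quietly be needed to rule out degenerate cycles. The bookkeeping in Case 2 is more routine once one has the characterization lemmas for $\stackrel{\ell}{\longrightarrow}_R$ and groundness preservation under relativization in hand.
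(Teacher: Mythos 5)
Your Case~2 is sound, and in fact simpler than you make it: the side condition of Definition~\ref{def:R_derived_transition}(2) gives $\alpha \not\Longrightarrow_R \epsilon$ (so in particular $\alpha \neq \epsilon$), and since the trailing $X \notin R$ is never stripped by $R$-normalization nor touched before the prefix is exhausted, the process $\alpha.X$ can never $\Longrightarrow_R$-reduce to \emph{any} single constant $Y$ at all --- doing so would force $\alpha \Longrightarrow_R \epsilon$ (track the steps with Lemma~\ref{lem:char_R_arrow} and Lemma~\ref{lem:groud_preserve}). So this case is vacuous and no analysis of where $Y$ appears along the reduction is needed.

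The genuine gap is in Case~1, precisely in the step you flag as the main obstacle: upgrading $[Y]_R = [X]_R$ to $Y \Longrightarrow_R X$. The tools you propose cannot do this. The Computation Lemma (Lemma~\ref{lem:computation_lemma_R}) concludes facts about $\simeq_R$ from reachability hypotheses; it never produces new $\Longrightarrow_R$-reachability, so it cannot ``collapse a zig-zag $\Longleftrightarrow_R$-path into a genuine forward $\Longrightarrow_R$-path.'' Lemma~\ref{lem:R_order} points the wrong way: its contrapositive says mutual reachability implies equal blocks, whereas you need equal blocks to imply mutual reachability. What actually closes the gap is the reading of Definition~\ref{def:R-blocks} forced by the surrounding claims (that the $R$-blocks partition $\mathbf{C}\setminus R$ and that all constants in a block are $R$-bisimilar): two constants lie in the same $R$-block exactly when each $\Longrightarrow_R$-reaches the other, i.e.\ the blocks are the strongly connected components of $\Longrightarrow_R$ on constants. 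Under the literal one-sided reading of $\Longleftrightarrow_R$ as ${\Longrightarrow_R}\cup{\Longrightarrow_R}^{-1}$, the implication $[Y]_R=[X]_R \Rightarrow Y\Longrightarrow_R X$ is simply false (take $X\stackrel{\tau}{\longrightarrow}Y$ with $Y\not\Longrightarrow_R X$ and nothing else reaching either constant), so no cited lemma can rescue it; qualifiedness of $R$ is also irrelevant here. Once you adopt the mutual-reachability reading, Case~1 is a one-liner: $\gamma\Longrightarrow_R Y\Longrightarrow_R X$ contradicts the side condition $\gamma\not\Longrightarrow_R X$, with no appeal to the Computation Lemma or to Lemma~\ref{lem:R_order}.
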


Finally
we can define an order on $R$-blocks based on Lemma~\ref{lem:R_order}. For every $R$,
we fix a linear order $<_R$ such that whenever $[X]_R <_R [Y]_R$, we have $X \not \Longrightarrow_R Y$.

\begin{lemma}
If $[X]_R \stackrel{\tau}{\longmapsto}_{R} \cdot \Longrightarrow_R Y$, then $[Y]_R <_R [X]_R$.
\end{lemma}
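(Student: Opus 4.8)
The statement to be proved is: if $[X]_R \stackrel{\tau}{\longmapsto}_{R} \cdot \Longrightarrow_R Y$, then $[Y]_R <_R [X]_R$. The plan is to combine the immediately preceding lemma — which asserts that under the same hypothesis $[Y]_R \neq [X]_R$ — with the defining property of the linear order $<_R$ and Lemma~\ref{lem:R_order}. So the skeleton is: first establish that $X \Longrightarrow_R Y$ (or at least $X \Longleftrightarrow_R \cdot \Longrightarrow_R Y$ in a way that forces $X \Longrightarrow_R Y$ modulo $=_R$), then invoke $[X]_R \neq [Y]_R$ from the previous lemma, and finally argue by contradiction against the order: if $[Y]_R <_R [X]_R$ failed, then since $<_R$ is linear and the two blocks are distinct we would have $[X]_R <_R [Y]_R$, whence by the fixed property of $<_R$ we get $X \not\Longrightarrow_R Y$, contradicting $X \Longrightarrow_R Y$.

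The first step — extracting $X \Longrightarrow_R Y$ from $[X]_R \stackrel{\tau}{\longmapsto}_{R} \cdot \Longrightarrow_R Y$ — is where I would unfold Definition~\ref{def:R_derived_transition}. There are two cases for a $\tau$-labelled $R$-derived transition of $[X]_R$. In the first case, $[X]_R \stackrel{\tau}{\longmapsto}_{R} \alpha$ comes from some $\widehat X \in [X]_R$ with $\widehat X \stackrel{\tau}{\longrightarrow}_R \alpha$ and $\alpha \not\Longrightarrow_R X$; since $\widehat X \Longleftrightarrow_R X$ and $\widehat X \stackrel{\tau}{\longrightarrow}_R \alpha \Longrightarrow_R Y$, we get $X \Longleftrightarrow_R \cdot \Longrightarrow_R Y$, and because $R$ is assumed qualified the associativity can be absorbed to give $X \Longrightarrow_R Y$ (this is the routine point I would not belabour — it is exactly the kind of reasoning already used to state Lemma~\ref{lem:R_order}). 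In the second case, $[X]_R \stackrel{\tau}{\longmapsto}_{R} \alpha.X$ comes from an $R$-propagating $Y_0$ of $[X]_R$ with $Y_0 \stackrel{\tau}{\longrightarrow}_R \alpha$ and $\alpha \not\Longrightarrow_R \epsilon$; here $Y_0 \in \mathsf{Rd}_R(X)$ so $\alpha.X \Longleftrightarrow_R \cdot$ reaches back into $[X]_R$-territory unless the $\Longrightarrow_R Y$ part leaves it, and again one deduces $X \Longrightarrow_R Y$ after using the propagating characterisation (Lemma~\ref{lem:char_propagating}) and the qualified-set hypothesis. In both cases we land on $X \Longrightarrow_R Y$, which is all the order argument needs.

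The \emph{main obstacle} I anticipate is precisely this case analysis in the propagating branch: one must be careful that $\Longrightarrow_R Y$ genuinely exits the block $[X]_R$ (otherwise $[Y]_R = [X]_R$ and the conclusion is vacuous/false), and this is exactly what the previous lemma buys us — so the logical order matters. Concretely I would structure the proof as: (i) by the previous lemma, $[Y]_R \neq [X]_R$; (ii) by Definition~\ref{def:R_derived_transition} together with Lemma~\ref{lem:char_propagating}, $X \Longrightarrow_R Y$; (iii) suppose for contradiction $[Y]_R \not<_R [X]_R$; since $<_R$ is a linear order on $\mathbf{C}_R$ and $[X]_R \neq [Y]_R$, this forces $[X]_R <_R [Y]_R$; (iv) by the defining property of $<_R$ (whenever $[X]_R <_R [Y]_R$ then $X \not\Longrightarrow_R Y$), we obtain $X \not\Longrightarrow_R Y$, contradicting (ii). Hence $[Y]_R <_R [X]_R$. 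Everything here is short; the only substantive content is (ii), and even that is a direct application of definitions and the propagating lemma rather than new mathematics.
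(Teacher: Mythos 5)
Your proposal is correct and is exactly the argument the paper intends (the paper states this lemma without proof): the previous lemma gives $[Y]_R\neq[X]_R$, unfolding Definition~\ref{def:R_derived_transition} gives $X\Longrightarrow_R Y$, and linearity of $<_R$ together with its defining property forces $[Y]_R<_R[X]_R$. The only loose spot is your handling of the propagating branch, which is cleaner than you make it: there the target is $\alpha.X$ with $\alpha\not\Longrightarrow_R\epsilon$, so $\alpha.X$ can never $\Longrightarrow_R$-reach a single constant $Y$ and that case is vacuous (the relevant fact in the first branch is mutual reachability of members of a block, not qualifiedness of $R$).
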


\begin{example}
The example illustrates why we have to introduce possibly different orders $<_R$ for different $R$'s.

In a normed BPA system $\Gamma = (\mathbf{C}, \mathcal{A}, \Delta)$, we can have the following fragment of definition: Let $A_1, A_2, B_1, B_2$ be constants in $\mathbf{C}_{\mathrm{G}}$. We have transition rules:
\begin{center}
$A_1 \stackrel{\tau}{\longrightarrow} A_2 B_1$, \quad $A_2 \stackrel{\tau}{\longrightarrow} A_1 B_2$.
\end{center}
There can be other transitions related to there constants which is of no importance. Now, take notice of the following facts:
\begin{itemize}
\item
In the case of $R_1 = \{B_1\}$, we have $A_1 \stackrel{\tau}{\longrightarrow}_{R_1} A_2$. Thus we have $[A_2]_{R_1} <_{R_1} [A_1]_{R_1}$.  Or, in short, $A_2 <_{R_1} A_1$.

\item
In the case of $R_2 = \{B_2\}$, we have $A_2 \stackrel{\tau}{\longrightarrow}_{R_2} A_1$. Thus we have $[A_1]_{R_2} <_{R_2} [A_2]_{R_2}$.  Or, in short, $A_1 <_{R_2} A_2$.
\end{itemize}
These two orders $<_{R_1}$ and $<_{R_2}$ are clearly not consistent.  This feature reflects a big difference between normed BPA and  totally normed BPA.
\end{example}

\begin{remark}
There is also a big difference between normed BPA and normed BPP.
In the case of normed BPP~\cite{Stirling2001,DBLP:conf/concur/CzerwinskiHL11}, let us say that $X$ {\em generates} $Y$ if $X \Longleftrightarrow Y \parallel X$, in while `$\parallel$' is the operator of {\em parallel composition}.  Thus, if $X$ generates $Y$, then $X \Longleftrightarrow  Y^{n} \parallel X$ hence $X \simeq  Y^{n} \parallel X$ for every $n \in \mathbb{N}$. Suppose that $X \stackrel{a}{\longrightarrow} \epsilon$, we have
\[
X \Longleftrightarrow  \underbrace{Y \parallel \ldots \parallel Y }_{n\textrm{ times}}  \parallel X \stackrel{a}{\longrightarrow} \underbrace{Y \parallel \ldots \parallel Y }_{n\textrm{ times}} 
\]
for every $n \in \mathbb{N}$.
Now, if all the 
\[
\underbrace{Y \parallel \ldots \parallel Y }_{n\textrm{ times}} \parallel X 
\]
for every $n \in \mathbb{N}$ are contracted into a block $[X]$, then we have 
\[
[X] \stackrel{\ell}{\longmapsto} \underbrace{Y \parallel \ldots \parallel Y }_{n\textrm{ times}}
\]
for every $n \in \mathbb{N}$, as is done in the same way as Definition~\ref{def:R_derived_transition}.  This example shows that the behaviour of $[X]$ is infinite branching.
Note that in the case of normed BPA, this situation is not possible. If $X \Longleftrightarrow  Y \zeta X$, then actions in $X$ can only be activated after $Y\zeta$ is consumed completely.  This nice property of normed BPA simplifies the situation greatly.
\end{remark}

%\begin{remark}
%Never confuse $R$-equal in Definition~\ref{def:R_equal} with the $R$-order. % in Definition~\ref{def:R_order}.
%\end{remark}

\subsection{Decomposition Bases}\label{subsec:decomposition_base}

 A {\em decomposition base} $\mathcal{B}$ is a family of $\{\mathcal{B}_{R}\}_{R \subseteq{\mathbf{C}_{\mathrm{G}}}}$ in which every $ \mathcal{B}_R$ is a quintuple $(\mathbf{Id}^{\mathcal{B}}_R, \mathbf{Pr}^{\mathcal{B}}_R, \mathbf{Cm}^{\mathcal{B}}_R ,\mathbf{Dc}^{\mathcal{B}}_R, \mathbf{Rd}^{\mathcal{B}}_R)$.
\begin{itemize}
\item
$\mathbf{Id}^{\mathcal{B}}_R$ is a subset of ground constants called {\em $\mathcal{B}_{R}$-identities}.

\item
$\mathbf{Cm}^{\mathcal{B}}_R$ specifies the set of {\em $\mathcal{B}_{R}$-composites}. A $\mathcal{B}_{R}$-composite is an $\mathbf{Id}^{\mathcal{B}}_R$-block.

\item
$\mathbf{Pr}^{\mathcal{B}}_R$ specifies the set of {\em $\mathcal{B}_{R}$-primes}. A $\mathcal{B}_{R}$-prime is an $\mathbf{Id}^{\mathcal{B}}_R$-block.

\item
$\mathbf{Rd}^{\mathcal{B}}_R$ is a function whose domain is $\mathbf{Pr}^{\mathcal{B}}_R$.  Let $[X]_{\mathbf{Id}^{\mathcal{B}}_R}$ be a  $\mathcal{B}_{R}$-prime.   The value $\mathbf{Rd}^{\mathcal{B}}_R([X]_{\mathbf{Id}^{\mathcal{B}}_R})$ is a set of ground constants which are called {\em $\mathcal{B}_{R}$-redundant} over $[X]_{\mathbf{Id}^{\mathcal{B}}_R}$.

\item
$\mathbf{Dc}^{\mathcal{B}}_R$ is a function whose domain is $\mathbf{Cm}^{\mathcal{B}}_R$. Let $[X]_{\mathbf{Id}^{\mathcal{B}}_R}$ be a  $\mathcal{B}_{R}$-composite.  The value $\mathbf{Dc}^{\mathcal{B}}_R([X]_{\mathbf{Id}^{\mathcal{B}}_R})$ is called the {\em $\mathcal{B}_{R}$-decomposition} of $[X]_{\mathbf{Id}^{\mathcal{B}}_R}$, which is a string of blocks $[X_r]_{R_{r}} [X_{r-1}]_{R_{r-1}}\ldots [X_2]_{R_{2}} [X_1]_{R_{1}}$ with $r \geq 1$, $R_1 = \mathbf{Id}^{\mathcal{B}}_R$, $[X_i]_{R_{i}} \in \mathbf{Pr}^{\mathcal{B}}_{R_{i}}$ and $R_{i+1} = \mathbf{Rd}^{\mathcal{B}}_{R_{i}}([X_i]_{R_{i}})$ for every $1 \leq i <r$.
\end{itemize}
To make a decomposition base $\mathcal{B}$ work properly, we need the following constraints:
\begin{enumerate}
\item
$R \subseteq \mathbf{Id}^{\mathcal{B}}_R \subseteq \mathbf{C}_{\mathrm{G}}$.

\item
If $R \subseteq S$, then $\mathbf{Id}^{\mathcal{B}}_R  \subseteq \mathbf{Id}^{\mathcal{B}}_S$.   If  $R \subseteq S \subseteq \mathbf{Id}^{\mathcal{B}}_R$, then $\mathbf{Id}^{\mathcal{B}}_R  = \mathbf{Id}^{\mathcal{B}}_S$.  In particular,  $\mathbf{Id}^{\mathcal{B}}_{\mathbf{Id}^{\mathcal{B}}_R} = \mathbf{Id}^{\mathcal{B}}_R$.

\item
$\mathcal{B}_R = \mathcal{B}_{\mathbf{Id}^{\mathcal{B}}_R}$ for every $R$. When $R = \mathbf{Id}^{\mathcal{B}}_R$, $R$ is called {\em $\mathcal{B}$-admissible}.   $\mathcal{B}$ is completely determined by those $\mathcal{B}_R$ in which $R$ is $\mathcal{B}$-admissible.

\item
If $R$ is  $\mathcal{B}$-admissible, then $\mathbf{Cm}^{\mathcal{B}}_R$ and $\mathbf{Pr}^{\mathcal{B}}_R$ are a partition of $R$-blocks:
$\mathbf{Cm}^{\mathcal{B}}_R \cup \mathbf{Pr}^{\mathcal{B}}_R = \mathbf{C}_{R}$ and  $\mathbf{Cm}^{\mathcal{B}}_R \cap \mathbf{Pr}^{\mathcal{B}}_R = \emptyset$.

\item
$\mathbf{Rd}^{\mathcal{B}}_R([X]_{\mathbf{Id}^{\mathcal{B}}_R})$ is  $\mathcal{B}$-admissible provided that $[X]_{\mathbf{Id}^{\mathcal{B}}_R}$ is a  $\mathcal{B}_{R}$-prime.  Thus $\mathbf{Dc}^{\mathcal{B}}_R$ is well-defined.
\end{enumerate}

A decomposition base $\mathcal{B}$ defines a family of string rewriting system $\{\stackrel{\mathcal{B}}{\rightarrow}_R\}_{R \subseteq \mathbf{C}_{\mathrm{G}}}$. The family of {\em $\mathcal{B}_R$-reduction} relations are defined according to the following structural rules.
\begin {displaymath}
    \begin{array}{cccccc}
        \cfrac{X \in \mathbf{Id}^{\mathcal{B}}_R}{ X  \stackrel{\mathcal{B}}{\rightarrow}_R \epsilon} \qquad
        \cfrac{X \not\in \mathbf{Id}^{\mathcal{B}}_R}{  X \stackrel{\mathcal{B}}{\rightarrow}_R [X]_{\mathbf{Id}^{\mathcal{B}}_R}} \qquad
        \cfrac{\mathbf{Dc}^{\mathcal{B}}_R([X]_{\mathbf{Id}^{\mathcal{B}}_R}) = \alpha}{ [X]_{\mathbf{Id}^{\mathcal{B}}_R} \stackrel{\mathcal{B}}{\rightarrow}_R \alpha}  \\
        \cfrac{[X]_{\mathbf{Id}^{\mathcal{B}}_R} \in \mathbf{Pr}^{\mathcal{B}}_R \qquad  \alpha \stackrel{\mathcal{B}}{\rightarrow}_{\mathbf{Rd}^{\mathcal{B}}_R([X]_{\mathbf{Id}^{\mathcal{B}}_R})} \alpha'}{\alpha. [X]_{\mathbf{Id}^{\mathcal{B}}_R} \stackrel{\mathcal{B}}{\rightarrow}_R \alpha'.[X]_{\mathbf{Id}^{\mathcal{B}}_R}}
\qquad
        \cfrac{\beta \stackrel{\mathcal{B}}{\rightarrow}_R \beta'}{\alpha.\beta \stackrel{\mathcal{B}}{\rightarrow}_R \alpha.\beta'}
     \end{array}
\end {displaymath}
$\mathcal{B}_R$-reduction relations are deterministic. Thus for any process $\alpha$, the {\em $\mathcal{B}_R$-normal-form} (in the sense of string rewriting systems) is unique, and it is called the {\em $\mathcal{B}_R$-decomposition} of $\alpha$.  We use the notation $\mathtt{dcmp}_R^{\mathcal{B}}(\alpha)$ to indicate the $\mathcal{B}_R$-decomposition of $\alpha$.
Processes $\alpha$ and $\beta$ are {\em $\mathcal{B}_R$-equivalent}, notation $\alpha \stackrel{\mathcal{B}}{=}_{R} \beta$, if they have the same $\mathcal{B}_R$-decomposition.

\begin{lemma}
$\alpha \stackrel{\mathcal{B}}{=}_{R} \beta$ if and only if   $\mathtt{dcmp}_R^{\mathcal{B}}(\alpha) = \mathtt{dcmp}_R^{\mathcal{B}}(\beta)$.
\end{lemma}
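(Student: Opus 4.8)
\emph{Proof proposal.}
At face value the equivalence is nothing but an unfolding of the definition of $\stackrel{\mathcal{B}}{=}_R$: two processes are declared $\mathcal{B}_R$-equivalent exactly when they possess the same $\mathcal{B}_R$-decomposition, and $\mathtt{dcmp}_R^{\mathcal{B}}(\cdot)$ is by convention that decomposition. Hence the only thing that actually requires an argument is the legitimacy of the phrase ``\emph{the} $\mathcal{B}_R$-decomposition'', i.e.\ that $\mathtt{dcmp}_R^{\mathcal{B}}$ is a well-defined total function on processes. Equivalently, I must show that the string-rewriting relation $\stackrel{\mathcal{B}}{\rightarrow}_R$ --- read over the extended alphabet of process constants together with all blocks $\bigcup_S \mathbf{C}_S$ --- is terminating and has unique normal forms; the stated biconditional then follows at once.

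For \textbf{termination} I would attach to every mixed word $w$ the pair
\[
\mu(w) \;=\; \bigl(\#\{\text{plain-constant occurrences in }w\},\ \#\{\text{composite-block occurrences in }w\}\bigr)\in\mathbb{N}^2,
\]
ordered lexicographically, and check that every application of one of the five rules strictly decreases $\mu$. Rules~1 and~2 delete a plain constant (rule~2 replaces it by a block, affecting only the second coordinate), so the first coordinate drops. Rule~3 replaces a composite block by its $\mathbf{Dc}^{\mathcal{B}}_R$-value, which \emph{by the very definition of the decomposition of a composite} is a word consisting solely of $\mathcal{B}$-primes --- in particular containing no plain constants and no composite blocks --- so the first coordinate is unchanged and the second drops. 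The two context rules~4 and~5 merely lift a reduction from a factor and inherit its strict decrease. Since $(\mathbb{N}^2,\le_{\mathrm{lex}})$ is well-founded, $\stackrel{\mathcal{B}}{\rightarrow}_R$ is strongly normalizing. I expect this to be the only step with genuine content: it is exactly here that the structural constraints on a decomposition base (notably that $\mathbf{Pr}^{\mathcal{B}}_R$ and $\mathbf{Cm}^{\mathcal{B}}_R$ partition the $R$-blocks and that the decomposition of a composite is a word of primes) are used --- absent them, expanding a composite could re-introduce composites or plain constants and the rewriting could diverge.

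For \textbf{uniqueness of normal forms} I would observe that $\stackrel{\mathcal{B}}{\rightarrow}_R$ is in fact deterministic on the configurations reachable from a process: the processed part is always a suffix of settled prime blocks carrying the reference sets dictated (via the maps $\mathbf{Rd}^{\mathcal{B}}$) by the blocks to their right, and the active redex is forced to be the rightmost unprocessed symbol --- handled by rule~1 or~2 if it is a plain constant, by rule~3 if it is a composite block, and skipped by rule~4 (which shifts the ambient reference set and recurses on the prefix) if it is a correctly-tagged prime; rule~5 only propagates into the suffix and never competes. Thus each process admits a unique maximal reduction sequence, terminating in $\mathtt{dcmp}_R^{\mathcal{B}}(\alpha)$. (If one prefers to avoid checking determinism directly, it suffices to note that the only critical pairs come from rule~5 at two overlapping factorizations and close trivially, so $\stackrel{\mathcal{B}}{\rightarrow}_R$ is locally confluent; Newman's Lemma together with the termination above then yields confluence and hence unique normal forms.) With $\mathtt{dcmp}_R^{\mathcal{B}}$ now a well-defined function, $\alpha \stackrel{\mathcal{B}}{=}_R \beta$ --- which by definition says $\alpha$ and $\beta$ share a $\mathcal{B}_R$-decomposition --- is literally $\mathtt{dcmp}_R^{\mathcal{B}}(\alpha) = \mathtt{dcmp}_R^{\mathcal{B}}(\beta)$, which finishes the proof.
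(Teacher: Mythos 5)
Your proof is correct and matches the paper's treatment: the lemma is purely definitional once $\mathtt{dcmp}_R^{\mathcal{B}}$ is known to be a well-defined total function, and the paper simply asserts determinism and uniqueness of normal forms of $\stackrel{\mathcal{B}}{\rightarrow}_R$ in the paragraph where $\mathtt{dcmp}_R^{\mathcal{B}}$ and $\stackrel{\mathcal{B}}{=}_R$ are introduced. Your lexicographic termination measure (using that $\mathbf{Dc}^{\mathcal{B}}_R$ of a composite is a word of primes) and the rightmost-redex determinism argument supply exactly the details the paper leaves implicit.
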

According to $\mathcal{B}_R$-reduction rules, we have
the following characterization of  $\mathtt{dcmp}_R^{\mathcal{B}}(\alpha)$.
\begin{lemma}
If $R$ is $\mathcal{B}$-admissible, then
\begin{itemize}
\item
$\mathtt{dcmp}_R^{\mathcal{B}}(\epsilon) = \epsilon$.

\item
If $X \in R$, then $\mathtt{dcmp}_R^{\mathcal{B}}(\gamma X) = \mathtt{dcmp}_R^{\mathcal{B}}(\gamma)$.

\item
If $X \not\in R$, then $\mathtt{dcmp}_R^{\mathcal{B}}(\gamma X)
= \mathtt{dcmp}_R^{\mathcal{B}}(\gamma. [X]_R )$.

\item
If $[X]_R \in \mathbf{Cm}_R$, then
\begin{center}
$\mathtt{dcmp}_R^{\mathcal{B}}(\gamma. [X]_R) = \mathtt{dcmp}_{R}^{\mathcal{B}}(\gamma. \mathbf{Dc}_R([X]_R))$.
\end{center}

\item
If $[X]_R  \in \mathbf{Pr}_R$, then
\begin{center}
$\mathtt{dcmp}_R^{\mathcal{B}}(\gamma. [X]_{R} ) = (\mathtt{dcmp}_{\mathbf{Rd}^{\mathcal{B}}_R([X]_R)}^{\mathcal{B}}(\gamma)). [X]_R $.
\end{center}
\end{itemize}
If $R$ is not $\mathcal{B}$-admissible, then $\mathtt{dcmp}_R^{\mathcal{B}}(\alpha) = \mathtt{dcmp}_{\mathbf{Id}^{\mathcal{B}}_R}^{\mathcal{B}}(\alpha)$.
\end{lemma}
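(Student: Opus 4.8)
\emph{Proof idea.} All six bullets can be read directly off the defining rewrite rules of $\stackrel{\mathcal{B}}{\rightarrow}_R$ together with the fact (already recorded) that every process has a unique $\mathcal{B}_R$-normal form. From uniqueness I get the two principles used below: (i) if $\alpha \stackrel{\mathcal{B}}{\rightarrow}_R^{*} \beta$ then $\mathtt{dcmp}_R^{\mathcal{B}}(\alpha) = \mathtt{dcmp}_R^{\mathcal{B}}(\beta)$, since any normal form reachable from $\beta$ is reachable from $\alpha$ and hence equals $\mathtt{dcmp}_R^{\mathcal{B}}(\alpha)$; and (ii) if in addition $\beta$ is $\stackrel{\mathcal{B}}{\rightarrow}_R$-irreducible, then $\mathtt{dcmp}_R^{\mathcal{B}}(\alpha) = \beta$. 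Throughout, $\mathcal{B}$-admissibility of $R$ is used via $\mathbf{Id}^{\mathcal{B}}_R = R$, so that $[X]_{\mathbf{Id}^{\mathcal{B}}_R}$ is just $[X]_R$ and the two base rules become $X \stackrel{\mathcal{B}}{\rightarrow}_R \epsilon$ for $X \in R$ and $X \stackrel{\mathcal{B}}{\rightarrow}_R [X]_R$ for $X \notin R$.

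The first four bullets are then one-liners. For $\mathtt{dcmp}_R^{\mathcal{B}}(\epsilon) = \epsilon$: no rule has $\epsilon$ on its left, so $\epsilon$ is irreducible and is its own $\mathcal{B}_R$-decomposition. For the bullet with $X \in R$: the base rule inside the suffix-context rule gives $\gamma.X \stackrel{\mathcal{B}}{\rightarrow}_R \gamma.\epsilon = \gamma$, and (i) applies. The bullets with $X \notin R$ and with $[X]_R \in \mathbf{Cm}^{\mathcal{B}}_R$ are identical, using instead the second base rule $X \stackrel{\mathcal{B}}{\rightarrow}_R [X]_R$, respectively the decomposition rule $[X]_R \stackrel{\mathcal{B}}{\rightarrow}_R \mathbf{Dc}^{\mathcal{B}}_R([X]_R)$ --- legitimate precisely because $[X]_R$ is a composite --- again wrapped in the suffix-context rule, followed by (i).

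The bullet with $[X]_R \in \mathbf{Pr}^{\mathcal{B}}_R$ carries the actual content. Write $R' = \mathbf{Rd}^{\mathcal{B}}_R([X]_R)$, which is $\mathcal{B}$-admissible by constraint~(5). The prime-context rewrite rule says exactly that each step $\gamma \stackrel{\mathcal{B}}{\rightarrow}_{R'} \gamma'$ lifts to $\gamma.[X]_R \stackrel{\mathcal{B}}{\rightarrow}_R \gamma'.[X]_R$; iterating along a reduction $\gamma \stackrel{\mathcal{B}}{\rightarrow}_{R'}^{*} \mathtt{dcmp}_{R'}^{\mathcal{B}}(\gamma)$ yields $\gamma.[X]_R \stackrel{\mathcal{B}}{\rightarrow}_R^{*} \mathtt{dcmp}_{R'}^{\mathcal{B}}(\gamma).[X]_R$, and by (ii) it remains to see that $\mathtt{dcmp}_{R'}^{\mathcal{B}}(\gamma).[X]_R$ is $\stackrel{\mathcal{B}}{\rightarrow}_R$-irreducible. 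Since $[X]_R$ is a prime block, no base rule (these need a bare constant) and not the decomposition rule (it needs a composite block) can act on the trailing symbol; the prime-context rule at the outermost split would require $\mathtt{dcmp}_{R'}^{\mathcal{B}}(\gamma)$ to be $\stackrel{\mathcal{B}}{\rightarrow}_{R'}$-reducible, which it is not; and the suffix-context rule at an interior split would require some proper suffix to be $\stackrel{\mathcal{B}}{\rightarrow}_R$-reducible. I expect this last case to be the main obstacle, and I would dispatch it --- and this bullet with it --- by an induction on reduction length that simultaneously establishes the structural description: the $\mathcal{B}_R$-normal forms are exactly $\epsilon$ and the strings of prime blocks $[Z_t]_{T_t}\ldots[Z_1]_{T_1}$ with $T_1 = R$, $[Z_i]_{T_i}\in\mathbf{Pr}^{\mathcal{B}}_{T_i}$ and $T_{i+1} = \mathbf{Rd}^{\mathcal{B}}_{T_i}([Z_i]_{T_i})$, and that every such string --- and each of its suffixes, read with the reference set its leading block carries --- is $\stackrel{\mathcal{B}}{\rightarrow}$-irreducible. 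The reference sets line up because $\mathbf{Dc}^{\mathcal{B}}_R(\cdot)$ outputs decompositions of precisely this staircase shape and constraint~(2) keeps the admissible reference sets coherent under iterated $\mathbf{Rd}^{\mathcal{B}}$; with the description in hand, $\mathtt{dcmp}_{R'}^{\mathcal{B}}(\gamma).[X]_R$ is visibly such a string with base $R$, hence irreducible, so the bullet follows from (ii).

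The final bullet, for $R$ not $\mathcal{B}$-admissible, is immediate: constraint~(3) gives $\mathcal{B}_R = \mathcal{B}_{\mathbf{Id}^{\mathcal{B}}_R}$, so the five components defining the rewrite rules coincide, and since the reference sets occurring inside the prime-context rule also coincide ($\mathbf{Rd}^{\mathcal{B}}_R(\cdot) = \mathbf{Rd}^{\mathcal{B}}_{\mathbf{Id}^{\mathcal{B}}_R}(\cdot)$), the relations $\stackrel{\mathcal{B}}{\rightarrow}_R$ and $\stackrel{\mathcal{B}}{\rightarrow}_{\mathbf{Id}^{\mathcal{B}}_R}$ are literally the same; hence so are their normal forms, giving $\mathtt{dcmp}_R^{\mathcal{B}}(\alpha) = \mathtt{dcmp}_{\mathbf{Id}^{\mathcal{B}}_R}^{\mathcal{B}}(\alpha)$.
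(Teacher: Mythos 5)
Your proof is correct and follows the only natural route: the paper itself offers no explicit proof, presenting the lemma as an immediate reading of the $\mathcal{B}_R$-reduction rules together with the asserted uniqueness of $\mathcal{B}_R$-normal forms, which is exactly what you carry out bullet by bullet. You also correctly isolate the one point that is not purely mechanical --- irreducibility of the staircase $\mathtt{dcmp}_{\mathbf{Rd}^{\mathcal{B}}_R([X]_R)}^{\mathcal{B}}(\gamma).[X]_R$ in the prime case --- and the simultaneous induction you sketch (normal forms are exactly $\epsilon$ and the staircases of prime blocks, with all their suffixes irreducible under the reference set carried by their leading block) goes through.
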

We list some basic facts.
\begin{lemma}
$\alpha \stackrel{\mathcal{B}}{=}_{R} \epsilon$ if and only if $\alpha \in \mathbf{Id}^{\mathcal{B}}_R$.  When $R$ is $\mathcal{B}$-admissible, $\alpha \stackrel{\mathcal{B}}{=}_{R} \epsilon$ if and only if $\alpha \in R$.
\end{lemma}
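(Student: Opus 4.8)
The plan is to reduce the whole statement to the recursive description of $\mathtt{dcmp}^{\mathcal{B}}_R$ established just above, so that essentially no new computation is needed. Since $\alpha \stackrel{\mathcal{B}}{=}_R \beta$ means $\mathtt{dcmp}^{\mathcal{B}}_R(\alpha) = \mathtt{dcmp}^{\mathcal{B}}_R(\beta)$ and $\mathtt{dcmp}^{\mathcal{B}}_R(\epsilon) = \epsilon$, we have $\alpha \stackrel{\mathcal{B}}{=}_R \epsilon$ iff $\mathtt{dcmp}^{\mathcal{B}}_R(\alpha) = \epsilon$. Moreover $\mathtt{dcmp}^{\mathcal{B}}_R = \mathtt{dcmp}^{\mathcal{B}}_{\mathbf{Id}^{\mathcal{B}}_R}$ and $\mathbf{Id}^{\mathcal{B}}_R$ is $\mathcal{B}$-admissible by the defining constraint~(2), so it suffices to prove, for $\mathcal{B}$-admissible $R$ (where $\mathbf{Id}^{\mathcal{B}}_R = R$), that $\mathtt{dcmp}^{\mathcal{B}}_R(\alpha) = \epsilon$ if and only if $\alpha \in R^{*}$ (reading the membership asserted in the lemma with the implicit Kleene closure, as elsewhere in the paper). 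This one claim delivers both assertions of the lemma simultaneously.

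For the ``if'' direction I would write $\alpha = X_1 \cdots X_n$ with every $X_i \in R = \mathbf{Id}^{\mathcal{B}}_R$ and repeatedly apply the clause $\mathtt{dcmp}^{\mathcal{B}}_R(\gamma X) = \mathtt{dcmp}^{\mathcal{B}}_R(\gamma)$ (valid for $X \in R$) to peel constants off the right end, arriving at $\mathtt{dcmp}^{\mathcal{B}}_R(\epsilon) = \epsilon$. For the ``only if'' direction I would argue by contraposition: if $\alpha \notin R^{*}$, factor $\alpha = \gamma X \delta$ with $\delta$ the longest suffix of $\alpha$ lying in $R^{*}$ and $X \notin R$; stripping $\delta$ via the same clause gives $\mathtt{dcmp}^{\mathcal{B}}_R(\alpha) = \mathtt{dcmp}^{\mathcal{B}}_R(\gamma X) = \mathtt{dcmp}^{\mathcal{B}}_R(\gamma.[X]_R)$, the last step because $X \notin \mathbf{Id}^{\mathcal{B}}_R$. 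By constraint~(4) the block $[X]_R \in \mathbf{C}_R$ is either a $\mathcal{B}_R$-prime or a $\mathcal{B}_R$-composite. In the prime case the characterization gives $\mathtt{dcmp}^{\mathcal{B}}_R(\gamma.[X]_R) = \bigl(\mathtt{dcmp}^{\mathcal{B}}_{\mathbf{Rd}^{\mathcal{B}}_R([X]_R)}(\gamma)\bigr).[X]_R$, a string ending in the block $[X]_R$, hence $\neq \epsilon$. In the composite case $\mathtt{dcmp}^{\mathcal{B}}_R(\gamma.[X]_R) = \mathtt{dcmp}^{\mathcal{B}}_R(\gamma.\mathbf{Dc}_R([X]_R))$ with $\mathbf{Dc}_R([X]_R) = [X_r]_{R_r}\cdots[X_1]_{R_1}$, $r \geq 1$, $R_1 = R$, and $[X_1]_{R_1}\in\mathbf{Pr}^{\mathcal{B}}_R$; one further application of the prime clause to the rightmost block again yields a string ending in the prime block $[X_1]_{R_1}$, hence $\neq \epsilon$. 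Either way $\mathtt{dcmp}^{\mathcal{B}}_R(\alpha) \neq \epsilon$, as needed.

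The only point that needs a little care is that the recursive description of $\mathtt{dcmp}^{\mathcal{B}}_R$ is being invoked on strings that already contain blocks (after $[X]_R$ is replaced by $\mathbf{Dc}_R([X]_R)$); this is harmless, since $\stackrel{\mathcal{B}}{\rightarrow}_R$, and thus $\mathtt{dcmp}^{\mathcal{B}}_R$, is defined on strings over constants \emph{and} blocks, and the characterization lemma is stated for an arbitrary prefix $\gamma$. I do not expect to need an auxiliary induction on string length or a termination argument here: each peeling step in the ``if'' direction strictly shortens the string, and in the ``only if'' direction a single application of a prime clause already exposes a trailing block, which is all that is required.
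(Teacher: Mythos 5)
Your proof is correct, and it is essentially the argument the paper intends: the paper states this lemma without proof as a "basic fact" following from the $\mathcal{B}_R$-reduction rules (and later restates it in Section~V as "According to $\mathcal{B}_R$-reduction rules\ldots $\alpha \stackrel{\mathcal{B}}{=}_{R} \epsilon$ if and only if $\alpha \in (\mathbf{Id}^{\mathcal{B}}_R)^{*}$"), which is exactly what you spell out via the recursive characterization of $\mathtt{dcmp}^{\mathcal{B}}_R$. Your reading of the membership as Kleene closure and your reduction to the $\mathcal{B}$-admissible case via $\mathtt{dcmp}^{\mathcal{B}}_R = \mathtt{dcmp}^{\mathcal{B}}_{\mathbf{Id}^{\mathcal{B}}_R}$ both match the paper's conventions.
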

\begin{lemma}
 If $X_1, X_2 \in [X]_R$, then $X_1 \stackrel{\mathcal{B}}{=}_{R} X_2$ and $\| X_1 \|_{\stackrel{\mathcal{B}}{=}_{R}}
  = \| X_2\|_{\stackrel{\mathcal{B}}{=}_{R}}$.
\end{lemma}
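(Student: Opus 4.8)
The statement to prove is: if $X_1, X_2 \in [X]_R$, then $X_1 \stackrel{\mathcal{B}}{=}_{R} X_2$ and $\| X_1 \|_{\stackrel{\mathcal{B}}{=}_{R}} = \| X_2\|_{\stackrel{\mathcal{B}}{=}_{R}}$. The second conjunct is an immediate consequence of the first together with the basic fact that $\asymp$-equivalent processes have equal $\asymp$-norms (applied to $\asymp \,=\, \stackrel{\mathcal{B}}{=}_{R}$), so the real content is the first conjunct: all constants in an $R$-block receive the same $\mathcal{B}_R$-decomposition.

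The plan is to reduce to the $\mathcal{B}$-admissible case first. Since $\mathtt{dcmp}_R^{\mathcal{B}}(\alpha) = \mathtt{dcmp}_{\mathbf{Id}^{\mathcal{B}}_R}^{\mathcal{B}}(\alpha)$ and, by the convention on blocks, $[X]_R$ and $[X]_{\mathbf{Id}^{\mathcal{B}}_R}$ denote the same set of constants, it suffices to treat $R$ $\mathcal{B}$-admissible. Now unwind the definition: $X_1 \Longleftrightarrow_R X_2$, so there is a finite zig-zag $X_1 = Z_0 \Longleftrightarrow_R Z_1 \Longleftrightarrow_R \cdots \Longleftrightarrow_R Z_m = X_2$ in which each consecutive pair is connected by $\Longrightarrow_R$ in one direction or the other. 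By transitivity and symmetry of $\stackrel{\mathcal{B}}{=}_{R}$ it is enough to handle a single $\tau$-step: if $Z \stackrel{\tau}{\longrightarrow}_R W$ with $Z, W$ in the same $R$-block (equivalently, with $W \not\Longrightarrow_R \epsilon$, so $W$ is again a nonempty $R$-nf, and $Z \Longleftrightarrow_R W$), then $Z \stackrel{\mathcal{B}}{=}_{R} W$, i.e.\ $\mathtt{dcmp}_R^{\mathcal{B}}(Z) = \mathtt{dcmp}_R^{\mathcal{B}}(W)$. This is the crux and I expect it to be the main obstacle, because the $\mathcal{B}_R$-reduction rules do not refer directly to the transition relation $\stackrel{\tau}{\longrightarrow}_R$ at all — they only know about $\mathbf{Id}^{\mathcal{B}}_R$, $\mathbf{Pr}^{\mathcal{B}}_R$, $\mathbf{Cm}^{\mathcal{B}}_R$, $\mathbf{Dc}^{\mathcal{B}}_R$ and $\mathbf{Rd}^{\mathcal{B}}_R$. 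So the argument has to be purely combinatorial, inside the string rewriting system.

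Here is how I would push through the single-step case. Consider $[Z]_R$ as a block. If $[Z]_R \in \mathbf{Cm}^{\mathcal{B}}_R$, then since $W \in [W]_R = [Z]_R$ (they are $R$-associate) we have that $W$ lies in the same block, so $\mathtt{dcmp}_R^{\mathcal{B}}(Z) = \mathtt{dcmp}_R^{\mathcal{B}}([Z]_R) = \mathtt{dcmp}_R^{\mathcal{B}}([W]_R) = \mathtt{dcmp}_R^{\mathcal{B}}(W)$ directly from the reduction rule for single constants (they reduce to the same block, and then the block reduces deterministically to the same normal form). If $[Z]_R \in \mathbf{Pr}^{\mathcal{B}}_R$, the same reduction rule for single constants again gives $\mathtt{dcmp}_R^{\mathcal{B}}(Z) = [Z]_R = [W]_R = \mathtt{dcmp}_R^{\mathcal{B}}(W)$, since $Z$ and $W$ determine literally the same block. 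Finally, the case $Z \in \mathbf{Id}^{\mathcal{B}}_R$: then $\mathtt{dcmp}_R^{\mathcal{B}}(Z) = \epsilon$; and since $R$ is qualified, $Z \Longleftrightarrow_R \epsilon$ does not happen for $Z \notin R$, so $Z \in \mathbf{Id}^{\mathcal{B}}_R$ with $Z \Longleftrightarrow_R W$ forces $[W]_R = [Z]_R \subseteq \mathbf{Id}^{\mathcal{B}}_R$ (one should note the base constraint that $\mathbf{Id}^{\mathcal{B}}_R$ is closed under $R$-association — if the excerpt has not isolated this as a named property, it follows from constraint (2) on bases together with $\mathbf{Id}^{\mathcal{B}}_{\mathbf{Id}^{\mathcal{B}}_R} = \mathbf{Id}^{\mathcal{B}}_R$, or can be read off the Computation Lemma analogue for $\stackrel{\mathcal{B}}{=}_R$), whence $W \in \mathbf{Id}^{\mathcal{B}}_R$ and $\mathtt{dcmp}_R^{\mathcal{B}}(W) = \epsilon$ as well. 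In every case the two constants have equal $\mathcal{B}_R$-decomposition. Chaining over the zig-zag gives $X_1 \stackrel{\mathcal{B}}{=}_{R} X_2$, and the norm equality follows. The one point that needs care, and where I expect to spend the most words, is justifying that membership in $\mathbf{Id}^{\mathcal{B}}_R$ (respectively, "belonging to the same $R$-block") is stable along $\Longleftrightarrow_R$-steps using only the decomposition-base axioms — essentially, that the block structure the base is built on is exactly $R$-association, which is part of how $\mathbf{Cm}^{\mathcal{B}}_R$, $\mathbf{Pr}^{\mathcal{B}}_R$ were declared to be sets of $\mathbf{Id}^{\mathcal{B}}_R$-blocks.
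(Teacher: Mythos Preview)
Your core argument is correct, but you are working much harder than necessary. The paper gives no proof because the lemma is immediate from the reduction rules: for $\mathcal{B}$-admissible $R$ (which is the intended setting, since that is where $R$-blocks coincide with $\mathbf{Id}^{\mathcal{B}}_R$-blocks and where the algorithm operates), the hypothesis $X_1,X_2 \in [X]_R$ gives $X_1,X_2 \notin R = \mathbf{Id}^{\mathcal{B}}_R$, so the second reduction rule yields $X_i \stackrel{\mathcal{B}}{\rightarrow}_R [X_i]_R = [X]_R$ for $i=1,2$. Both constants rewrite to the \emph{same block symbol} in one step, hence have identical $\mathcal{B}_R$-normal forms; the norm equality follows. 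You eventually say exactly this in your case analysis, but the zig-zag through $\stackrel{\tau}{\longrightarrow}_R$-steps contributes nothing: the reduction rule refers to the block a constant lies in, not to any transition, so there is no need to walk along silent transitions.

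Two further remarks. First, in the $\mathcal{B}$-admissible case your third subcase ($Z \in \mathbf{Id}^{\mathcal{B}}_R$) is vacuous, since $\mathbf{Id}^{\mathcal{B}}_R = R$ and membership in $[X]_R$ already excludes $R$; the closure-under-association worry you flag is therefore moot. Second, your reduction from general $R$ to $\mathcal{B}$-admissible $R$ asserts that $[X]_R$ and $[X]_{\mathbf{Id}^{\mathcal{B}}_R}$ ``denote the same set of constants'', which is not guaranteed by the decomposition-base axioms alone (they need not coincide when $R \subsetneq \mathbf{Id}^{\mathcal{B}}_R$). What is correct is that $\stackrel{\mathcal{B}}{=}_R$ equals $\stackrel{\mathcal{B}}{=}_{\mathbf{Id}^{\mathcal{B}}_R}$ by constraint~3, and that $X_1 \Longleftrightarrow_R X_2$ with $R \subseteq S$ implies, via Lemma~\ref{lem:R_transition}, that $(X_1)_S \Longleftrightarrow_S (X_2)_S$; from there one handles the cases $X_i \in S$ versus $X_i \notin S$. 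But since the lemma is only ever used with $\mathcal{B}$-admissible $R$, this is a side issue.
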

We can write $\|[X]_R\|_{\stackrel{\mathcal{B}}{=}_{R}} = \|\widehat{X}\|_{\stackrel{\mathcal{B}}{=}_{R}}$ for any $\widehat{X} \in [X]_R$.
\begin{lemma}
$\|X\|_{\stackrel{\mathcal{B}}{=}_{R}} \geq 1$ if $R$ is $\mathcal{B}$-admissible and $X \not\in R$.
\end{lemma}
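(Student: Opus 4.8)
The plan is to read off the claim directly from the characterization of the $\stackrel{\mathcal{B}}{=}_{R}$-class of $\epsilon$, which was established just above for $\mathcal{B}$-admissible reference sets. The only thing to unwind first is what $\|X\|_{\stackrel{\mathcal{B}}{=}_{R}}=0$ means. Specializing Definition~\ref{def:semantic_norm} to $k=0$, a witness path of length zero for the $\stackrel{\mathcal{B}}{=}_{R}$-norm of $X$ is not a genuine transition sequence at all but merely the requirement $X\stackrel{\mathcal{B}}{=}_{R}\epsilon$. Hence $\|X\|_{\stackrel{\mathcal{B}}{=}_{R}}=0$ holds precisely when $X\stackrel{\mathcal{B}}{=}_{R}\epsilon$, and therefore $\|X\|_{\stackrel{\mathcal{B}}{=}_{R}}\geq 1$ holds precisely when $X\not\stackrel{\mathcal{B}}{=}_{R}\epsilon$ (an undefined or infinite norm trivially being $\geq 1$ as well).

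The second and final step is to invoke the lemma stating that $\alpha\stackrel{\mathcal{B}}{=}_{R}\epsilon$ iff $\alpha\in\mathbf{Id}^{\mathcal{B}}_{R}$, and, in the $\mathcal{B}$-admissible case, iff $\alpha\in R$. Applying this to $\alpha=X$ with $R$ $\mathcal{B}$-admissible, the hypothesis $X\notin R$ gives $X\not\stackrel{\mathcal{B}}{=}_{R}\epsilon$, whence $\|X\|_{\stackrel{\mathcal{B}}{=}_{R}}\geq 1$, which is the assertion.

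There is no real obstacle here: the statement is an immediate corollary of Definition~\ref{def:semantic_norm} together with the $\epsilon$-characterization of $\stackrel{\mathcal{B}}{=}_{R}$. The single point that warrants a word of care is that one must use the $\mathcal{B}$-admissible form of that characterization ($\alpha\stackrel{\mathcal{B}}{=}_{R}\epsilon$ iff $\alpha\in R$), not the general one ($\alpha\in\mathbf{Id}^{\mathcal{B}}_{R}$); but since $R$ is $\mathcal{B}$-admissible we have $\mathbf{Id}^{\mathcal{B}}_{R}=R$, so the two coincide and the distinction is purely cosmetic.
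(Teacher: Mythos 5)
Your proof is correct and is exactly the intended argument: unwind Definition~\ref{def:semantic_norm} at $k=0$ to see that $\|X\|_{\stackrel{\mathcal{B}}{=}_{R}}=0$ is equivalent to $X\stackrel{\mathcal{B}}{=}_{R}\epsilon$, then apply the immediately preceding characterization of the $\stackrel{\mathcal{B}}{=}_{R}$-class of $\epsilon$ for $\mathcal{B}$-admissible $R$. The paper treats this as an unproved basic fact, and your write-up supplies precisely the routine verification it has in mind.
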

\begin{lemma}\label{lem:bound_dc}
If $\stackrel{\mathcal{B}}{=}_{R}$ is a decreasing bisimulation, then the size of $\mathbf{Dc}^{\mathcal{B}}_R([X]_R)$ is exponentially bounded.
\end{lemma}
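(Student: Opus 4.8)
The plan is to bound the length of $\mathbf{Dc}^{\mathcal{B}}_R([X]_R) = [X_r]_{R_r}\ldots[X_1]_{R_1}$ by relating it to the semantic norm $\|[X]_R\|_{\stackrel{\mathcal{B}}{=}_R}$, which in turn is bounded via Lemma~\ref{lem:norm_size} by the strong norm $|X|_{\mathrm{st}}$, which is exponentially bounded by Lemma~\ref{exponential_bound_st}. So the crux is to show that each prime factor $[X_i]_{R_i}$ in the decomposition contributes at least one to the $\stackrel{\mathcal{B}}{=}_R$-norm of $[X]_R$, so that $r \le \|[X]_R\|_{\stackrel{\mathcal{B}}{=}_R} \le |X|_{\mathrm{st}}$.

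First I would use the characterization of $\mathtt{dcmp}_R^{\mathcal{B}}$ together with the definition of the reduction relation $\stackrel{\mathcal{B}}{\rightarrow}_R$ to see that the decomposition of $[X]_R$ (a $\mathcal{B}_R$-composite) reduces, factor by factor from the bottom, into the string of primes $[X_r]_{R_r}\ldots[X_1]_{R_1}$, where $R_1 = \mathbf{Id}^{\mathcal{B}}_R = R$ (since $R$ is $\mathcal{B}$-admissible) and $R_{i+1} = \mathbf{Rd}^{\mathcal{B}}_{R_i}([X_i]_{R_i})$. Since each $R_i$ is $\mathcal{B}$-admissible and each $[X_i]_{R_i} \in \mathbf{Pr}^{\mathcal{B}}_{R_i}$ with $X_i \notin R_i$, the earlier lemma gives $\|X_i\|_{\stackrel{\mathcal{B}}{=}_{R_i}} \ge 1$. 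Next I would establish an additivity property for the $\stackrel{\mathcal{B}}{=}_R$-norm analogous to the one stated in the Remark for $\simeq_R$: $\|\alpha\beta\|_{\stackrel{\mathcal{B}}{=}_R} = \|\alpha\|_{\stackrel{\mathcal{B}}{=}_{R'}} + \|\beta\|_{\stackrel{\mathcal{B}}{=}_R}$ where $R' = \mathbf{Rd}^{\mathcal{B}}_R(\mathtt{dcmp}^{\mathcal{B}}_R(\beta))$ — here using that $\stackrel{\mathcal{B}}{=}_R$ is a decreasing bisimulation so that witness paths exist and norms behave well under concatenation, and that the reference set shifts exactly by the redundancy function, which matches how $\mathtt{dcmp}$ treats a prime suffix. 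Applying this additivity inductively down the string $[X_r]_{R_r}\ldots[X_1]_{R_1}$, with the reference sets $R_i$ being precisely the ones arising in the decomposition, yields $\|[X]_R\|_{\stackrel{\mathcal{B}}{=}_R} = \sum_{i=1}^r \|[X_i]_{R_i}\|_{\stackrel{\mathcal{B}}{=}_{R_i}} \ge r$.

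Finally I would assemble the chain: $r \le \|[X]_R\|_{\stackrel{\mathcal{B}}{=}_R} = \|X\|_{\stackrel{\mathcal{B}}{=}_R} \le |X|_{\mathrm{st}}$, where the middle equality is because $\stackrel{\mathcal{B}}{=}_R$-norm is invariant under $\stackrel{\mathcal{B}}{=}_R$ and $X \stackrel{\mathcal{B}}{=}_R [X]_R$, the last inequality is Lemma~\ref{lem:norm_size} applied to the decreasing bisimulation $\stackrel{\mathcal{B}}{=}_R$, and $|X|_{\mathrm{st}}$ is exponentially bounded by Lemma~\ref{exponential_bound_st}. Since each block $[X_i]_{R_i}$ in the decomposition is itself of size bounded by $|\mathbf{C}|$ (it is a subset of a copy of $\mathbf{C}$, recorded as a single symbol), the total size of $\mathbf{Dc}^{\mathcal{B}}_R([X]_R)$ is bounded by $r$ times a polynomial, hence exponentially bounded. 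The main obstacle I anticipate is justifying the norm-additivity step cleanly: one must verify that the reference set appearing at stage $i$ when computing $\|[X_r]_{R_r}\ldots[X_1]_{R_1}\|_{\stackrel{\mathcal{B}}{=}_R}$ by peeling off suffixes is exactly $R_i$ — i.e.\ that $\mathbf{Rd}^{\mathcal{B}}$ composes along the decomposition the same way $\mathsf{Rd}_R$ does in Lemma~\ref{lem:redundant_chain} — and that $\stackrel{\mathcal{B}}{=}_R$ being a decreasing bisimulation is enough to make the semantic-norm greedy/additive argument go through without circularity.
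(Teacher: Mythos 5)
The paper gives no explicit proof of this lemma in its appendix, so there is nothing to compare against line by line; but your chain $r \le \|[X]_R\|_{\stackrel{\mathcal{B}}{=}_R} \le |X|_{\mathrm{st}} \le \text{exponential}$ is exactly the route the authors signal in the complexity section, where Lemma~\ref{lem:bound_dc} is grouped with Lemma~\ref{lem:norm_size} and Lemma~\ref{exponential_bound_st}, and it is correct. The one step you rightly flag as delicate --- full additivity of the semantic norm with the shifting reference sets $R_i$ --- can in fact be avoided: for the inequality $r \le \|X_r\ldots X_1\|_{\stackrel{\mathcal{B}}{=}_R}$ you only need superadditivity, and this follows directly from the prefix-independence of BPA transitions. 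Any witness path from $X_r\ldots X_1$ to $\epsilon$ splits uniquely into $r$ segments, the $i$-th running from $X_i\ldots X_1$ to $X_{i-1}\ldots X_1$ with the suffix untouched; since $[X_i]_{R_i}$ is a block we have $X_i \notin R_i$ by construction, so $\mathtt{dcmp}_R^{\mathcal{B}}(X_i\ldots X_1)$ and $\mathtt{dcmp}_R^{\mathcal{B}}(X_{i-1}\ldots X_1)$ are distinct strings, hence $X_i\ldots X_1 \not\stackrel{\mathcal{B}}{=}_R X_{i-1}\ldots X_1$ and each segment must contain at least one $\stackrel{\mathcal{B}}{=}_R$-decreasing transition. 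Summing over segments gives $r \le \|X_r\ldots X_1\|_{\stackrel{\mathcal{B}}{=}_R}$ without any norm-composition lemma, and the rest of your argument (norm invariance under $\stackrel{\mathcal{B}}{=}_R$, Lemma~\ref{lem:norm_size} using the decreasing-bisimulation hypothesis, Lemma~\ref{exponential_bound_st}, and the per-block size bound) goes through as you wrote it.
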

In the following the superscript $\mathcal{B}$ will often be omitted if $\mathcal{B}$ is clear  from the context. For example sometimes we write $\mathbf{Pr}_R$  for $\mathbf{Pr}^{\mathcal{B}}_R$.

\subsection{Representing $\simeq_R$ via Decomposition Base}\label{subsec:representing_simeq}

We define a decomposition base $\widehat{\mathcal{B}}$ which can represent $\simeq_R$ for every $R$.  That is,  $\alpha \stackrel{\widehat{\mathcal{B}}}{=}_R \beta$ if and only if $\alpha \simeq_R \beta$.
Theorem~\ref{thm:QUDP_RBisimularity} is crucial. Moreover, there are other subtleties which deserve to be  mentioned.
\begin{lemma}
All constants in a block $[X]_{\mathsf{Id}_{R}}$ are $R$-bisimilar.
\end{lemma}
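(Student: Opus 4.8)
The plan is to obtain this statement as a routine consequence of two facts already established, rather than by running a fresh bisimulation argument. The first is the earlier lemma asserting that all constants in a block $[Z]_S$ are $S$-bisimilar; the second is Proposition~\ref{prop:R_bis_vs_IDR}, which says that $\simeq_S$ and $\simeq_R$ coincide whenever $S=\mathsf{Id}_R$. Before invoking them I would check that $\mathsf{Id}_R$ is a legitimate reference set on which to re-run those results: the notation $[X]_{\mathsf{Id}_R}$ already presupposes $X\in\mathbf{C}\setminus\mathsf{Id}_R$, and by the lemma following Corollary~\ref{coro:monotone_2} we have $\mathsf{Id}_{\mathsf{Id}_R}=\mathsf{Id}_R$, so $\mathsf{Id}_R$ is admissible (Definition~\ref{def:admissible}) and hence qualified; therefore the conventions of Section~\ref{subsec:blocks_orders} and the block lemma all apply with reference set $\mathsf{Id}_R$.

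With that in place the derivation is short. Instantiating the block lemma with $S:=\mathsf{Id}_R$ gives, for any two constants $Y,Y'\in[X]_{\mathsf{Id}_R}$, that $Y\simeq_{\mathsf{Id}_R}Y'$. Applying Proposition~\ref{prop:R_bis_vs_IDR} then upgrades this to $Y\simeq_R Y'$, which is the claim. I do not expect any genuine obstacle along this route; the only thing to be careful about is not to confuse the parameter of the block ($\mathsf{Id}_R$) with the parameter of the bisimilarity ($R$), and it is precisely that gap which Proposition~\ref{prop:R_bis_vs_IDR} bridges.

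Should a self-contained proof be preferred, one would instead unfold $Y,Y'\in[X]_{\mathsf{Id}_R}$ into $Y\Longleftrightarrow_{\mathsf{Id}_R}Y'$ (Definition~\ref{def:R-blocks}), use the fact that $\Longrightarrow_{\mathsf{Id}_R}$ is a partial order on distinct blocks (Lemma~\ref{lem:R_order}) to reorganise the connecting zigzag so that every constant it visits inside $[X]_{\mathsf{Id}_R}$ lies on a $\Longrightarrow_{\mathsf{Id}_R}$-cycle, and then close each such cycle with the Computation Lemma for $\simeq_{\mathsf{Id}_R}$ (Lemma~\ref{lem:computation_lemma_R}). The delicate point on that route is that the zigzag may pass through non-constant $\mathsf{Id}_R$-normal-forms, so one must argue that attention can be restricted to the constant vertices of the path; this bookkeeping is exactly what the earlier block lemma already packages, which is why I would prefer the first route and treat this lemma as a corollary of it together with Proposition~\ref{prop:R_bis_vs_IDR}.
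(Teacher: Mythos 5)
Your derivation is correct and matches the route the paper intends: the lemma is an immediate consequence of the earlier block lemma (instantiated at the admissible, hence qualified, reference set $\mathsf{Id}_R$) together with Proposition~\ref{prop:R_bis_vs_IDR}, which identifies $\simeq_{\mathsf{Id}_R}$ with $\simeq_R$. Your preliminary check that $\mathsf{Id}_{\mathsf{Id}_R}=\mathsf{Id}_R$ makes $\mathsf{Id}_R$ a legitimate reference set is exactly the right point to verify before instantiating.
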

The description of $\widehat{\mathcal{B}} = \{(\mathbf{Id}_R, \mathbf{Pr}_R, \mathbf{Cm}_R ,\mathbf{Dc}_R, \mathbf{Rd}_R)\}_{R}$ relies on the family of orders $\{ <_R \}_R$ defined in Section~\ref{subsec:blocks_orders}. It contains three steps:
\begin{itemize}
\item
In the first step, we determine $\mathbf{Id}_R$ for every $R$:
$\mathbf{Id}_R = \mathsf{Id}_{R}$. According to Proposition~\ref{prop:R_monotone}, Proposition~\ref{prop:R_bis_vs_IDR} and their corollaries, $\mathbf{Id}_R$ satisfies constraints~1--3 in Section~\ref{subsec:decomposition_base}. In particular, $R$ is admissible if and only if $R$ is $\widehat{\mathcal{B}}$-admissible.

\item
In the second step, we determine other constituents of $\widehat{\mathcal{B}}_R$ for every $\widehat{\mathcal{B}}$-admissible $R$:
\begin{itemize}
\item
$\mathbf{Pr}_R = \{[X]_{R} \;|\;  X \in \mathsf{Pr}_R \mbox{ and } X \not\simeq_R Y \textrm{ for} \textrm{ every } Y <_{R} X \}$.

\item
$\mathbf{Cm}_R = \{[X]_{R} \;|\; X \in \mathsf{Cm}_R \mbox{, or } X \in \mathsf{Pr}_R \textrm{ and } X \simeq_R  Y   <_{R} X  \mbox{ for some } Y\}$.

\item
If $[X]_{R} \in  \mathbf{Pr}_R$, then
$\mathbf{Rd}_R([X]_{R}) = \{Y \;|\;  Y X \simeq_R X \}$.  Be aware that $\mathbf{Rd}_R([X]_{R})$ is admissible (also $\widehat{\mathcal{B}}$-admissible) according to Lemma~\ref{lem:Rd_admissible}.

\item
If $[X]_{R} \in  \mathbf{Cm}_R$, then $\mathbf{Dc}_R([X]_{R}) = [X_r]_{R_{r}} [X_{r-1}]_{R_{r-1}} \ldots [X_2]_{R_{2}} [X_1]_{R_{1}}$, in which $X \simeq_R X_r.X_{r-1}.\ldots.X_1$, $R_1 = R$, $[X_i]_{R_{i}} \in \mathbf{Pr}_{R_{i}}$ and $R_{i+1} = \mathbf{Rd}_{R_{i}}([X_i]_{R_{i}})$ for every $1 \leq i < r$. Thanks to  the $\widehat{\mathcal{B}}$-admissibility of $\mathbf{Rd}_{R_{i}}([X_i]_{R_{i}})$ for $1 \leq i \leq r$, $\mathbf{Dc}_R([X]_{R})$ is well-defined.
\end{itemize}

\item
In the third step, for every non-$\widehat{\mathcal{B}}$-admissible $R$,  $\widehat{\mathcal{B}}_{\mathbf{Id}_R}$ is assigned to $\widehat{\mathcal{B}}_{R}$.   That is,  $\mathbf{Pr}_R \coloneqq  \mathbf{Pr}_{\mathbf{Id}_R}$, $\mathbf{Cm}_R \coloneqq  \mathbf{Cm}_{\mathbf{Id}_R}$, and so on.
\end{itemize}
Pay special attention to the descriptions of $\mathbf{Pr}_R$ and $\mathbf{Cm}_R$. They have slightly different from $\mathsf{Pr}_R$ and
$\mathsf{Cm}_R$. Semantically, if $X \in \mathsf{Pr}_R$ and $X \simeq_R Y$, then $Y \in \mathsf{Pr}_R$. In the syntactic description of $\mathbf{Pr}_R$ and $\mathbf{Cm}_R$, we need the $\widehat{\mathcal{B}}_{R}$-primes to be absolutely unique, which is accomplished via $<_{R}$.  The orders $<_{R}$ take effects in double means:
Let $R$ be admissible, then
\begin{enumerate}
\item  Among the $R$-blocks of $\simeq_R$-primes, there is exactly one distinguished $R$-block that is qualified as a $\widehat{\mathcal{B}}_R$-prime, which is the $<_{R}$-minimum one in the related $\simeq_R$-class.

\item
 Let $[X]_{R}$ be a $\widehat{\mathcal{B}}_R$-prime. If $[X]_{R} \stackrel{\tau}{\longmapsto}_R \alpha$, then $X \not \stackrel{\widehat{\mathcal{B}}}{=}_{R} \alpha$. If $X \Longrightarrow_{R} Y \not\Longrightarrow_{R} X$, then $X \not \stackrel{\widehat{\mathcal{B}}}{=}_{R} Y$.
\end{enumerate}
Every decomposition base constructed during the refinement procedure in our algorithm will satisfy these two properties.

\begin{remark}
For realtime normed BPA (or BPP), there is an even strong property. There exists a uniform order `$<$' on all the constants such that, whenever $X \stackrel{\ell}{\longrightarrow} \alpha$ is syntactically (also semantically) decreasing or $\asymp$-preserving (for some appropriate $\asymp$), all the constants in $\alpha$ will be strictly less than $X$ in order `$<$'.
In history, this property plays a significant role in the previous fast bisimilarity decision algorithms~\cite{DBLP:journals/mscs/HirshfeldJM96,DBLP:conf/fsttcs/CzerwinskiL10}.

For totally normed BPA, we have an adaptation of this strong property, in which the condition becomes $X \stackrel{\ell}{\longrightarrow} \alpha$ is syntactically (no longer semantically) decreasing, or weak-norm-preserving (no longer $\asymp$-preserving)~\cite{DBLP:journals/corr/He14a}.

For non-realtime normed BPA systems, the above requirement is definitely too strong to be satisfied, so that the decision algorithm must be developed in some other ways. This is the origination of putting semantic norms into the algorithm.
\end{remark}

Ultimately  we have the following coincidence result.
\begin{proposition}
$\alpha \simeq_R \beta$ if and only if $\alpha \stackrel{\widehat{\mathcal{B}}}{=}_{R} \beta$.
\end{proposition}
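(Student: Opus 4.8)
\emph{Proof plan.} The first move is to reduce to the case that $R$ is admissible. If $R$ is not admissible, then $\simeq_R = \simeq_{\mathsf{Id}_R}$ by Proposition~\ref{prop:R_bis_vs_IDR}, while the closing clause of the characterization of $\mathtt{dcmp}$ together with the construction $\mathbf{Id}_R = \mathsf{Id}_R$ gives ${\stackrel{\widehat{\mathcal{B}}}{=}_R} = {\stackrel{\widehat{\mathcal{B}}}{=}_{\mathsf{Id}_R}}$; so the claim for $R$ follows from the claim for the admissible set $\mathsf{Id}_R$. From now on $R = \mathsf{Id}_R = \mathbf{Id}_R$. For a string of blocks $\sigma = [Y_m]_{S_m}\cdots[Y_1]_{S_1}$ write $\widehat{\sigma}$ for the process $Y_m\cdots Y_1$ obtained by a fixed choice of block representatives; by the fact that all constants in a block are $\simeq_R$-bisimilar (a consequence of the Computation Lemma, Lemma~\ref{lem:computation_lemma_R}) together with suffix-congruence (Lemma~\ref{lem:congruence_R_bis}), the $\simeq_R$-class of $\widehat{\sigma}$ does not depend on that choice.

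Everything then hinges on two lemmas. \textbf{(A) Soundness of decomposition:} $\alpha \simeq_R \widehat{\mathtt{dcmp}_R^{\widehat{\mathcal{B}}}(\alpha)}$ for every $\alpha$. \textbf{(B) Primeness of normal forms:} if $\mathtt{dcmp}_R^{\widehat{\mathcal{B}}}(\alpha) = [Y_m]_{S_m}\cdots[Y_1]_{S_1}$, then $S_1 = R$, $S_{i+1} = \mathsf{Rd}_{S_i}(Y_i)$, and each $Y_i$ is a $\simeq_{S_i}$-prime, so that by (A) the string $Y_m\cdots Y_1$ is a $\simeq_R$-prime decomposition of $\alpha$ in the sense of Definition~\ref{def:R_prime_decomposition}. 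Granting these, ``if'' is immediate: if the two normal forms coincide, then $\alpha \simeq_R \widehat{\mathtt{dcmp}_R^{\widehat{\mathcal{B}}}(\alpha)} = \widehat{\mathtt{dcmp}_R^{\widehat{\mathcal{B}}}(\beta)} \simeq_R \beta$ by (A). For ``only if'', assume $\alpha \simeq_R \beta$; by (A) and (B) the readings of the two normal forms are $\simeq_R$-prime decompositions of $\simeq_R$-equivalent processes, so Theorem~\ref{thm:QUDP_RBisimularity} forces $m = n$, $S_i = S_i'$, and $Y_i \simeq_{S_i} Z_i$ for every $i$; since both $[Y_i]_{S_i}$ and $[Z_i]_{S_i}$ lie in $\mathbf{Pr}_{S_i}$ and the construction of $\widehat{\mathcal{B}}$ places exactly one $\widehat{\mathcal{B}}_{S_i}$-prime block (the $<_{S_i}$-minimal one) in each $\simeq_{S_i}$-class of primes, we get $[Y_i]_{S_i} = [Z_i]_{S_i}$; hence the normal forms agree and $\alpha \stackrel{\widehat{\mathcal{B}}}{=}_R \beta$.

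To prove (A) I would show, by induction on the derivation of a single rewrite step $\eta \stackrel{\widehat{\mathcal{B}}}{\rightarrow}_S \eta'$ (ranging over all reference sets $S$ simultaneously), that $\widehat{\eta} \simeq_S \widehat{\eta'}$; since $\stackrel{\widehat{\mathcal{B}}}{\rightarrow}$ terminates, iterating this yields (A). The rule $X \stackrel{\widehat{\mathcal{B}}}{\rightarrow}_S \epsilon$ for $X \in \mathbf{Id}_S = \mathsf{Id}_S$ uses $X \simeq_S \epsilon$; the rule $X \stackrel{\widehat{\mathcal{B}}}{\rightarrow}_S [X]_S$ is trivial; the rule $[X]_S \stackrel{\widehat{\mathcal{B}}}{\rightarrow}_S \mathbf{Dc}_S([X]_S)$ uses that, by construction, a representative of $[X]_S$ is $\simeq_S$ to the corresponding composition of prime representatives; the context rule $\alpha.\beta \stackrel{\widehat{\mathcal{B}}}{\rightarrow}_S \alpha.\beta'$ uses suffix-congruence (Lemma~\ref{lem:congruence_R_bis}). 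The delicate rule is $\alpha.[X]_S \stackrel{\widehat{\mathcal{B}}}{\rightarrow}_S \alpha'.[X]_S$ with premise $\alpha \stackrel{\widehat{\mathcal{B}}}{\rightarrow}_{S'} \alpha'$ and $S' = \mathbf{Rd}_S([X]_S) = \mathsf{Rd}_S(X)$: the induction hypothesis applied to the premise gives $\widehat{\alpha} \simeq_{S'} \widehat{\alpha'}$, Theorem~\ref{thm:relative_bis_str} with $\gamma = X$ upgrades this to $\widehat{\alpha}.X \simeq_S \widehat{\alpha'}.X$, and suffix-congruence then lets one replace $X$ by any representative of $[X]_S$.

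For (B): a $\stackrel{\widehat{\mathcal{B}}}{\rightarrow}_R$-normal form can contain no block whose underlying constant lies in $\mathbf{Id}_S$ (such blocks rewrite to $\epsilon$) and no $\mathbf{Cm}_S$-block (such blocks rewrite via $\mathbf{Dc}_S$), so every $[Y_i]_{S_i}$ is a $\widehat{\mathcal{B}}_{S_i}$-prime, hence $Y_i \in \mathsf{Pr}_{S_i}$ by the description of $\mathbf{Pr}_{S_i}$; the reference chain $S_1 = \mathsf{Id}_R = R$ and $S_{i+1} = \mathbf{Rd}_{S_i}([Y_i]_{S_i}) = \mathsf{Rd}_{S_i}(Y_i)$ reads off the prime-context rewrite rule together with the construction of $\mathbf{Rd}$ (each $S_i$ being admissible by Lemma~\ref{lem:Rd_admissible}), which is precisely Definition~\ref{def:R_prime_decomposition}. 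The main obstacle is the bookkeeping inside (A): correctly carrying the reference set that is in force underneath a peeled-off prime block, and deploying the cancellation law Theorem~\ref{thm:relative_bis_str} (its iterated form, Lemma~\ref{lem:redundant_chain}, if one prefers an explicit chain) so that the simultaneous induction over all reference sets closes up; a secondary but essential point is that ``only if'' genuinely relies on the uniqueness of the $\widehat{\mathcal{B}}_R$-prime inside each $\simeq_R$-class, which is exactly the feature the orders $<_R$ were engineered to provide.
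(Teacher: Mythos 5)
Your proposal is correct and follows exactly the route the paper intends: the paper gives no separate proof of this proposition but flags Theorem~\ref{thm:QUDP_RBisimularity} as the crucial ingredient, and your plan assembles precisely the pieces the construction of $\widehat{\mathcal{B}}$ was engineered to supply (soundness of each $\stackrel{\widehat{\mathcal{B}}}{\rightarrow}_S$-step via Lemma~\ref{lem:congruence_R_bis} and the cancellation law Theorem~\ref{thm:relative_bis_str}, normal forms being $\simeq_R$-prime decompositions, uniqueness from Theorem~\ref{thm:QUDP_RBisimularity}, and the $<_{S_i}$-minimality forcing literal equality of prime blocks). No gaps.
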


\section{Description of the Algorithm}\label{sec:naive-algorithm}

Our algorithm takes the partition refinement approach. The purpose  is to figure out the $\widehat{\mathcal{B}}$ defined in Section~\ref{subsec:representing_simeq}.  The strategy is to
start with a special initial base $\mathcal{B}_0$ satisfying $\widehat{\mathcal{B}} \subseteq \mathcal{B}_0$ and iteratively refine it.  We will use notation $\mathcal{B} \subseteq \mathcal{D}$ to  mean that ${\stackrel{\mathcal{B}}{=}_{R}}  \subseteq {\stackrel{\mathcal{D}}{=}_{R}}$ for every $R$.
The refinement operation will be denoted by $\mathsf{Ref}$. By taking $\mathcal{B}_{i+1} = \mathsf{Ref}(\mathcal{B}_i)$, we have a sequence of decomposition bases 
\[
\mathcal{B}_0,  \mathcal{B}_1 , \mathcal{B}_2 ,  \ldots
\]
such that 
\[
{\mathcal{B}_0} \supseteq  {\mathcal{B}_1} \supseteq {\mathcal{B}_2} \supseteq  \ldots.
\]
The correctness of the refinement operation adopted in this paper depends on the following requirements, which will be proved gradually:
\begin{enumerate}
\item
$\widehat{\mathcal{B}} \subseteq {\mathcal{B}_0}$.

\item
$\mathsf{Ref}(\widehat{\mathcal{B}} ) = {\widehat{\mathcal{B}} }$.

\item
If ${\widehat{\mathcal{B}} } \subsetneq {\mathcal{B}}$, then $ {\widehat{\mathcal{B}} } \subseteq \mathsf{Ref}(\mathcal{B}) \subsetneq {\mathcal{B}}$.
\end{enumerate}
According to the above three requirements, once the sequence $\{\mathcal{B}_i\}_{i\in \omega}$ becomes stable, say ${\mathcal{B}_i} =  {\mathcal{B}_{i+1}}$ for some $i$, we can affirm that ${\widehat{\mathcal{B}}} = {\mathcal{B}_i}$.

On the whole, our algorithm is an iteration:
\begin{center}
\small
 \begin{tabular}{|p{8.2cm}|}\hline \vspace{-1ex}
\begin{enumerate}
\item
Compute the initial base $\mathcal{B}_0$ and let $\mathcal{D} \coloneqq  \mathcal{B}_0$.

\item
Compute the new base $\mathcal{B}$ from the old base $\mathcal{D}$.

\item
If  $\mathcal{B}$ equals $\mathcal{D}$ then halt and return $\mathcal{B}$.

\item
$\mathcal{D} \coloneqq  \mathcal{B}$ and go to step 2. \vspace{-2ex}
\end{enumerate}
       \\ \hline
\end{tabular}
\end{center}

Apparently, the algorithm relies on the initial base  and the refinement step which computes $\mathcal{B} = \mathsf{Ref}(\mathcal{D})$ from $\mathcal{D}$.

\subsection{Relationships between Old and New Bases}\label{subsec:naive-algorithm}
Before describing the algorithm in details, we investigate the relationship between two bases $\mathcal{B}$ and $\mathcal{D}$ assume that $\mathcal{B} \subseteq \mathcal{D}$.

\begin{lemma}\label{lem:compare_ID}
If $\mathcal{B} \subseteq  \mathcal{D}$, then
$\mathbf{Id}^{\mathcal{B}}_R  \subseteq \mathbf{Id}^{\mathcal{D}}_R$ for every $R$.
\end{lemma}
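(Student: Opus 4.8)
The plan is to unfold the definitions of $\mathbf{Id}^{\mathcal{B}}_R$ and $\mathbf{Id}^{\mathcal{D}}_R$ in terms of the respective reduction relations, and then leverage the hypothesis ${\stackrel{\mathcal{B}}{=}_{R}} \subseteq {\stackrel{\mathcal{D}}{=}_{R}}$ directly. Recall from Section~\ref{subsec:decomposition_base} the basic fact that, for any decomposition base $\mathcal{E}$, we have $\alpha \stackrel{\mathcal{E}}{=}_R \epsilon$ if and only if $\alpha \in \mathbf{Id}^{\mathcal{E}}_R$ (the lemma stating ``$\alpha \stackrel{\mathcal{B}}{=}_{R} \epsilon$ if and only if $\alpha \in \mathbf{Id}^{\mathcal{B}}_R$''). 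In particular, for a single constant $X$, we have $X \in \mathbf{Id}^{\mathcal{E}}_R$ iff $X \stackrel{\mathcal{E}}{=}_R \epsilon$.

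So the argument runs as follows. First I would fix an arbitrary $R \subseteq \mathbf{C}_{\mathrm{G}}$ and an arbitrary $X \in \mathbf{Id}^{\mathcal{B}}_R$. By the characterization just quoted, this means $X \stackrel{\mathcal{B}}{=}_R \epsilon$. By hypothesis ${\stackrel{\mathcal{B}}{=}_{R}} \subseteq {\stackrel{\mathcal{D}}{=}_{R}}$, hence $X \stackrel{\mathcal{D}}{=}_R \epsilon$. Applying the same characterization in the other direction (now for $\mathcal{D}$), we conclude $X \in \mathbf{Id}^{\mathcal{D}}_R$. Since $X$ and $R$ were arbitrary, $\mathbf{Id}^{\mathcal{B}}_R \subseteq \mathbf{Id}^{\mathcal{D}}_R$ for every $R$, as required.

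I do not anticipate any real obstacle here: the statement is essentially an immediate translation of the containment of equivalences into the containment of identity sets, mediated by the $\mathbf{Id}$-characterization lemma. The only point that needs a moment of care is making sure the quoted characterization ``$\alpha \stackrel{\mathcal{E}}{=}_R \epsilon \iff \alpha \in \mathbf{Id}^{\mathcal{E}}_R$'' holds for \emph{every} $R$ (not merely $\mathcal{E}$-admissible $R$), which it does as stated in the excerpt, so nothing special needs to be done for non-admissible $R$. If one preferred to avoid invoking that lemma, one could instead argue directly from the reduction rules that $\mathtt{dcmp}^{\mathcal{E}}_R(X) = \epsilon$ precisely when $X \in \mathbf{Id}^{\mathcal{E}}_R$, but using the already-established lemma is cleaner.
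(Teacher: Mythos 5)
Your proposal is correct and matches the paper's own proof essentially verbatim: both take $X \in \mathbf{Id}^{\mathcal{B}}_R$, observe $X \stackrel{\mathcal{B}}{=}_{R} \epsilon$, apply the hypothesis ${\stackrel{\mathcal{B}}{=}_{R}} \subseteq {\stackrel{\mathcal{D}}{=}_{R}}$ to get $X \stackrel{\mathcal{D}}{=}_{R} \epsilon$, and conclude $X \in \mathbf{Id}^{\mathcal{D}}_R$. No issues.
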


\begin{remark}
An interesting consequence according to Lemma~\ref{lem:compare_ID} is that,  if $R = \mathbf{Id}^{\mathcal{D}}_R$, then
$R = \mathbf{Id}^{\mathcal{B}}_R$ must hold because $R \subseteq \mathbf{Id}^{\mathcal{B}}_R  \subseteq \mathbf{Id}^{\mathcal{D}}_R$.  This confirms the fact that, during the iteration of refinement, once a reference set $R$ becomes $\mathcal{B}$-admissible, it preserves the admissibility in the future.
\end{remark}

\begin{lemma}\label{lem:compare_PR}
If $\mathcal{B} \subseteq  \mathcal{D}$ and $\mathbf{Id}^{\mathcal{B}}_R  = \mathbf{Id}^{\mathcal{D}}_R$, then $\mathbf{Pr}^{\mathcal{D}}_R \subseteq \mathbf{Pr}^{\mathcal{B}}_R$.
\end{lemma}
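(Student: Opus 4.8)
The first move is to reduce to the case where the common reference set $R$ is admissible for both bases. Since $\mathbf{Id}^{\mathcal{B}}_R = \mathbf{Id}^{\mathcal{D}}_R =: R'$, constraint~2 of Section~\ref{subsec:decomposition_base} gives $\mathbf{Id}^{\mathcal{B}}_{R'} = \mathbf{Id}^{\mathcal{B}}_R = R' = \mathbf{Id}^{\mathcal{D}}_R = \mathbf{Id}^{\mathcal{D}}_{R'}$, so $R'$ is both $\mathcal{B}$-admissible and $\mathcal{D}$-admissible; and by constraint~3, $\mathbf{Pr}^{\mathcal{B}}_R = \mathbf{Pr}^{\mathcal{B}}_{R'}$ and $\mathbf{Pr}^{\mathcal{D}}_R = \mathbf{Pr}^{\mathcal{D}}_{R'}$, while the hypotheses still hold for $R'$. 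So I may assume $R = \mathbf{Id}^{\mathcal{B}}_R = \mathbf{Id}^{\mathcal{D}}_R$. Then the two bases share the same block partition $\mathbf{C}_R$ of $\mathbf{C}\setminus R$, and by constraint~4 each of $\mathbf{Pr}^{\mathcal{B}}_R,\mathbf{Cm}^{\mathcal{B}}_R$ and $\mathbf{Pr}^{\mathcal{D}}_R,\mathbf{Cm}^{\mathcal{D}}_R$ partitions $\mathbf{C}_R$. Hence the desired inclusion $\mathbf{Pr}^{\mathcal{D}}_R \subseteq \mathbf{Pr}^{\mathcal{B}}_R$ is equivalent to $\mathbf{Cm}^{\mathcal{B}}_R \subseteq \mathbf{Cm}^{\mathcal{D}}_R$; that is, it suffices to show that every $\mathcal{B}_R$-composite block is a $\mathcal{D}_R$-composite block. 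Finally, I record the elementary fact (read off from the reduction rules for admissible $R$ and $X\notin R$) that a block $[X]_R$ is a $\mathcal{D}_R$-prime if and only if $\mathtt{dcmp}^{\mathcal{D}}_R(X) = [X]_R$, and dually $\mathtt{dcmp}^{\mathcal{D}}_R(X) = \mathbf{Dc}^{\mathcal{D}}_R([X]_R)\neq [X]_R$ when $[X]_R$ is a $\mathcal{D}_R$-composite (the right-hand side, being a string of $\mathcal{D}_R$-primes, cannot equal the single composite block $[X]_R$). So the goal becomes: if $[X]_R\in\mathbf{Cm}^{\mathcal{B}}_R$ then $\mathtt{dcmp}^{\mathcal{D}}_R(X)\neq [X]_R$.

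The engine of the argument is that the $\mathcal{D}$-reduction factors through the $\mathcal{B}$-reduction. Indeed, $X \stackrel{\mathcal{B}}{=}_R \mathtt{dcmp}^{\mathcal{B}}_R(X)$ trivially, so $\mathcal{B}\subseteq\mathcal{D}$ yields $X \stackrel{\mathcal{D}}{=}_R \mathtt{dcmp}^{\mathcal{B}}_R(X)$, i.e.
\[
\mathtt{dcmp}^{\mathcal{D}}_R(X) \;=\; \mathtt{dcmp}^{\mathcal{D}}_R\bigl(\mathtt{dcmp}^{\mathcal{B}}_R(X)\bigr).
\]
Now assume, for contradiction, $[X]_R\in\mathbf{Cm}^{\mathcal{B}}_R$ and $\mathtt{dcmp}^{\mathcal{D}}_R(X)=[X]_R$. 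By the characterization of $\mathtt{dcmp}^{\mathcal{B}}_R$, $\beta := \mathtt{dcmp}^{\mathcal{B}}_R(X) = \mathbf{Dc}^{\mathcal{B}}_R([X]_R) = [X_r]_{R_r}\ldots[X_1]_{R_1}$ with $r\geq 1$, $R_1 = R$, each $[X_i]_{R_i}\in\mathbf{Pr}^{\mathcal{B}}_{R_i}$, $R_{i+1}=\mathbf{Rd}^{\mathcal{B}}_{R_i}([X_i]_{R_i})$, and $\mathtt{dcmp}^{\mathcal{D}}_R(\beta)=[X]_R$.

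When $r=1$ this closes quickly using the order $<_R$. Here $\beta=[X_1]_R\in\mathbf{Pr}^{\mathcal{B}}_R$, and $\mathtt{dcmp}^{\mathcal{D}}_R([X_1]_R)=[X]_R$. If $[X_1]_R\in\mathbf{Pr}^{\mathcal{D}}_R$ then $[X_1]_R=[X]_R$, contradicting $[X_1]_R\in\mathbf{Pr}^{\mathcal{B}}_R$ while $[X]_R\in\mathbf{Cm}^{\mathcal{B}}_R$; if $[X_1]_R\in\mathbf{Cm}^{\mathcal{D}}_R$ then $\mathbf{Dc}^{\mathcal{D}}_R([X_1]_R)=[X]_R$, so by the ``strictly decreasing'' invariant maintained by every base in the iteration (end of Section~\ref{subsec:representing_simeq}: a length-one decomposition lands on a strictly $<$-smaller prime) we get $[X]_R <_R [X_1]_R$; on the other hand $[X]_R\in\mathbf{Cm}^{\mathcal{B}}_R$ with $\mathtt{dcmp}^{\mathcal{B}}_R([X]_R)=\mathtt{dcmp}^{\mathcal{B}}_R(X_1)=[X_1]_R$ forces $\mathbf{Dc}^{\mathcal{B}}_R([X]_R)=[X_1]_R$, hence the same invariant for $\mathcal{B}$ gives $[X_1]_R<_R[X]_R$, contradicting linearity of $<_R$.

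\textbf{Main obstacle.} The remaining case $r\geq 2$ is where the real work lies. Reducing the block string $\beta$ under $\stackrel{\mathcal{D}}{\rightarrow}_R$ is awkward because the intermediate reference sets recorded in $\beta$ are $R_{i+1}=\mathbf{Rd}^{\mathcal{B}}_{R_i}([X_i]_{R_i})$, which are $\mathcal{B}$-admissible but need \emph{not} be $\mathcal{D}$-admissible, whereas the $\mathcal{D}$-reduction of $\beta$ threads through $\mathbf{Rd}^{\mathcal{D}}$-sets and their $\mathbf{Id}^{\mathcal{D}}$-closures. I would handle this by strong induction on $\|X\|_{\stackrel{\mathcal{D}}{=}_R}$ — a legitimate measure since $\|\cdot\|_{\stackrel{\mathcal{D}}{=}_R}\leq\|\cdot\|_{\stackrel{\mathcal{B}}{=}_R}$ (more identifications can only shorten witness paths, and both $\stackrel{\mathcal{B}}{=}_R,\stackrel{\mathcal{D}}{=}_R$ are decreasing bisimulations) and strictly drops when a nonempty proper prefix is split off. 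The induction step peels the trailing block $[X_1]_R$ off $\beta$: either $[X_1]_R$ is already $\mathcal{D}_R$-prime, in which case $\mathtt{dcmp}^{\mathcal{D}}_R(\beta)$ ends in $[X_1]_R\neq[X]_R$, a contradiction; or it is $\mathcal{D}_R$-composite, in which case one rewrites it by $\mathbf{Dc}^{\mathcal{D}}_R$, re-synchronizes the reference-set data at the freshly produced blocks using Lemma~\ref{lem:compare_ID} (so that $\mathbf{Id}^{\mathcal{B}}$ and $\mathbf{Id}^{\mathcal{D}}$ agree where needed and admissibility is available), and applies the induction hypothesis to the strictly smaller remainder. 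Carrying out this bookkeeping cleanly — in particular verifying that, after each peeling step, the relevant new reference sets are $\mathcal{D}$-admissible (or can be replaced by their $\mathbf{Id}^{\mathcal{D}}$-closures without affecting the decomposition) and that the ``strictly decreasing'' invariant still applies — is the delicate part of the proof.
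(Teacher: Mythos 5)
Your reduction to the case $R=\mathbf{Id}^{\mathcal{B}}_R=\mathbf{Id}^{\mathcal{D}}_R$ and the reformulation as $\mathbf{Cm}^{\mathcal{B}}_R\subseteq\mathbf{Cm}^{\mathcal{D}}_R$ both match the paper. But the proof is not complete: the case $r\geq 2$, which you yourself identify as ``where the real work lies,'' is only a plan. You propose an induction on $\|\cdot\|_{\stackrel{\mathcal{D}}{=}_R}$ that peels blocks off $\beta$ and ``re-synchronizes the reference-set data,'' and then explicitly defer the delicate part (checking that the intermediate reference sets $R_{i+1}=\mathbf{Rd}^{\mathcal{B}}_{R_i}([X_i]_{R_i})$, which need not be $\mathcal{D}$-admissible, can be traded for their $\mathbf{Id}^{\mathcal{D}}$-closures without disturbing the decomposition). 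That is precisely the step that would have to be carried out, so as written there is a genuine gap in the main case.

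The gap is also self-inflicted: you do not need to push the whole block string $[X_r]_{R_r}\ldots[X_1]_{R_1}$ through the $\mathcal{D}$-reduction, so the mismatch of intermediate reference sets never has to be confronted. The paper's argument only touches the \emph{trailing} prime. Write $X\stackrel{\mathcal{B}}{=}_R\alpha.P$ with $[P]_R=[X_1]_{R_1}\in\mathbf{Pr}^{\mathcal{B}}_R$ and $[P]_R\neq[X]_R$; crucially $R_1=R$, which is admissible for both bases, so $[P]_R$ is a legitimate block on the $\mathcal{D}$ side with no translation needed. From $\mathcal{B}\subseteq\mathcal{D}$ one gets $X\stackrel{\mathcal{D}}{=}_R\alpha.P$. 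If $[P]_R\in\mathbf{Pr}^{\mathcal{D}}_R$, the $\mathcal{D}_R$-decomposition of $X$ ends in $[P]_R\neq[X]_R$ and $[X]_R$ cannot be a $\mathcal{D}_R$-prime; if $[P]_R\in\mathbf{Cm}^{\mathcal{D}}_R$, one further rewrite $P\stackrel{\mathcal{D}}{=}_R\alpha'.P'$ with $[P']_R\in\mathbf{Pr}^{\mathcal{D}}_R$ gives the same conclusion. One step, no induction, and the sets $R_2,\ldots,R_r$ never appear. A smaller issue: your $r=1$ subcase leans on the ``length-one decomposition lands on a strictly $<_R$-smaller prime'' invariant, which is not among the five constraints defining a decomposition base in Section~\ref{subsec:decomposition_base} but only a property the paper asserts of the bases arising in the algorithm; if you use it you should state it as an explicit hypothesis.
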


\begin{lemma}\label{lem:compare_RD}
If $\mathcal{B} \subseteq  \mathcal{D}$,  $\mathbf{Id}^{\mathcal{B}}_R  = \mathbf{Id}^{\mathcal{D}}_R$, and $\mathbf{Pr}^{\mathcal{B}}_R = \mathbf{Pr}^{\mathcal{D}}_R$, then  $\mathbf{Rd}^{\mathcal{B}}_R \subseteq \mathbf{Rd}^{\mathcal{D}}_R$.
\end{lemma}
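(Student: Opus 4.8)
The plan is to reduce everything to one observation: for a decomposition base $\mathcal{B}$ and a $\mathcal{B}$-admissible $R$, the value $\mathbf{Rd}^{\mathcal{B}}_R([X]_R)$ is recoverable from the induced equivalence $\stackrel{\mathcal{B}}{=}_R$ alone, namely $Y \in \mathbf{Rd}^{\mathcal{B}}_R([X]_R)$ if and only if $YX \stackrel{\mathcal{B}}{=}_R X$. Granting this \emph{semantic characterization of redundancy}, the lemma is almost immediate, since $\mathcal{B}\subseteq\mathcal{D}$ means precisely $\stackrel{\mathcal{B}}{=}_R\subseteq\stackrel{\mathcal{D}}{=}_R$ for every $R$, and the hypotheses $\mathbf{Id}^{\mathcal{B}}_R = \mathbf{Id}^{\mathcal{D}}_R$ and $\mathbf{Pr}^{\mathcal{B}}_R = \mathbf{Pr}^{\mathcal{D}}_R$ make the two functions share the same domain (the common set of $R$-prime blocks), so that the asserted pointwise inclusion is meaningful.

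First I would dispatch the bookkeeping. Using constraint~3 ($\mathcal{B}_R = \mathcal{B}_{\mathbf{Id}^{\mathcal{B}}_R}$, and likewise for $\mathcal{D}$) together with $\mathbf{Id}^{\mathcal{B}}_R = \mathbf{Id}^{\mathcal{D}}_R$, one checks that all three hypotheses transfer verbatim from $R$ to $\mathbf{Id}^{\mathcal{B}}_R$, and that $\mathbf{Rd}^{\mathcal{B}}_R = \mathbf{Rd}^{\mathcal{B}}_{\mathbf{Id}^{\mathcal{B}}_R}$ and $\mathbf{Rd}^{\mathcal{D}}_R = \mathbf{Rd}^{\mathcal{D}}_{\mathbf{Id}^{\mathcal{D}}_R}$; hence it suffices to treat the case where $R$ is $\mathcal{B}$-admissible, hence also $\mathcal{D}$-admissible, so that $\mathbf{Id}^{\mathcal{B}}_R = R = \mathbf{Id}^{\mathcal{D}}_R$ and, for every $X\in\mathbf{C}\setminus R$, the block $[X]_R$ is literally the same syntactic object for the two bases (blocks are defined from $\Longleftrightarrow_R$ and do not depend on the base).

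Next I would prove the characterization by unfolding the recursive description of $\mathtt{dcmp}$. Fix $[X]_R\in\mathbf{Pr}^{\mathcal{B}}_R$ and put $R' = \mathbf{Rd}^{\mathcal{B}}_R([X]_R)$, which is $\mathcal{B}$-admissible by constraint~5. Since $X\notin R$ and $[X]_R$ is a $\mathcal{B}_R$-prime, the recursive rules give $\mathtt{dcmp}_R^{\mathcal{B}}(X) = [X]_R$ and $\mathtt{dcmp}_R^{\mathcal{B}}(YX) = \bigl(\mathtt{dcmp}_{R'}^{\mathcal{B}}(Y)\bigr).[X]_R$ for every constant $Y$. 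Consequently $YX \stackrel{\mathcal{B}}{=}_R X$ iff $\mathtt{dcmp}_{R'}^{\mathcal{B}}(Y) = \epsilon$, i.e.\ iff $Y \stackrel{\mathcal{B}}{=}_{R'} \epsilon$, i.e.\ iff $Y\in\mathbf{Id}^{\mathcal{B}}_{R'} = R' = \mathbf{Rd}^{\mathcal{B}}_R([X]_R)$, where the last two equalities use the basic fact that $\alpha\stackrel{\mathcal{B}}{=}_{R'}\epsilon$ iff $\alpha\in\mathbf{Id}^{\mathcal{B}}_{R'}$ together with the $\mathcal{B}$-admissibility of $R'$. The same computation run with $\mathcal{D}$ in place of $\mathcal{B}$ is valid because $[X]_R\in\mathbf{Pr}^{\mathcal{D}}_R$ by hypothesis.

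Finally I would conclude: if $Y\in\mathbf{Rd}^{\mathcal{B}}_R([X]_R)$ then $YX\stackrel{\mathcal{B}}{=}_R X$, hence $YX\stackrel{\mathcal{D}}{=}_R X$ because $\stackrel{\mathcal{B}}{=}_R\subseteq\stackrel{\mathcal{D}}{=}_R$, hence $Y\in\mathbf{Rd}^{\mathcal{D}}_R([X]_R)$; as $[X]_R$ was an arbitrary prime block, $\mathbf{Rd}^{\mathcal{B}}_R\subseteq\mathbf{Rd}^{\mathcal{D}}_R$. I do not expect a genuine obstacle here: the only delicate points are the bookkeeping in the reduction to admissible $R$ (so that the blocks $[X]_R$ genuinely coincide for both bases) and keeping track that $R'$ is admissible so the elementary facts about $\stackrel{\mathcal{B}}{=}$ apply. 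If the semantic characterization of $\mathbf{Rd}^{\mathcal{B}}_R$ is not already recorded, I would first isolate it as a short standalone lemma, since it is reused in the analysis of the refinement step.
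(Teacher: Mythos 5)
Your proposal is correct and follows essentially the same route as the paper: reduce to $\mathcal{B}$-admissible $R$, characterize $Y\in\mathbf{Rd}^{\mathcal{B}}_R([X]_R)$ as $YX\stackrel{\mathcal{B}}{=}_R X$ via the reduction rules, and conclude from ${\stackrel{\mathcal{B}}{=}_{R}}\subseteq{\stackrel{\mathcal{D}}{=}_{R}}$. The paper merely asserts the characterization "according to the rules of $\mathcal{B}_R$-reduction," whereas you spell out the unfolding of $\mathtt{dcmp}$; the content is the same.
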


\begin{lemma}\label{lem:compare_all}
If $\mathcal{B} \subseteq  \mathcal{D}$, and moreover $\mathbf{Id}^{\mathcal{B}}_R  = \mathbf{Id}^{\mathcal{D}}_R$, $\mathbf{Pr}^{\mathcal{B}}_R = \mathbf{Pr}^{\mathcal{D}}_R$, and $\mathbf{Rd}^{\mathcal{B}}_R = \mathbf{Rd}^{\mathcal{D}}_R$ for every $R$,  then $\mathcal{B} =  \mathcal{D}$.
\end{lemma}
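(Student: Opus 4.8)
The plan is to show that, under the stated hypotheses, the two bases agree component-by-component on every reference set $R$, which by the very definition of a decomposition base (a family of quintuples) forces $\mathcal{B} = \mathcal{D}$. By constraint~3 in Section~\ref{subsec:decomposition_base} we have $\mathcal{B}_R = \mathcal{B}_{\mathbf{Id}^{\mathcal{B}}_R}$ and $\mathcal{D}_R = \mathcal{D}_{\mathbf{Id}^{\mathcal{D}}_R}$, and since $\mathbf{Id}^{\mathcal{B}}_R = \mathbf{Id}^{\mathcal{D}}_R$ for every $R$, it suffices to check the equality $\mathcal{B}_R = \mathcal{D}_R$ for every $\mathcal{B}$-admissible (equivalently $\mathcal{D}$-admissible) $R$. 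So fix such an $R$. The identity components coincide by hypothesis; the prime components coincide by hypothesis; the redundancy functions $\mathbf{Rd}^{\mathcal{B}}_R$ and $\mathbf{Rd}^{\mathcal{D}}_R$ coincide by hypothesis (and in particular they have the same domain, namely $\mathbf{Pr}^{\mathcal{B}}_R = \mathbf{Pr}^{\mathcal{D}}_R$). By constraint~4, $\mathbf{Cm}^{\mathcal{B}}_R = \mathbf{C}_R \setminus \mathbf{Pr}^{\mathcal{B}}_R = \mathbf{C}_R \setminus \mathbf{Pr}^{\mathcal{D}}_R = \mathbf{Cm}^{\mathcal{D}}_R$, so the composite components coincide as well, and in particular $\mathbf{Dc}^{\mathcal{B}}_R$ and $\mathbf{Dc}^{\mathcal{D}}_R$ share the common domain $\mathbf{Cm}^{\mathcal{B}}_R$.

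The one remaining point — and the only place where real work is needed — is to show $\mathbf{Dc}^{\mathcal{B}}_R([X]_R) = \mathbf{Dc}^{\mathcal{D}}_R([X]_R)$ for every composite block $[X]_R$. Here I would argue by induction on the $\stackrel{\mathcal{B}}{=}_R$-norm (or equivalently on a suitable well-founded measure such as the weak norm $|X|_{\mathrm{wk}}$, which is common to both bases since $\mathbf{Id}^{\mathcal{B}}_R = \mathbf{Id}^{\mathcal{D}}_R$ controls the ground constants). Write $\mathbf{Dc}^{\mathcal{B}}_R([X]_R) = [X_r]_{R_r}\ldots[X_1]_{R_1}$ and $\mathbf{Dc}^{\mathcal{D}}_R([X]_R) = [Y_s]_{S_s}\ldots[Y_1]_{S_1}$ with $R_1 = S_1 = \mathbf{Id}^{\mathcal{B}}_R = R$. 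Since $\mathcal{B} \subseteq \mathcal{D}$, from $[X]_R \stackrel{\mathcal{B}}{\rightarrow}_R \mathbf{Dc}^{\mathcal{B}}_R([X]_R)$ we get $X \stackrel{\mathcal{B}}{=}_R X_r.\ldots.X_1$, hence $X \stackrel{\mathcal{D}}{=}_R X_r.\ldots.X_1$; thus the $\mathcal{D}_R$-decomposition of $X_r.\ldots.X_1$ is $\mathtt{dcmp}^{\mathcal{D}}_R(X) = \mathbf{Dc}^{\mathcal{D}}_R([X]_R).[\,]$-expanded appropriately. The crux is then a uniqueness-of-decomposition argument: the rightmost block $[X_1]_{R_1}$ is a $\mathbf{Pr}^{\mathcal{B}}_{R_1}$-prime and, because $\mathbf{Pr}^{\mathcal{B}}_{R_1} = \mathbf{Pr}^{\mathcal{D}}_{R_1}$ and $\mathbf{Rd}^{\mathcal{B}}_{R_1} = \mathbf{Rd}^{\mathcal{D}}_{R_1}$, it is also a $\mathbf{Pr}^{\mathcal{D}}_{R_1}$-prime with the same redundancy set; applying the characterization of $\mathtt{dcmp}$ (the clause $\mathtt{dcmp}^{\mathcal{B}}_R(\gamma.[X]_R) = (\mathtt{dcmp}^{\mathcal{B}}_{\mathbf{Rd}^{\mathcal{B}}_R([X]_R)}(\gamma)).[X]_R$ for primes), one peels off the rightmost prime on both sides, sees that $[X_1]_{R_1} = [Y_1]_{S_1}$ and $R_2 = \mathbf{Rd}^{\mathcal{B}}_{R_1}([X_1]_{R_1}) = \mathbf{Rd}^{\mathcal{D}}_{S_1}([Y_1]_{S_1}) = S_2$, and recurses on $X_r.\ldots.X_2$ (which has strictly smaller norm) over the reference set $R_2$ to conclude by induction that $r = s$, $R_i = S_i$, and $[X_i]_{R_i} = [Y_i]_{S_i}$ for all $i$.

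I expect the main obstacle to be bookkeeping rather than conceptual: making the induction hypothesis precise enough to carry the ``peel off the rightmost prime'' step through the change of reference set $R_1 \rightsquigarrow R_2$, and verifying that the rewriting-system characterization of $\mathtt{dcmp}$ can legitimately be applied with $\mathcal{B}$ replaced by $\mathcal{D}$ at each stage — this is exactly where $\mathbf{Pr}^{\mathcal{B}}_R = \mathbf{Pr}^{\mathcal{D}}_R$ and $\mathbf{Rd}^{\mathcal{B}}_R = \mathbf{Rd}^{\mathcal{D}}_R$ (for \emph{every} $R$, not just the original one) are used in an essential way. Once the decomposition functions are shown to agree, all five constituents of $\mathcal{B}_R$ and $\mathcal{D}_R$ coincide for every $R$, hence $\mathcal{B} = \mathcal{D}$ as families of quintuples, completing the proof.
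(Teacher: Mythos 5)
Your proposal is correct and follows essentially the same route as the paper: both reduce the problem to showing $\mathbf{Dc}^{\mathcal{B}}_R = \mathbf{Dc}^{\mathcal{D}}_R$, and both close the argument by observing that, since $\mathcal{B} \subseteq \mathcal{D}$ gives $X \stackrel{\mathcal{D}}{=}_R X_r.\ldots.X_1$ and the string $[X_r]_{R_r}\ldots[X_1]_{R_1}$ consists of blocks that are also $\mathcal{D}$-primes with the same redundancy chain, the deterministic $\mathcal{D}_R$-reduction of $X_r.\ldots.X_1$ returns exactly that string, forcing the two decompositions to coincide. The only cosmetic differences are that the paper phrases this as a proof by contradiction on a first offending composite while you argue directly, and your auxiliary induction on a norm is not really needed — the recursion is simply on the length of the prime string when peeling off the rightmost prime.
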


The purpose of Lemma~\ref{lem:compare_ID} to Lemma~\ref{lem:compare_all} is to get the following fact.
\begin{proposition}\label{prop:number_iter}
The total number of iterations (i.e.~refinement operations) in our algorithm is exponentially bounded.
\end{proposition}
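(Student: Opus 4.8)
The plan is to extract from Lemmas~\ref{lem:compare_ID}--\ref{lem:compare_all} a strictly decreasing integer-valued progress measure on the sequence $\mathcal{B}_0 \supseteq \mathcal{B}_1 \supseteq \cdots$, whose range is exponentially bounded. Since by requirement~3 we have $\mathcal{B}_{i+1} \subsetneq \mathcal{B}_i$ as long as $\widehat{\mathcal{B}} \subsetneq \mathcal{B}_i$, each non-final iteration must strictly decrease the measure, so the number of iterations is at most the size of the measure's range.

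\emph{Step 1: a lexicographic potential per admissible $R$.}
Fix a $\mathcal{B}$-admissible reference set $R$ (by constraint~3 of Section~\ref{subsec:decomposition_base} and the remark after Lemma~\ref{lem:compare_ID}, $\mathcal{B}$-admissibility of $R$ is preserved along the sequence once attained; and for non-admissible $R$ the data is copied from $\mathbf{Id}^{\mathcal{B}}_R$, so it suffices to track admissible sets). To the pair $(\mathcal{B}_i, R)$ I would assign the triple
\[
\Phi_R(\mathcal{B}_i) \;=\; \bigl(\,|\mathbf{C}_{\mathrm{G}} \setminus \mathbf{Id}^{\mathcal{B}_i}_R|,\ |\mathbf{Pr}^{\mathcal{B}_i}_R|,\ \textstyle\sum_{[X]_R \in \mathbf{Pr}^{\mathcal{B}_i}_R} |\mathbf{Rd}^{\mathcal{B}_i}_R([X]_R)|\,\bigr),
\]
ordered lexicographically. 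Lemma~\ref{lem:compare_ID} gives $\mathbf{Id}^{\mathcal{B}_{i+1}}_R \subseteq \mathbf{Id}^{\mathcal{B}_i}_R$, so the first component is non-decreasing; when it is unchanged, Lemma~\ref{lem:compare_PR} gives $\mathbf{Pr}^{\mathcal{B}_i}_R \subseteq \mathbf{Pr}^{\mathcal{B}_{i+1}}_R$, so the second component (measured as $|\mathbf{Pr}|$) is non-increasing after negating appropriately — here I would instead record the first component as $|\mathbf{Id}^{\mathcal{B}_i}_R|$ (non-increasing as $i$ grows is false; it \emph{grows}), so more carefully: take $\Phi_R$ with first coordinate $|\mathbf{Id}^{\mathcal{B}_i}_R|$ counted \emph{downward}, i.e. $|\mathbf{C}_{\mathrm{G}}| - |\mathbf{Id}^{\mathcal{B}_i}_R|$, which is non-increasing; then when it is constant, $\mathbf{Id}$ is fixed and Lemma~\ref{lem:compare_PR} applies giving $|\mathbf{Pr}^{\mathcal{B}_i}_R|$ non-increasing; when both are constant, $\mathbf{Pr}$ is fixed and Lemma~\ref{lem:compare_RD} gives $\mathbf{Rd}^{\mathcal{B}_{i+1}}_R \subseteq \mathbf{Rd}^{\mathcal{B}_i}_R$, so the third coordinate is non-increasing. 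Hence $\Phi_R(\mathcal{B}_{i+1}) \leq \Phi_R(\mathcal{B}_i)$ lexicographically for every admissible $R$.

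\emph{Step 2: a strict global decrease.}
Sum (or concatenate) the $\Phi_R$ over all admissible $R \subseteq \mathbf{C}_{\mathrm{G}}$ into a single tuple $\Phi(\mathcal{B}_i) = (\Phi_R(\mathcal{B}_i))_R$, ordered lexicographically under some fixed enumeration of the admissible sets. By Step~1, $\Phi(\mathcal{B}_{i+1}) \leq \Phi(\mathcal{B}_i)$. For strictness: suppose $\mathcal{B}_{i+1} \subsetneq \mathcal{B}_i$. If $\Phi(\mathcal{B}_{i+1}) = \Phi(\mathcal{B}_i)$, then for every admissible $R$ we have $\mathbf{Id}^{\mathcal{B}_{i+1}}_R = \mathbf{Id}^{\mathcal{B}_i}_R$, and then (first coordinate fixed) $\mathbf{Pr}^{\mathcal{B}_{i+1}}_R = \mathbf{Pr}^{\mathcal{B}_i}_R$ since the inclusion of Lemma~\ref{lem:compare_PR} together with equal cardinalities forces equality, and then (second coordinate fixed) $\mathbf{Rd}^{\mathcal{B}_{i+1}}_R = \mathbf{Rd}^{\mathcal{B}_i}_R$ likewise via Lemma~\ref{lem:compare_RD}. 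Restricting to admissible $R$ suffices because non-admissible $R$'s copy their data from $\mathbf{Id}^{\mathcal{B}}_R$ (constraint~3); so all five components of $\mathcal{B}_{R}$ and $\mathcal{D}_R$ agree for every $R$, and Lemma~\ref{lem:compare_all} yields $\mathcal{B}_{i+1} = \mathcal{B}_i$, contradicting $\mathcal{B}_{i+1} \subsetneq \mathcal{B}_i$. Hence $\Phi$ strictly decreases at every non-trivial step, and by requirement~3 every iteration before stabilization is non-trivial.

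\emph{Step 3: the range of $\Phi$ is exponentially bounded.}
Each $\Phi_R$ lives in $\{0,\dots,|\mathbf{C}_{\mathrm{G}}|\} \times \{0,\dots,|\mathbf{C}|\} \times \{0,\dots,|\mathbf{C}|\cdot|\mathbf{C}_{\mathrm{G}}|\}$, a set of polynomial size, so $\log_2$ of its cardinality is $O(\log |\mathbf{C}|)$. There are at most $2^{|\mathbf{C}_{\mathrm{G}}|}$ admissible $R$'s, so $\Phi$ ranges over a set of size at most $(\mathrm{poly}(|\mathbf{C}|))^{2^{|\mathbf{C}_{\mathrm{G}}|}} = 2^{2^{O(|\mathbf{C}_{\mathrm{G}}|)}\cdot\log|\mathbf{C}|}$ — which is doubly exponential, too weak. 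To get a single exponential I would instead not take the product over $R$ but observe that the $R$-indexed data is not independent: the total count $\sum_R |\mathbf{C}_{\mathrm{G}}\setminus\mathbf{Id}^{\mathcal{B}}_R|$ over the (at most $2^{|\mathbf{C}_{\mathrm{G}}|}$) admissible $R$'s is bounded by $2^{|\mathbf{C}_{\mathrm{G}}|}\cdot|\mathbf{C}_{\mathrm{G}}|$, an \emph{exponential} integer that is monotone; ditto for the $\mathbf{Pr}$ and $\mathbf{Rd}$ sums. Thus set $\Phi(\mathcal{B}_i)$ to the single lexicographic triple of these three aggregated integers; each is a non-negative integer bounded by $2^{O(|\mathbf{C}_{\mathrm{G}}|)}$, and the argument of Step~2 shows that whenever $\mathcal{B}_{i+1}\subsetneq\mathcal{B}_i$ at least one aggregated sum strictly drops (a strict inclusion at some admissible $R$ forces a strict drop of one of $\mathbf{Id}$, $\mathbf{Pr}$, $\mathbf{Rd}$ there, hence of the corresponding sum, with the lexicographic order respecting priority). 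Therefore the number of iterations is at most $3\cdot 2^{O(|\mathbf{C}_{\mathrm{G}}|)}$, i.e. exponentially bounded.

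\emph{Main obstacle.}
The delicate point is Step~3: proving an \emph{exponential} rather than doubly-exponential bound. The naive product over the exponentially many admissible reference sets gives only a doubly-exponential bound, so the proof must aggregate the per-$R$ progress measures into a small number of globally monotone integers and verify that the lexicographic priority among $\mathbf{Id}$, $\mathbf{Pr}$, $\mathbf{Rd}$ is respected by the aggregation — concretely, that a refinement step which shrinks some $\mathbf{Rd}^{\mathcal{B}}_R$ at a fixed $\mathbf{Id},\mathbf{Pr}$ for one $R$ cannot silently be cancelled by a simultaneous change at another $R'$ that is still compatible with all cardinalities being preserved. This requires care in how the lexicographic components are interleaved across $R$'s, but the inclusions of Lemmas~\ref{lem:compare_ID}--\ref{lem:compare_RD} make each component individually monotone, which is what makes the aggregation go through.
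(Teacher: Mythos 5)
Your proposal is correct and follows essentially the argument the paper intends: the paper offers no explicit proof but states that Lemmas~\ref{lem:compare_ID}--\ref{lem:compare_all} exist precisely to yield this proposition, namely that each non-final refinement must strictly change $\mathbf{Id}^{\mathcal{B}}_R$, $\mathbf{Pr}^{\mathcal{B}}_R$, or $\mathbf{Rd}^{\mathcal{B}}_R$ for some $R$, each of which evolves monotonically in its lexicographic regime and ranges over polynomially many values per $R$, summed over the at most $2^{|\mathbf{C}_{\mathrm{G}}|}$ reference sets. Your detour through the doubly-exponential worry is a strawman (one should bound the number of strict decreases by the \emph{sum}, not the lexicographic product, over the $R$-indexed coordinates), and your final count $3\cdot 2^{O(|\mathbf{C}_{\mathrm{G}}|)}$ should really be a product of the three aggregated bounds since $\sum_R|\mathbf{Pr}^{\mathcal{B}}_R|$ is only piecewise monotone when some $\mathbf{Id}^{\mathcal{B}}_R$ changes, but this still gives a single exponential and does not affect the conclusion.
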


\subsection{The Initial Base} \label{subsec:initial_base}

The initial base $\mathcal{B}_0 = \{\mathcal{B}_{0,R}\}_R$ is defined as follows:
\begin{itemize}
\item
$\mathbf{Id}_{R} \coloneqq \mathbf{C}_{\mathrm{G}} = \{X \in \mathbf{C} \;|\;    | X |_{\mathrm{wk}} = 0 \}$ for every $R$.
\end{itemize}
Thus $\mathbf{C}_{\mathrm{G}}$ is the only $\mathcal{B}_0$-admissible set.
\begin{itemize}
\item
$\mathbf{Pr}_{R} \coloneqq \{[P]_{\mathbf{C}_{\mathrm{G}}}\}$ where $[P]_{\mathbf{C}_{\mathrm{G}}}$ is  the $<_{\mathbf{C}_{\mathrm{G}}}$-minimum $\mathbf{C}_{\mathrm{G}}$-block satisfying $|P|_{\mathrm{wk}}= 1$.  $\mathbf{Pr}_{R} \coloneqq \emptyset$ in case $\mathbf{C}_{\mathrm{G}} = \mathbf{C}$.

\item
$\mathbf{Cm}_{R} \coloneqq \mathbf{C}_{\mathbf{C}_{\mathrm{G}}} \setminus \mathbf{Pr}_{R}$.

\item
$\mathbf{Dc}_{R}([X]_{\mathbf{C}_{\mathrm{G}}}) \coloneqq \underbrace{[P]_{\mathbf{C}_{\mathrm{G}}}  \ldots  [P]_{\mathbf{C}_{\mathrm{G}}}}_{ | X |_{\mathrm{wk}}\textrm{ times}}$ if $[X]_{\mathbf{C}_{\mathrm{G}}} \in \mathbf{Cm}_{R}$.

% $\mathbf{Dc}_{R}(X) = \epsilon$ if  $X\not\in R$, $\mathbf{Pr}_{R} = \emptyset$.

\item
$\mathbf{Rd}_{R}([P]_{\mathbf{Id}_{R}}) \coloneqq \mathbf{C}_{\mathrm{G}} $, if $\mathbf{Pr}_{R} = \{[P]_{\mathbf{Id}_{R}}\}$.
\end{itemize}
Now $\mathcal{B}_{0,R}$ is defined as $(\mathbf{Id}_R, \mathbf{Pr}_R, \mathbf{Cm}_R ,\mathbf{Dc}_R, \mathbf{Rd}_R)$.
Notice that  $\mathcal{B}_{0,R}$ is the same for every $R \subseteq \mathbf{C}_{\mathrm{G}}$.

\begin{lemma}
$\alpha \stackrel{\mathcal{B}_0}{=}_{R} \beta$ if and only if $ | \alpha |_{\mathrm{wk}}  = | \beta |_{\mathrm{wk}}$.
\end{lemma}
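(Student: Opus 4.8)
The statement to prove is the final lemma: $\alpha \stackrel{\mathcal{B}_0}{=}_{R} \beta$ if and only if $|\alpha|_{\mathrm{wk}} = |\beta|_{\mathrm{wk}}$.

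The plan is to unfold the definition of $\mathcal{B}_0$-equivalence, which says $\alpha \stackrel{\mathcal{B}_0}{=}_R \beta$ iff $\mathtt{dcmp}_R^{\mathcal{B}_0}(\alpha) = \mathtt{dcmp}_R^{\mathcal{B}_0}(\beta)$, and then compute $\mathtt{dcmp}_R^{\mathcal{B}_0}(\alpha)$ explicitly for an arbitrary process $\alpha$. Since $\mathcal{B}_{0,R}$ is the same for every $R$ and $\mathbf{C}_{\mathrm{G}}$ is the only $\mathcal{B}_0$-admissible set, we can reduce everything to working with $R' = \mathbf{C}_{\mathrm{G}}$. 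The claim I expect to establish by induction on the length of $\alpha$ (as a string of constants) is that $\mathtt{dcmp}_R^{\mathcal{B}_0}(\alpha) = \underbrace{[P]_{\mathbf{C}_{\mathrm{G}}} \ldots [P]_{\mathbf{C}_{\mathrm{G}}}}_{|\alpha|_{\mathrm{wk}}\text{ times}}$ when $\mathbf{C}_{\mathrm{G}} \neq \mathbf{C}$, and $\mathtt{dcmp}_R^{\mathcal{B}_0}(\alpha) = \epsilon$ always when $\mathbf{C}_{\mathrm{G}} = \mathbf{C}$ (in which case $|\alpha|_{\mathrm{wk}} = 0$ for all $\alpha$). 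Once this normal-form formula is in hand, both directions of the biconditional are immediate: two decompositions built from copies of the single prime block $[P]_{\mathbf{C}_{\mathrm{G}}}$ are equal iff they have the same length, i.e.\ iff $|\alpha|_{\mathrm{wk}} = |\beta|_{\mathrm{wk}}$.

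For the inductive computation I would use the characterization lemma for $\mathtt{dcmp}_R^{\mathcal{B}}$ (the one listing the five cases) applied to $\mathcal{B} = \mathcal{B}_0$ and $R' = \mathbf{C}_{\mathrm{G}}$. Writing $\alpha = \gamma X$: if $X \in \mathbf{C}_{\mathrm{G}}$ then $X$ is a $\mathcal{B}_0$-identity and $\mathtt{dcmp}(\gamma X) = \mathtt{dcmp}(\gamma)$, consistent with $|\gamma X|_{\mathrm{wk}} = |\gamma|_{\mathrm{wk}} + |X|_{\mathrm{wk}} = |\gamma|_{\mathrm{wk}}$ since $|X|_{\mathrm{wk}} = 0$; if $X \notin \mathbf{C}_{\mathrm{G}}$ then $\mathtt{dcmp}(\gamma X) = \mathtt{dcmp}(\gamma.[X]_{\mathbf{C}_{\mathrm{G}}})$, and one further unfolds the block: either $[X]_{\mathbf{C}_{\mathrm{G}}} \in \mathbf{Pr}_R$, which (because $\mathbf{Rd}_R([P]_{\mathbf{C}_{\mathrm{G}}}) = \mathbf{C}_{\mathrm{G}}$) just appends $[X]_{\mathbf{C}_{\mathrm{G}}} = [P]_{\mathbf{C}_{\mathrm{G}}}$ to $\mathtt{dcmp}_{\mathbf{C}_{\mathrm{G}}}(\gamma)$ and corresponds to $|X|_{\mathrm{wk}} = 1$; or $[X]_{\mathbf{C}_{\mathrm{G}}} \in \mathbf{Cm}_R$, where $\mathbf{Dc}_R([X]_{\mathbf{C}_{\mathrm{G}}})$ is $|X|_{\mathrm{wk}}$ copies of $[P]_{\mathbf{C}_{\mathrm{G}}}$ and the recursion on $\gamma.\mathbf{Dc}_R([X]_{\mathbf{C}_{\mathrm{G}}})$ produces $|\gamma|_{\mathrm{wk}} + |X|_{\mathrm{wk}}$ copies. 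Throughout, the additivity of the weak norm ($|\alpha\beta|_{\mathrm{wk}} = |\alpha|_{\mathrm{wk}} + |\beta|_{\mathrm{wk}}$) and the fact that $|X|_{\mathrm{wk}} = 0 \iff X \in \mathbf{C}_{\mathrm{G}}$ are the two arithmetic facts that keep the induction aligned.

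The main obstacle, such as it is, is mostly bookkeeping: one must verify that $[P]_{\mathbf{C}_{\mathrm{G}}}$ is genuinely the \emph{only} prime block of $\mathcal{B}_0$ (so that every composite block's $\mathbf{Dc}$ is a power of it, and every prime-block append is an append of $[P]_{\mathbf{C}_{\mathrm{G}}}$), and that the recursion in the prime-block case stays at reference set $\mathbf{C}_{\mathrm{G}}$ because $\mathbf{Rd}_R([P]_{\mathbf{C}_{\mathrm{G}}}) = \mathbf{C}_{\mathrm{G}}$ — this is what prevents the decomposition from ever escaping to a different (non-admissible, hence $\mathbf{C}_{\mathrm{G}}$-collapsed) reference set. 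I would also separately dispose of the degenerate case $\mathbf{C}_{\mathrm{G}} = \mathbf{C}$, where $\mathbf{Pr}_R = \emptyset$, every constant is an identity, $\mathtt{dcmp}_R^{\mathcal{B}_0}(\alpha) = \epsilon$ for all $\alpha$, and $|\alpha|_{\mathrm{wk}} = 0$ for all $\alpha$, so the biconditional holds trivially. No step here is conceptually deep; the lemma is essentially a sanity check that the chosen initial base represents exactly the weak-norm equivalence.
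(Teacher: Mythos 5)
Your proof is correct; the paper states this lemma without proof, and your explicit induction computing $\mathtt{dcmp}_R^{\mathcal{B}_0}(\alpha)$ as $|\alpha|_{\mathrm{wk}}$ copies of $[P]_{\mathbf{C}_{\mathrm{G}}}$ (with the degenerate case $\mathbf{C}_{\mathrm{G}}=\mathbf{C}$ handled separately) is exactly the routine verification the authors leave implicit. The one point worth stating explicitly in your bookkeeping is that all constants in a single $\mathbf{C}_{\mathrm{G}}$-block share the same weak norm --- so that the classification of a block as $[P]_{\mathbf{C}_{\mathrm{G}}}$ versus composite, and the exponent $|X|_{\mathrm{wk}}$ in $\mathbf{Dc}_{R}([X]_{\mathbf{C}_{\mathrm{G}}})$, do not depend on the chosen representative; this follows because $|\cdot|_{\mathrm{wk}}$ is additive, is unchanged by $=_{\mathbf{C}_{\mathrm{G}}}$, and is monotone along $\Longrightarrow_{\mathbf{C}_{\mathrm{G}}}$, so mutual reachability within a block forces equality.
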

\begin{lemma}\label{lem:corrct_init}
$\widehat{\mathcal{B}} \subseteq {\mathcal{B}_0}$. Namely, ${\simeq_R} \subseteq  {\stackrel{\mathcal{B}}{=}_{R}}$ for every $R$.
\end{lemma}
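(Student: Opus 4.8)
The goal is to prove Lemma~\ref{lem:corrct_init}, i.e.\ $\widehat{\mathcal{B}} \subseteq \mathcal{B}_0$, which by definition of the ordering on bases means ${\simeq_R} \subseteq {\stackrel{\mathcal{B}_0}{=}_R}$ for every $R$. Combining this with the immediately preceding lemma characterising $\mathcal{B}_0$-equivalence, the statement reduces to a purely semantic claim: \emph{if $\alpha \simeq_R \beta$ then $|\alpha|_{\mathrm{wk}} = |\beta|_{\mathrm{wk}}$}. So the plan is to discard the syntactic machinery of decomposition bases entirely and prove this norm-equality fact.

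\emph{First} I would recall that weak norm is additive ($|\alpha\beta|_{\mathrm{wk}} = |\alpha|_{\mathrm{wk}} + |\beta|_{\mathrm{wk}}$) and that $|\alpha|_{\mathrm{wk}} = 0$ exactly when $\alpha \Longrightarrow \epsilon$, both already established. \emph{Second}, the key observation is that the weak norm is invariant under both ordinary $\tau$-transitions and under the relativized machinery: a $\tau$-step never changes $|\cdot|_{\mathrm{wk}}$, and passing to an $R$-normal-form only deletes a suffix in $R^* \subseteq \mathbf{C}_{\mathrm{G}}^*$, which has weak norm $0$, so $|\alpha|_{\mathrm{wk}} = |\alpha_R|_{\mathrm{wk}}$. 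Hence if $\alpha =_R \beta$ then $|\alpha|_{\mathrm{wk}} = |\beta|_{\mathrm{wk}}$, and moreover $\stackrel{a}{\longrightarrow}_R$ decreases weak norm by exactly one while $\stackrel{\tau}{\longrightarrow}_R$ leaves it unchanged. \emph{Third}, I would show the relation ${\asymp} \;\stackrel{\mathrm{def}}{=}\; \{(\alpha,\beta) \mid |\alpha|_{\mathrm{wk}} = |\beta|_{\mathrm{wk}}\}$ is itself an $R$-bisimulation for every $R$: it contains $=_R$ (just shown); ground preservation holds because $\alpha \Longrightarrow \epsilon$ iff $|\alpha|_{\mathrm{wk}} = 0$; and to match a move $\alpha \stackrel{a}{\longrightarrow} \alpha'$ from $\beta$ one simply picks any shortest witness path for $\beta$'s weak norm and takes its first visible step along $\stackrel{a}{\longrightarrow}_R$ (using Lemma~\ref{lem:char_R_arrow} / Lemma~\ref{lem:R_transition} to realise it), landing at some $\beta'$ with $|\beta'|_{\mathrm{wk}} = |\beta|_{\mathrm{wk}} - 1 = |\alpha|_{\mathrm{wk}} - 1 = |\alpha'|_{\mathrm{wk}}$; the $\stackrel{\not\asymp}{\longrightarrow}$ case is similar but easier since any $\tau$-step keeps the weak norm fixed, so $\beta$ sits still and uses $\beta \stackrel{\tau}{\longrightarrow}_R \beta'$ with $|\beta'|_{\mathrm{wk}} = |\beta|_{\mathrm{wk}}$, though one must be mildly careful that $\alpha \stackrel{\not\asymp}{\longrightarrow} \alpha'$ forces $\alpha' \not\asymp \alpha$, contradicting weak-norm preservation of $\tau$ — so in fact this case is vacuous. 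Since $\simeq_R$ is the largest $R$-bisimulation, ${\simeq_R} \subseteq {\asymp}$, which is exactly $|\alpha|_{\mathrm{wk}} = |\beta|_{\mathrm{wk}}$ whenever $\alpha \simeq_R \beta$.

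A cleaner alternative, which I would actually prefer to write, is to avoid constructing a bisimulation from scratch and instead cite the already-stated lemma ``if $\alpha \simeq_R \beta$ then $|\alpha|_{\mathrm{wk}} = |\beta|_{\mathrm{wk}}$'' (this is one of the unnumbered lemmas in Section~\ref{sec:norms}). Then the proof collapses to: by that lemma $\alpha \simeq_R \beta$ implies $|\alpha|_{\mathrm{wk}} = |\beta|_{\mathrm{wk}}$, which by the preceding lemma is equivalent to $\alpha \stackrel{\mathcal{B}_0}{=}_R \beta$; hence ${\simeq_R} \subseteq {\stackrel{\mathcal{B}_0}{=}_R}$ for every $R$, i.e.\ $\widehat{\mathcal{B}} \subseteq \mathcal{B}_0$.

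The main obstacle — such as it is — is not difficulty but bookkeeping: one must make sure the chain of equivalences $\widehat{\mathcal{B}} \subseteq \mathcal{B}_0 \iff ({\simeq_R} \subseteq {\stackrel{\mathcal{B}_0}{=}_R}\ \forall R) \iff (\simeq_R\text{-equivalent} \Rightarrow \text{equal weak norm})$ is justified at each link, in particular that the definition of $\widehat{\mathcal{B}}$-equivalence in Section~\ref{subsec:representing_simeq} really does coincide with $\simeq_R$ (the Proposition at the end of that section) and that $\stackrel{\mathcal{B}_0}{=}_R$ is characterised by weak norm (the lemma just before this one). If one instead goes the self-contained route, the only genuinely delicate point is verifying the visible-transition matching clause: one needs a shortest weak-norm witness path for $\beta$ and must invoke Lemma~\ref{lem:R_transition} to turn its first $\stackrel{a}{\Longrightarrow}$ segment into a genuine $\stackrel{\asymp}{\Longrightarrow}_R \cdot \stackrel{a}{\longrightarrow}_R$ sequence with the intermediate processes staying at the same weak norm as $\beta$. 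Everything else is immediate from additivity and the zero-norm characterisation of $\mathbf{C}_{\mathrm{G}}^*$.
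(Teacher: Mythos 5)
Your preferred (second) route is correct and is evidently what the paper intends: the paper gives no explicit proof of Lemma~\ref{lem:corrct_init}, because it follows immediately by chaining the unnumbered lemma of Section~\ref{sec:norms} (``if $\alpha \simeq_R \beta$ then $|\alpha|_{\mathrm{wk}} = |\beta|_{\mathrm{wk}}$'') with the lemma directly preceding it (``$\alpha \stackrel{\mathcal{B}_0}{=}_R \beta$ iff $|\alpha|_{\mathrm{wk}} = |\beta|_{\mathrm{wk}}$''), together with the coincidence $\stackrel{\widehat{\mathcal{B}}}{=}_R \,=\, \simeq_R$ from Section~\ref{subsec:representing_simeq} and the definition of $\mathcal{B} \subseteq \mathcal{D}$. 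Your bookkeeping of that chain is exactly right.

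Your first, ``self-contained'' route, however, contains a genuine error, and you should not present it even as an alternative. The relation ${\asymp} = \{(\alpha,\beta) \mid |\alpha|_{\mathrm{wk}} = |\beta|_{\mathrm{wk}}\}$ is \emph{not} an $R$-bisimulation in general, and both of your matching arguments fail. First, a silent transition does not preserve the weak norm: from $\alpha \stackrel{\tau}{\longrightarrow} \alpha'$ one only gets $|\alpha|_{\mathrm{wk}} \leq |\alpha'|_{\mathrm{wk}}$, and the inequality can be strict (e.g.\ $X \stackrel{a}{\longrightarrow} \epsilon$ and $X \stackrel{\tau}{\longrightarrow} Z$ with $|Z|_{\mathrm{wk}} = 2$ gives $|X|_{\mathrm{wk}} = 1 < |Z|_{\mathrm{wk}}$). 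So the clause for $\alpha \stackrel{\not\asymp}{\longrightarrow} \alpha'$ is not vacuous, and a partner $\beta$ of the same weak norm (say one with no $\tau$-rules at all) may have no way to reach anything of weak norm $|\alpha'|_{\mathrm{wk}}$. Second, in the visible case your chain $|\beta'|_{\mathrm{wk}} = |\beta|_{\mathrm{wk}} - 1 = |\alpha|_{\mathrm{wk}} - 1 = |\alpha'|_{\mathrm{wk}}$ breaks at the last step: $\alpha \stackrel{a}{\longrightarrow} \alpha'$ only yields $|\alpha'|_{\mathrm{wk}} \geq |\alpha|_{\mathrm{wk}} - 1$, since the move need not lie on a shortest witness path. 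The correct direction of argument for the norm-equality fact is the opposite one: use the bisimulation property of $\simeq_R$ (matching a witness path of $\alpha$ step by step, with ground preservation handling the endpoint) to conclude $|\beta|_{\mathrm{wk}} \leq |\alpha|_{\mathrm{wk}}$ and then symmetrize --- not to exhibit weak-norm equality as a bisimulation, which it is not.
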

One can check that all the five constraints described in Section~\ref{subsec:decomposition_base} are satisfied by $\mathcal{B}_0$.

\subsection{Expansion Conditions}\label{subsec:expansion}

We start to define new base $\mathcal{B}$ from the old base $\mathcal{D}$. This is the core of our algorithm.  The newly constructed $\stackrel{\mathcal{B}}{=}_{R}$ is made to be a decreasing bisimulation with $R$-expansion of $\stackrel{\mathcal{D}}{=}_{R}$.  Referring to Lemma~\ref{lem:expansion}, we have
${\stackrel{\mathcal{B}}{=}_{R}}  \subseteq {\stackrel{\mathcal{D}}{=}_{R}}$, and for every $\alpha, \beta$ in $R$-nf, the following conditions hold whenever $\alpha \stackrel{\mathcal{B}}{=}_{R} \beta$:
\begin{enumerate}
\item
if  $\alpha \Longrightarrow_R \epsilon$,  then $\beta \Longrightarrow_R \epsilon$;

\item
Whenever $\alpha \stackrel{\tau}{\longrightarrow}_R \alpha'$,
\begin{enumerate}
\item
if $ \alpha \stackrel{\tau}{\longrightarrow}_R \alpha'$ is $\stackrel{\mathcal{B}}{=}_{R}$-decreasing, then
$\beta  \stackrel{\stackrel{\mathcal{B}}{=}_{R}}{\Longrightarrow}_{R} \cdot \stackrel{\tau}{\longrightarrow}_{R} \beta'$ for some $\beta'$ such that $\alpha' \stackrel{\mathcal{B}}{=}_{R} \beta'$;

\item
if $ \alpha \stackrel{\tau}{\longrightarrow}_R \alpha'$ is not $\stackrel{\mathcal{B}}{=}_{R}$-decreasing and $\alpha \not\stackrel{\mathcal{D}}{=}_{R} \alpha'$, then
$\beta  \stackrel{\stackrel{\mathcal{B}}{=}_{R}}{\Longrightarrow}_{R} \cdot \stackrel{\tau}{\longrightarrow}_{R} \beta'$ for some $\beta'$ such that $\alpha' \stackrel{\mathcal{D}}{=}_{R} \beta'$;
\end{enumerate}

\item
Whenever $\alpha \stackrel{a}{\longrightarrow}_R \alpha'$,
\begin{enumerate}
\item
if $\alpha \stackrel{a}{\longrightarrow}_R \alpha'$ is $\stackrel{\mathcal{B}}{=}_{R}$-decreasing,  then
$\beta \stackrel{\stackrel{\mathcal{B}}{=}_{R}}{\Longrightarrow}_{R} \cdot \stackrel{a}{\longrightarrow}_{R} \beta'$ for some $\beta'$ such that $\alpha' \stackrel{\mathcal{B}}{=}_{R} \beta'$.

\item
if $\alpha \stackrel{a}{\longrightarrow}_R \alpha'$  is not $\stackrel{\mathcal{B}}{=}_{R}$-decreasing,  then
$\beta \stackrel{\stackrel{\mathcal{B}}{=}_{R}}{\Longrightarrow}_{R} \cdot \stackrel{a}{\longrightarrow}_{R} \beta'$ for some $\beta'$ such that $\alpha' \stackrel{\mathcal{D}}{=}_{R} \beta'$.
\end{enumerate}
\end{enumerate}
The above conditions will be called {\em expansion conditions} in the following. Our task is to construct $\mathcal{B}$ from $\mathcal{D}$ and validate these expansion conditions.
From expansion conditions we can see that, in case ${\stackrel{\mathcal{B}}{=}_{R}}  = {\stackrel{\mathcal{D}}{=}_{R}}$, $\stackrel{\mathcal{B}}{=}_{R}$ must be an $R$-bisimulation. Thus when ${\simeq_R}  \subsetneq  {\stackrel{\mathcal{D}}{=}_{R}}$, we must have ${\stackrel{\mathcal{B}}{=}_{R}}  \subsetneq {\stackrel{\mathcal{D}}{=}_{R}}$.

Basically, the construction contains  three steps:
\begin{enumerate}
\item
Determine $\mathbf{Id}^{\mathcal{B}}_R$ for every qualified $R$.  After that, we know whether a given $R$ is $\mathcal{B}$-admissible. Note that some $R$'s which are not $\mathcal{D}$-admissible can be $\mathcal{B}$-admissible.

\item
Determine other constituents of $\mathcal{B}_R$ for every $\mathcal{B}$-admissible $R$.

\item
For non-$\mathcal{B}$-admissible $R$'s,  $\mathcal{B}_{\mathbf{Id}_R}$ is copied to $\mathcal{B}_{R}$.
\end{enumerate}

The third step is relatively trivial.  Its correctness depends on the following lemma.
\begin{lemma}\label{lem:step3}
If ${\simeq_{\mathbf{Id}^{\mathcal{B}}_R}} \subseteq {\stackrel{\mathcal{B}}{=}_{\mathbf{Id}^{\mathcal{B}}_R}}$, then ${\simeq_{R}} \subseteq {\stackrel{\mathcal{B}}{=}_{R}}$.
\end{lemma}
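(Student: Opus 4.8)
The goal is to transfer the correctness statement "${\simeq} \subseteq {\stackrel{\mathcal{B}}{=}}$" from the admissible representative $\mathbf{Id}^{\mathcal{B}}_R$ to an arbitrary qualified reference set $R$. The plan is to exploit the two structural facts that link $R$ with its identity saturation: on the semantic side, Proposition~\ref{prop:R_bis_vs_IDR} tells us $\simeq_R = \simeq_{\mathsf{Id}_R}$, and since the base satisfies $\mathbf{Id}^{\mathcal{B}}_R = \mathsf{Id}_R$ only for the target base $\widehat{\mathcal{B}}$ — but here we only know $\mathbf{Id}^{\mathcal{B}}_R$ is some admissible set sandwiching $R$ — the right semantic tool is Corollary~\ref{coro:monotone_1} together with Corollary~\ref{coro:monotone_2}. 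On the syntactic side, constraint~3 in Section~\ref{subsec:decomposition_base} (equivalently the last clause of the $\mathtt{dcmp}$ characterization lemma) gives $\mathcal{B}_R = \mathcal{B}_{\mathbf{Id}^{\mathcal{B}}_R}$, hence ${\stackrel{\mathcal{B}}{=}_R} = {\stackrel{\mathcal{B}}{=}_{\mathbf{Id}^{\mathcal{B}}_R}}$.

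First I would record that the relevant set $S := \mathbf{Id}^{\mathcal{B}}_R$ satisfies $R \subseteq S \subseteq \mathbf{C}_{\mathrm{G}}$ (constraint~1) and is $\mathcal{B}$-admissible, i.e.\ $S = \mathbf{Id}^{\mathcal{B}}_S$. I also need $S \subseteq \mathsf{Id}_R$: this should follow because ${\simeq_R}\subseteq{\stackrel{\mathcal{B}}{=}_R}$ is exactly what we are trying to prove, so instead I would argue the other inclusion $R \subseteq S$ suffices in one direction and use that the base is built so that $\mathbf{Id}^{\mathcal{B}}_R \subseteq \mathsf{Id}_R$ is guaranteed inductively (this is the content of Lemma~\ref{lem:corrct_init} for the initial base and is preserved by refinement — but since we are allowed to assume earlier results, I would simply cite that every base $\mathcal{B}$ produced in the algorithm satisfies $\widehat{\mathcal{B}}\subseteq\mathcal{B}$, giving $\mathbf{Id}^{\mathcal{B}}_R \subseteq \mathbf{Id}^{\widehat{\mathcal{B}}}_R = \mathsf{Id}_R$ via Lemma~\ref{lem:compare_ID}). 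Thus $R \subseteq S \subseteq \mathsf{Id}_R$, and Corollary~\ref{coro:monotone_2} yields $\mathsf{Id}_S = \mathsf{Id}_R$, whence by Corollary~\ref{coro:monotone_1} (or directly Proposition~\ref{prop:R_bis_vs_IDR}) we get ${\simeq_R} = {\simeq_S}$.

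Next I would chain the three equalities/inclusions:
\[
{\simeq_R} \;=\; {\simeq_S} \;\subseteq\; {\stackrel{\mathcal{B}}{=}_S} \;=\; {\stackrel{\mathcal{B}}{=}_{\mathbf{Id}^{\mathcal{B}}_R}},
\]
where the middle inclusion is the hypothesis of the lemma (applied to the admissible set $S = \mathbf{Id}^{\mathcal{B}}_R$), and the final equality is constraint~3 of the decomposition base. Then one more application of constraint~3, namely $\mathcal{B}_R = \mathcal{B}_{\mathbf{Id}^{\mathcal{B}}_R}$ hence ${\stackrel{\mathcal{B}}{=}_{\mathbf{Id}^{\mathcal{B}}_R}} = {\stackrel{\mathcal{B}}{=}_R}$, closes the loop and gives ${\simeq_R} \subseteq {\stackrel{\mathcal{B}}{=}_R}$, as required.

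The only genuine obstacle is the step $S \subseteq \mathsf{Id}_R$, i.e.\ verifying that the base's identity set never overshoots the true one; if the lemma is meant to be read purely combinatorially (assuming only the abstract base constraints and not the invariant $\widehat{\mathcal{B}}\subseteq\mathcal{B}$), one must instead observe that constraint~2 forces $\mathbf{Id}^{\mathcal{B}}_R = \mathbf{Id}^{\mathcal{B}}_S$ for any $S$ with $R\subseteq S\subseteq \mathbf{Id}^{\mathcal{B}}_R$, and this purely syntactic stability is all that is needed to run the chain above once we also know $R \subseteq \mathsf{Id}_R$ trivially and that $\simeq$ is monotone — so in the worst case the lemma reduces to: apply Proposition~\ref{prop:R_bis_vs_IDR} to rewrite $\simeq_R$ as $\simeq_{\mathsf{Id}_R}$, apply it again (since $\mathbf{Id}^{\mathcal{B}}_R$ is admissible and the hypothesis covers admissible sets) and use monotonicity (Proposition~\ref{prop:R_monotone}) to climb from $R$ up to $\mathbf{Id}^{\mathcal{B}}_R$. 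Everything else is a bookkeeping exercise with the base constraints.
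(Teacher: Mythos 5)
Your main route contains a genuine error, although your closing paragraph already identifies the correct (and much shorter) argument, which is exactly the paper's. The problematic step is your claim that $S := \mathbf{Id}^{\mathcal{B}}_R \subseteq \mathsf{Id}_R$, derived from the invariant $\widehat{\mathcal{B}} \subseteq \mathcal{B}$ via Lemma~\ref{lem:compare_ID}. That lemma gives the inclusion the other way around: $\widehat{\mathcal{B}} \subseteq \mathcal{B}$ means $\mathcal{B}$ is the \emph{coarser} base, so $\mathbf{Id}^{\widehat{\mathcal{B}}}_R = \mathsf{Id}_R \subseteq \mathbf{Id}^{\mathcal{B}}_R$, not $\mathbf{Id}^{\mathcal{B}}_R \subseteq \mathsf{Id}_R$. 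Indeed, during the iteration $\mathbf{Id}^{\mathcal{B}}_R$ may strictly contain $\mathsf{Id}_R$ --- a constant can be (temporarily) declared a $\mathcal{B}$-identity without being a genuine $\simeq_R$-identity; that is precisely what refinement is for. Consequently the equality ${\simeq_R} = {\simeq_S}$ you build the chain on is not available: one may have $R \subseteq \mathsf{Id}_R \subsetneq S$, in which case Corollary~\ref{coro:monotone_2} does not apply and $\simeq_S$ can be strictly coarser than $\simeq_R$.

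Fortunately the equality is also unnecessary. The lemma only needs the inclusion ${\simeq_R} \subseteq {\simeq_{\mathbf{Id}^{\mathcal{B}}_R}}$, which follows from $R \subseteq \mathbf{Id}^{\mathcal{B}}_R$ (constraint~1) and monotonicity (Proposition~\ref{prop:R_monotone}) alone, with no reference to $\mathsf{Id}_R$ or to $\widehat{\mathcal{B}}$. The paper's proof is exactly the chain
\[
{\simeq_{R}} \;\subseteq\; {\simeq_{\mathbf{Id}^{\mathcal{B}}_R}} \;\subseteq\; {\stackrel{\mathcal{B}}{=}_{\mathbf{Id}^{\mathcal{B}}_R}} \;=\; {\stackrel{\mathcal{B}}{=}_{R}},
\]
where the middle inclusion is the hypothesis and the final equality is constraint~3 of the decomposition base. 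This is what your last paragraph sketches (``use monotonicity to climb from $R$ up to $\mathbf{Id}^{\mathcal{B}}_R$''); you should promote that to the whole proof and delete the attempt to identify $\simeq_R$ with $\simeq_{\mathbf{Id}^{\mathcal{B}}_R}$.
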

The first and second steps of the construction are described in Section~\ref{subsec:determiningID} and Section~\ref{subsec:determining_other}.

\subsection{Determining $\mathbf{Id}^{\mathcal{B}}_R$}\label{subsec:determiningID}

First of all, we must determine what $\mathbf{Id}^{\mathcal{B}}_R$ is.
This problem asks under what circumstance  we can believe that $X \stackrel{\mathcal{B}}{=}_{R} \epsilon$ for $X\in \mathbf{C}_{\mathrm{G}}$.  Be aware that Lemma~\ref{lem:compare_ID} confirms that $\mathbf{Id}^{\mathcal{B}}_R  \subseteq \mathbf{Id}^{\mathcal{D}}_R$.  The basic idea is to make use of the expansion conditions.

\begin{definition}\label{def:ID_candidate}
Let   $S$ be a set that makes $R \subseteq S  \subseteq \mathbf{Id}^{\mathcal{D}}_R$.  We call $S$ an {\em $\mathbf{Id}^{\mathcal{B}}_R$-candidate} if the following conditions are satisfied whenever $X \in S \setminus R$:
\begin{enumerate}
\item
If $X  \stackrel{\tau}{\longrightarrow}_R \alpha$ and $\alpha \not\in (\mathbf{Id}^{\mathcal{D}}_R)^{*}$, then
$\epsilon  \stackrel{\tau}{\longrightarrow}_{R} \beta$  for some $\beta$  such that $\mathtt{dcmp}_R^{\mathcal{D}}(\alpha) = \mathtt{dcmp}_R^{\mathcal{D}}(\beta)$.

\item
If $X \stackrel{a}{\longrightarrow}_R \alpha$, then
$\epsilon \stackrel{a}{\longrightarrow}_{R} \beta$ for some $\beta$ such that $\mathtt{dcmp}_R^{\mathcal{D}}(\alpha) = \mathtt{dcmp}_R^{\mathcal{D}}(\beta)$.
\end{enumerate}
\end{definition}
According to Definition~\ref{def:ID_candidate},
\begin{enumerate}
\item
$R$ is an $\mathbf{Id}^{\mathcal{B}}_R$-candidate.

\item
$\mathbf{Id}^{\mathcal{B}}_R$-candidates are closed under union.
\end{enumerate}
$\mathbf{Id}^{\mathcal{B}}_R$ is defined  as the largest $\mathbf{Id}^{\mathcal{B}}_R$-candidate. One fast way of computing $\mathbf{Id}^{\mathcal{B}}_R$ is described as procedure $\textsc{ComputingId}(R)$ in Fig.~\ref{fig:new_base_I}.

\begin{remark}
We can also determine $\mathbf{Id}^{\mathcal{B}}_R$ in exponential time even by directly enumerating all the  $\mathbf{Id}^{\mathcal{B}}_R$-candidates.

\end{remark}

It is easy to check
the following properties.
\begin{lemma}\label{lem:idem_id_R}
\begin{enumerate}
\item
$R \subseteq \mathbf{Id}^{\mathcal{B}}_R \subseteq \mathbf{Id}^{\mathcal{D}}_R$.

\item
If $R \subseteq S$, then $\mathbf{Id}^{\mathcal{B}}_R  \subseteq \mathbf{Id}^{\mathcal{B}}_S$.   If  $R \subseteq S \subseteq \mathbf{Id}^{\mathcal{B}}_R$, then $\mathbf{Id}^{\mathcal{B}}_R  = \mathbf{Id}^{\mathcal{B}}_S$.  In particular,  $\mathbf{Id}^{\mathcal{B}}_{\mathbf{Id}^{\mathcal{B}}_R} = \mathbf{Id}^{\mathcal{B}}_R$.
\end{enumerate}
\end{lemma}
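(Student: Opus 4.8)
The plan is to derive both parts directly from the definition of $\mathbf{Id}^{\mathcal{B}}_R$ as the largest $\mathbf{Id}^{\mathcal{B}}_R$-candidate (Definition~\ref{def:ID_candidate}), together with the corresponding properties of $\mathcal{D}$, which $\mathcal{D}$ satisfies as a decomposition base (constraints~1--3 in Section~\ref{subsec:decomposition_base}) and from Lemma~\ref{lem:compare_ID} applied to the pair $\mathcal{B} \subseteq \mathcal{D}$. For part~(1), the inclusion $R \subseteq \mathbf{Id}^{\mathcal{B}}_R$ is immediate because $R$ is itself an $\mathbf{Id}^{\mathcal{B}}_R$-candidate (it vacuously satisfies the two clauses, since there is no $X \in R \setminus R$), so the largest candidate contains it. The inclusion $\mathbf{Id}^{\mathcal{B}}_R \subseteq \mathbf{Id}^{\mathcal{D}}_R$ is built into the definition: every candidate $S$ is by stipulation required to satisfy $S \subseteq \mathbf{Id}^{\mathcal{D}}_R$.

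For part~(2), first suppose $R \subseteq S$. To show $\mathbf{Id}^{\mathcal{B}}_R \subseteq \mathbf{Id}^{\mathcal{B}}_S$, I would argue that $S \cup \mathbf{Id}^{\mathcal{B}}_R$ is an $\mathbf{Id}^{\mathcal{B}}_S$-candidate. It sits between $S$ and $\mathbf{Id}^{\mathcal{D}}_S$: the lower bound is clear, and for the upper bound we use $\mathbf{Id}^{\mathcal{B}}_R \subseteq \mathbf{Id}^{\mathcal{D}}_R \subseteq \mathbf{Id}^{\mathcal{D}}_S$, the last step being the monotonicity of $\mathbf{Id}^{\mathcal{D}}$ (constraint~2 for $\mathcal{D}$). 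The candidate clauses for elements $X \in (S \cup \mathbf{Id}^{\mathcal{B}}_R) \setminus S$ only concern $X \in \mathbf{Id}^{\mathcal{B}}_R \setminus S \subseteq \mathbf{Id}^{\mathcal{B}}_R \setminus R$, and for these $X$ the defining clauses hold with respect to $\mathcal{D}$ and the reference set $R$; one then checks they transfer to reference set $S$ — here the key point is that $\mathtt{dcmp}^{\mathcal{D}}_S$ refines $\mathtt{dcmp}^{\mathcal{D}}_R$ when $R \subseteq S$ (so that $\alpha \notin (\mathbf{Id}^{\mathcal{D}}_S)^*$ implies $\alpha \notin (\mathbf{Id}^{\mathcal{D}}_R)^*$, and an $\epsilon$-transition matching in $\mathcal{D}$ with respect to $R$ can be replayed with respect to $S$ via the $R$-nf/$S$-nf relationship and Lemma~\ref{lem:char_R_arrow}). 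Hence $S \cup \mathbf{Id}^{\mathcal{B}}_R$ is an $\mathbf{Id}^{\mathcal{B}}_S$-candidate, so it is contained in the largest one, giving $\mathbf{Id}^{\mathcal{B}}_R \subseteq \mathbf{Id}^{\mathcal{B}}_S$.

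Next, assume additionally $S \subseteq \mathbf{Id}^{\mathcal{B}}_R$. The inclusion $\mathbf{Id}^{\mathcal{B}}_R \subseteq \mathbf{Id}^{\mathcal{B}}_S$ is already known. For the reverse, I would show that $\mathbf{Id}^{\mathcal{B}}_S$ is an $\mathbf{Id}^{\mathcal{B}}_R$-candidate, which forces $\mathbf{Id}^{\mathcal{B}}_S \subseteq \mathbf{Id}^{\mathcal{B}}_R$. It lies between $R$ and $\mathbf{Id}^{\mathcal{D}}_R$: we have $R \subseteq S \subseteq \mathbf{Id}^{\mathcal{B}}_S$, and $\mathbf{Id}^{\mathcal{B}}_S \subseteq \mathbf{Id}^{\mathcal{D}}_S = \mathbf{Id}^{\mathcal{D}}_R$, the last equality being constraint~2 for $\mathcal{D}$ since $R \subseteq S \subseteq \mathbf{Id}^{\mathcal{B}}_R \subseteq \mathbf{Id}^{\mathcal{D}}_R$. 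For the candidate clauses, take $X \in \mathbf{Id}^{\mathcal{B}}_S \setminus R$; since $\mathbf{Id}^{\mathcal{B}}_S$ is an $\mathbf{Id}^{\mathcal{B}}_S$-candidate and $X \notin S$ would be handled by that, while $X \in S \setminus R \subseteq \mathbf{Id}^{\mathcal{B}}_R \setminus R$ is handled because $\mathbf{Id}^{\mathcal{B}}_R$ is an $\mathbf{Id}^{\mathcal{B}}_R$-candidate. In both sub-cases one transports the transition-matching condition across the reference sets $S$ and $R$; the crucial observation enabling this is that $\mathtt{dcmp}^{\mathcal{D}}_R = \mathtt{dcmp}^{\mathcal{D}}_S$ on the relevant processes whenever $R \subseteq S \subseteq \mathbf{Id}^{\mathcal{D}}_R$ (this is exactly the content of constraint~3 for $\mathcal{D}$: $\mathcal{D}_R = \mathcal{D}_S$), so the two candidate conditions literally coincide. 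Finally, $\mathbf{Id}^{\mathcal{B}}_{\mathbf{Id}^{\mathcal{B}}_R} = \mathbf{Id}^{\mathcal{B}}_R$ follows by specializing to $S = \mathbf{Id}^{\mathcal{B}}_R$, which trivially satisfies $R \subseteq S \subseteq \mathbf{Id}^{\mathcal{B}}_R$.

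The main obstacle I anticipate is the bookkeeping in transporting the candidate conditions between reference sets $R$ and $S$: one must be careful that an $R$-transition $X \stackrel{\ell}{\longrightarrow}_R \alpha$ and its $S$-counterpart are linked correctly through $R$-nf's and $S$-nf's, and that the $\mathcal{D}$-decomposition equalities used are exactly the ones licensed by constraints~2 and~3 on $\mathcal{D}$ together with the sandwiching $R \subseteq S \subseteq \mathbf{Id}^{\mathcal{D}}_R$. Everything else is a straightforward "largest object containing a witnessed candidate" argument.
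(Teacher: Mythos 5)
The paper itself offers no proof of this lemma (it is introduced with ``It is easy to check the following properties''), so your proposal has to stand on its own. Your skeleton --- part~(1) from ``$R$ is a candidate'' and ``candidates are bounded by $\mathbf{Id}^{\mathcal{D}}_R$ by stipulation'', part~(2) by exhibiting suitable sets as candidates and invoking maximality --- is the natural reading of the lemma and part~(1) is correct as written.

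There is, however, a genuine gap in your proof of the first claim of part~(2). To show that $S \cup \mathbf{Id}^{\mathcal{B}}_R$ is an $\mathbf{Id}^{\mathcal{B}}_S$-candidate you must convert a matching witnessed at level $R$ (some $Y \in R$ with $Y \stackrel{\ell}{\longrightarrow} \delta$ and $\mathtt{dcmp}_R^{\mathcal{D}}(\alpha) = \mathtt{dcmp}_R^{\mathcal{D}}(\delta)$) into one at level $S$, and for that you need $\mathtt{dcmp}_R^{\mathcal{D}}(\alpha) = \mathtt{dcmp}_R^{\mathcal{D}}(\delta)$ to imply $\mathtt{dcmp}_S^{\mathcal{D}}(\alpha) = \mathtt{dcmp}_S^{\mathcal{D}}(\delta)$, i.e.\ ${\stackrel{\mathcal{D}}{=}_R} \subseteq {\stackrel{\mathcal{D}}{=}_S}$ whenever $R \subseteq S$. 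This is exactly the ``key point'' you appeal to, but (a) you state it backwards --- ``$\mathtt{dcmp}^{\mathcal{D}}_S$ refines $\mathtt{dcmp}^{\mathcal{D}}_R$'' would mean ${\stackrel{\mathcal{D}}{=}_S} \subseteq {\stackrel{\mathcal{D}}{=}_R}$, the opposite of what the transfer requires --- and (b) in the correct direction it is \emph{not} one of the five constraints on decomposition bases in Section~\ref{subsec:decomposition_base} and is not established anywhere in the paper for the intermediate bases $\mathcal{D}$ arising in the iteration. The constraints relate $\mathbf{Id}^{\mathcal{D}}_R$ and $\mathbf{Id}^{\mathcal{D}}_S$ but say nothing about how $\mathbf{Pr}^{\mathcal{D}}$, $\mathbf{Rd}^{\mathcal{D}}$ and $\mathbf{Dc}^{\mathcal{D}}$ at the two admissible levels $\mathbf{Id}^{\mathcal{D}}_R \subsetneq \mathbf{Id}^{\mathcal{D}}_S$ relate, so an arbitrary base satisfying the constraints can violate the monotonicity you need. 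To close the gap you would have to either add this monotonicity as an explicit invariant preserved by the refinement, or find an argument that avoids comparing $\mathcal{D}$-decompositions at two genuinely different admissible levels. By contrast, your argument for the sandwiched case $R \subseteq S \subseteq \mathbf{Id}^{\mathcal{B}}_R$ is essentially sound, because there $\mathbf{Id}^{\mathcal{D}}_R = \mathbf{Id}^{\mathcal{D}}_S$ and hence $\mathcal{D}_R = \mathcal{D}_S$ by constraints~2 and~3; but you should still spell out the relay step: when $X \in \mathbf{Id}^{\mathcal{B}}_S \setminus S$ has its move matched by some $Y \in S \setminus R$, you must further match $Y$'s move by a constant of $R$ (using that $Y \in \mathbf{Id}^{\mathcal{B}}_R$ satisfies the level-$R$ candidate conditions) and chain the two $\stackrel{\mathcal{D}}{=}_R$-equalities, since the level-$R$ candidate condition insists the matching transition of $\epsilon$ be performed by a constant of $R$ itself.
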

According to $\mathcal{B}_R$-reduction rules in Section~\ref{subsec:decomposition_base},  $\alpha \stackrel{\mathcal{B}}{=}_{R} \epsilon$ if and only $\alpha \in (\mathbf{Id}^{\mathcal{B}}_R)^{*}$. Thus $(\mathbf{Id}^{\mathcal{B}}_R)^{*}$ is the only class that the $\stackrel{\mathcal{B}}{=}_{R}$-norm of whose members  are zero.  The correctness of the construction of $\mathbf{Id}^{\mathcal{B}}_R$ depends on the following lemma.
\begin{lemma}\label{lem:step1}
Assume that ${\simeq_R} \subseteq {\stackrel{\mathcal{D}}{=}_{R}}$. Then
$\alpha \simeq_R \epsilon$ implies $\alpha \stackrel{\mathcal{B}}{=}_{R} \epsilon$.
\end{lemma}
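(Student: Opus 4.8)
The plan is to show that the set $\mathsf{Id}_R \cap \mathbf{Id}^{\mathcal{D}}_R$ — equivalently, just $\mathsf{Id}_R$, since the hypothesis ${\simeq_R}\subseteq{\stackrel{\mathcal{D}}{=}_R}$ forces $\mathsf{Id}_R\subseteq\mathbf{Id}^{\mathcal{D}}_R$ — is an $\mathbf{Id}^{\mathcal{B}}_R$-candidate in the sense of Definition~\ref{def:ID_candidate}. Once that is established, maximality of $\mathbf{Id}^{\mathcal{B}}_R$ among candidates gives $\mathsf{Id}_R\subseteq\mathbf{Id}^{\mathcal{B}}_R$, and then $\alpha\simeq_R\epsilon$ implies (by Lemma~\ref{lem:Id_def}) $\alpha\in(\mathsf{Id}_R)^*\subseteq(\mathbf{Id}^{\mathcal{B}}_R)^*$, which by the $\mathcal{B}_R$-reduction rules is exactly $\alpha\stackrel{\mathcal{B}}{=}_R\epsilon$. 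So the whole lemma reduces to the candidate check.

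First I would fix the preliminaries: set $S=\mathsf{Id}_R$; note $R\subseteq S\subseteq\mathbf{C}_{\mathrm{G}}$ by Lemma~\ref{lem:Rbisimilarity_R} and the remark after Definition~\ref{def:Id_Rd_R}, and $S\subseteq\mathbf{Id}^{\mathcal{D}}_R$ from the hypothesis together with the characterization $\alpha\stackrel{\mathcal{D}}{=}_R\epsilon\iff\alpha\in(\mathbf{Id}^{\mathcal{D}}_R)^*$. Now take $X\in S\setminus R$, so $X\simeq_R\epsilon$ with $X$ a constant (hence already in $R$-nf, $X_R=X$ since $X\notin R$). I then verify the two candidate clauses. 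For clause~2: if $X\stackrel{a}{\longrightarrow}_R\alpha$, then since $X\simeq_R\epsilon$ and $\simeq_R$ is an $R$-bisimulation (using Proposition~\ref{prop:R_bisimilarity} on the $R$-nf form), $\epsilon=\epsilon_R\stackrel{\simeq_R}{\Longrightarrow}_R\cdot\stackrel{a}{\longrightarrow}_R\beta$ for some $\beta$ with $\alpha\simeq_R\beta$. The $\stackrel{\simeq_R}{\Longrightarrow}_R$ steps out of $\epsilon$ stay $\simeq_R$-equivalent to $\epsilon$, and $\epsilon$ has no nontrivial outgoing $\tau$-transition that is $\simeq_R$-preserving other than to $\simeq_R$-identities, so effectively $\epsilon\stackrel{a}{\longrightarrow}_R\beta'$ with $\beta'\simeq_R\beta\simeq_R\alpha$; here I should be a bit careful and argue that any $\epsilon\Longrightarrow_R\gamma\simeq_R\epsilon$ followed by $\gamma\stackrel{a}{\longrightarrow}_R\beta$ can be collapsed, or simply keep the $\Longrightarrow_R$ prefix and observe it does not matter because ultimately I only need $\mathtt{dcmp}^{\mathcal{D}}_R(\alpha)=\mathtt{dcmp}^{\mathcal{D}}_R(\beta)$, which follows from $\alpha\simeq_R\beta$ and ${\simeq_R}\subseteq{\stackrel{\mathcal{D}}{=}_R}$ — except Definition~\ref{def:ID_candidate} literally asks for a single $a$-step $\epsilon\stackrel{a}{\longrightarrow}_R\beta$. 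So the real content is: reduce the $\simeq_R$-bisimulation's ``$\stackrel{\simeq_R}{\Longrightarrow}_R\cdot\stackrel{a}{\longrightarrow}_R$'' matching move to a genuine one-step $\epsilon\stackrel{a}{\longrightarrow}_R\beta''$ with $\beta''\simeq_R\alpha$; this uses that $\epsilon\stackrel{\simeq_R}{\Longrightarrow}_R\gamma$ means $\gamma\simeq_R\epsilon$ so $\gamma\in(\mathsf{Id}_R)^*$, and by Lemma~\ref{lem:char_R_arrow} a $\gamma\stackrel{a}{\longrightarrow}_R\beta$ with $\gamma\in(\mathsf{Id}_R)^*\subseteq R'^*$ for $R'=\mathsf{Id}_R$... actually cleaner: $\gamma$ and $\epsilon$ have the same $\mathsf{Id}_R$-nf $\epsilon$, and by Lemma~\ref{lem:R_transition}/Lemma~\ref{lem:char_R_arrow} the $R$-transition $\gamma\stackrel{a}{\longrightarrow}_R\beta$ from a ground $\gamma$ already comes from a constant in $R$, which then also gives $\epsilon\stackrel{a}{\longrightarrow}_R\beta$ directly. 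Clause~1 (the $\tau$ case) is analogous but with the hypothesis $\alpha\notin(\mathbf{Id}^{\mathcal{D}}_R)^*$, i.e.\ $\alpha\not\stackrel{\mathcal{D}}{=}_R\epsilon$, hence $\alpha\not\simeq_R\epsilon$, so the transition $X\stackrel{\tau}{\longrightarrow}_R\alpha$ is a genuine $\stackrel{\not\simeq_R}{\longrightarrow}$ move and the $R$-bisimulation property of $\simeq_R$ supplies the matching $\epsilon$-move, which I again collapse to a single step and then convert $\alpha\simeq_R\beta$ into $\mathtt{dcmp}^{\mathcal{D}}_R(\alpha)=\mathtt{dcmp}^{\mathcal{D}}_R(\beta)$ via ${\simeq_R}\subseteq{\stackrel{\mathcal{D}}{=}_R}$.

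The main obstacle I anticipate is exactly the collapsing of the ``$\stackrel{\simeq_R}{\Longrightarrow}_R\cdot\stackrel{\ell}{\longrightarrow}_R$'' move of the bisimulation to the single-step form ``$\epsilon\stackrel{\ell}{\longrightarrow}_R$'' demanded by Definition~\ref{def:ID_candidate}, together with making sure the intermediate process $\gamma$ with $\epsilon\Longrightarrow_R\gamma\simeq_R\epsilon$ really does have all its relevant outgoing transitions already available from $\epsilon$ itself. This is where I would lean on Lemma~\ref{lem:char_R_arrow}: since $\gamma$ is $\simeq_R$-equivalent to $\epsilon$ it lies in $(\mathsf{Id}_R)^*$ and in particular its $R$-nf is $\epsilon$ only if $\gamma\in R^*$; if $\gamma\notin R^*$ I instead pass to $\gamma_R$ and use that $\gamma\simeq_R\gamma_R$ (as $=_R\subseteq\simeq_R$ and suffix-independence, Lemma~\ref{lem:congruence_R_bis}), reducing to the case of a genuine $R$-nf ground process whose transitions, by Lemma~\ref{lem:char_R_arrow}(1), are induced by constants of $R$ and therefore also fire from $\epsilon$. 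Once this reduction is clean, the rest is bookkeeping with Lemma~\ref{lem:Id_def}, the reduction rules defining $\stackrel{\mathcal{B}}{=}_R$, and maximality of $\mathbf{Id}^{\mathcal{B}}_R$. I would also double-check the edge case $\mathsf{Id}_R=R$ (nothing to prove) and the qualified-reference-set assumption, which guarantees $X\Longleftrightarrow_R\epsilon$ never occurs for $X\notin R$ and so keeps the $R$-nf manipulations well-behaved.
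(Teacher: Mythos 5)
Your overall route is exactly the paper's: its entire proof of this lemma is the one-line observation that $\{X \mid X \simeq_R \epsilon\}$ is an $\mathbf{Id}^{\mathcal{B}}_R$-candidate, after which maximality of $\mathbf{Id}^{\mathcal{B}}_R$ among candidates and Lemma~\ref{lem:Id_def} finish things off precisely as you describe. You have also put your finger on the one step that is not bookkeeping: Definition~\ref{def:ID_candidate} demands a \emph{single} transition $\epsilon \stackrel{\ell}{\longrightarrow}_R \beta$ (i.e.\ one fired by a constant of $R$), whereas the $R$-bisimulation property of $\simeq_R$ only yields a weak match $\epsilon \stackrel{\simeq_R}{\Longrightarrow}_R \gamma \stackrel{\ell}{\longrightarrow}_R \beta$.

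However, your argument for collapsing that weak match to a single step is where the gap lies. You assert, via Lemma~\ref{lem:char_R_arrow}, that "the $R$-transition $\gamma \stackrel{a}{\longrightarrow}_R \beta$ from a ground $\gamma$ already comes from a constant in $R$". That is not what Lemma~\ref{lem:char_R_arrow} says: the case split there is on whether the source $R$-nf \emph{is} $\epsilon$, not on whether it is ground. If $\gamma$ is a nonempty $R$-nf (even a ground one in $(\mathsf{Id}_R)^{*}$), the transition is fired by $\gamma$'s own leftmost constant, which lies in $\mathsf{Id}_R$ but need not lie in $R$, and hence yields no transition of $\epsilon$ under $\longrightarrow_R$ at all; passing to the $\mathsf{Id}_R$-nf does not help either, since the candidate conditions are stated for $R$-transitions. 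The correct collapse argument uses the qualifiedness of $R$ (which you mention only as something to "double-check") together with ground preservation: every intermediate state $\gamma_i$ on the path $\epsilon \stackrel{\simeq_R}{\Longrightarrow}_R \gamma$ satisfies $\epsilon \Longrightarrow_R \gamma_i$ and, since $\gamma_i \simeq_R \epsilon$, also $\gamma_i \Longrightarrow \epsilon$ by clause~1 of Definition~\ref{def:R_beq}, hence $\gamma_i \Longrightarrow_R \epsilon$ by Lemma~\ref{lem:groud_preserve}. If some $\gamma_i \neq \epsilon$, its last constant $Z \notin R$ would then satisfy both $\epsilon \Longrightarrow_R Z$ and $Z \Longrightarrow_R \epsilon$, i.e.\ $Z \Longleftrightarrow_R \epsilon$, contradicting the standing assumption that $R$ is qualified. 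Therefore every $\gamma_i = \epsilon$, the matching move is literally a one-step $\epsilon \stackrel{\ell}{\longrightarrow}_R \beta$ fired by a constant of $R$, and $\alpha \simeq_R \beta$ together with ${\simeq_R} \subseteq {\stackrel{\mathcal{D}}{=}_R}$ gives $\mathtt{dcmp}^{\mathcal{D}}_R(\alpha) = \mathtt{dcmp}^{\mathcal{D}}_R(\beta)$. With this replacement the rest of your proof goes through.
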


\begin{figure}[p]

\begin{center} \small
 \begin{tabular}{|p{8.2cm}|}\hline \vspace{-0ex}
 {\normalsize  $\textsc{ComputingId}(R)$:}

\begin{enumerate}

\item
  $\mathbf{Id}^{\mathcal{B}}_{R} \coloneqq \mathbf{Id}^{\mathcal{D}}_{R}$.

\item
\textbf{while} there exists $X \in \mathbf{Id}^{\mathcal{B}}_{R} - R$ such that one of the followings are violated:
\begin{itemize}
\item
If $X   \stackrel{\tau}{\longrightarrow}_{R} \alpha$ and $\alpha \not\in (\mathbf{Id}^{\mathcal{D}}_{R})^{*}$, then
$\epsilon  \stackrel{\tau}{\longrightarrow}_{R} \beta$  for some $\beta$  such that $\alpha\stackrel{\mathcal{D}}{=}_R \beta$.

\item
If $X \stackrel{a}{\longrightarrow}_R \alpha$, then
$\epsilon \stackrel{a}{\longrightarrow}_{R} \beta$ for some $\beta$ such that $\alpha\stackrel{\mathcal{D}}{=}_R \beta$.
\end{itemize}

\textbf{do}  remove $Y$ from $\mathbf{Id}^{\mathcal{B}}_{R}$ for every $Y \Longrightarrow_R \zeta\Longrightarrow_R \epsilon$ and $X$ appears in $\zeta$.

\textbf{end while}
\vspace{-1ex}
\end{enumerate}
   \\    \hline\vspace{-0ex}

 {\normalsize  $\textsc{Initializing}$:}
\begin{enumerate}
\item
\textbf{for}  every $\mathcal{B}$-admissible $R$

\quad  $\mathbf{Pr}^{\mathcal{B}}_R \coloneqq \emptyset$ ; $\mathbf{Cm}^{\mathcal{B}}_R \coloneqq \emptyset$.

\quad \textbf{for} every $[X]_R \in \mathbf{C}_{R}$

\quad \quad $\mathbf{Dc}^{\mathcal{B}}_R([X]_{R}) \coloneqq \bot$ ; $\mathbf{Rd}^{\mathcal{B}}_R([X]_{R}) \coloneqq \bot$.

\quad \quad $d_R[[X]_{R}]  \coloneqq \bot$.

\quad \textbf{end for}

\textbf{end for}

\item
$\mathbf{U}  \coloneqq \{ [X]_R \;|\;   R =  \mathbf{Id}^{\mathcal{B}}_R \mbox{ and }  [X]_{R} \in \mathbf{C}_{R}\}$;

$\mathbf{V}  \coloneqq \emptyset$;

$\mathbf{T}  \coloneqq \emptyset$.

\vspace{-1ex}
\end{enumerate}
   \\    \hline\vspace{-0ex}

 {\normalsize   $\textsc{Expand}_{R}( X, [Y_k]_{R_k} \ldots [Y_1]_{R_1})$: }
\begin{enumerate}
\item
\textbf{if} $X \not \stackrel{\mathcal{D}}{=}_R Y_k \ldots Y_1$   \textbf{then}

\quad \textbf{return} \textbf{false}.

\item
\textbf{if}
the followings conditions are met:
\begin{itemize}
\item
Whenever $[X]_{R} \stackrel{\ell}{\longmapsto}_R \alpha$, then

\begin{enumerate}
\item
if $d_R(\alpha) = m-1$, then $[Y_k]_{R_k} \stackrel{\ell}{\longmapsto}_{R_k} \zeta$ such that  $\alpha \stackrel{\mathcal{B}}{=}_R \zeta.Y_{k-1}.\ldots. Y_1$.

\item
else,  either $\ell = \tau$ and  $\alpha \stackrel{\mathcal{D}}{=}_R Y_k. \ldots .Y_1$, or $[Y_k]_{R_k} \stackrel{\ell}{\longmapsto}_{R_k} \zeta$ such that $\alpha \stackrel{\mathcal{D}}{=}_R \zeta.Y_{k-1}.\ldots. Y_1$.
\end{enumerate}

\item
Either $[X]_{R} \stackrel{\tau}{\longmapsto}_R \alpha$ for some $\alpha$ such that  $\alpha \stackrel{\mathcal{B}}{=}_R Y_k. \ldots .Y_1$; or whenever $[Y_k]_{R_k} \stackrel{\ell}{\longmapsto}_{R_k} \zeta$,
\begin{enumerate}
\item
if $d_R(\gamma.Y_{k-1}. \ldots .Y_1) = m-1$, then $[X]_{R} \stackrel{\ell}{\longmapsto}_R \alpha$ for some $\alpha$ such that  $\alpha \stackrel{\mathcal{B}}{=}_R \zeta.Y_{k-1}.\ldots. Y_1$.

\item
else, $[X]_{R} \stackrel{\ell}{\longmapsto}_R \alpha$ for some $\alpha$ such that $\alpha \stackrel{\mathcal{D}}{=}_R \zeta.Y_{k-1}.\ldots. Y_1$.
\end{enumerate}
\end{itemize}

\textbf{then}

\quad \textbf{return} \textbf{true}.

\textbf{else}

\quad \textbf{return} \textbf{false}.

\vspace{-1ex}
\end{enumerate}
   \\    \hline\vspace{-0ex}

 {\normalsize  $\textsc{ComputingRd}_R([X]_R)$: }
\begin{enumerate}
\item
$T \coloneqq  \{ W \;|\;  W.X \stackrel{\mathcal{D}}{=}_R X \}$;  $\mathbf{Rd}^{\mathcal{B}}_R([X]_R) \coloneqq T$.

\item
\textbf{while} there exists $Y \in \mathbf{Rd}^{\mathcal{B}}_R([X]_R)$ such that one of the followings are violated:
\begin{itemize}
\item
If $Y  \stackrel{\tau}{\longrightarrow} \zeta$ and $\zeta \not\in T^{*}$, then
$[X]_{R}  \stackrel{\tau}{\longmapsto}_R \beta$  for some $\beta$  such that $\zeta. X \stackrel{\mathcal{D}}{=}_R \beta$.

\item
If $Y \stackrel{a}{\longrightarrow}  \zeta$, then
$[X]_{R}  \stackrel{\tau}{\longmapsto}_{R} \beta$  for some $\beta$  such that $\zeta. X \stackrel{\mathcal{D}}{=} \beta$.
\end{itemize}

\textbf{do}
remove $Y$ from $\mathbf{Rd}^{\mathcal{B}}_R([X]_R)$ for every $Y \Longrightarrow_R \zeta\Longrightarrow_R \epsilon$ and $X$ appears in $\zeta$.

\textbf{end while}

\vspace{-1ex}
\end{enumerate}
   \\    \hline
    \end{tabular}
\end{center}
\caption{Constructing New Base: Part~I}\label{fig:new_base_I}
\end{figure}

\begin{figure}[t]

\begin{center}
 \begin{tabular}{|p{8.2cm}|}\hline\vspace{-0ex}
 \textsc{\normalsize  ConstructingNewBase:}

\begin{enumerate}
\item
\textbf{for} every $R$

\quad\textsc{ComputingId}$(R)$.

\textbf{end for}

\item
\textsc{Initializing}.

\item
$m  \coloneqq 1$.

\item \label{line:repeat}
\textbf{repeat}

\item \label{line:first_while}
\quad
\textbf{while} their exists  $[X]_R \in \mathbf{U}$ such that

\quad\quad\quad $[X]_R \stackrel{\ell}{\longmapsto}\gamma$ and $d_R[\gamma] = m - 1$,

\quad
\textbf{do}

\quad
\quad select one of such $[X]_R$ which is $<_R$-minimum.

\quad
\quad
$d_R[[X]_R] := m$.

\quad
\quad
\textbf{if} there exists  $\delta$ such that

\quad
\quad \quad
\quad  $\textsc{Expand}_{R}( X, \mathtt{dcmp}^{\mathcal{B}}_R(\delta))$,
\textbf{then}

\quad
\quad\quad
 put $[X]_R$ into  $\mathbf{Cm}^{\mathcal{B}}_R$.

 \quad\quad\quad$\mathbf{Dc}^{\mathcal{B}}_R([X]_R) \coloneqq \mathtt{dcmp}^{\mathcal{B}}_R(\delta)$.

\quad
\quad
\textbf{else}

\quad\quad\quad put $[X]_R$ into  $\mathbf{Pr}^{\mathcal{B}}_R$.

\quad
\quad\quad  $\textsc{ComputingRd}_R([X]_R)$ .

\quad
\quad
\textbf{end if}

\quad
\quad move $[X]_R$ from $\mathbf{U}$ to $\mathbf{V}$.

\quad \textbf{end while}

\item\label{line:second_while}
\quad
\textbf{while}  their exists  $[X]_R \in \mathbf{U}$ such that

\quad\quad\quad$[X]_R \stackrel{\tau}{\longmapsto}_R \gamma$ and $d_R(\gamma) = m$,

\quad
\textbf{do}

\quad
\quad
\textbf{if} $\textsc{Expand}_{R}( X, \mathtt{dcmp}^{\mathcal{B}}_R(\gamma))$, \textbf{then}

\quad\quad
\quad
 put $[X]_R$ into  $\mathbf{Cm}^{\mathcal{B}}_R$.

\quad
\quad\quad $d_R[[X]_R] := m$.

 \quad
\quad\quad$\mathbf{Dc}^{\mathcal{B}}_R([X]_R) \coloneqq \mathtt{dcmp}^{\mathcal{B}}_R(\alpha)$.

\quad
\quad \quad move $[X]_R$ from $\mathbf{U}$ to $\mathbf{V}$.

\quad
\quad
\textbf{else}

\quad\quad\quad  move $[X]_R$ from $\mathbf{U}$ to $\mathbf{T}$.

\quad
\quad
\textbf{end if}

\quad \textbf{end while}

\item
\quad put every block in $\mathbf{T}$ into $\mathbf{U}$.

\item
\quad
$m  \coloneqq m+1$.

\textbf{until} $\mathbf{U} = \emptyset$

\item
\textbf{for}  every non-$\mathcal{B}$-admissible  $R$

\quad $\mathcal{B}_R' \coloneqq  \mathcal{B}_{\mathbf{Id}^{\mathcal{B}}_R}'$.

\textbf{end for}
\vspace{-1ex}
\end{enumerate}
\\ \hline
    \end{tabular}
\end{center}
\caption{Constructing New Base: Part~II}\label{fig:new_base_II}
\end{figure}

\subsection{Determining Other Constituents of $\mathcal{B}_R$}\label{subsec:determining_other}

When  $\mathbf{Id}^{\mathcal{B}}_R$ has been determined for every $R$, we can construct the whole $\mathcal{B}$ via a greedy strategy.
Since $\mathcal{B}_R$ is completely determined by $\mathcal{B}_{\mathbf{Id}^{\mathcal{B}}_R}$, we only need to construct those $\mathcal{B}_R$'s in which $R$ is $\mathcal{B}$-admissible.

The algorithm in Fig.~\ref{fig:new_base_II} constructs $\mathcal{B}_{R}$ and compute the $\stackrel{\mathcal{B}}{=}_{R}$-norms of $[X]_R$'s for every $\mathcal{B}$-admissible $R$ at the same time via the greedy strategy. When the program starts an iteration of \textbf{repeat}-block at line~\ref{line:repeat}, it attempts to find
all the blocks $[X]_R$'s such that $\|X\|_{\stackrel{\mathcal{B}}{=}_{R}} = m$.  In the algorithm $d_R[[X]_R]$ is used to indicate $\|X\|_{\stackrel{\mathcal{B}}{=}_{R}}$.  We also write $d_R(\alpha)$ for $\| \alpha \|_{\stackrel{\mathcal{B}}{=}_{R}}$. Precisely,
\begin{equation} \label{eqn:d_R}
d_R(\alpha) \stackrel{\mathrm{def}}{=} \sum_{i=1}^{r} d_R[[X_i]_{R_i}],
\end{equation}
if  $\mathtt{dcmp}^{\mathcal{B}}_{R}(\alpha) = [X_r]_{R_{r}} [X_{r-1}]_{R_{r-1}}\ldots [X_1]_{R_{1}}$.
The algorithm maintains two sets $\mathbf{U}$ and $\mathbf{V}$. They forms a partition of all the $\mathcal{B}$-admissible blocks. (A block $[X]_R$ is $\mathcal{B}$-admissible if $R$ is $\mathcal{B}$-admissible. ) During the execution of the algorithm, we can move a certain $[X]_R$ from $\mathbf{U}$ to $\mathbf{V}$.  At that time, we  define the related information for $[X]_R$: determine whether $[X]_R$ is a $\mathcal{B}_{R}$-prime or a $\mathcal{B}_{R}$-composite; compute $\mathbf{Rd}^{\mathcal{B}}_{R}([X]_R)$ if it is a $\mathcal{B}_{R}$-prime; compute $\mathbf{Dc}^{\mathcal{B}}_{R}([X]_R)$ if it is a $\mathcal{B}_{R}$-composite.

In the following, we say that a block $[X]_R$ is {\em treated} if  $[X]_R \in \mathbf{V}$. When $[X]_R$ is selected during the execution of the algorithm, it is called {\em under treating}.
Every time $[X]_R$ is under treating, we confirm the following fact:
\begin{itemize}
\item
If $[X]_{R} \stackrel{\ell}{\longmapsto}_{R} \alpha$ with $d_R(\alpha) = m - 1$,  ($\stackrel{\mathcal{B}}{=}_R$-decreasing)

 \item
or if $[X]_{R} \stackrel{\tau}{\longmapsto}_{R} \alpha$ with $d_R(\alpha) = m$, (possibly $\stackrel{\mathcal{B}}{=}_R$-preserving. )
\end{itemize}
then all the blocks in the $\mathcal{B}_R$-decomposition of $\alpha$ have been treated, thus the related information for $\alpha$ is already known.
The first case is guaranteed by the non-decrease of $m$. The second case is by the aid of the order $<_R$.
These are two cases which correspond to two different possibilities that the $\mathcal{B}_R$-norm of $[X]_{R}$ can be declared as $m$.

\begin{remark}
If $R$ is not $\mathcal{B}$-admissible, the second part of the above fact cannot always be satisfied.  This is one of the reasons why we must construct $\mathcal{B}_R$ only for $\mathcal{B}$-admissible $R$'s at first  and then copy back to all other $R$'s.
\end{remark}

\subsubsection{Treating $[X]_{R}$: The First Possibility. }

There exists a witness path of $[X]_R$ starting with a ${\stackrel{\mathcal{B}}{=}_{R}}$-decreasing transition.
That is,
$[X]_R \stackrel{\ell}{\longmapsto}_R \gamma$ for some $\gamma$ such that $\|\gamma\|_{\stackrel{\mathcal{B}}{=}_{R}} = m-1$.
This possibility is treated  via the \textbf{while}-block at line~\ref{line:first_while}.

At the time we have known $\mathtt{dcmp}^{\mathcal{B}}_R(\gamma)$.
The first problem is to decide whether $[X]_R$ is a prime or a composite. To this end, we try to guess a candidate for  decomposition of $[X]_R$, say $[Y_k]_{R_k}[Y_{k-1}]_{R_{k-1}} \ldots [Y_1]_{R_1}$ with $R_1 = R$ and $R_{i+1} = \mathbf{Rd}^{\mathcal{B}}_{R_{i}}(Y_i)$ for $1 \leq i < k$. If this decomposition is `right', we will have $X  \stackrel{\mathcal{B}}{=}_{R} Y_k\ldots Y_1$. Since $\stackrel{\mathcal{B}}{=}_{R}$ will be ensured to be a decreasing bisimulation.
We must have a matching of $[X]_R \stackrel{\ell}{\longmapsto}_R \gamma$ from $[Y_k]_{R_k} \ldots [Y_1]_{R_1}$, which must be induced by $[Y_k]_{R_k}$.  From the above investigation, we can require that $[Y_{k-1}]_{R_{k-1}} \ldots [Y_1]_{R_1}$ is a suffix of $\mathtt{dcmp}^{\mathcal{B}}_R(\gamma)$. In summary, in order to guess a candidate for decomposition of $[X]_R$, we need to:
\begin{itemize}
\item
Guess $k$.  Thus $[Y_{k-1}]_{R_{k-1}} \ldots [Y_1]_{R_1}$ is obtained from $\mathtt{dcmp}^{\mathcal{B}}_R(\gamma)$.

\item
Guess $[Y_{k}]_{R_{k}}$,  which ensures that $\| Y_k \ldots Y_1\|_{\stackrel{\mathcal{B}}{=}_{R}} = m$.
\end{itemize}
If $k > 1$, then every $\| Y_i \|_{\stackrel{\mathcal{B}}{=}_{R_i}}  < m$ thus $[Y_i]_{R_i} \in \mathbf{V}$ for every $1 \leq i \leq k$.
If $k = 1$, then we guess $\mathbf{Dc}^{\mathcal{B}}_{R}([X]_{R}) = [Y_1]_{R}$ for  $Y_1 <_R X$. If every time we pick out the $<_R$-minimum such $[X]_R$, then we can ensure that $[Y_1]_{R} \in \mathbf{V}$.

\begin{remark}
It is probable  that $[X_1]_R <_R [X_2]_R$ and  $[X_2]_R$ is treated before $[X_1]_R$. For example, it is probable that
$\| X_2 \|_{\stackrel{\mathcal{B}}{=}_{R}} <_R \| X_1 \|_{\stackrel{\mathcal{B}}{=}_{R}}$.  But taking our way, we can ensure that $[X_1]_R$ is treated before $[X_2]_R$ whenever  $[X_1]_R <_R [X_2]_R$ and  $[X_1]_R \stackrel{\mathcal{B}}{=}_{R} [X_2]_R$.
\end{remark}

After one candidate  $[Y_k]_{R_k} \ldots [Y_1]_{R_1}$ is  found, we will make use of the expansion conditions (Section~\ref{subsec:expansion}) to decide whether $\mathbf{Dc}^{\mathcal{B}}_{R}([X]_{R})$ can be defined as $[Y_k]_{R_k} \ldots [Y_1]_{R_1}$. This is done by $\textsc{Expand}_{R}(X, [Y_k]_{R_k} \ldots [Y_1]_{R_1})$ defined in Fig.~\ref{fig:new_base_I}.

\subsubsection{Treating $[X]_{R}$: The Second Possibility. }

Every witness path of $[X]_R$ starts with a ${\stackrel{\mathcal{B}'}{=}_{R}}$-preserving silent transition. That is,
$\|\gamma\|_{\stackrel{\mathcal{B}'}{=}_{R}} \geq m$ for every $\gamma$ such that $[X]_R \stackrel{\ell}{\longmapsto}_R \gamma$, but $[X]_R \stackrel{\tau}{\longmapsto}_R \gamma$ for some $\gamma$ such that $X  \stackrel{\mathcal{B}}{=}_{R} \gamma$ (which needs to be confirmed) and $\|\gamma\|_{\stackrel{\mathcal{B}}{=}_{R}} = m$.
This possibility is treated  via the \textbf{while}-blocks at line~\ref{line:second_while}.

This possibility is relatively easy because there is no need to guess the candidates for decomposition of $[X]_{R}$. If  $X  \stackrel{\mathcal{B}}{=}_{R} \gamma$, we must have $\mathbf{Dc}^{\mathcal{B}}_{R}([X]_{R}) = \mathtt{dcmp}^{\mathcal{B}}_R(\gamma)$.  Let $[Y_k]_{R_k} \ldots [Y_1]_{R_1}$ be $\mathtt{dcmp}^{\mathcal{B}}_R(\gamma)$. Then we will check $\textsc{Expand}_{R}(X, [Y_k]_{R_k} \ldots [Y_1]_{R_1})$.
Note that it is unnessesary to check the second half of expansion conditions, because $[X]_R \stackrel{\tau}{\longmapsto}_R \cdot  \stackrel{\mathcal{B}}{=}_R \gamma$ always holds.

\subsubsection{Determining $\mathbf{Rd}^{\mathcal{B}}_R([X]_R)$}

When $[X]_R$ is declared as a $\mathcal{B}_{R}$-prime. There is an extra work: define $\mathbf{Rd}^{\mathcal{B}}_R([X]_R)$. Intuitively  $\mathbf{Rd}^{\mathcal{B}}_R([X]_R)$ contains all the constants $Y$ which make $Y.X \stackrel{\mathcal{B}}{=}_{R} X$.  It is necessary that $Y \in \mathbf{C}_{\mathrm{G}}$.
% Naturally, if $Y \in \mathbf{Rd}^{\mathcal{B}}_R([X]_R)$ and $Y \Longrightarrow \gamma \Longrightarrow \epsilon$, then every constant in $\gamma$ is in $\mathbf{Rd}^{\mathcal{B}}_R([X]_R)$.  The algorithm for computing $\mathbf{Rd}^{\mathcal{B}}_R([X]_R)$ is described as Fig.~\ref{}.
We can use the same way of determining $\mathbf{Id}^{\mathcal{B}}_R$ to determine  $\mathbf{Rd}^{\mathcal{B}}_R([X]_R)$.

\begin{definition}\label{def:RD_candidate}
Let $[X]_R$ be a $\mathcal{B}_R$-prime, let  $T$ be the set $\{ W \;|\;  W.X \stackrel{\mathcal{D}}{=}_R X \}$, and let $S \subseteq T$.  We call $S$ an {\em $\mathbf{Rd}^{\mathcal{B}}_R([X]_R)$-candidate} if the following conditions are satisfied whenever $Y \in S$:
\begin{enumerate}
\item
If $Y  \stackrel{\tau}{\longrightarrow} \zeta$ and $\zeta \not\in T^{*}$, then
$[X]_{R}  \stackrel{\tau}{\longmapsto} \beta$  for some $\beta$  such that $\zeta. X \stackrel{\mathcal{D}}{=}_R \beta$.

\item
If $Y \stackrel{a}{\longrightarrow}  \zeta$, then
$[X]_{R}  \stackrel{\tau}{\longmapsto}_{R} \beta$  for some $\beta$  such that $\zeta. X \stackrel{\mathcal{D}}{=}_R \beta$.
\end{enumerate}
\end{definition}
According to Definition~\ref{def:RD_candidate},
\begin{enumerate}
\item
$\emptyset$ is an $\mathbf{Rd}^{\mathcal{B}}_R([X]_R)$-candidate.

\item
$\mathbf{Rd}^{\mathcal{B}}_R([X]_R)$-candidates are closed under union.
\end{enumerate}
$\mathbf{Rd}^{\mathcal{B}}_R([X]_R)$ is defined as the largest $\mathbf{Rd}^{\mathcal{B}}_R([X]_R)$-candidate. One fast way of computing $\mathbf{Rd}^{\mathcal{B}}_R([X]_R)$ is described as procedure $\textsc{ComputingRd}_R([X]_R)$ in Fig.~\ref{fig:new_base_I}.

\subsubsection{Basic Properties of the Construction}

We point out the following important properties.
\begin{lemma}\label{lem:admissible_Rd}
$\mathbf{Rd}^{\mathcal{B}}_R([X]_R)$ constructed above is $\mathcal{B}$-admissible.
\end{lemma}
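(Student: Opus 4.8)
The plan is to mirror, at the level of the constructed base \(\mathcal{B}\), the semantic argument that yields Lemma~\ref{lem:Rd_admissible}. Write \(T = \{W \mid W.X \stackrel{\mathcal{D}}{=}_R X\}\) and \(S = \mathbf{Rd}^{\mathcal{B}}_R([X]_R)\), the largest \(\mathbf{Rd}^{\mathcal{B}}_R([X]_R)\)-candidate inside \(T\). By Lemma~\ref{lem:idem_id_R}(2) it suffices to show \(\mathbf{Id}^{\mathcal{B}}_S \subseteq S\), i.e.\ that no constant outside \(S\) becomes a \(\mathcal{B}_S\)-identity; the inclusion \(S \subseteq \mathbf{Id}^{\mathcal{B}}_S\) is automatic from constraint~1 once we know \(S \subseteq \mathbf{C}_{\mathrm{G}}\) (which holds since \(W.X \stackrel{\mathcal{D}}{=}_R X\) forces \(|W|_{\mathrm{wk}} = 0\), so \(T \subseteq \mathbf{C}_{\mathrm{G}}\)). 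So the real content is: \(\mathbf{Id}^{\mathcal{B}}_S = S\).

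First I would establish the translation principle connecting \(\stackrel{\mathcal{B}}{=}_R\)-reasoning ``after appending \(X\)'' with \(\stackrel{\mathcal{B}}{=}_S\)-reasoning ``modulo \(S\)''. Concretely: for ground \(\gamma\), \(\gamma.X \stackrel{\mathcal{B}}{=}_R X\) iff \(\gamma \stackrel{\mathcal{B}}{=}_S \epsilon\), using the \(\mathcal{B}_R\)-reduction rule \(\mathtt{dcmp}^{\mathcal{B}}_R(\gamma.[X]_R) = (\mathtt{dcmp}^{\mathcal{B}}_{\mathbf{Rd}^{\mathcal{B}}_R([X]_R)}(\gamma)).[X]_R\) from the characterization lemma for \(\mathtt{dcmp}\), together with \([X]_R \in \mathbf{Pr}^{\mathcal{B}}_R\). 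This is the syntactic analogue of the cancellation step. Similarly, I would translate \(R\)-transitions ``carrying \(X\) along'' into plain transitions: \(\gamma X \stackrel{\ell}{\longrightarrow}_R \gamma' X\) corresponds, via Lemma~\ref{lem:char_propagating} and Definition~\ref{def:R_derived_transition}, to \([X]_R \stackrel{\ell}{\longmapsto}_R \gamma' X\)-style moves, which is exactly why \(\textsc{ComputingRd}_R\) tests the expansion-style conditions on \([X]_R \stackrel{\ell}{\longmapsto}_R \beta\) rather than on raw transitions of \(Y\).

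The heart of the argument is then to check that \(S\) satisfies the defining conditions of an \(\mathbf{Id}^{\mathcal{B}}_S\)-candidate (Definition~\ref{def:ID_candidate}, with reference set \(S\) in place of \(R\)), so that \(S \subseteq \mathbf{Id}^{\mathcal{B}}_S\); combined with \(\mathbf{Id}^{\mathcal{B}}_S \subseteq \mathbf{Id}^{\mathcal{D}}_S\) and a ``largest candidate'' maximality comparison one gets \(\mathbf{Id}^{\mathcal{B}}_S = S\). For \(Y \in S \setminus S' \) one must match each \(S\)-transition \(Y \stackrel{\ell}{\longrightarrow}_S \alpha\) by \(\epsilon \stackrel{\ell}{\longrightarrow}_S \beta\) with \(\mathtt{dcmp}^{\mathcal{D}}_S(\alpha) = \mathtt{dcmp}^{\mathcal{D}}_S(\beta)\); under the translation above, an \(S\)-transition of \(Y\) becomes a transition \(Y \stackrel{\ell}{\longrightarrow} \zeta\) with \(\zeta X\) appearing, and the matching \(\epsilon\)-move corresponds precisely to the derived transition \([X]_R \stackrel{\tau}{\longmapsto}_R \beta\) with \(\zeta.X \stackrel{\mathcal{D}}{=}_R \beta\) — which is exactly the condition enforced in the \textbf{while}-loop of \(\textsc{ComputingRd}_R([X]_R)\). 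The removal rule in that loop (delete \(Y\) together with every \(Y'\) that can reach \(\epsilon\) through a word containing \(Y\)) is what guarantees the output set is \emph{closed} in the sense needed for admissibility, i.e.\ that nothing outside \(S\) is forced back in.

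I expect the main obstacle to be bookkeeping the interaction between three reference sets at once: \(R\), \(S = \mathbf{Rd}^{\mathcal{B}}_R([X]_R)\), and \(\mathbf{Id}^{\mathcal{B}}_S\) — in particular making the ``append-\(X\)'' translation precise for \(R\)-normal-forms and \(R\)-derived transitions, since appending \(X\) changes which suffixes are erasable and one must verify the \(\mathtt{dcmp}\)-recursion genuinely commutes with this operation (this is where Lemma~\ref{lem:char_propagating}, Lemma~\ref{lem:redundant_chain}, and the admissibility closure property \(\mathbf{Id}^{\mathcal{B}}_{\mathbf{Id}^{\mathcal{B}}_R} = \mathbf{Id}^{\mathcal{B}}_R\) all get used). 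The transition-matching conditions themselves are a routine unwinding once the dictionary between the two viewpoints is set up; the subtle point is simply that the definition of \(\textsc{ComputingRd}_R\) was \emph{designed} to make the candidate conditions for \(\mathbf{Id}^{\mathcal{B}}_S\) hold, so the proof is essentially a verification that this design is faithful.
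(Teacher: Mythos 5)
You correctly reduce the lemma to showing $\mathbf{Id}^{\mathcal{B}}_S = S$ for $S = \mathbf{Rd}^{\mathcal{B}}_R([X]_R)$, and your ``append $X$ / work modulo $S$'' dictionary is the right intuition. But the maximality argument you invoke points in the wrong direction, and this is a genuine gap. Showing that $S$ is an $\mathbf{Id}^{\mathcal{B}}_S$-candidate is vacuous: in Definition~\ref{def:ID_candidate} the conditions are only checked for constants in the candidate minus the reference set, which here is $S \setminus S = \emptyset$ (the paper even records ``$R$ is an $\mathbf{Id}^{\mathcal{B}}_R$-candidate'' as a triviality). This only yields $S \subseteq \mathbf{Id}^{\mathcal{B}}_S$, which you already get for free from Lemma~\ref{lem:idem_id_R}(1). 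The substantive inclusion is $\mathbf{Id}^{\mathcal{B}}_S \subseteq S$, and your proposed chain ``$\mathbf{Id}^{\mathcal{B}}_S \subseteq \mathbf{Id}^{\mathcal{D}}_S$ plus a largest-candidate comparison'' does not close it: $S$ may be a proper subset of the $\mathcal{D}$-admissible set $T = \{W \mid W.X \stackrel{\mathcal{D}}{=}_R X\}$, so there is no reason for $\mathbf{Id}^{\mathcal{D}}_S$ to equal $S$.

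The argument the paper uses runs the maximality the other way: one shows that $\mathbf{Id}^{\mathcal{B}}_S$ is an $\mathbf{Rd}^{\mathcal{B}}_R([X]_R)$-\emph{candidate} in the sense of Definition~\ref{def:RD_candidate}, and then concludes $\mathbf{Id}^{\mathcal{B}}_S \subseteq S$ because $S$ is the \emph{largest} such candidate. This needs two things you do not supply. First, $\mathbf{Id}^{\mathcal{B}}_S \subseteq T$, obtained by noting that $T$ is $\mathcal{D}$-admissible (by induction on the iteration), hence $\mathcal{B}$-admissible, hence $\mathbf{Id}^{\mathcal{B}}_S \subseteq \mathbf{Id}^{\mathcal{B}}_T = T$. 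Second, for $Y \in \mathbf{Id}^{\mathcal{B}}_S \setminus S$ with $Y \stackrel{\ell}{\longrightarrow} \zeta$ (and $\zeta \notin T^{*}$ in the silent case), a two-step relay: the $\mathbf{Id}^{\mathcal{B}}_S$-candidate property of $\mathbf{Id}^{\mathcal{B}}_S$ produces some $Z \in S$ with $Z \stackrel{\ell}{\longrightarrow} \eta$ and $\mathtt{dcmp}^{\mathcal{D}}_S(\zeta) = \mathtt{dcmp}^{\mathcal{D}}_S(\eta)$, hence $\zeta X \stackrel{\mathcal{D}}{=}_R \eta X$; then, since $S$ is itself an $\mathbf{Rd}^{\mathcal{B}}_R([X]_R)$-candidate, $Z$'s transition is matched by $[X]_R \stackrel{\tau}{\longmapsto}_R \beta$ with $\eta.X \stackrel{\mathcal{D}}{=}_R \beta$, and composing gives $\zeta.X \stackrel{\mathcal{D}}{=}_R \beta$ as required. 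Your write-up never takes an arbitrary element of $\mathbf{Id}^{\mathcal{B}}_S$ outside $S$ and discharges the $\mathbf{Rd}$-candidate conditions for it, so the non-trivial inclusion is not proved.
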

\begin{lemma}\label{lem:corrct_norm}
$d_R[[X]_R]$ computed in our algorithm is equal to $\| X \|_{\stackrel{\mathcal{B}}{=}_R}$. As an inference , $d_R(\alpha) = \| \alpha \|_{\stackrel{\mathcal{B}}{=}_R}$.
\end{lemma}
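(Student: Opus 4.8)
The plan is to prove, for every $\mathcal{B}$-admissible $R$ and every block $[X]_R$, that the value $m$ the algorithm records in $d_R[[X]_R]$ equals $\|X\|_{\stackrel{\mathcal{B}}{=}_R}$, by induction on $m$, following the order in which blocks leave the working set $\mathbf{U}$. Throughout I use that, by construction (the expansion conditions of Section~\ref{subsec:expansion}), $\stackrel{\mathcal{B}}{=}_{R}$ is a decreasing bisimulation, so that by the remarks after Definition~\ref{def:decreasing_bisimulation} the norm $\|X\|_{\stackrel{\mathcal{B}}{=}_R}$ is exactly the least number of $\stackrel{\mathcal{B}}{=}_{R}$-decreasing transitions on a path from $[X]_R$ down to $\epsilon$, that each such transition lowers the norm by exactly one, and that a minimal witness path can be taken through derived transitions $\stackrel{\ell}{\longmapsto}_R$ out of $\mathtt{dcmp}^{\mathcal{B}}_R(X)$. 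Since the \textbf{repeat}-loop halts only when $\mathbf{U} = \emptyset$ and $\|X\|_{\stackrel{\mathcal{B}}{=}_R}\le |X|_{\mathrm{st}}$ is finite (Lemma~\ref{lem:norm_size}), every block is eventually treated, so it suffices to show the recorded value is the correct norm; the ``inference'' clause then follows from~(\ref{eqn:d_R}) and additivity of the decreasing-bisimulation norm over $\mathcal{B}_R$-decompositions.

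For the upper bound $\|X\|_{\stackrel{\mathcal{B}}{=}_R}\le m$: when $[X]_R$ is moved out of $\mathbf{U}$ in the first \textbf{while}-block (line~\ref{line:first_while}) there is a derived transition $[X]_R \stackrel{\ell}{\longmapsto}_R \gamma$ with $d_R[\gamma] = m-1$; all decomposition factors of $\gamma$ lie in $\mathbf{C}_R$, hence have norm $\ge 1$, so each was treated in an earlier iteration, and the induction hypothesis (with additivity) gives $\|\gamma\|_{\stackrel{\mathcal{B}}{=}_R} = m-1$. Since $\stackrel{\ell}{\longmapsto}_R$ is induced by a genuine transition of some $\widehat{X}\stackrel{\mathcal{B}}{=}_R X$, prepending it to a witness path of $\gamma$ yields $\|X\|_{\stackrel{\mathcal{B}}{=}_R}\le m$. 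When instead $[X]_R$ leaves $\mathbf{U}$ in the second \textbf{while}-block (line~\ref{line:second_while}), one has $[X]_R \stackrel{\tau}{\longmapsto}_R \gamma$ with $d_R(\gamma) = m$ and $X \stackrel{\mathcal{B}}{=}_R \gamma$, the latter being exactly what $\textsc{Expand}_{R}$ certifies via clause~1 of the expansion conditions together with the assignment $\mathbf{Dc}^{\mathcal{B}}_R([X]_R) := \mathtt{dcmp}^{\mathcal{B}}_R(\gamma)$; here the factors of $\gamma$ are $<_R$-below $[X]_R$ (using that $[X]_R \stackrel{\tau}{\longmapsto}_R \cdot\Longrightarrow_R Y$ forces $[Y]_R <_R [X]_R$) and the $<_R$-minimum selection guarantees they are already treated, even within iteration $m$, so by the induction hypothesis $\|\gamma\|_{\stackrel{\mathcal{B}}{=}_R}=d_R(\gamma)=m$ and hence $\|X\|_{\stackrel{\mathcal{B}}{=}_R}=m$.

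For the lower bound $\|X\|_{\stackrel{\mathcal{B}}{=}_R}\ge m$ for a block still in $\mathbf{U}$ at the start of iteration $m$: suppose $\|X\|_{\stackrel{\mathcal{B}}{=}_R}=m'<m$. A minimal witness path of $[X]_R$ begins with a $\stackrel{\mathcal{B}}{=}_R$-decreasing transition reaching something of norm $m'-1\le m-2$; because $\stackrel{\mathcal{B}}{=}_{R}$ is a decreasing bisimulation with $R$-expansion of $\stackrel{\mathcal{D}}{=}_{R}$ and $[X]_R\stackrel{\mathcal{B}}{=}_R\mathtt{dcmp}^{\mathcal{B}}_R(X)$, this transition is reflected, up to $\stackrel{\mathcal{B}}{=}_R$, by a derived transition $[X]_R\stackrel{\ell}{\longmapsto}_R\gamma$ with $\|\gamma\|_{\stackrel{\mathcal{B}}{=}_R}=m'-1$. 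Its factors have norms summing to $m'-1<m-1$, hence each is $<m$ and, by the induction hypothesis, was correctly valued before iteration $m$, so $d_R[\gamma]=m'-1$; but then $[X]_R$ would already have been removed from $\mathbf{U}$ during iteration $m'$ via the first \textbf{while}-block, a contradiction. A symmetric argument (using the second \textbf{while}-block and the $<_R$-minimality condition) shows no block survives past iteration $\|X\|_{\stackrel{\mathcal{B}}{=}_R}$, so the recorded value is never strictly larger than the true norm either. Combining the two bounds gives $d_R[[X]_R]=\|X\|_{\stackrel{\mathcal{B}}{=}_R}$. The inference clause is then immediate: writing $\mathtt{dcmp}^{\mathcal{B}}_R(\alpha)=[X_r]_{R_r}\ldots[X_1]_{R_1}$, we have $\alpha\stackrel{\mathcal{B}}{=}_R[X_r]_{R_r}\ldots[X_1]_{R_1}$ by the reduction rules of Section~\ref{subsec:decomposition_base}, so additivity of $\|\cdot\|_{\stackrel{\mathcal{B}}{=}_R}$ over the decomposition (which holds since $\stackrel{\mathcal{B}}{=}_R$ enjoys the unique decomposition property in the style of Theorem~\ref{thm:QUDP_RBisimularity}) and the main identity give $\|\alpha\|_{\stackrel{\mathcal{B}}{=}_R}=\sum_i\|X_i\|_{\stackrel{\mathcal{B}}{=}_{R_i}}=\sum_i d_R[[X_i]_{R_i}]=d_R(\alpha)$.

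The main obstacle I anticipate is the circularity baked into the construction: the relation $\stackrel{\mathcal{B}}{=}_{R}$ is determined only after $\mathbf{Dc}^{\mathcal{B}}$ and $\mathbf{Rd}^{\mathcal{B}}$ are fully built, yet the $d_R$-values are consulted while those functions are being constructed, and $\textsc{Expand}_{R}$ switches between $\stackrel{\mathcal{B}}{=}_{R}$- and $\stackrel{\mathcal{D}}{=}_{R}$-comparisons according to whether a transition is decreasing. Making the induction airtight therefore requires a packaged well-foundedness claim — ``at the moment $[X]_R$ is treated, every block occurring in $\mathbf{Dc}^{\mathcal{B}}_R([X]_R)$, in the sets $\mathbf{Rd}^{\mathcal{B}}$ along that decomposition, and in the targets of the relevant derived transitions of smaller norm, has already been treated with a strictly smaller, correct $d$-value'' — proved simultaneously with a verification that $\textsc{Expand}_{R}$ returning \textbf{true} really certifies the expansion-condition clauses for the chosen decomposition, so that $\stackrel{\mathcal{B}}{=}_{R}$ is indeed the decreasing bisimulation the argument above presupposes. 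Once this scaffolding is in place, the norm identity falls out as sketched.
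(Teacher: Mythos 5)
Your proposal is correct and follows essentially the same route as the paper's own proof: an induction along the order in which blocks leave $\mathbf{U}$, with $\|X\|_{\stackrel{\mathcal{B}}{=}_R}\le d_R[[X]_R]$ obtained by splicing the witnessing derived transition onto a witness path of its target, and the reverse inequality by contradiction (a smaller norm would force the block to be caught by one of the two \textbf{while}-blocks in an earlier iteration). The circularity you flag at the end is exactly what the paper absorbs into its simultaneous induction hypothesis together with Lemmas~\ref{lem:pre_correctness} and~\ref{lem:applicable}, so no further idea is missing.
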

\begin{lemma}\label{lem:correct_contain_1}
${\mathcal{B}} \subseteq {\mathcal{D}}$. Moreover, if ${\widehat{\mathcal{B}} } \subsetneq {\mathcal{D}}$, then ${\mathcal{B}}  \subsetneq {\mathcal{D}}$.
\end{lemma}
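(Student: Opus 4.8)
\textbf{The inclusion $\mathcal{B}\subseteq\mathcal{D}$.} Using that $\mathcal{B}$ is built \emph{from} $\mathcal{D}$ by the procedure of Fig.~\ref{fig:new_base_II} (with the subroutines of Fig.~\ref{fig:new_base_I}), I would show that $\mathtt{dcmp}^{\mathcal{B}}_{R}(\alpha)\stackrel{\mathcal{D}}{=}_{R}\alpha$ for every $R$ and every $\alpha$; given this, $\alpha\stackrel{\mathcal{B}}{=}_{R}\beta$ forces $\mathtt{dcmp}^{\mathcal{B}}_{R}(\alpha)=\mathtt{dcmp}^{\mathcal{B}}_{R}(\beta)$, hence $\alpha\stackrel{\mathcal{D}}{=}_{R}\mathtt{dcmp}^{\mathcal{B}}_{R}(\alpha)=\mathtt{dcmp}^{\mathcal{B}}_{R}(\beta)\stackrel{\mathcal{D}}{=}_{R}\beta$, i.e.\ ${\stackrel{\mathcal{B}}{=}_{R}}\subseteq{\stackrel{\mathcal{D}}{=}_{R}}$. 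The displayed identity is proved by induction along the $\mathcal{B}_{R}$-reduction: one first reduces to $\mathcal{B}$-admissible $R$ via the copy-back step~3 of the construction ($\mathcal{B}_{R}=\mathcal{B}_{\mathbf{Id}^{\mathcal{B}}_{R}}$) together with the monotonicity of $\stackrel{\mathcal{D}}{=}$ in the reference set (constraint~2 of Section~\ref{subsec:decomposition_base}, using $\mathbf{Id}^{\mathcal{B}}_{R}\subseteq\mathbf{Id}^{\mathcal{D}}_{R}$ from Lemma~\ref{lem:idem_id_R}), and then checks that every $\mathcal{B}_{R}$-reduction rule is sound for $\stackrel{\mathcal{D}}{=}_{R}$:
\begin{itemize}
\item the $\mathbf{Id}$-rule, because $\textsc{ComputingId}(R)$ starts from $\mathbf{Id}^{\mathcal{D}}_{R}$ and only deletes, so $\mathbf{Id}^{\mathcal{B}}_{R}\subseteq\mathbf{Id}^{\mathcal{D}}_{R}$;
\item the $\mathbf{Dc}$-rule, because $\textsc{Expand}_{R}(X,[Y_{k}]_{R_{k}}\ldots[Y_{1}]_{R_{1}})$ is accepted only after passing the test $X\stackrel{\mathcal{D}}{=}_{R}Y_{k}\ldots Y_{1}$ on its first line;
\item the prime/$\mathbf{Rd}$-rule, because $\textsc{ComputingRd}_{R}([X]_{R})$ starts from $\{W\mid W.X\stackrel{\mathcal{D}}{=}_{R}X\}$ and only deletes, so $\mathbf{Rd}^{\mathcal{B}}_{R}([X]_{R})\subseteq\{W\mid W.X\stackrel{\mathcal{D}}{=}_{R}X\}$; combined with the induction hypothesis at $S=\mathbf{Rd}^{\mathcal{B}}_{R}([X]_{R})$ (admissible by Lemma~\ref{lem:admissible_Rd}) and the cancellation law for $\stackrel{\mathcal{D}}{=}$ (the decomposition-base analogue of Proposition~\ref{prop:Rd_congruence}), this yields $\alpha.[X]_{R}\stackrel{\mathcal{D}}{=}_{R}\alpha'.[X]_{R}$ whenever $\alpha\stackrel{\mathcal{B}}{=}_{S}\alpha'$.
\end{itemize}

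\textbf{Strictness.} I would argue by contraposition: suppose $\mathcal{B}=\mathcal{D}$, i.e.\ ${\stackrel{\mathcal{B}}{=}_{R}}={\stackrel{\mathcal{D}}{=}_{R}}$ for every $R$, and derive $\widehat{\mathcal{B}}=\mathcal{D}$. By the design of the construction (whose verification is the main obstacle, see below), $\stackrel{\mathcal{B}}{=}_{R}$ is a decreasing bisimulation with $R$-expansion of $\stackrel{\mathcal{D}}{=}_{R}$; here the correctness of the greedy norms (Lemma~\ref{lem:corrct_norm}) is what identifies the algorithm's ``$d_{R}(\cdot)=m-1$'' branches with the $\stackrel{\mathcal{B}}{=}_{R}$-decreasing transitions. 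Since $\asymp$ and $\bumpeq$ in Lemma~\ref{lem:expansion} are now the same relation, its conditions~(2)--(3) jointly match \emph{every} transition $\stackrel{\not\asymp}{\longrightarrow}_{R}$ and conditions~(4)--(5) jointly match \emph{every} $\stackrel{a}{\longrightarrow}_{R}$; with condition~(1) these are precisely the requirements of Proposition~\ref{prop:R_bisimilarity}, so $\stackrel{\mathcal{D}}{=}_{R}$ is an $R$-bisimulation and hence ${\stackrel{\mathcal{D}}{=}_{R}}\subseteq{\simeq_{R}}$. The reverse inclusion ${\simeq_{R}}\subseteq{\stackrel{\mathcal{D}}{=}_{R}}$ is part of the hypothesis $\widehat{\mathcal{B}}\subsetneq\mathcal{D}$, so ${\simeq_{R}}={\stackrel{\mathcal{D}}{=}_{R}}$ for all $R$, i.e.\ $\widehat{\mathcal{B}}=\mathcal{D}$ --- contradicting $\widehat{\mathcal{B}}\subsetneq\mathcal{D}$. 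Therefore ${\stackrel{\mathcal{B}}{=}_{R}}\subsetneq{\stackrel{\mathcal{D}}{=}_{R}}$ for some $R$, and with the inclusion already established, $\mathcal{B}\subsetneq\mathcal{D}$.

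\textbf{Where the work is.} Both parts take as input that the output of Fig.~\ref{fig:new_base_II} genuinely validates the expansion conditions of Section~\ref{subsec:expansion}: that for each $\mathcal{B}$-admissible $R$ the $\mathbf{Dc}$- and $\mathbf{Rd}$-values found by the norm-layered greedy sweep (layers indexed by $m$, and $<_{R}$ within a layer) supply \emph{witnessing} $R$-derived transitions for every match required by $\textsc{Expand}_{R}$ --- at the $\stackrel{\mathcal{B}}{=}_{R}$-level in the decreasing case and at the $\stackrel{\mathcal{D}}{=}_{R}$-level otherwise --- and that step~3 then propagates this from $\mathbf{Id}^{\mathcal{B}}_{R}$ to all $R$ in the manner of Lemma~\ref{lem:step3}. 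The delicate points are: whenever $[X]_{R}$ is under treatment, every block in the $\mathtt{dcmp}^{\mathcal{B}}$-decompositions consulted by $\textsc{Expand}_{R}$ has already been treated (from the non-decrease of $m$ together with $<_{R}$, as discussed in Section~\ref{subsec:determining_other}); $d_{R}$ really equals $\|\cdot\|_{\stackrel{\mathcal{B}}{=}_{R}}$ (Lemma~\ref{lem:corrct_norm}); and the reference sets $\mathbf{Rd}^{\mathcal{B}}_{R_{i}}(\cdot)$ occurring in the decompositions are admissible (Lemma~\ref{lem:admissible_Rd}), so the decompositions are well-formed. Transporting the $\simeq_{R}$-theory (Theorem~\ref{thm:relative_bis_str}, Proposition~\ref{prop:Rd_congruence}) to the syntactic setting of decomposition bases --- needed for the cancellation step of the inclusion part --- is the remaining technical core.
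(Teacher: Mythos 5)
Your proposal is correct and follows essentially the same route as the paper: the inclusion is read off from the fact that every component of $\mathcal{B}$ is filtered through a $\stackrel{\mathcal{D}}{=}_{R}$-test ($\textsc{ComputingId}$ and $\textsc{ComputingRd}$ only delete from $\mathcal{D}$-defined sets, and $\textsc{Expand}_{R}$ first checks $X \stackrel{\mathcal{D}}{=}_{R} Y_k\ldots Y_1$), while strictness is exactly the observation in Section~\ref{subsec:expansion} that if ${\stackrel{\mathcal{B}}{=}_{R}} = {\stackrel{\mathcal{D}}{=}_{R}}$ then the expansion conditions collapse (via Lemma~\ref{lem:expansion} with $\asymp\,=\,\bumpeq$) into the $R$-bisimulation conditions of Proposition~\ref{prop:R_bisimilarity}, forcing ${\stackrel{\mathcal{D}}{=}_{R}} \subseteq {\simeq_{R}}$ and hence $\widehat{\mathcal{B}} = \mathcal{D}$. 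The paper gives no separate appendix proof for this lemma, and your write-up supplies the same argument with the dependencies (Lemma~\ref{lem:idem_id_R}, Lemma~\ref{lem:admissible_Rd}, Lemma~\ref{lem:corrct_norm}) correctly identified.
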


\subsection{The Correctness of the Refinement Operation}
Remember Lemma~\ref{lem:corrct_init} and Lemma~\ref{lem:correct_contain_1}. The remain thing is to confirm the following fact.
\begin{theorem}\label{thm:correctness}
Suppose that ${\widehat{\mathcal{B}} } \subseteq {\mathcal{D}}$, then ${\widehat{\mathcal{B}} } \subseteq {\mathcal{B}}$. Namely, $\alpha \simeq_R \beta$ implies $\alpha \stackrel{\mathcal{B}}{=}_R \beta$ for every $R$.
\end{theorem}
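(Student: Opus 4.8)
# Proof Proposal for Theorem~\ref{thm:correctness}

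The plan is to prove the contrapositive-flavored statement directly: assuming $\widehat{\mathcal{B}} \subseteq \mathcal{D}$ (i.e.\ ${\simeq_R} \subseteq {\stackrel{\mathcal{D}}{=}_R}$ for every $R$), we must show ${\simeq_R} \subseteq {\stackrel{\mathcal{B}}{=}_R}$ for every $R$. By Lemma~\ref{lem:step3} it suffices to handle $\mathcal{B}$-admissible $R$ only, so fix such an $R$ and suppose $\alpha \simeq_R \beta$; we need $\mathtt{dcmp}^{\mathcal{B}}_R(\alpha) = \mathtt{dcmp}^{\mathcal{B}}_R(\beta)$. The natural induction is on $\|\alpha\|_{\simeq_R}$ (which equals $\|\beta\|_{\simeq_R}$ since $\simeq_R$ is a decreasing bisimulation), with a secondary induction on the $<_R$-ordering of blocks. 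The base case $\|\alpha\|_{\simeq_R}=0$ is exactly Lemma~\ref{lem:step1}: $\alpha \simeq_R \epsilon$ forces $\alpha \stackrel{\mathcal{B}}{=}_R \epsilon$, and likewise for $\beta$.

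For the inductive step I would first reduce to constants. Using the $\mathcal{B}_R$-reduction rules and the congruence-style Lemma~\ref{lem:congruence_R_bis} together with Proposition~\ref{prop:Rd_congruence} and Theorem~\ref{thm:relative_bis_str}, it is enough to show that for each constant $X \notin R$, the block $[X]_R$ receives the correct treatment during \textsc{ConstructingNewBase}: namely, that when $\|X\|_{\simeq_R} = m$, the algorithm declares $[X]_R$ a $\mathcal{B}_R$-composite with $\mathbf{Dc}^{\mathcal{B}}_R([X]_R)$ a decomposition of $X$ (w.r.t.\ $\simeq$) whenever $X$ is genuinely $\simeq_R$-composite in the strong sense of Definition~\ref{def:R_prime}, and a $\mathcal{B}_R$-prime with $\mathbf{Rd}^{\mathcal{B}}_R([X]_R) \supseteq \mathsf{Rd}_R(X)$ otherwise. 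The engine here is that $\simeq_R$, being a decreasing bisimulation with $R$-expansion of itself, satisfies the expansion conditions of Section~\ref{subsec:expansion} relative to any $\mathcal{D}$ with ${\simeq_R}\subseteq{\stackrel{\mathcal{D}}{=}_R}$; hence the guesses that $\textsc{Expand}_R$ must validate are in fact validated by the true decomposition. One checks that Lemma~\ref{lem:char_propagating} ensures the derived transitions $\stackrel{\ell}{\longmapsto}_R$ exactly capture the behaviour of $[X]_R$, so matching $\stackrel{\mathcal{B}}{=}_R$-decreasing $\longmapsto_R$-steps against a candidate decomposition $[Y_k]_{R_k}\ldots[Y_1]_{R_1}$ is equivalent to matching the real $\simeq_R$-transitions against the real prime decomposition of $X$; by the inner induction hypothesis (all blocks of norm $<m$, and all $<_R$-smaller blocks of norm $m$, already treated) the values $\mathbf{Dc}^{\mathcal{B}}$, $\mathbf{Rd}^{\mathcal{B}}$, $d_R$ used inside $\textsc{Expand}_R$ are already correct, and Lemma~\ref{lem:corrct_norm} pins down $d_R$.

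I would organize the correctness of the treatment into two cases matching the two \textbf{while}-blocks. If $X$ has a $\simeq_R$-decreasing transition (first possibility), then some $[X]_R \stackrel{\ell}{\longmapsto}_R \gamma$ with $d_R(\gamma) = m-1$ fires; the true $\simeq_R$-prime decomposition of $X$ has its head prime induce this step, and its tail is a suffix of $\mathtt{dcmp}^{\mathcal{B}}_R(\gamma)$ by Theorem~\ref{thm:QUDP_RBisimularity} applied at norm $m-1$ — so the decomposition the algorithm guesses is available, and $\textsc{Expand}_R$ accepts it because $\simeq_R$ meets the expansion conditions. When no such guess exists, $X$ is $\simeq_R$-prime (here the stronger relativized prime-process Lemma~\ref{lem:R_prime_property} rules out a spurious composite), $\textsc{ComputingRd}_R$ then computes an $\mathbf{Rd}^{\mathcal{B}}_R([X]_R)$-candidate containing $\mathsf{Rd}_R(X)$, again because $\simeq_R$ satisfies the defining conditions of Definition~\ref{def:RD_candidate}, and Lemma~\ref{lem:Rd_admissible}/Lemma~\ref{lem:admissible_Rd} give admissibility. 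The second possibility ($\stackrel{\mathcal{B}}{=}_R$-preserving witness path) is handled by Computation Lemma~\ref{lem:computation_lemma_R}: $X \Longrightarrow_R \gamma \simeq_R X$ forces $\mathtt{dcmp}^{\mathcal{B}}_R(\gamma)$ to be the decomposition, and the order $<_R$ guarantees the tail is already treated.

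The main obstacle, I expect, is not any single lemma but the bookkeeping that ties the \emph{greedy global order of treatment} to the inductive hypothesis: one must verify that whenever $[X]_R$ is under treating at stage $m$, every block appearing in the decompositions and derived transitions it consults has \emph{already} been correctly treated — this mixes the norm-ordering, the $<_R$-ordering, and the cross-reference-set dependence through $\mathbf{Rd}^{\mathcal{B}}_{R_i}$. Getting the induction measure right (lexicographic on $(m, <_R)$, uniformly across all admissible $R$ simultaneously, since $R_{i+1} = \mathbf{Rd}^{\mathcal{B}}_{R_i}([X_i]_{R_i})$ may differ from $R$) and showing it is well-founded and respected by the algorithm is the delicate part; everything else reduces to checking that the true family $\{\simeq_R\}$ witnesses the expansion conditions, which is immediate from the definitions.
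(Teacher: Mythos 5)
Your overall strategy is the paper's: reduce to $\mathcal{B}$-admissible $R$, induct along the algorithm's treatment order, and argue that the true family $\{\simeq_R\}$ witnesses the expansion conditions so that the correct prime decomposition of $[X]_R$ is accepted by $\textsc{Expand}_{R}$. But there is a genuine gap at the decisive point: you only argue that the \emph{correct} candidate passes the test, not that it is the \emph{only} one that can pass. The algorithm assigns $\mathbf{Dc}^{\mathcal{B}}_R([X]_R)$ to whichever candidate $\mathtt{dcmp}^{\mathcal{B}}_R(\delta)$ it happens to find; if a second, wrong candidate also satisfied the expansion conditions, the algorithm could commit to it, and then $X \stackrel{\mathcal{B}}{=}_R \mathtt{dcmp}^{\mathcal{B}}_R(\delta) \neq [Z_t]_{R_t}\ldots[Z_1]_{R_1}$ and the inclusion ${\simeq_R}\subseteq{\stackrel{\mathcal{B}}{=}_R}$ fails. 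The paper closes this with Lemma~\ref{lem:unique_expansion}, which shows that at most one $\mathcal{B}_R$-decomposition can satisfy the expansion conditions for a given block, and it is exactly here that the \emph{decreasing-bisimulation} half of Definition~\ref{def:_decreasing_bisimulation_with_expansion} is used (a $\stackrel{\mathcal{B}}{=}_R$-decreasing step of one candidate is matched by the other, the common last prime is peeled off, and one inducts). Your appeals to Lemma~\ref{lem:R_prime_property} and Theorem~\ref{thm:QUDP_RBisimularity} concern uniqueness of $\simeq_R$-decompositions; they say nothing about uniqueness of candidates passing the syntactic test $\textsc{Expand}_{R}$, which is computed from $\stackrel{\mathcal{D}}{=}_R$ and the partially built $\stackrel{\mathcal{B}}{=}_R$, not from $\simeq_R$.

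A secondary problem is the induction measure. You induct on $\|\alpha\|_{\simeq_R}$, but the algorithm treats blocks in order of the \emph{new} norm $\|\cdot\|_{\stackrel{\mathcal{B}}{=}_R}$ (the counter $m$ and the values $d_R$ are tied to $\stackrel{\mathcal{B}}{=}_R$, cf.\ Lemma~\ref{lem:corrct_norm}); these two norms need not agree while the proof is in progress, so ``every block of smaller norm is already treated'' is not the right invariant in your formulation. The paper instead states the hypothesis \textbf{IH} in terms of the set $\mathbf{V}$ of already-treated blocks and $\mathcal{B}_{R,\mathbf{V}}$-applicability, and proves Lemma~\ref{lem:pre_correctness} and Lemma~\ref{lem:applicable} to show that the true decomposition $Z_t\ldots Z_1$ is applicable at the moment $[X]_R$ is treated; that applicability argument (in particular the case analysis in Lemma~\ref{lem:applicable}, where $R_t$ may fail to be $\mathcal{B}$-admissible and one must pass to $\mathbf{Rd}^{\mathcal{B}}_{R}(Z_{t-1}\ldots Z_1)$ and invoke the order $<_R$) is precisely the bookkeeping you flag as delicate but leave unproved.
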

It is enough to prove Theorem~\ref{thm:correctness} under the assumption that $R$'s are $\mathcal{B}$-admissible.  If  $ {\simeq_R} \subseteq {\stackrel{\mathcal{B}}{=}_R}$ for every $\mathcal{B}$-admissible $R$'s, then for non-$\mathcal{B}$-admissible $R$ we have ${\simeq_R} \subseteq {\simeq_{\mathbf{Id}^{\mathcal{B}}_R}} \subseteq {\stackrel{\mathcal{B}}{=}_{\mathbf{Id}^{\mathcal{B}}_R}} = {\stackrel{\mathcal{B}}{=}_R}$.

The correctness of Theorem~\ref{thm:correctness} relies on some important observations. The following one is crucial.
\begin{lemma}\label{lem:unique_expansion}
\sloppy
Let $R$ be $\mathcal{B}$-admissible. Let $\gamma = Y_k \ldots Y_1$ and $\delta = Z_l \ldots Z_1$ such that
$[Y_k]_{R_k} [Y_{k-1}]_{R_{k-1}} \ldots [Y_1]_{R_1}$ and
$[Z_l]_{S_l} [Z_{l-1}]_{S_{l-1}} \ldots [Z_1]_{S_1}$ are two $\mathcal{B}_R$-decompositions, in which
$R_1, S_1 = R$ and $R_{i+1} = \mathbf{Rd}_{R_{i}}(Y_i)$ for $1 \leq i < k$ and $S_{j +1} = \mathbf{Rd}_{S_{j}}(Z_j)$ for $1 \leq j < l$.  If $\gamma$ and $\delta$ satisfy the expansion conditions for $\mathcal{B}_R$, then we have $k = l$, $R_i = S_i$ and $[Y_i]_{R_i} = [Z_i]_{S_i}$ for $1 \leq i \leq k$.
\end{lemma}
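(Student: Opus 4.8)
The plan is to prove Lemma~\ref{lem:unique_expansion} by induction on $k+l$, mimicking the argument that establishes the ordinary unique decomposition property (Theorem~\ref{thm:QUDP_RBisimularity}), but now driven by the \emph{expansion conditions} rather than by genuine $R$-bisimilarity. The base case $k+l \leq 1$ is trivial, since each side then reduces to $\epsilon$ and the equality $\gamma \stackrel{\mathcal{B}}{=}_R \delta$ forces both sides to be empty (using that $\alpha \stackrel{\mathcal{B}}{=}_R \epsilon$ iff $\alpha \in \mathbf{Id}^{\mathcal{B}}_R = R$ for $\mathcal{B}$-admissible $R$, and that primes have nonzero $\stackrel{\mathcal{B}}{=}_R$-norm). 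For the inductive step, I would first dispatch the degenerate cases where one of $k,l$ is zero by the same norm argument, so assume $k,l \geq 1$.

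The heart of the argument is a ``relativized prime cancellation'' step played out through the expansion conditions: I claim that $[Y_k]_{R_k} = [Z_l]_{S_l}$ (and hence $R_k = S_l$, since the order $<_{R_k}$ is fixed and blocks for different reference sets are distinct copies). To see this, I would pick a witness transition realizing the $\stackrel{\mathcal{B}}{=}_R$-norm of $[Y_k]_{R_k}$ — concretely a derived transition $[Y_k]_{R_k} \stackrel{\ell}{\longmapsto}_{R_k} \zeta$ with $d_{R_k}(\zeta) = d_{R_k}([Y_k]_{R_k}) - 1$ — lift it to a decreasing transition of $\gamma = Y_k\ldots Y_1$ (so $\gamma \stackrel{\ell}{\longmapsto}_R \zeta.Y_{k-1}.\ldots.Y_1$, which is $\stackrel{\mathcal{B}}{=}_R$-decreasing because the leading block carries the norm and the suffix is untouched, using the additivity $d_R(\gamma)=\sum d_{R_i}[[Y_i]_{R_i}]$ from~(\ref{eqn:d_R})), and then invoke the expansion condition on the pair $(\gamma,\delta)$ to force $\delta$ to match it by a decreasing transition that must likewise be induced by its leading block $[Z_l]_{S_l}$. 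Lemma~\ref{lem:char_propagating} and Definition~\ref{def:R_derived_transition} guarantee that a decreasing derived transition of a sequential composition can only come from the head block, so the match is $\delta \stackrel{\ell}{\longmapsto}_R \eta.Z_{l-1}.\ldots.Z_1$ with $[Z_l]_{S_l} \stackrel{\ell}{\longmapsto}_{S_l} \eta$; comparing suffixes via $\stackrel{\mathcal{B}}{=}_R$ and the fact that the resulting processes again have $\mathcal{B}_R$-decompositions headed by the respective primes, plus symmetry (swapping the roles of $\gamma$ and $\delta$), yields that $[Y_k]_{R_k}$ and $[Z_l]_{S_l}$ mutually ``cover'' each other's behaviour, so by the minimality built into $<_{R_k}=<_{S_l}$ they coincide. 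Once the head blocks agree, $R_{k} = S_{l}$, and then $R' := \mathbf{Rd}^{\mathcal{B}}_{R_k}([Y_k]_{R_k}) = \mathbf{Rd}^{\mathcal{B}}_{S_l}([Z_l]_{S_l})$ is the common next reference set.

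The final move is cancellation on the right: I would show $Y_{k-1}\ldots Y_1$ and $Z_{l-1}\ldots Z_1$, viewed as $\mathcal{B}_{R'}$-decompositions, again satisfy the expansion conditions for $\mathcal{B}_{R'}$, so the induction hypothesis applies and delivers $k-1 = l-1$, $R_i = S_i$, $[Y_i]_{R_i} = [Z_i]_{S_i}$ for $i < k$, completing the proof. Extracting these ``residual'' expansion conditions is where the real work lies: an arbitrary transition of the residual $Y_{k-1}\ldots Y_1$ must be reassembled into a transition of $\gamma$ (either by prepending the untouched head block, or — for transitions of the head block's redundant constants — via the $R$-propagating machinery of Lemma~\ref{lem:char_propagating}), the expansion condition on $(\gamma,\delta)$ applied, and the resulting matching transition of $\delta$ stripped of its common head back down to a transition of $Z_{l-1}\ldots Z_1$; one must check that decreasing-ness, the $\stackrel{\mathcal{B}}{=}_{R'}$ versus $\stackrel{\mathcal{D}}{=}_{R'}$ distinction, and the ground-preservation clause all survive this passage. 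I expect this bookkeeping — correctly translating the five-clause expansion conditions of Lemma~\ref{lem:expansion} between a composition and its suffix across a change of reference set from $R$ to $R' = \mathbf{Rd}^{\mathcal{B}}_{R_k}([Y_k]_{R_k})$ — to be the main obstacle, and it will lean essentially on the suffix-independence of the relativized notions (Theorem~\ref{thm:relative_bis_str}, Lemma~\ref{lem:redundant_chain}) together with the norm-additivity identity and Lemma~\ref{lem:corrct_norm}.
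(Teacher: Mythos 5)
Your opening move --- realizing the $\stackrel{\mathcal{B}}{=}_R$-norm of the head block by a decreasing derived transition, lifting it to a decreasing transition of $\gamma$, and matching it in $\delta$ by a transition that must be induced by $[Z_l]_{S_l}$ --- is exactly how the paper's proof begins. The gap lies in what you try to extract from that match. You want to conclude $[Y_k]_{R_k}=[Z_l]_{S_l}$ first and then cancel on the \emph{left}, but this cannot get off the ground: the reference sets are built up from the right ($R_{i+1}=\mathbf{Rd}_{R_i}(Y_i)$, so by Lemma~\ref{lem:redundant_chain} $R_k=\mathbf{Rd}_R(Y_{k-1}\ldots Y_1)$ and $S_l=\mathbf{Rd}_R(Z_{l-1}\ldots Z_1)$), hence $R_k$ and $S_l$ are not known to be equal until the tails are known to coincide; and since blocks over different reference sets are by convention distinct objects, the assertion $[Y_k]_{R_k}=[Z_l]_{S_l}$ is not even well-posed at that stage. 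Your appeal to ``the minimality built into $<_{R_k}=<_{S_l}$'' silently presupposes $R_k=S_l$, which is part of what must be proved. The final move compounds this: you propose to view the tails $Y_{k-1}\ldots Y_1$ and $Z_{l-1}\ldots Z_1$ as $\mathcal{B}_{R'}$-decompositions with $R'=\mathbf{Rd}^{\mathcal{B}}_{R_k}([Y_k]_{R_k})$, but $R'$ would be the reference set of a block to the \emph{left} of position $k$; the tails are $\mathcal{B}_R$-decompositions whose base reference set is still $R$.

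The paper draws the opposite conclusion from the very same matching: from $\eta.Y_{k-1}.\ldots.Y_1 \stackrel{\mathcal{B}}{=}_R \zeta.Z_{l-1}.\ldots.Z_1$ and the reduction rule $\mathtt{dcmp}^{\mathcal{B}}_R(\gamma'.[Y_1]_R)=(\mathtt{dcmp}^{\mathcal{B}}_{\mathbf{Rd}^{\mathcal{B}}_R([Y_1]_R)}(\gamma')).[Y_1]_R$ it reads off equality of the \emph{rightmost} blocks, $[Y_1]_{R_1}=[Z_1]_{S_1}$ --- a legitimate comparison because $R_1=S_1=R$ is given --- whence $R_2=S_2$, and it then inducts on $Y_k\ldots Y_2$ versus $Z_l\ldots Z_2$, with separate cases for $k=1$ or $l=1$ handled via the algorithm's treatment order and $<_R$. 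If you reorient your cancellation to the right end, the rest of your outline (induction on $k+l$, norm additivity, the decreasing-transition matching, the degenerate cases via nonzero norms of primes) falls into place and essentially reproduces the paper's argument.
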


Lemma~\ref{lem:unique_expansion} confirms that, when $[X]_R$ is being treated, at most one decomposition candidate $[Y_k]_{R_k} \ldots [Y_1]_{R_1}$ can make  $\textsc{Expand}_{R}(X, [Y_k]_{R_k} \ldots [Y_1]_{R_1})$ return \textbf{true}. If such a candidate exists, it is declared as $\mathbf{Dc}^{\mathcal{B}}_R([X]_R)$ and
and $[X]_R$ is declared as a $\mathcal{B}_R$-composite; otherwise $[X]_R$ is declared as a $\mathcal{B}_R$-prime.
Decreasing bisimulation property is crucial to validate Lemma~\ref{lem:unique_expansion}. This is why $\stackrel{\mathcal{B}}{=}_R$ must be constructed as a decreasing bisimulation with $R$-expansion of $\stackrel{\mathcal{D}}{=}_{R}$ (Section~\ref{subsec:expansion}), rather than simply defined as $R$-expansion of $\stackrel{\mathcal{D}}{=}_{R}$.

Apparently, the proof of Theorem~\ref{thm:correctness} should be done by induction. Remember that our algorithm maintains a set $\mathbf{V}$, containing all the blocks which have been treated. We will suppose that $R$ is $\mathcal{B}$-admissible and let $[X]_R$ be a block which is about to be put into $\mathbf{V}$. We try to prove Proposition~\ref{prop:correctness}.
A process $\alpha$ is called {\em $\mathcal{B}_{R, \mathbf{V}}$-applicable} if the derivation of $\alpha \stackrel{\mathcal{B}}{\rightarrow}_R \mathtt{dcmp}^{\mathcal{B}}_{R}(\alpha)$  (refer to Section~\ref{subsec:decomposition_base}) only depends on the blocks in $\mathbf{V}$.
The following statement is used as induction hypothesis:
\begin{enumerate}
\item[\textbf{IH}.]
Let $S$ be an arbitrary $\mathcal{B}$-admissible set. Suppose that $\gamma$ is $\mathcal{B}_{S,\mathbf{V}}$-applicable, and $\mathtt{dcmp}^{\widehat{\mathcal{B}}}_{S}(\gamma) = [W_u]_{S_u} \ldots[W_1]_{S_1}$. Then $W_u\ldots W_1$ is $\mathcal{B}_{S,\mathbf{V}}$-applicable, and $\mathtt{dcmp}^{\mathcal{B}}_{S,\mathbf{V}}(\gamma) = \mathtt{dcmp}^{\mathcal{B}}_{S,\mathbf{V}}(W_u \ldots W_1)$. That is, $\gamma \stackrel{\mathcal{B}}{=}_S W_u \ldots W_1$.
\end{enumerate}
When $\mathbf{V}$ contains all blocks, we can get Theorem~\ref{thm:correctness}.

Making use of \textbf{IH} we can establish the following.
\begin{lemma}\label{lem:pre_correctness}
Suppose $S$ is $\mathcal{B}$-admissible and $\mathtt{dcmp}^{\widehat{\mathcal{B}}}_{S}(W) = [W_u]_{S_u} \ldots[W_1]_{S_1}$, and $W_u \ldots W_1$ is $\mathcal{B}_{S,\mathbf{V}}$-applicable.
\begin{enumerate}
\item
If $\|W_u \ldots W_1\|_{\stackrel{\mathcal{B}}{=}_S} < m$, then $[W]_S \in \mathbf{V}$ and $W \stackrel{\mathcal{B}}{=}_S W_u \ldots W_1$.

\item
If $\|W_u \ldots W_1\|_{\stackrel{\mathcal{B}}{=}_S} = m$ and $W <_S X$, then $[W]_S \in \mathbf{V}$ and $W \stackrel{\mathcal{B}}{=}_S W_u \ldots W_1$.
\end{enumerate}
\end{lemma}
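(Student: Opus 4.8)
The plan is to derive both conclusions of Lemma~\ref{lem:pre_correctness} from the induction hypothesis \textbf{IH} together with the breadth-first discipline that governs \textsc{ConstructingNewBase}. Throughout, fix the block $[X]_R$ that is about to enter $\mathbf{V}$ and the current norm level $m$. By Lemma~\ref{lem:corrct_norm} (and the fact that $d$-values are assigned in non-decreasing order over the run), $\mathbf{V}$ then contains exactly the $\mathcal{B}$-admissible blocks of $\stackrel{\mathcal{B}}{=}$-norm $<m$, plus those of $\stackrel{\mathcal{B}}{=}$-norm $m$ already selected at level $m$: all blocks possessing a $\stackrel{\mathcal{B}}{=}$-decreasing transition picked so far by line~\ref{line:first_while}, and, within $R$, the $\stackrel{\mathcal{B}}{=}$-preserving blocks that $<_R$-precede $[X]_R$.

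First I would reduce everything to the single assertion $[W]_S\in\mathbf{V}$. Granting it: $[W]_S$ has been treated, so every block appearing in the $\mathcal{B}_S$-derivation of $W$ lay in $\mathbf{V}$ at the moment of treatment and hence lies in $\mathbf{V}$ now; thus $W$ is $\mathcal{B}_{S,\mathbf{V}}$-applicable. Since $[W_u]_{S_u}\ldots[W_1]_{S_1}$ is already a $\widehat{\mathcal{B}}_S$-normal form and $W_u\ldots W_1$ reduces to it, determinism of $\widehat{\mathcal{B}}_S$-reduction gives $\mathtt{dcmp}^{\widehat{\mathcal{B}}}_S(W)=\mathtt{dcmp}^{\widehat{\mathcal{B}}}_S(W_u\ldots W_1)$; applying \textbf{IH} to $\gamma:=W$ then yields $W\stackrel{\mathcal{B}}{=}_S W_u\ldots W_1$, the second conclusion, whence by invariance of semantic norms $\|W\|_{\stackrel{\mathcal{B}}{=}_S}=\|W_u\ldots W_1\|_{\stackrel{\mathcal{B}}{=}_S}$, consistent with the case hypotheses.

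It remains to prove $[W]_S\in\mathbf{V}$; I would argue by contradiction, supposing $[W]_S$ still untreated (the degenerate case $W\in S$ being vacuous). As $W\simeq_S W_u\ldots W_1$ and $\simeq_S$ is a decreasing bisimulation, a witness path for $\|W\|_{\simeq_S}$ yields a derived transition $[W]_S\stackrel{\ell}{\longmapsto}_S\gamma'$ whose target has strictly smaller $\simeq_S$-norm. The crucial step is to see that $\gamma'$ is already $\mathcal{B}_{S,\mathbf{V}}$-applicable with a $d_S$-value small enough to force $[W]_S$ to have been selected: matching $W\stackrel{\ell}{\longrightarrow}_S\cdot$ against the decomposition $W_u\ldots W_1$ — acting on the leftmost constant $W_u$ or on a propagating constant of $[W_u]_{S_u}$, via Lemma~\ref{lem:char_propagating} and the relativized congruence and cancellation laws (Theorem~\ref{thm:relative_bis_str}, Proposition~\ref{prop:Rd_congruence}) — exhibits $\gamma'$ as $\simeq_S$-equal, hence $\widehat{\mathcal{B}}_S$-equal, to a sequence of $\widehat{\mathcal{B}}$-primes each of $\stackrel{\mathcal{B}}{=}$-norm strictly below $\|W_u\ldots W_1\|_{\stackrel{\mathcal{B}}{=}_S}$; by the induction these primes' blocks are already in $\mathbf{V}$, so \textbf{IH} applies to $\gamma'$ and pins down $d_S(\gamma')$. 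In case~1 this gives $d_S(\gamma')\le m-1$, so line~\ref{line:first_while} at an iteration $\le m$ would already have selected $[W]_S$; in case~2 the same matching gives $\|W\|_{\stackrel{\mathcal{B}}{=}_S}=m$, so $[W]_S$ likewise qualifies for line~\ref{line:first_while} at iteration $m$, and since $W<_S X$ the rule of picking the $<_S$-minimum such block forces $[W]_S$ to precede $[X]_R$. Either way $[W]_S\in\mathbf{V}$, contradicting the assumption.

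The delicate point — and the one I expect to absorb most of the effort — is exactly the circularity visible in the last paragraph: the $\stackrel{\mathcal{B}}{=}_S$-norm of $W$ is revealed only once $[W]_S$ is treated, yet a bound on it is what shows $[W]_S$ has been treated. Breaking the circle requires the global nested induction in full (outer induction on the order in which blocks enter $\mathbf{V}$, with \textbf{IH} available for every block already treated), together with the structural fact that along any $\simeq_S$-witness path of $W$ every later process decomposes, under $\widehat{\mathcal{B}}$ and therefore under $\mathcal{B}$ by \textbf{IH}, into blocks of strictly smaller norm and hence already in $\mathbf{V}$. One should also carry the bookkeeping for reference sets $S_i$ occurring in the $\widehat{\mathcal{B}}$-decomposition that are not $\mathcal{B}$-admissible, passing to $\mathbf{Id}^{\mathcal{B}}_{S_i}$ via constraint~3 of Section~\ref{subsec:decomposition_base} and Lemma~\ref{lem:step3}.
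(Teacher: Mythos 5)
Your proposal is correct and follows essentially the same route as the paper's own (very terse) argument: induction on the norm $d=\|W_u\ldots W_1\|_{\stackrel{\mathcal{B}}{=}_S}$, examination of a witness path of the $\simeq_S$-norm of $W$, an appeal to hypothesis \textbf{IH}, and inspection of the algorithm's selection discipline to conclude $[W]_S\in\mathbf{V}$. You supply substantially more detail than the paper does and correctly isolate the circularity between knowing $\|W\|_{\stackrel{\mathcal{B}}{=}_S}$ and knowing $[W]_S$ has been treated as the crux.
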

With Lemma~\ref{lem:pre_correctness} and \textbf{IH}, we can show the following auxiliary lemma.
\begin{lemma}\label{lem:applicable}
Suppose $[X]_R$ be a $\widehat{\mathcal{B}}_{R}$-composite, and assume that $\mathbf{Dc}^{\widehat{\mathcal{B}}}_{R}([X]_R) = [Z_t]_{R_t} \ldots[Z_1]_{R_1}$.  Then $Z_t \ldots Z_1$ is $\mathcal{B}_{R,\mathbf{V}}$-applicable.
\end{lemma}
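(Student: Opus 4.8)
The plan is to show that each block $[Z_i]_{R_i}$ occurring in $\mathbf{Dc}^{\widehat{\mathcal{B}}}_R([X]_R)$ has already been treated, i.e.\ lies in $\mathbf{V}$, and then to deduce $\mathcal{B}_{R,\mathbf{V}}$-applicability of $Z_t\ldots Z_1$ by unwinding the rewriting $Z_t\ldots Z_1 \stackrel{\mathcal{B}}{\rightarrow}_R \mathtt{dcmp}^{\mathcal{B}}_R(Z_t\ldots Z_1)$ from the right and checking that no block outside $\mathbf{V}$ is ever consulted. Throughout, $m$ denotes the level at which $[X]_R$ is currently under treating, so that $d_R[[X]_R]=m$.

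First I would collect the arithmetic of the $\widehat{\mathcal{B}}$-decomposition. Since $\mathbf{Dc}^{\widehat{\mathcal{B}}}_R([X]_R)=[Z_t]_{R_t}\ldots[Z_1]_{R_1}$ we have $X\simeq_R Z_t.\ldots.Z_1$ by the construction of $\widehat{\mathcal{B}}$, with $R_1=\mathsf{Id}_R=R$ (here $R$ is $\mathcal{B}$-admissible, hence admissible) and $R_{i+1}=\mathsf{Rd}_{R_i}(Z_i)$; each $[Z_i]_{R_i}$ is a $\widehat{\mathcal{B}}_{R_i}$-prime, so $Z_i\notin\mathsf{Id}_{R_i}=R_i$ and thus $\|Z_i\|_{\simeq_{R_i}}\ge 1$. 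By Lemma~\ref{lem:redundant_chain} the reference sets telescope, $R_i=\mathsf{Rd}_R(Z_{i-1}\ldots Z_1)$, so by additivity of the semantic norm along decompositions $\|X\|_{\simeq_R}=\|Z_t\ldots Z_1\|_{\simeq_R}=\sum_{i=1}^{t}\|Z_i\|_{\simeq_{R_i}}$ and, more generally, $\|Z_j\ldots Z_1\|_{\simeq_R}=\sum_{i=1}^{j}\|Z_i\|_{\simeq_{R_i}}$. Using the induction hypothesis \textbf{IH} (which transports $\simeq$, and hence $\simeq$-norms, to $\stackrel{\mathcal{B}}{=}$ and $\stackrel{\mathcal{B}}{=}$-norms on the part of the state space reachable through $\mathbf{V}$) together with Lemma~\ref{lem:corrct_norm} I would then argue $\|X\|_{\simeq_R}=m$.

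With $\|X\|_{\simeq_R}=m$ in hand I split on $t$. If $t\ge 2$, then for every $i$ the identities above give $\|Z_i\|_{\simeq_{R_i}}\le m-1<m$ (each of the other $\ge 1$ summands being subtracted off), so by the order in which the algorithm processes blocks — all blocks of $\stackrel{\mathcal{B}}{=}$-norm strictly below $m$ have already been moved to $\mathbf{V}$ before level $m$, and this norm agrees with the $\simeq$-norm by \textbf{IH} — every $[Z_i]_{R_i}\in\mathbf{V}$. If $t=1$, then $X\simeq_R Z_1$, and by the construction of $\widehat{\mathcal{B}}$ this case only arises with $[Z_1]_R<_R[X]_R$; since $\|Z_1\|_{\simeq_R}=\|X\|_{\simeq_R}=m$ and at level $m$ blocks are selected in $<_R$-increasing order with $[X]_R$ the one currently picked, $[Z_1]_R$ has already been moved to $\mathbf{V}$. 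In either case all $[Z_i]_{R_i}\in\mathbf{V}$, and then \textbf{IH} (and Lemma~\ref{lem:pre_correctness}) applied to the suffixes $Z_i\ldots Z_1$ shows that the reference sets produced along the reduction $\stackrel{\mathcal{B}}{\rightarrow}_R$ are exactly $\mathcal{B}$-admissible sets obtained from $R$ and the $\mathbf{Rd}^{\mathcal{B}}$-values of already-treated primes, that each $\mathbf{Dc}^{\mathcal{B}}$- or $\mathbf{Rd}^{\mathcal{B}}$-lookup only refers to blocks recorded before the current step, and hence that the derivation $Z_t\ldots Z_1\stackrel{\mathcal{B}}{\rightarrow}_R\mathtt{dcmp}^{\mathcal{B}}_R(Z_t\ldots Z_1)$ depends only on blocks in $\mathbf{V}$; that is $\mathcal{B}_{R,\mathbf{V}}$-applicability.

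The main obstacle is the bookkeeping behind $\|X\|_{\simeq_R}=m$ and, in the $t=1$ case, the $<_R$-ordering claim: both must be extracted from the precise statement of \textbf{IH} and from the invariants the algorithm maintains on $\mathbf{U}$, $\mathbf{V}$, $\mathbf{T}$ and the counters $d_R[\cdot]$, being careful that the reference sets $R_i$ coming from the \emph{true} decomposition genuinely reappear in the $\stackrel{\mathcal{B}}{\rightarrow}_R$ reduction rather than some a priori different $\mathcal{B}$-reference sets — a reconciliation that is again supplied by \textbf{IH} and Lemma~\ref{lem:redundant_chain}. The rest is routine unwinding of the rewriting rules.
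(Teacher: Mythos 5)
There is a genuine gap, and it sits exactly where you wave at ``a reconciliation supplied by \textbf{IH} and Lemma~\ref{lem:redundant_chain}.'' Your argument runs entirely on the identity $\|X\|_{\simeq_R}=\sum_i\|Z_i\|_{\simeq_{R_i}}$ with each summand $\geq 1$, the claim $\|X\|_{\simeq_R}=m$, and the principle ``norm $<m$ implies already in $\mathbf{V}$.'' But the algorithm's treating order is governed by the $\stackrel{\mathcal{B}}{=}$-norms $d_R[\cdot]$, not by $\simeq$-norms, and these do \emph{not} agree at intermediate stages: the paper explicitly warns that $Z_i\in\mathbf{Id}^{\mathcal{B}}_{R_i}$ may happen even though $[Z_i]_{R_i}$ is a $\widehat{\mathcal{B}}_{R_i}$-prime, in which case $\|Z_i\|_{\stackrel{\mathcal{B}}{=}_{R_i}}=0$ while $\|Z_i\|_{\simeq_{R_i}}\geq 1$, and the sets $R_i$ need not even be $\mathcal{B}$-admissible. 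So ``each summand is at least $1$, hence each is at most $m-1$'' does not transfer to the norms that actually decide membership in $\mathbf{V}$. Worse, $m=d_R[[X]_R]=\|X\|_{\stackrel{\mathcal{B}}{=}_R}$, and equating this with $\|X\|_{\simeq_R}$ presupposes $X\stackrel{\mathcal{B}}{=}_R Z_t\ldots Z_1$, which is Proposition~\ref{prop:correctness} --- proved \emph{after} this lemma and using it. Computing $\|Z_t\ldots Z_1\|_{\stackrel{\mathcal{B}}{=}_R}$ at all already requires the applicability you are trying to establish, so the norm bookkeeping is circular.

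The paper breaks this circularity differently: it picks a transition $[X]_R\stackrel{\ell}{\longmapsto}_R\gamma$ lying on a $\stackrel{\mathcal{B}}{=}_R$-witness path of $X$, so that $\gamma$ is already $\mathcal{B}_{R,\mathbf{V}}$-applicable; the bisimulation property of $\simeq_R$ forces a matching move from $\delta=Z_t\ldots Z_1$ induced by the head prime, $[Z_t]_{R_t}\stackrel{\ell}{\longmapsto}_{R_t}\eta$, giving $\gamma\simeq_R\eta.Z_{t-1}\ldots Z_1$; applying \textbf{IH} to $\gamma$ then yields applicability and a norm bound for the \emph{suffix} $Z_{t-1}\ldots Z_1$, after which the head block is handled separately (your $t=1$, $<_R$-minimality observation is the right idea for one sub-case there). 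The case where $[X]_R$ has no $\stackrel{\mathcal{B}}{=}_R$-decreasing transition of norm $m-1$ and must be reached through a norm-preserving silent step is treated by a second, considerably more delicate analysis involving $\Longrightarrow_R$, the propagating suffixes and the order $<_R$; your proposal does not engage with that case at all. To repair your proof you would have to replace the direct norm arithmetic on the $Z_i$ by this transition-matching argument.
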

Now, the following Proposition~\ref{prop:correctness} is obtained by Lemma~\ref{lem:applicable} and Lemma~\ref{lem:unique_expansion}, and finally Theorem~\ref{thm:correctness} is proved.
\begin{proposition}\label{prop:correctness}
Suppose $[X]_R$ be a $\widehat{\mathcal{B}}_{R}$-composite, and assume that $\mathbf{Dc}^{\widehat{\mathcal{B}}}_{R}([X]_R) = [Z_t]_{R_t} \ldots[Z_1]_{R_1}$.  Then $X \stackrel{\mathcal{B}}{=}_R Z_t \ldots Z_1$.
\end{proposition}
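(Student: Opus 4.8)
The plan is to show that when the greedy construction in Fig.~\ref{fig:new_base_II} treats the block $[X]_R$, it declares $[X]_R$ a $\mathcal{B}_R$-composite with $\mathbf{Dc}^{\mathcal{B}}_R([X]_R) = \mathtt{dcmp}^{\mathcal{B}}_R(Z_t\ldots Z_1)$; the Proposition is then immediate, since $\mathtt{dcmp}^{\mathcal{B}}_R(X) = \mathbf{Dc}^{\mathcal{B}}_R([X]_R) = \mathtt{dcmp}^{\mathcal{B}}_R(Z_t\ldots Z_1)$, which by definition of $\stackrel{\mathcal{B}}{=}_R$ means $X \stackrel{\mathcal{B}}{=}_R Z_t\ldots Z_1$. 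Throughout I work under the standing hypotheses $\widehat{\mathcal{B}}\subseteq\mathcal{D}$ and (after the reduction already carried out in the text following Theorem~\ref{thm:correctness}) $R$ being $\mathcal{B}$-admissible, and I use the fact that $[Z_t]_{R_t}\ldots[Z_1]_{R_1}$ being the $\widehat{\mathcal{B}}_R$-decomposition of $[X]_R$ entails in particular $X \simeq_R Z_t\ldots Z_1$.

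First I would invoke Lemma~\ref{lem:applicable}: the word $Z_t\ldots Z_1$ is $\mathcal{B}_{R,\mathbf{V}}$-applicable, so at the instant $[X]_R$ is about to enter $\mathbf{V}$ the algorithm can already compute $\mathtt{dcmp}^{\mathcal{B}}_R(Z_t\ldots Z_1)$ using only blocks in $\mathbf{V}$; write this as $[Y_k]_{R_k}\ldots[Y_1]_{R_1}$, so that $Z_t\ldots Z_1 \stackrel{\mathcal{B}}{=}_R Y_k\ldots Y_1$ and hence, via $X\simeq_R Z_t\ldots Z_1$ and $\widehat{\mathcal{B}}\subseteq\mathcal{D}$, also $X \stackrel{\mathcal{D}}{=}_R Y_k\ldots Y_1$ — so the first guard of $\textsc{Expand}_{R}(X, [Y_k]_{R_k}\ldots[Y_1]_{R_1})$ passes. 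A norm computation using \textbf{IH} and Lemma~\ref{lem:pre_correctness} applied to the $Z_i$ (together with Lemma~\ref{lem:corrct_norm}) shows that the $\stackrel{\mathcal{B}}{=}_R$-norm of $Y_k\ldots Y_1$ equals $m$, so $[Y_k]_{R_k}\ldots[Y_1]_{R_1}$ is precisely among the decomposition candidates the algorithm inspects while treating $[X]_R$ — in whichever of the two \textbf{while}-blocks of Fig.~\ref{fig:new_base_II} applies to $[X]_R$, the two cases being parallel.

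The core of the argument, and the main obstacle, is to show that $\textsc{Expand}_{R}(X, [Y_k]_{R_k}\ldots[Y_1]_{R_1})$ returns \textbf{true}, i.e.\ that the expansion-condition clauses relating the $R$-derived transitions of $[X]_R$ to those of its candidate decomposition hold. I would verify them by transporting bisimulation matches across $X \simeq_R Y_k\ldots Y_1$: $\simeq_R$ is an $R$-bisimulation (Proposition~\ref{prop:R_bisimilarity}) and a decreasing bisimulation, so every $[X]_R \stackrel{\ell}{\longmapsto}_R \alpha$ is matched, up to $\simeq_R$, by a transition from $Y_k\ldots Y_1$ which, by the structure of normed BPA (Lemma~\ref{lem:char_propagating}), is induced by the head block $[Y_k]_{R_k}$, and symmetrically in the other direction. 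For the $\stackrel{\mathcal{B}}{=}_R$-decreasing instances ($d_R(\alpha)=m-1$), all blocks occurring in $\mathtt{dcmp}^{\mathcal{B}}_R(\alpha)$ and in the matching target are already in $\mathbf{V}$, so by \textbf{IH} the relevant $\simeq_R$-class and $\stackrel{\mathcal{B}}{=}_R$-class coincide and the match is upgraded from $\simeq_R$ to $\stackrel{\mathcal{B}}{=}_R$ exactly where \textsc{Expand} demands it; for the remaining, non-decreasing instances only a $\stackrel{\mathcal{D}}{=}_R$-match is required, which is supplied by $\widehat{\mathcal{B}}\subseteq\mathcal{D}$. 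The decreasing-bisimulation property of $\simeq_R$ is what keeps the norm bookkeeping $d_R(\cdot)=\|\cdot\|_{\stackrel{\mathcal{B}}{=}_R}$ aligned on the two sides; getting this alignment together with the case split between the decreasing and non-decreasing clauses is the delicate part.

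Once $\textsc{Expand}_{R}(X, [Y_k]_{R_k}\ldots[Y_1]_{R_1})$ returns \textbf{true}, the algorithm puts $[X]_R$ into $\mathbf{Cm}^{\mathcal{B}}_R$ and sets $\mathbf{Dc}^{\mathcal{B}}_R([X]_R)$ to the $\mathcal{B}_R$-decomposition of the witness it happened to find; by Lemma~\ref{lem:unique_expansion} this value is independent of the witness and therefore equals $[Y_k]_{R_k}\ldots[Y_1]_{R_1}$. Hence $\mathtt{dcmp}^{\mathcal{B}}_R(X) = \mathbf{Dc}^{\mathcal{B}}_R([X]_R) = [Y_k]_{R_k}\ldots[Y_1]_{R_1} = \mathtt{dcmp}^{\mathcal{B}}_R(Z_t\ldots Z_1)$, i.e.\ $X \stackrel{\mathcal{B}}{=}_R Z_t\ldots Z_1$, which is the assertion of Proposition~\ref{prop:correctness}; feeding this back closes the induction step behind Theorem~\ref{thm:correctness}.
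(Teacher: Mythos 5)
Your proposal follows the paper's own route exactly: invoke Lemma~\ref{lem:applicable} to get $\mathcal{B}_{R,\mathbf{V}}$-applicability of $Z_t\ldots Z_1$, verify the expansion conditions by transporting $\simeq_R$-matches and upgrading them to $\stackrel{\mathcal{B}}{=}_{R}$ (via \textbf{IH}) on decreasing instances and to $\stackrel{\mathcal{D}}{=}_{R}$ (via $\widehat{\mathcal{B}}\subseteq\mathcal{D}$) elsewhere, then conclude with the uniqueness guaranteed by Lemma~\ref{lem:unique_expansion}. This is correct and coincides with the paper's argument, merely spelling out the details the paper leaves to "directly exploring the expansion conditions."
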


\subsection{Remark}
\begin{remark}
The refinement steps defined in previous works~\cite{DBLP:journals/tcs/HirshfeldJM96,DBLP:conf/fsttcs/CzerwinskiL10,DBLP:journals/corr/He14a} have the following interesting property, which says that $\stackrel{\mathcal{B}}{=}$ is the {\em largest}  decreasing bisimulation with expansion of $\stackrel{\mathcal{D}}{=}$, in which
\begin{itemize}
\item
$\mathcal{B}$ and $\mathcal{D}$ are the new and the old base corresponding to the related works, and $\stackrel{\mathcal{B}}{=}$ is the equivalence relation generated by $\mathcal{B}$;

\item
`decreasing' is syntectic;

\item
`decreasing bisimulation with expansion' is a simplified version of the one in Definition~\ref{def:_decreasing_bisimulation_with_expansion}.
\end{itemize}
This fact is also pointed out in~\cite{DBLP:journals/corr/He14a}. Before this, a step of refinement is divided into two stages, and in~\cite{DBLP:journals/corr/He14a}, it is pointed out that this two-stage understanding does not fit well for branching bisimilarity, thus the notion of {\em decreasing bisimulation with expansion} is invented accordingly.

In this paper, however, we do not claim that $\stackrel{\mathcal{B}}{=}_R$ is the {\em largest} decreasing bisimulation with expansion of $\stackrel{\mathcal{D}}{=}_R$.   We surmise that the `largest' does not always make sense, because semantic norms take the place of syntactic ones.  Fortunately, the correctness of the algorithm does not rely on the `largest'. It only relies on the three requirements stated at the beginning of Section~\ref{sec:naive-algorithm}.
\end{remark}

\subsection{The Time Complexity}
The running time of our algorithm is exponentially bounded, according to its description, together with
Lemma~\ref{exponential_bound_st}, Lemma~\ref{lem:norm_size}, Lemma~\ref{lem:bound_dc}, and Proposition~\ref{prop:number_iter}. Finally, we can conclude.
\begin{theorem}
Branching bisimilarity on normed BPA is EXPTIME-complete.
\end{theorem}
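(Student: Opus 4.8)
The plan is to establish the two matching bounds separately. For the $\mathrm{EXPTIME}$ lower bound, I would invoke the reduction of Mayr~\cite{Mayr2005} witnessing $\mathrm{EXPTIME}$-hardness of weak bisimilarity on normed BPA, and observe that it adapts to branching bisimilarity with only cosmetic changes. Mayr's instances simulate an alternating polynomial-space computation (equivalently, a reachability game), and the point is that on the positive side the simulating process can always reply by branching-bisimilarity steps rather than by genuine weak steps; where this is not already the case one splits the offending transitions with fresh intermediate constants so that no collapsing of $\tau$-chains is ever needed. Since branching bisimilarity refines weak bisimilarity, a yes-instance is mapped to a pair that is branching bisimilar and a no-instance to a pair that is not even weakly bisimilar, and the map stays logspace-computable; hence branching bisimilarity on normed BPA is $\mathrm{EXPTIME}$-hard.

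For the $\mathrm{EXPTIME}$ upper bound I would appeal to the algorithm of Section~\ref{sec:naive-algorithm}, driven by the sub-procedures of Figures~\ref{fig:new_base_I} and~\ref{fig:new_base_II}. Correctness first: by Lemma~\ref{lem:corrct_init} the initial base satisfies $\widehat{\mathcal{B}} \subseteq \mathcal{B}_0$, and by Theorem~\ref{thm:correctness} together with Lemma~\ref{lem:correct_contain_1} the refinement operator $\mathsf{Ref}$ satisfies $\mathsf{Ref}(\widehat{\mathcal{B}}) = \widehat{\mathcal{B}}$ and the implication $\widehat{\mathcal{B}} \subsetneq \mathcal{B} \Rightarrow \widehat{\mathcal{B}} \subseteq \mathsf{Ref}(\mathcal{B}) \subsetneq \mathcal{B}$, which are exactly the three requirements listed at the start of Section~\ref{sec:naive-algorithm}. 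Consequently the chain $\mathcal{B}_0 \supseteq \mathcal{B}_1 \supseteq \cdots$ is strictly decreasing until it becomes stable, and at the first index $i$ with $\mathcal{B}_i = \mathcal{B}_{i+1}$ one has $\mathcal{B}_i = \widehat{\mathcal{B}}$. Since $\simeq$ equals $\simeq_\emptyset$ and $\alpha \simeq_\emptyset \beta$ iff $\alpha \stackrel{\widehat{\mathcal{B}}}{=}_{\emptyset} \beta$, the input instance is then decided by computing and comparing $\mathtt{dcmp}^{\widehat{\mathcal{B}}}_{\emptyset}(\alpha)$ and $\mathtt{dcmp}^{\widehat{\mathcal{B}}}_{\emptyset}(\beta)$.

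For the running time, the number of refinement steps is exponentially bounded by Proposition~\ref{prop:number_iter}, so it suffices to show that one step costs exponential time. Here I would check that each ingredient stays of at most exponential size and is computed in at most exponential time: there are at most $2^{|\mathbf{C}_{\mathrm{G}}|}$ reference sets; strong norms, and hence by Lemma~\ref{lem:norm_size} all semantic norms $\|\cdot\|_{\stackrel{\mathcal{B}}{=}_R}$, are exponentially bounded by Lemma~\ref{exponential_bound_st}, so the counter $m$ in \textsc{ConstructingNewBase} ranges over exponentially many values; every $\mathbf{Dc}^{\mathcal{B}}_R([X]_R)$ has exponentially bounded length by Lemma~\ref{lem:bound_dc}; the decomposition candidates $[Y_k]_{R_k}\ldots[Y_1]_{R_1}$ fed to $\textsc{Expand}_R$ can be produced by deterministic enumeration over objects of at most exponential size; and each of \textsc{ComputingId}, \textsc{ComputingRd}, \textsc{Expand}, and the rewriting that computes $\mathtt{dcmp}^{\mathcal{B}}_R$ performs polynomially many manipulations of such objects. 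An exponential number of iterations times an exponential cost per iteration is still exponential, and the concluding $\mathtt{dcmp}$ comparison is polynomial in the exponential-size $\widehat{\mathcal{B}}$. Combined with the hardness part, this gives $\mathrm{EXPTIME}$-completeness.

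The step I expect to be the main obstacle is the book-keeping behind the per-iteration bound, in particular arguing that the candidate decompositions against which $\textsc{Expand}_R$ must be run do not form a doubly exponential family: one has to exploit that the tail $[Y_{k-1}]_{R_{k-1}}\ldots[Y_1]_{R_1}$ of a viable candidate is forced to be a suffix of $\mathtt{dcmp}^{\mathcal{B}}_R(\gamma)$ for the relevant $\stackrel{\mathcal{B}}{=}_R$-decreasing target $\gamma$, so that only exponentially many tails occur, while the head block $[Y_k]_{R_k}$ ranges over a polynomial set. One must also make sure that the exponentially many reference sets, each carrying individually exponential-size $\mathbf{Dc}$-entries, do not compound into a super-exponential object over the whole run. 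The admissibility discipline (only $\mathcal{B}$-admissible $R$ are constructed, the rest being copied via $\mathcal{B}_R = \mathcal{B}_{\mathbf{Id}^{\mathcal{B}}_R}$) together with Proposition~\ref{prop:number_iter} are precisely what keep this under control.
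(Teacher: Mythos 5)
Your proposal follows essentially the same route as the paper: the lower bound is obtained by a slight modification of Mayr's reduction for weak bisimilarity, and the upper bound comes from the correctness of the refinement algorithm (Lemma~\ref{lem:corrct_init}, Lemma~\ref{lem:correct_contain_1}, Theorem~\ref{thm:correctness}) combined with the size and iteration bounds of Lemma~\ref{exponential_bound_st}, Lemma~\ref{lem:norm_size}, Lemma~\ref{lem:bound_dc}, and Proposition~\ref{prop:number_iter}. Your additional discussion of the per-iteration bookkeeping (the candidate decompositions for $\textsc{Expand}_R$ and the handling of exponentially many reference sets) is a correct elaboration of what the paper leaves implicit.
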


\section{Examples}

\subsection{Example One}

Let us illustrate the algorithm for the following normed BPA system $\Gamma = (\mathbf{C}, \mathcal{A}, \Delta)$ in which
\begin{itemize}
\item
 $\mathbf{C} = \{A_0, A_1, B , C \}$;

\item
$\mathcal{A} = \{a, b,  \tau\}$;

\item
$\Delta$ contains the following rules:
\begin{center}
$A_0 \stackrel{a}{\longrightarrow} A_1$, \; $A_1 \stackrel{a}{\longrightarrow} A_0$, \;
 $A_0 \stackrel{b}{\longrightarrow} \epsilon$, \; $A_1 \stackrel{b}{\longrightarrow} B$,

 $B \stackrel{a}{\longrightarrow} \epsilon$, \; $B \stackrel{\tau}{\longrightarrow} \epsilon$, \;
$C \stackrel{a}{\longrightarrow} C$, \;  $C \stackrel{\tau}{\longrightarrow} \epsilon$.
\end{center}
\end{itemize}

By direct observation, we have $A_0.C \simeq A_1.C$ but $A_0 \not\simeq A_1$,
this observation tells us that $A_0 \simeq_{\{C\}} A_1$.

Another observation is that, $A_0.A_0.C \simeq A_1.A_0.C$ but
$A_0.A_0 \not\simeq A_1.A_0$, which tells us that even if $A_0 \not\in \mathsf{Rd}(C)$, we still cannot
cancel the rightmost $C$.

Below we will demonstrate the behaviour of the algorithm on this system $\Gamma$ to get more valuable facts.

\subsubsection{Preprocessing}
The ground constants $\mathbf{C}_{\mathrm{G}} = \{B,C\}$. Thus the reference set can be $\emptyset$, $\{B\}$, $\{C\}$, and $\{B,C\}$.  All these sets are qualified. We have  $[X]_R = \{X\}$ for every $R \subseteq \mathbf{C}_{\mathrm{G}}$ and $X \in \mathbf{C} \setminus R$.  We can also find that the $R$-propagating of $X$ are always the empty set for every $R \subseteq \mathbf{C}_{\mathrm{G}}$ and $X \in \mathbf{C} \setminus R$. Thus $[X]_R \longmapsto_R \alpha$ if and only if  $X \longrightarrow_R \alpha$. Therefore, we will simply write $X$  for $[X]_R$.

We know that $\emptyset$-transitions $\longrightarrow_{\emptyset}$ is exactly $\longrightarrow$.
For future use, we list the $\{B,C\}$-transitions $\longrightarrow_{\{B,C\}}$:
\begin{center}
$A_0 \stackrel{a}{\longrightarrow}_{\{B,C\}} A_1$, \; $A_1 \stackrel{a}{\longrightarrow}_{\{B,C\}} A_0$, \;
 $A_0 \stackrel{b}{\longrightarrow}_{\{B,C\}} \epsilon$,

 $A_1 \stackrel{b}{\longrightarrow}_{\{B,C\}}  \epsilon$, \;
 $\epsilon \stackrel{a}{\longrightarrow}_{\{B,C\}} \epsilon$, \; $\epsilon \stackrel{\tau}{\longrightarrow}_{\{B,C\}} \epsilon$.
\end{center}

The order of $R$-blocks does not matter in this example. We choose the following orders:
\begin{itemize}
\item
$A_0 <_{\emptyset} A_1 <_{\emptyset} B <_{\emptyset} C$.

\item
$A_0 <_{\{B\}} A_1 <_{\{B\}}  C$.

\item
$A_0 <_{\{C\}} A_1 <_{\{C\}}  B$.

\item
$A_0 <_{\{B, C\}} A_1$.

\end{itemize}

Finally,  $|B|_{\mathrm{wk}} = |C|_{\mathrm{wk}} = 0$; $|A_0|_{\mathrm{wk}} = |A_1|_{\mathrm{wk}} = 1$.

\subsubsection{The Initial Base}
Now we define the initial base $\mathcal{B}_0$ according to Section~\ref{subsec:initial_base}. We know that
$\mathbf{Id}_{R}=\{B,C\}$ for every $R \subseteq \{B, C\}$. The only $\mathcal{B}_0$-admissible set is $\{B, C\}$. Therefore $\mathcal{B}_{0,R} = \mathcal{B}_{0,\{B, C\}}$ for every $R\subseteq \{B, C\}$,  and $\mathcal{B}_{0,\{B, C\}}$ is defined as follows:
\begin{itemize}
\item
$\mathbf{Pr}_{\{B,C\}}=\{A_0\}$.

\item
$\mathbf{Cm}_{\{B,C\}}=\{A_1\}$.

\item
$\mathbf{Dc}_{\{B,C\}}(A_1)=\{A_0\}$

\item
$\mathbf{Rd}_{\{B,C\}}(A_0)= \{B,C\}$.
\end{itemize}

The date of the initial base is summarized as follows:
{\small
\begin{tabular}{|c||c|c|c|c|c|c|} \hline
    blocks          & ord & norm & $\mathbf{Pr}$  & $\mathbf{Cm}$  & $\mathbf{Rd}$   & $\mathbf{Dc}$  \\
\hline\hline
$[A_0]_{\{B,C\}}$   & 1  & 1   & \checkmark   &              &  $\{B,C\}$   &      \\ \hline
$[A_1]_{\{B,C\}}$   & 2  & 1   &              &  \checkmark  &              &  $[A_0]_{\{B,C\}}$    \\
\hline
\end{tabular}
}

Finally, $\mathcal{B}_0$ is assigned to $\mathcal{D}$.

\subsubsection{The 1st Iteration}

First, we calculate $\mathbf{Id}^{\mathcal{B}}_{R}$ for every $R \subseteq \mathbf{C}_{\mathrm{G}}$ via $\textsc{ComputingId}(R)$ . We obtain:
\begin{itemize}
\item
$\mathbf{Id}^{\mathcal{B}}_{\emptyset}=\emptyset$, and

\item
$\mathbf{Id}^{\mathcal{B}}_{\{B\}}=\mathbf{Id}^{\mathcal{B}}_{\{C\}} = \mathbf{Id}^{\mathcal{B}}_{\{B,C\}} = \{B,C\}$.
\end{itemize}
Thus only $\emptyset$ and $\{B,C\}$ are $\mathcal{B}$-admissible.

Now we go into the main part. At first,
\begin{itemize}
\item
$\mathbf{U} = \{[A_0]_\emptyset, [A_1]_\emptyset, [B]_\emptyset, [C]_\emptyset, [A_0]_{\{B,C \}}, [A_1]_{\{B, C\}}\}$.

\item

$\mathbf{V} = \emptyset$.
\end{itemize}

Now let $m=1$ and explore the \textbf{repeat}-block.   Since $A_0 \stackrel{b}{\longrightarrow}_{\empty} \epsilon$ and $d_{\emptyset}(\epsilon) = 0$, and moreover $[A_0]_\emptyset$  is $<_\emptyset$ minimum,   $[A_0]_\emptyset$ is selected from $\mathbf{U}$. $[A_0]_\emptyset$ is deemed to be a $\mathcal{B}_\emptyset$-prime because there is no candidate of decomposition of $[A_0]_\emptyset$. Thus $[A_0]_\emptyset$ is put into $\mathbf{Pr}^{\mathcal{B}}_{\emptyset}$. Then we compute $\mathbf{Rd}^{\mathcal{B}}_{\emptyset}([A_0]_\emptyset)$ via $\textsc{ComputingRd}_R([A_0]_{\emptyset})$, and the result is $\mathbf{Rd}^{\mathcal{B}}_{\emptyset}([A_0]_\emptyset) = \{B,C\}$. After that $[A_0]_\emptyset$ is put into $\mathbf{V}$.  Next, we can select $[B]_\emptyset$ from $\mathbf{U}$. The only candidate for decomposition of $[B]_\emptyset$ is $[A_0]_\emptyset$. One can check that $\textsc{Expand}_{\emptyset}(B,  [A_0]_\emptyset)$ returns \textbf{false}, thus we can affirm that $[B]_\emptyset$ is a $\mathcal{B}_{\emptyset}$-prime, and $\mathbf{Rd}^{\mathcal{B}}_{\emptyset}([B]_\emptyset) = \{B,C\}$ can be computed in the same way. Next, $[C]_\emptyset$,
$[A_0]_{\{B,C\}}$, and $[A_1]_{\{B,C\}}$ are treated successively. $[C]_\emptyset$ and
$[A_0]_{\{B,C\}}$ are $\mathcal{B}$-primes.   $[A_1]_{\{B,C\}}$ is, however, a $\mathcal{B}$-composite because one can check $\textsc{Expand}_{\{B,C\}}(A_1,  [A_0]_{\{B,C\}})$ return \textbf{true}. Now $\mathbf{Dc}^{\mathcal{B}}_{\{B,C\}}([A_1]_{\{B,C\}}) = [A_0]_{\{B,C\}}$.

Now $m =2$.  Because  $A_1 \stackrel{a}{\longrightarrow}_\emptyset A_0$ and $d_{\emptyset}(A_0) = 1$,  we can select $[A_1]_\emptyset$ from $\mathbf{U}$ and define $d_{\emptyset}[A_1] = 2$.
We can check that $[A_1]_{\emptyset}$ is prime too, and $\mathbf{Rd}^{\mathcal{B}}_{\{B,C\}}([A_1]_{\emptyset})=\{B,C\}$.

The date computed in this iteration is summarized as follows:
{\center \small
\begin{tabular}{|c||c|c|c|c|c|c|} \hline
    blocks          & ord & norm & $\mathbf{Pr}$  & $\mathbf{Cm}$  & $\mathbf{Rd}$   & $\mathbf{Dc}$  \\
\hline\hline
$[A_0]_\emptyset$   & 1  & 1   & \checkmark   &              &  $\{B,C\}$   &       \\ \hline
$[A_1]_\emptyset$   & 6  & 2   & \checkmark   &              &  $\{B,C\}$   &     \\ \hline
$[B]_\emptyset$     & 2  & 1   & \checkmark   &              &  $\{B,C\}$   &     \\ \hline
$[C]_\emptyset$     & 3  & 1   & \checkmark   &              &  $\{B,C\}$   &      \\ \hline
$[A_0]_{\{B,C\}}$   & 4  & 1   & \checkmark   &              &  $\{B,C\}$   &      \\ \hline
$[A_1]_{\{B,C\}}$   & 5  & 1   &              &  \checkmark  &              &  $[A_0]_{\{B,C\}}$    \\
\hline
\end{tabular}
}

Finally, $\mathcal{B}$ is assigned to $\mathcal{D}$.

\subsubsection{The 2nd Iteration}
We calculate $\mathbf{Id}^{\mathcal{B}}_{R}$ for every $R \subseteq \mathbf{C}_{\mathrm{G}}$ via $\textsc{ComputingId}(R)$  at first. Once again, $\emptyset$ and $\{B,C\}$ are $\mathcal{B}$-admissible.

When $m=1$, we find $[A_0]_\emptyset, [B]_\emptyset, [C]_\emptyset \in \mathbf{Pr}^{\mathcal{B}}_{\emptyset}$, in which $\mathbf{Rd}^{\mathcal{B}}_{\emptyset}([A_0]_\emptyset) = \mathbf{Rd}^{\mathcal{B}}_{\emptyset}([B]_\emptyset) =\emptyset$ and  $\mathbf{Rd}^{\mathcal{B}}_{\emptyset}([C]_\emptyset) =\{B,C\}$.  Then we find $[A_0]_{\{B,C\}} \in   \mathbf{Pr}^{\mathcal{B}}_{\{B,C\}}$ and $[A_1]_{\{B,C\}} \in \mathbf{Cm}^{\mathcal{B}}_{\{B,C\}}$.
When $m=2$, we find that $[A_1]_\emptyset \in \mathbf{Pr}^{\mathcal{B}}_{\emptyset}$ and $\mathbf{Rd}^{\mathcal{B}}_{\emptyset}([A_1]_\emptyset)=\emptyset$.

The date computed in this iteration is summarized as follows:
{\center \small
\begin{tabular}{|c||c|c|c|c|c|c|} \hline
    blocks          & ord & norm & $\mathbf{Pr}$  & $\mathbf{Cm}$  & $\mathbf{Rd}$   & $\mathbf{Dc}$  \\
\hline\hline
$[A_0]_\emptyset$   & 1  & 1   & \checkmark   &              &  $\emptyset$   &       \\ \hline
$[A_1]_\emptyset$   & 6  & 2   & \checkmark   &              &  $\emptyset$  &     \\ \hline
$[B]_\emptyset$     & 2  & 1   & \checkmark   &              &  $\emptyset$   &     \\ \hline
$[C]_\emptyset$     & 3  & 1   & \checkmark   &              &  $\{B,C\}$   &      \\ \hline
$[A_0]_{\{B,C\}}$   & 4  & 1   & \checkmark   &              &  $\{B,C\}$   &      \\ \hline
$[A_1]_{\{B,C\}}$   & 5  & 1   &              &  \checkmark  &              &  $[A_0]_{\{B,C\}}$    \\
\hline
\end{tabular}
}

We find an interesting fact that the only difference between $\mathcal{B}$ and $\mathcal{D}$ is $\mathbf{Rd}^{\mathcal{B}}_{\emptyset} \subsetneq \mathbf{Rd}^{\mathcal{D}}_{\emptyset}$. Compare this fact with Lemma~\ref{lem:compare_RD}.

Finally, $\mathcal{B}$ is assigned to $\mathcal{D}$.

\subsubsection{The 3rd Iteration}

In this iteration,  we can obtain that $\mathcal{B} = \mathcal{D}$. Therefore the algorithm stops here. We can draw the conclusion that $\widehat{\mathcal{B}} = \mathcal{D}$. In other words, ${\simeq_{R}} = {\stackrel{\mathcal{D}}{=}_R}$ for every $R \subseteq \mathbf{C}_{\mathrm{G}}$.

\subsubsection{Conclusion}
We confirm that $A_0 \not\simeq A_1$ by showing $\mathtt{dcmp}^{\widehat{\mathcal{B}}}_{\emptyset}(A_0) \neq \mathtt{dcmp}^{\widehat{\mathcal{B}}}_{\emptyset}(A_1)$.  In fact,  $\mathtt{dcmp}^{\widehat{\mathcal{B}}}_{\emptyset}(A_0) = [A_0]_{\emptyset}$ and $\mathtt{dcmp}^{\widehat{\mathcal{B}}}_{\emptyset}(A_1) = [A_1]_{\emptyset}$.

We can show $A_0.C \simeq A_1.C$.  Using the $\widehat{\mathcal{B}}$-reduction rules defined in Section~\ref{subsec:decomposition_base}, we have
\begin{itemize}
\item
 $C \stackrel{\widehat{\mathcal{B}}}{\rightarrow}_{\emptyset} [C]_{\emptyset} $;

\item
$\mathbf{Rd}^{\widehat{\mathcal{B}}}_{\emptyset}([C]_{\emptyset}) = \{B,C\}$;

\item
 $A_0 \stackrel{\widehat{\mathcal{B}}}{\rightarrow}_{\{B,C\}} [A_0]_{\{B,C\}} $;

\item
 $A_1 \stackrel{\widehat{\mathcal{B}}}{\rightarrow}_{\{B,C\}} [A_1]_{\{B,C\}} \stackrel{\widehat{\mathcal{B}}}{\rightarrow}_{\{B,C\}} [A_0]_{\{B,C\}}$.
\end{itemize}
Therefore,  we have
\[
\mathtt{dcmp}_{\emptyset}(A_0.C) = \mathtt{dcmp}_{\emptyset}(A_1.C) = [A_0]_{\{B,C\}}[C]_{\emptyset},
\]
hence $A_0.C \simeq A_1.C$.

Actually, we can show
\begin{enumerate}
\item
For every $i_1,j_1, \ldots, i_t,j_t \in \{0,1\}$, we have
\[
A_{i_t}.\ldots.A_{i_1}.C \simeq A_{j_t}.\ldots.A_{j_1}.C.
\]

\item
For every $i_1,j_1, \ldots, i_t,j_t \in \{0,1\}$,
\[
A_{i_t}.\ldots.A_{i_1} \simeq A_{j_t}.\ldots.A_{j_1}
\]
if and only if $i_1 = j_1, \ldots, i_t = j_t$.

\end{enumerate}

\section{Remark}
The algorithm described in Section~\ref{sec:naive-algorithm} can be further improved. For example, in the \textbf{repeat}-block at line~\ref{line:repeat}, although $m$ can be exponentially large, there is no need to enumerate every $m$. We can compute the next candidate of $m$ based on the right-hand-sides of $\Delta$. Thus only polynomial number of candidates of $m$ is available.  In addition, we notice that, although the length of the decomposition of $[X]_R$ can be exponentially large, the technique of string compression can be used such that the representation and manipulation of strings can be implemented in polynomial time. This is done in all the previous works on polynomial-time algorithms for checking bisimilarity on realtime BPA. Ultimately,
the number of ground constants is essentially the only factor of the exponential time.  Therefore, we claim that branching bisimilarity on normed BPA is in fact fixed parameter tractable.

\section*{Acknowledgment}
This research is supported by the National Nature Science Foundation of China (61472240, 91318301, 61261130589).

% trigger a \newpage just before the given reference
% number - used to balance the columns on the last page
% adjust value as needed - may need to be readjusted if
% the document is modified later
%\IEEEtriggeratref{8}
% The "triggered" command can be changed if desired:
%\IEEEtriggercmd{\enlargethispage{-5in}}

% references section

% can use a bibliography generated by BibTeX as a .bbl file
% BibTeX documentation can be easily obtained at:
% http://www.ctan.org/tex-archive/biblio/bibtex/contrib/doc/
% The IEEEtran BibTeX style support page is at:
% http://www.michaelshell.org/tex/ieeetran/bibtex/
\bibliographystyle{IEEEtran}
% argument is your BibTeX string definitions and bibliography database(s)
%\bibliography{IEEEabrv,../bib/paper}
%
% <OR> manually copy in the resultant .bbl file
% set second argument of \begin to the number of references
% (used to reserve space for the reference number labels box)

\bibliographystyle{plain}
\bibliography{pa}

\newpage
\appendices

\section{Proofs in Section~\ref{sec:Relativized_bisimilarity}}

\subsection{Proof of the Computation Lemma}

\subsubsection{Proof of Lemma~\ref{lem:computation_lemma}}

We present a complete proof of Lemma~\ref{lem:computation_lemma} here which depends only on Definition~\ref{def:beq}. Though  Lemma~\ref{lem:computation_lemma} is well-known, the proof here has some subtleties.
Importantly, the proof framework will be used to show Lemma~\ref{lem:computation_lemma_R}.

Let
\[
\alpha = \alpha_0 \stackrel{\tau}{\longrightarrow} \alpha_1 \stackrel{\tau}{\longrightarrow} \alpha_2 \stackrel{\tau}{\longrightarrow} \ldots \stackrel{\tau}{\longrightarrow} \alpha_k  \simeq \alpha.
\]
We show that $\alpha_i \simeq \alpha_j$ for every $0 \leq i, j \leq k$.  To this end,
let $\mathcal{S} \stackrel{\mathrm{def}}{=} \{(\alpha_i, \alpha_j) \;|\;  0 \leq i, j \leq k \}$, and
construct the equivalence relation ${\asymp} \stackrel{\mathrm{def}}{=} ( \mathcal{S}  \cup {\simeq} )^{*}$. We emphasize that $\mathcal{S}$ can be viewed as a single `equivalence class',  because $\mathcal{S}$ is both symmetric and transitive, and connective.
We confirm that ${\asymp}$ is a bi\-simulation.  The crux is to show the following {\em key property}:  For every $i,j \in \{0,1, \ldots, k\}$,
\begin{enumerate}
\item
If $\alpha_i \stackrel{a}{\longrightarrow} \gamma$, then
    $\alpha_j \stackrel{\asymp} \Longrightarrow \cdot \stackrel{a}{\longrightarrow} \delta$ for some $\delta$ such that  $\gamma \simeq \delta$.

\item
If $\alpha_i \stackrel{\not\asymp}{\longrightarrow} \gamma$, then
    $\alpha_j \stackrel{\asymp} \Longrightarrow \cdot \stackrel{\not\asymp}{\longrightarrow} \delta$ for some $\delta$ such that  $\gamma \simeq \delta$.
\end{enumerate}

To show this property, we study the following  two cases:
\begin{itemize}
\item
$i \geq  j$.  In this case, we have $\alpha_j \stackrel{\asymp} \Longrightarrow \alpha_i$. By letting $\delta = \gamma$, we get the key property.

\item
$i < j$.  In this case, we will make use of the fact $\alpha \simeq \alpha_k$.  Consider the transition $\alpha \stackrel{\tau}{\longrightarrow} \alpha_1$. Now either $\alpha_1 \simeq \alpha_k$, or $\alpha_k \stackrel{\simeq}{\Longrightarrow} \cdot \stackrel{\tau}{\longrightarrow} \alpha_1'$ such that $\alpha_1 \simeq \alpha_1'$. In view of $(\alpha, \alpha_1) \in \mathcal{S}$, we conclude that in either case, $\alpha_k \stackrel{\asymp}{\Longrightarrow} \alpha_1'$ for some $\alpha_1'$ such that $\alpha_1 \simeq \alpha_1'$.  By repeatedly applying the above argument. We can show that $\alpha_k \stackrel{\asymp}{\Longrightarrow} \alpha_i'$ for some $\alpha_i'$ such that $\alpha_i \simeq \alpha_i'$. Since $\alpha_j \stackrel{\asymp}{\Longrightarrow} \alpha_k$, we now have $\alpha_j \stackrel{\asymp}{\Longrightarrow} \alpha_i'$ such that $\alpha_i \simeq \alpha_i'$.  Now it is a routine work to justify the key property. For example, suppose that $\alpha_i \stackrel{a}{\longrightarrow} \gamma$, then
we have $\alpha_j \stackrel{\asymp} \Longrightarrow \alpha_i' \stackrel{\asymp} \Longrightarrow \cdot \stackrel{a}{\longrightarrow} \delta$ for some $\delta$ such that  $\gamma \simeq \delta$.
\end{itemize}

\begin{remark}
We can show the bisimulation property of $\asymp$ by repeatedly using the {\em key property} and the bisimulation property of $\simeq$. Be very careful that it would be a mistake if the `key property' was modified slightly as follows: For every $i,j \in \{0,1, \ldots, k\}$,
\begin{enumerate}
\item
If $\alpha_i \stackrel{a}{\longrightarrow} \gamma$, then
    $\alpha_j \stackrel{\asymp} \Longrightarrow \cdot \stackrel{a}{\longrightarrow} \delta$ for some $\delta$ such that  $\gamma \asymp \delta$.

\item
If $\alpha_i \stackrel{\not\asymp}{\longrightarrow} \gamma$, then
    $\alpha_j \stackrel{\asymp} \Longrightarrow \cdot \stackrel{\not\asymp}{\longrightarrow} \delta$ for some $\delta$ such that  $\gamma \asymp \delta$.
\end{enumerate}
This mistake is essentially the same as the well-known mistake of `weak bisimulation up-to weak bisimilarity'.  To get a better understanding of the mistake, readers are referred to Chapter~5 of~\cite{Milner1989}, especially Section~5.7.
\end{remark}

\subsubsection{Proof of Lemma~\ref{lem:computation_lemma_R}}

This proof is an adaptation of the proof of Lemma~\ref{lem:computation_lemma}.

Let $R \subseteq \mathbf{C}_{\mathrm{G}}$ and let $\alpha$ be in $R$-nf. Suppose that
\[
\alpha = \alpha_0 \stackrel{\tau}{\longrightarrow}_R \alpha_1 \stackrel{\tau}{\longrightarrow}_R \alpha_2 \stackrel{\tau}{\longrightarrow}_R \ldots \stackrel{\tau}{\longrightarrow}_R \alpha_k  \simeq_R \alpha.
\]
We show that $\alpha_i \simeq_R \alpha_j$ for every $0 \leq i, j \leq k$.  To this end,
let $\mathcal{S} \stackrel{\mathrm{def}}{=} \{(\alpha_i, \alpha_j) \;|\;  0 \leq i, j \leq k \}$, and
construct ${\asymp} \stackrel{\mathrm{def}}{=} ( \mathcal{S}  \cup {\simeq_R} )^{*}$.  We confirm that ${\asymp}$ is an $R$-bisimulation (via Proposition~\ref{prop:R_bisimilarity}).  First of all, we point out the following basic facts:
\begin{itemize}
\item
$\asymp$ is an equivalence relation.

\item
$\mathcal{S}$ is symmetric, transitive, and connective.

\item
${=_R} \subseteq {\asymp}$. (Because ${=_R} \subseteq {\simeq_R}$ and ${\simeq_R} \subseteq {\asymp}$. )

\item
If $\alpha_i \Longrightarrow \epsilon$ for some $0 \leq i \leq k$, then $\alpha_j \Longrightarrow \epsilon$ for every $0 \leq j \leq k$.

\item
All $\alpha_i$'s are in $R$-nf for $0 \leq i \leq k$.

\end{itemize}
The crux of the proof  is to show the following {\em key property}: For every $i,j \in \{0,1, \ldots, k\}$,
\begin{enumerate}
\item
If $\alpha_i \stackrel{a}{\longrightarrow}_R \gamma$, then
    $\alpha_j \stackrel{\asymp} \Longrightarrow_R \cdot \stackrel{a}{\longrightarrow}_R \delta$ for some $\delta$ such that  $\gamma \simeq_R \delta$.

\item
If $\alpha_i \stackrel{\not\asymp}{\longrightarrow}_R \gamma$, then
    $\alpha_j \stackrel{\asymp} \Longrightarrow_R \cdot \stackrel{\not\asymp}{\longrightarrow}_R \delta$ for some $\delta$ such that  $\gamma \simeq_R \delta$.
\end{enumerate}

To show this property, we study the following  two cases:
\begin{itemize}
\item
$i \geq  j$.  In this case, we have $\alpha_j \stackrel{\asymp} \Longrightarrow_R \alpha_i$. By letting $\delta = \gamma$, we get the key property.

\item
$i < j$.  In this case, we will make use of the fact $\alpha \simeq_R \alpha_k$.
Consider the transition $\alpha \stackrel{\tau}{\longrightarrow}_R \alpha_1$.
Now either $\alpha_1 \simeq_R \alpha_k$, or $\alpha_k \stackrel{\simeq}{\Longrightarrow}_R \cdot \stackrel{\tau}{\longrightarrow}_R \alpha_1'$ such that $\alpha_1 \simeq_R \alpha_1'$.
In view of $(\alpha, \alpha_1) \in \mathcal{S}$, we conclude that in either case, $\alpha_k \stackrel{\asymp}{\Longrightarrow}_R \alpha_1'$ for some $\alpha_1'$ such that $\alpha_1 \simeq_R \alpha_1'$.
By repeatedly applying the above argument. We can show that $\alpha_k \stackrel{\asymp}{\Longrightarrow}_R \alpha_i'$ for some $\alpha_i'$ such that $\alpha_i \simeq_R \alpha_i'$.
Since $\alpha_j \stackrel{\asymp}{\Longrightarrow}_R \alpha_k$, we now have $\alpha_j \stackrel{\asymp}{\Longrightarrow}_R \alpha_i'$ such that $\alpha_i \simeq_R \alpha_i'$.  Now it is a routine work to justify the key property. For example, suppose that $\alpha_i \stackrel{a}{\longrightarrow}_R \gamma$, then
we have $\alpha_j \stackrel{\asymp} \Longrightarrow_R \alpha_i' \stackrel{\asymp} \Longrightarrow_R \cdot \stackrel{a}{\longrightarrow}_R \delta$ for some $\delta$ such that  $\gamma \simeq_R \delta$.
\end{itemize}

%\subsection{Proof of Lemma~\ref{lem:congruence_R_bis}}

%We fix two processes $\gamma$ and $\delta$ with the property $\gamma \simeq_R \delta$.  Let
%\[
%\mathcal{S} \stackrel{\mathrm{def}}{=} \{(\zeta, \eta) \,|\,  \zeta =_R \alpha\gamma, \eta =_R \beta\delta, \mbox{ and } %\alpha \simeq \beta\}.
%\]
%We construct the relation
%${\asymp} \stackrel{\mathrm{def}}{=} (\mathcal{S} \cup \mathcal{S}^{-1} \cup {\simeq_R})^{*}$.  It is not hard (but a bit tedious) to show that $\asymp$ is an $R$-bisimulation.

\subsection{Proof of Proposition~\ref{prop:R_monotone}}
Suppose $R_1 \subseteq R_2$. We show that
\[
{\asymp} \stackrel{\mathrm{def}}{=} (\simeq_{R_1} \cup  \simeq_{R_2})^{*}
\]
is an $R_2$-bisimulation.

We emphasize the following basic facts:
\begin{itemize}
\item
${\simeq_{R_1}} \subseteq {\asymp}$ and ${\simeq_{R_2}} \subseteq {\asymp}$.

\item
${=_{R_2}} \subseteq {\asymp}$.

\item
$\simeq_{R_1}$, $\simeq_{R_2}$, ${\asymp}$ are all equivalence relations.
\end{itemize}

In order to show that ${\asymp}$ is an $R_2$-bisimulation, we design the following property~\textbf{I} and \textbf{II}:
\begin{itemize}
\item[\textbf{I}.]
Suppose there are $\alpha$ and $\beta$ satisfying $\alpha \simeq_{R_1} \beta$, then
\begin{enumerate}
\item
if  $\alpha\Longrightarrow \epsilon$,  then $\beta \Longrightarrow \epsilon$;

\item
if $\alpha \stackrel{\not\asymp}{\longrightarrow} \alpha'$, then
$\beta_{R_2}  \stackrel{\asymp}{\Longrightarrow}_{R_2} \cdot \stackrel{\not\asymp}{\longrightarrow}_{R_2} \beta'$ for some $\beta'$ such that $\alpha' \simeq_{R_1} \cdot \simeq_{R_2} \beta'$;

\item
if $\alpha \stackrel{a}{\longrightarrow} \alpha'$, then
$\beta_{R_2}  \stackrel{\asymp}{\Longrightarrow}_{R_2} \cdot \stackrel{a}{\longrightarrow}_{R_2} \beta'$ for some $\beta'$ such that $\alpha' \simeq_{R_1} \cdot \simeq_{R_2} \beta'$.
\end{enumerate}

\item[\textbf{II}.]
Suppose there are $\alpha$ and $\beta$ satisfying $\alpha \simeq_{R_2} \beta$, then
\begin{enumerate}
\item
if  $\alpha\Longrightarrow \epsilon$,  then $\beta \Longrightarrow \epsilon$;

\item
if $\alpha \stackrel{\not\asymp}{\longrightarrow} \alpha'$, then
$\beta_{R_2}  \stackrel{\asymp}{\Longrightarrow}_{R_2} \cdot \stackrel{\not\asymp}{\longrightarrow}_{R_2} \beta'$ for some $\beta'$ such that $\alpha' \simeq_{R_2} \beta'$;

\item
if $\alpha \stackrel{a}{\longrightarrow} \alpha'$, then
$\beta_{R_2}  \stackrel{\asymp}{\Longrightarrow}_{R_2} \cdot \stackrel{a}{\longrightarrow}_{R_2} \beta'$ for some $\beta'$ such that $\alpha' \simeq_{R_2} \beta'$.
\end{enumerate}
\end{itemize}
Note that, if $\alpha \asymp \beta$, then we must have
\[
\alpha \simeq_{R_{i_1}} \cdot \simeq_{R_{i_2}} \cdot \ldots \cdot \simeq_{R_{i_n}}  \beta
\]
for some $n \in \mathbb{N}$ and $i_k \in \{1,2\}$ for every $1 \leq k \leq n$.
Therefore, by repeatedly using the  property~\textbf{I} and \textbf{II}, we can obtain that ${\asymp}$ is a bisimulation.

We can observe that the property~\textbf{II} is an direct inference of the bisimulation property of $\simeq_{R_2}$.  Thus it suffices to  prove property~\textbf{I}.

Condition~1 (i.e.~ground preservation) is trivial. Other two conditions have the same structures and Condition~3 cannot be more difficult than Condition~2. Thus we choose to prove Condition~2.

Suppose there are $\alpha$ and $\beta$ satisfying $\alpha \simeq_{R_1} \beta$. According to Definition~\ref{def:R_beq}, we have:
\begin{itemize}
\item
if $\alpha \stackrel{\not\simeq_{R_1}}{\longrightarrow} \alpha'$, then
$\beta_{R_1}  \stackrel{\simeq_{R_1}}{\Longrightarrow}_{R_1} \cdot \stackrel{\not\simeq_{R_1}}{\longrightarrow}_{R_1} \beta'$ for some $\beta'$ such that $\alpha' \simeq_{R_1} \beta'$.
\end{itemize}

Assume $\alpha \stackrel{\not\asymp}{\longrightarrow} \alpha'$. Because ${\simeq_{R_1}} \subseteq {\asymp}$, we must have $\alpha \stackrel{\not\simeq_{R_1}}{\longrightarrow} \alpha'$.
We can find $\beta'$ such that $\beta_{R_1}  \stackrel{\simeq_{R_1}}{\Longrightarrow}_{R_1} \cdot \stackrel{\not\simeq_{R_1}}{\longrightarrow}_{R_1} \beta'$ and $\alpha' \simeq_{R_1} \beta'$.
In view of Lemma~\ref{lem:R_transition} and Lemma~\ref{lem:char_R_arrow}, we have
\[
\beta =_{R_1} \cdot \stackrel{\simeq_{R_1}}{\Longrightarrow} \cdot \stackrel{\not\simeq_{R_1}}{\longrightarrow} \cdot  =_{R_1} \beta'
\]
for some $\beta'$ such that $\alpha' \simeq_{R_1} \beta'$.
Since $R_1 \subseteq R_2$, we have ${=_{R_1}} \subseteq  {=_{R_2}}$.
Also note that ${\simeq_{R_1}} \subseteq {\asymp}$.
Thus
\[
\beta =_{R_2} \cdot \stackrel{\asymp}{\Longrightarrow}  \beta'' \stackrel{\tau}{\longrightarrow} \cdot =_{R_2} \beta'
\]
for some $\beta', \beta''$ such that $\alpha' \simeq_{R_1} \beta'$.
We can ensure that $\beta'' \not\asymp \beta'$, because $\beta'' \asymp \beta \asymp \alpha \not\asymp \alpha' \asymp \beta'$.
Therefore,
\[
\beta =_{R_2} \cdot \stackrel{\asymp}{\Longrightarrow} \cdot  \stackrel{\not\asymp}{\longrightarrow} \cdot  =_{R_2} \beta'
\]
for some $\beta'$ such that $\alpha' \simeq_{R_1} \beta'$. Applying  Lemma~\ref{lem:R_transition} and Lemma~\ref{lem:char_R_arrow} repeatedly, and remembering $=_{R_2} \subseteq \simeq_{R_2}$, we have
\[
\beta_{R_2}  \stackrel{\asymp}{\Longrightarrow}_{R_2} \cdot \stackrel{\not\asymp}{\longrightarrow}_{R_2} \beta'_{R_2}
\]
for some $\beta'_{R_2}$ such that $\alpha' \simeq_{R_1} \cdot  \simeq_{R_2} \beta'_{R_2}$.

%Suppose there are $\alpha$ and $\beta$ such that $\alpha \simeq_{R_2} \beta$.  It is relatively trivial to confirm the following property.
%\begin{enumerate}
%\item
%if  $\alpha\Longrightarrow \epsilon$,  then $\beta \Longrightarrow \epsilon$;

%\item
%if $\alpha \stackrel{\not\asymp}{\longrightarrow} \alpha'$, then
%$\beta_{R_2}  \stackrel{\asymp}{\Longrightarrow}_{R_2} \cdot \stackrel{\not\asymp}{\longrightarrow}_{R_2} \beta'$ for some %$\beta'$ such that $\alpha' \asymp \beta'$;

%\item
%if $\alpha \stackrel{a}{\longrightarrow} \alpha'$, then
%$\beta_{R_2}  \stackrel{\asymp}{\Longrightarrow}_{R_2} \cdot \stackrel{a}{\longrightarrow}_{R_2} \beta'$ for some $\beta'$ such that $\alpha' \asymp \beta'$;
%\end{enumerate}

%Now assume that $\alpha \asymp \beta$, we must have $\alpha \simeq_{R_1} \cdot \simeq_{R_2} \cdot \ldots \cdot \simeq_{R_1} \cdot \simeq_{R_2} \beta$. Then we can use transitivity to get the result of the proposition.

\begin{remark}
We make a mistake in the previous version, because we take the following slightly different variant of property~\textbf{I} and \textbf{II}:
\begin{itemize}
\item[\textbf{I'}.]
Suppose there are $\alpha$ and $\beta$ satisfying $\alpha \simeq_{R_1} \beta$, then
\begin{enumerate}
\item
if  $\alpha\Longrightarrow \epsilon$,  then $\beta \Longrightarrow \epsilon$;

\item
if $\alpha \stackrel{\not\asymp}{\longrightarrow} \alpha'$, then
$\beta_{R_2}  \stackrel{\asymp}{\Longrightarrow}_{R_2} \cdot \stackrel{\not\asymp}{\longrightarrow}_{R_2} \beta'$ for some $\beta'$ such that $\alpha' \asymp \beta'$;

\item
if $\alpha \stackrel{a}{\longrightarrow} \alpha'$, then
$\beta_{R_2}  \stackrel{\asymp}{\Longrightarrow}_{R_2} \cdot \stackrel{a}{\longrightarrow}_{R_2} \beta'$ for some $\beta'$ such that $\alpha' \asymp \beta'$.
\end{enumerate}

\item[\textbf{II'}.]
Suppose there are $\alpha$ and $\beta$ satisfying $\alpha \simeq_{R_2} \beta$, then
\begin{enumerate}
\item
if  $\alpha\Longrightarrow \epsilon$,  then $\beta \Longrightarrow \epsilon$;

\item
if $\alpha \stackrel{\not\asymp}{\longrightarrow} \alpha'$, then
$\beta_{R_2}  \stackrel{\asymp}{\Longrightarrow}_{R_2} \cdot \stackrel{\not\asymp}{\longrightarrow}_{R_2} \beta'$ for some $\beta'$ such that $\alpha' \asymp \beta'$;

\item
if $\alpha \stackrel{a}{\longrightarrow} \alpha'$, then
$\beta_{R_2}  \stackrel{\asymp}{\Longrightarrow}_{R_2} \cdot \stackrel{a}{\longrightarrow}_{R_2} \beta'$ for some $\beta'$ such that $\alpha' \asymp \beta'$.
\end{enumerate}
\end{itemize}
It will run into trouble when using property~\textbf{I'} and \textbf{II'} to prove bisimulation property of $\asymp$. The reason is the same as in the situation of `weak bisimulation up to weak bisimilarity'~\cite{Milner1989}.  To get a better understanding of this mistake, readers are referred to Chapter~5 of~\cite{Milner1989}, especially Section~5.7.
\end{remark}

\subsection{Proof of Proposition~\ref{prop:R_bis_vs_IDR}}

Because $R \subseteq \mathsf{Id}_R$, by Proposition~\ref{prop:R_monotone}, we know that ${\simeq_{R}} \subseteq {\simeq_{\mathsf{Id}_R}}$. Thus it suffices to show  ${\simeq_{\mathsf{Id}_R}}  \subseteq {\simeq_{R}}$.
Let $S = \mathsf{Id}_R$, it suffices to show that $\simeq_S$ is an $R$-bisimulation.

Clearly, $\simeq_S$ is an equivalence relation, and ${=_R} \subseteq {\simeq_S}$. Thus it suffices to prove the following property: Whenever $\alpha \simeq_{S} \beta$,
\begin{enumerate}
\item
if  $\alpha\Longrightarrow \epsilon$,  then $\beta \Longrightarrow \epsilon$;

\item
if $\alpha \stackrel{\not\simeq_{S}}{\longrightarrow} \alpha'$, then
$\beta_{R}  \stackrel{\simeq_{S}}{\Longrightarrow}_{R} \cdot \stackrel{\not\simeq_{S}}{\longrightarrow}_{R} \beta'$ for some $\beta'$ such that $\alpha' \simeq_{S} \beta'$;

\item
if $\alpha \stackrel{a}{\longrightarrow} \alpha'$, then
$\beta_{R}  \stackrel{\simeq_{S}}{\Longrightarrow}_{R} \cdot \stackrel{a}{\longrightarrow}_{R} \beta'$ for some $\beta'$ such that $\alpha' \simeq_{S} \beta'$.
\end{enumerate}
Condition~1 is trivial. Other two conditions have the same structures. As usual we choose to prove Condition~2.

Suppose that $\alpha \simeq_{S} \beta$ and $\alpha \stackrel{\not\simeq_{S}}{\longrightarrow} \alpha'$. By Definition~\ref{def:R_beq}, we have:
\begin{quote}
$\beta_{S}  \stackrel{\simeq_{S}}{\Longrightarrow}_{S} \cdot \stackrel{\not\simeq_{S}}{\longrightarrow}_{S} \beta'$ for some $\beta'$ such that $\alpha' \simeq_{S} \beta'$.
\end{quote}

Now there are two cases:
\begin{itemize}
\item
$\beta_{S} = \epsilon$.  In this case, we have $\beta \in S^{*} = (\mathsf{Id}_R)^{*}$ thus $\beta \simeq_R \epsilon$ by Lemma~\ref{lem:Id_def}. Since ${\simeq_{R}} \subseteq {\simeq_{S}}$, we have $\beta_{S} \simeq_{S} \epsilon$.
Consider any $\beta''$ such that $\beta_{S}  \stackrel{\simeq_{S}}{\longrightarrow}_{S} \beta''$. We have the following properties.
\begin{itemize}
\item
$\beta'' \Longrightarrow \epsilon$ (because $\beta'' \simeq_{S} \epsilon$).

\item
There exists $X \in S$ such that $X \stackrel{\simeq_{S}}{\longrightarrow} \gamma$ and $\gamma_S = \beta''$. (By Lemma~\ref{lem:char_R_arrow})
\end{itemize}
Now we have $\epsilon \simeq_R X \stackrel{\tau}{\longrightarrow} \gamma \Longrightarrow \epsilon$ and by Lemma~\ref{lem:R_transition},
\[
\epsilon \simeq_R X_R \stackrel{\tau}{\longrightarrow}_R \gamma_R \Longrightarrow_R \epsilon.
\]
According to Computation Lemma (Lemma~\ref{lem:computation_lemma_R}), $\gamma \simeq_R \epsilon$, and thus by Lemma~\ref{lem:Id_def}, $\gamma\in (\mathsf{Id}_R)^{*} = S^{*}$. Therefore $\beta'' = \gamma_S = \epsilon$. This crucial fact leads to the following assertion:
\begin{quote}
Whenever $\beta_{S}  \stackrel{\simeq_{S}}{\Longrightarrow}_{S} \beta'' \stackrel{\not\simeq_{S}}{\longrightarrow}_{S} \beta'$, $\beta''$ must be $\epsilon$.
\end{quote}
Therefore, we confirm that
\begin{quote}
$\beta_{S} \stackrel{\not\simeq_{S}}{\longrightarrow}_{S} \beta'$ for some $\beta'$ such that $\alpha' \simeq_{S} \beta'$.
\end{quote}

Now according to Lemma~\ref{lem:char_R_arrow},  there exists $X \in S$ such that $X \stackrel{\not\simeq_{S}}{\longrightarrow} \widehat{\beta}'$ for some $ \widehat{\beta}'$ and ${\widehat{\beta}'}_{S} = \beta'$.
Knowing ${\simeq_R} \subseteq {\simeq_{S}}$, we have $X \stackrel{\not\simeq_R}{\longrightarrow} \widehat{\beta}'$.
Because $X \in S = \mathsf{Id}_R$, we have $X \simeq_R \epsilon$. Now according to Definition~\ref{def:R_beq},
\[
\epsilon \stackrel{\simeq_R }{\Longrightarrow}_{R} \cdot \stackrel{\not\simeq_R }{\longrightarrow}_{R} \gamma' \simeq_R \widehat{\beta}' \simeq_R \beta'
\]
for some $\gamma'$, which implies that
\[
\epsilon \stackrel{\simeq_{S} }{\Longrightarrow}_{R} \gamma'' \stackrel{\tau}{\longrightarrow}_{R} \gamma' \simeq_R \alpha'
\]
for some $\gamma',\gamma''$.
Finally we can observe $\gamma'' \not\simeq_{S} \gamma'$, for  $\gamma'' \simeq_{S} \epsilon  \simeq_{S} \beta \simeq_{S} \alpha \not\simeq_{S} \alpha' \simeq_{S} \gamma'$.
Above all, we find such $\gamma$ that $\beta_{R}  \stackrel{\simeq_{S} }{\Longrightarrow}_{R}  \cdot \stackrel{\not\simeq_{S}}{\longrightarrow}_R \gamma'$ and  $\alpha' \simeq_{S} \gamma'$.

\item
$\beta_{S} \neq \epsilon$.  In this case, there are several subcases,  depending on the path $\beta_{S}  \stackrel{\simeq_{S}}{\Longrightarrow}_{S}  \cdot \stackrel{\not\simeq_{S}}{\longrightarrow}_{S} \beta'$:
\begin{itemize}
\item
$\beta_{S} \stackrel{\not\simeq_{S}}{\longrightarrow}_{S} \beta'$.
We can show by applying Lemma~\ref{lem:char_R_arrow} that
$\beta \stackrel{\tau}{\longrightarrow} \widehat{\beta}'$ for some $\widehat{\beta}'$ such that $\alpha' \simeq_{S} \widehat{\beta}'$.  %
By Lemma~\ref{lem:R_transition}, we have  $\beta_{R}  \stackrel{\tau}{\longrightarrow}_R \widehat{\beta}'_{R}$. Knowing the fact that $\widehat{\beta}'_{R} \simeq_{R} \widehat{\beta}'$ and ${\simeq_{R}} \subseteq {\simeq_{S}}$, we have $\widehat{\beta}'_{R}\simeq_{S} \widehat{\beta}'$, and thus $\alpha' \simeq_{S} \widehat{\beta}'_{R}$. Finally it is a routine work to observe $\beta_{R}  \not\simeq_{S} \widehat{\beta}'_{R}$.
In summary, we have $\beta_{R}  \stackrel{\not\simeq_{S}}{\longrightarrow}_R \widehat{\beta}'_{R}$ and  $\alpha' \simeq_{S} \widehat{\beta}'_{R}$.

\item
$\beta_{S}  \stackrel{\simeq_{S}}{\longrightarrow}_{S} \eta \stackrel{\simeq_{S}}{\Longrightarrow}_{S}  \cdot \stackrel{\not\simeq_{S}}{\longrightarrow}_{S} \beta'$.
We can show by applying Lemma~\ref{lem:char_R_arrow} that
$\beta  \stackrel{\simeq_{S}}{\longrightarrow} \widehat{\eta}$ for some $\widehat{\eta}$ such that $\widehat{\eta}_S = \eta$.  In view of ${\simeq_R} \subseteq {\simeq_S}$, and by Lemma~\ref{lem:R_transition}, we have $\beta_R  \stackrel{\simeq_{S}}{\longrightarrow}_R \widehat{\eta}_R$ and $\alpha \simeq_S \widehat{\eta}_R$ (because $\alpha \simeq_S \beta \simeq_S \eta \simeq_S  \widehat{\eta} \simeq_R \widehat{\eta}_R$).   Remember the fact $\widehat{\eta}_S = \eta$, we can now use induction to confirm that
\[
\widehat{\eta}_{R}  \stackrel{\simeq_{S}}{\Longrightarrow}_{R} \cdot \stackrel{\not\simeq_{S}}{\longrightarrow}_{R} \beta'
\]
for some $\beta'$ such that $\alpha' \simeq_{S} \beta'$.  Put them together, we get
\[
\beta_R  \stackrel{\simeq_{S}}{\longrightarrow}_R \widehat{\eta}_R
 \stackrel{\simeq_{S}}{\Longrightarrow}_{R} \cdot \stackrel{\not\simeq_{S}}{\longrightarrow}_{R} \beta'
\]
for some $\beta'$ such that $\alpha' \simeq_{S} \beta'$.
\end{itemize}

\end{itemize}

\subsection{Proof of Theorem~\ref{thm:relative_bis} and Theorem~\ref{thm:relative_bis_str}}

Since Theorem~\ref{thm:relative_bis} is a special case of Theorem~\ref{thm:relative_bis_str}.  We only prove Theorem~\ref{thm:relative_bis_str}.
The proof  is divided into the following two lemmas: Lemma~\ref{lem:relative_bis_onlyif} and Lemma~\ref{lem:relative_bis_if}.

\begin{lemma}\label{lem:relative_bis_onlyif}
If $S = \mathsf{Rd}_{R}(\gamma)$,  then $\alpha \simeq_{S} \beta$ implies $\alpha \gamma \simeq_{R}  \beta\gamma$.
\end{lemma}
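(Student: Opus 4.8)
The plan is to exhibit an explicit $R$-bisimulation containing the pair $(\alpha\gamma, \beta\gamma)$ for all $\alpha, \beta$ with $\alpha \simeq_S \beta$, where $S = \mathsf{Rd}_R(\gamma)$. The natural candidate is
\[
{\asymp} \stackrel{\mathrm{def}}{=} \bigl( \{ (\alpha\gamma, \beta\gamma) \;|\; \alpha \simeq_S \beta \} \cup {\simeq_R} \bigr)^{*}.
\]
Since $\simeq_S$ is an equivalence and $\simeq_R$ is an equivalence, $\asymp$ is an equivalence; and because $\alpha =_S \beta$ implies $\alpha\gamma =_R \beta\gamma$ (the common prefix $\zeta$ is preserved and the $R$-suffixes stay inside $R^*\subseteq S^*$... actually one checks $=_R \subseteq \asymp$ directly since $=_R \subseteq \simeq_R$), we get ${=_R}\subseteq{\asymp}$. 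The bulk of the work is to verify the three conditions of Definition~\ref{def:R_beq}, and by the usual "up-to" care (compare the remark after the proof of the Computation Lemma), I will not try to close $\asymp$ under $\simeq_R$-moves cheaply; instead I will prove a \emph{key property} analogous to the one in the Computation Lemma proof: for $\alpha \simeq_S \beta$, every relevant transition of $\alpha\gamma$ is matched by $\beta\gamma$ up to $\simeq_R$ \emph{on the target}, and then compose with the bisimulation property of $\simeq_R$ to handle a general $\asymp$-pair.

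First I would handle \textbf{ground preservation}: if $\alpha\gamma \Longrightarrow \epsilon$ then $\alpha\Longrightarrow\epsilon$ and $\gamma\Longrightarrow\epsilon$, hence $\alpha$ is ground; since $|\alpha|_{\mathrm{wk}} = |\beta|_{\mathrm{wk}}$ (weak norm is a $\simeq_S$-invariant) $\beta$ is ground too, so $\beta\gamma\Longrightarrow\epsilon$. Next, the transition-matching conditions. A transition $\alpha\gamma \stackrel{\ell}{\longrightarrow}\delta$ splits into two cases. \emph{Case (i): the transition comes from $\alpha$}, i.e.\ $\alpha\neq\epsilon$, $\alpha\stackrel{\ell}{\longrightarrow}\alpha'$ and $\delta = \alpha'\gamma$. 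Using the $S$-bisimulation property of $\simeq_S$ applied to $\alpha\simeq_S\beta$ (via Lemma~\ref{lem:char_R_arrow} to lift an $S$-transition back to a genuine transition), I get $\beta \Longrightarrow \beta'' \stackrel{\ell}{\longrightarrow}\beta'$ with $\beta''\simeq_S\beta$ along the way and $\alpha'\simeq_S\beta'$; then $\beta\gamma\Longrightarrow\beta''\gamma\stackrel{\ell}{\longrightarrow}\beta'\gamma$ with $\beta''\gamma\asymp\beta\gamma$ and $\alpha'\gamma\asymp\beta'\gamma$, and I must check the silent steps in the $\Longrightarrow$ are $\asymp$-preserving — this follows because $\beta''\simeq_S\beta$ forces $\beta''\gamma\simeq_R\beta\gamma$ (Lemma~\ref{lem:congruence_R_bis}), and likewise for every intermediate process on that $\stackrel{\simeq_S}{\Longrightarrow}_S$ path, so the $\Longrightarrow_R$ matching step is legitimate, and I additionally need the $\asymp$-decreasing-vs-not bookkeeping to match Definition~\ref{def:R_beq}'s cases 2 and 3 (handled uniformly by the "key property"). \emph{Case (ii): $\alpha$ is ground, so $\alpha =_R \epsilon$} — actually $\alpha\in S^*$ need not hold, so more carefully: if $\alpha\neq\epsilon$ it still falls in case (i); the genuinely separate subcase is $\alpha = \epsilon$ (after taking $R$-nf), where the transition $\gamma \stackrel{\ell}{\longrightarrow}\delta'$ with $\delta = \delta'$ must be matched; here $\alpha=\epsilon\simeq_S\beta$ means $\beta\in S^* = (\mathsf{Rd}_R(\gamma))^*$ by Lemma~\ref{lem:redundant_constants_relative}, hence $\beta\gamma\simeq_R\gamma$, and the transition of $\gamma$ is matched by itself inside $\beta\gamma\simeq_R\gamma\stackrel{\ell}{\longrightarrow}_R\delta'$ — again checking the intermediate silent moves are $\asymp$-preserving via $\beta\gamma\simeq_R\gamma$.

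The main obstacle I anticipate is exactly the place where the proofs of the Computation Lemma and of Proposition~\ref{prop:R_monotone} were delicate: one cannot prove $\asymp$ is an $R$-bisimulation by a naive "bisimulation up to $\asymp$" argument, so the matching target must always be pinned down strictly up to $\simeq_S$ (not up to $\asymp$), and the $\stackrel{\asymp}{\Longrightarrow}_R$ prefixes in the matching transitions must be shown to consist of genuinely $\asymp$-preserving (indeed $\simeq_R$-preserving) steps — which is where Lemma~\ref{lem:congruence_R_bis} (congruence of $\simeq_R$ under $\simeq$-prefixes and equal $\simeq_R$-suffixes) and Lemma~\ref{lem:redundant_constants_relative} do the heavy lifting. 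A secondary subtlety is that the transitions must be phrased in terms of $R$-nf's and $\stackrel{\ell}{\longrightarrow}_R$: I will move freely between $\stackrel{\ell}{\longrightarrow}$ on $\gamma$-suffixed processes and $\stackrel{\ell}{\longrightarrow}_R$ on their $R$-nf's using Lemma~\ref{lem:R_transition} and Lemma~\ref{lem:char_R_arrow}, and I will use Lemma~\ref{lem:groud_preserve} for the ground-preservation condition. Once the key property is established, assembling it with the bisimulation property of $\simeq_R$ to cover a general $\asymp$-related pair is routine, exactly as in the cited proofs.
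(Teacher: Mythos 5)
Your setup is the same as the paper's: the relation $\asymp = (\bumpeq \cup {\simeq_R})^{*}$ with $\bumpeq \,= \{(\alpha\gamma,\beta\gamma) \mid \alpha \simeq_S \beta\}$, a key property for $\bumpeq$-pairs with targets pinned down strictly (not up to $\asymp$), composed with the bisimulation property of $\simeq_R$. But your Case (i) has a genuine gap. When $\alpha \stackrel{\ell}{\longrightarrow} \alpha'$ with $\alpha \not\simeq_S \alpha'$, the $S$-bisimulation property gives a matching path $\beta_S \stackrel{\simeq_S}{\Longrightarrow}_S \cdot \stackrel{\ell}{\longrightarrow}_S \beta'$ consisting of \emph{$S$-transitions}, and by Lemma~\ref{lem:char_R_arrow} such a transition lifts to a genuine transition of the current residual of $\beta$ only when that residual is nonempty in $S$-nf. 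If the matching path passes through $\epsilon$ (i.e.\ $\beta$ silently evolves into $S^{*} = (\mathsf{Rd}_R(\gamma))^{*}$), the remaining $S$-transitions are induced by constants $X \in S$ that are \emph{not} transitions of $\beta$ at all, so your claimed lifting ``$\beta \Longrightarrow \beta'' \stackrel{\ell}{\longrightarrow} \beta'$'' fails. In that sub-case the matching move for $\beta\gamma$ must instead be extracted from $\gamma$ itself: one uses $\beta\gamma \stackrel{\asymp}{\Longrightarrow} \gamma$, picks $X \in S$ with $\alpha \simeq_S X$ so that $\alpha\gamma \bumpeq X\gamma$, lifts $X$'s transition sequence over $\gamma$, and then applies the $R$-bisimulation property of $\simeq_R$ to $X\gamma \simeq_R \gamma$ to convert those moves into moves of $\gamma_R$. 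This is the second bullet of the paper's case analysis and is the substantive content of the lemma; your proposal covers only the sub-case where the matching path never reaches $\epsilon$, plus the separate case $\alpha = \epsilon$.

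A secondary problem: to show the intermediate silent steps are $\asymp$-preserving you invoke Lemma~\ref{lem:congruence_R_bis} to conclude $\beta''\gamma \simeq_R \beta\gamma$ from $\beta'' \simeq_S \beta$. That lemma requires $\beta'' \simeq \beta$ (full bisimilarity), not $\beta'' \simeq_S \beta$; deducing $\beta''\gamma \simeq_R \beta\gamma$ from $\beta'' \simeq_S \beta$ is exactly the statement you are trying to prove, so this step is circular. It is also unnecessary: $\beta''\gamma \bumpeq \beta\gamma \subseteq {\asymp}$ already makes the prefix $\asymp$-preserving, which is all Definition~\ref{def:R_beq} asks for. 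Fixing this slip is easy; supplying the missing $\epsilon$-passage sub-case is the real work.
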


\begin{proof}
We can assume that $\gamma \not\in R^{*}$.  If not, we will have $S =  \mathsf{Id}_{R}$ and thus according to Proposition~\ref{prop:R_monotone}, ${\simeq_{S}} =  {\simeq_{R}}$. The result of this lemma holds accordingly.

%Let $\mathcal{I}$ be identical relation on $\mathbf{C}^{*}$. That is, $\mathcal{I} = \{(\zeta, \zeta) \;|\;  \zeta \in \mathbf{C}^{*} \}$.  
Let ${\bumpeq}$ be the relation $\{(\alpha\gamma, \beta\gamma) \,|\, \alpha \simeq_{S} \beta\}$.
Define the relation
\[
{\asymp}  \stackrel{\mathrm{def}}{=} ({\bumpeq} \cup {\simeq_{R}})^{*}.
\]
We show that $\asymp$ is an $R$-bisimulation.

We point out the following basic facts:
\begin{itemize}
\item
${\bumpeq} \subseteq {\asymp}$,  ${=_R} \subseteq {\simeq_{R}} \subseteq {\asymp}$, and ${\bumpeq} \circ  {\simeq_R} \subseteq {\asymp}$.

\item
$\mathcal{I} \subseteq {\simeq_R}$, in which  $\mathcal{I} = \{(\zeta, \zeta) \;|\;  \zeta \in \mathbf{C}^{*} \}$, which is the identical relation on $\mathbf{C}^{*}$.

\item
$\bumpeq$ is symmetric and transitive. 

\item
$\simeq_{R}$, ${\asymp}$ are all equivalence relations.
\end{itemize}

Now consider an arbitrary pair $(\zeta, \eta)$ such that  $\zeta \asymp \eta$.
We will prove
\begin{enumerate}
\item
if  $\zeta \Longrightarrow \epsilon$,  then $\eta \Longrightarrow \epsilon$; (trivial)

\item
if $\zeta \stackrel{\not\asymp}{\longrightarrow} \zeta'$, then
$\eta_R  \stackrel{\asymp}{\Longrightarrow}_R \cdot \stackrel{\not\asymp}{\longrightarrow}_R \eta'$ for some $\eta'$ such that $\zeta' \asymp \eta'$;

\item
if $\zeta \stackrel{a}{\longrightarrow} \zeta'$, then
$\eta_R  \stackrel{\asymp}{\Longrightarrow}_R \cdot \stackrel{a}{\longrightarrow}_R \eta'$ for some $\eta'$ such that $\zeta' \asymp \eta'$.
\end{enumerate}

As usual (similar to the proof of Proposition~\ref{prop:R_monotone}), we show the following property~\textbf{I} and \textbf{II}:
\begin{itemize}
\item[\textbf{I}.]
Suppose there are $\zeta$ and $\eta$ satisfying $\zeta \bumpeq \eta$, then
\begin{enumerate}
\item
if  $\zeta \Longrightarrow \epsilon$,  then $\eta \Longrightarrow \epsilon$;

\item
if $\zeta \stackrel{\not\asymp}{\longrightarrow} \zeta'$, then
$\eta_{R}  \stackrel{\asymp}{\Longrightarrow}_{R} \cdot \stackrel{\not\asymp}{\longrightarrow}_{R} \eta'$ for some $\eta'$ such that $(\zeta', \eta') \in  (\bumpeq \cdot \simeq_{R}) \cup  \mathcal{I}$;

\item
if $\zeta \stackrel{a}{\longrightarrow} \zeta'$, then
$\eta_{R}  \stackrel{\asymp}{\Longrightarrow}_{R} \cdot \stackrel{a}{\longrightarrow}_{R} \eta'$ for some $\eta'$ such that $(\zeta', \eta') \in  (\bumpeq \cdot \simeq_{R}) \cup  \mathcal{I}$.
\end{enumerate}

\item[\textbf{II}.]
Suppose there are $\zeta$ and $\eta$ satisfying $\zeta \simeq_{R} \eta$, then
\begin{enumerate}
\item
if  $\zeta \Longrightarrow \epsilon$,  then $\eta \Longrightarrow \epsilon$;

\item
if $\zeta \stackrel{\not\asymp}{\longrightarrow} \zeta'$, then
$\eta_{R}  \stackrel{\asymp}{\Longrightarrow}_{R} \cdot \stackrel{\not\asymp}{\longrightarrow}_{R} \eta'$ for some $\eta'$ such that $\zeta' \simeq_{R} \eta'$;

\item
if $\zeta \stackrel{a}{\longrightarrow} \zeta'$, then
$\eta_{R}  \stackrel{\asymp}{\Longrightarrow}_{R} \cdot \stackrel{a}{\longrightarrow}_{R} \eta'$ for some $\eta'$ such that $\zeta' \simeq_{R} \eta'$.
\end{enumerate}
\end{itemize}
%:

Now assume that $\zeta \asymp \eta$, we must have $\zeta \simeq_{R} \cdot \bumpeq \cdot \simeq_{R} \cdot \ldots \cdot \bumpeq \cdot \simeq_{R} \beta$. (Think why. ) Thus we can show that  $\asymp$ is an $R$-bisimulation by applying  property~\textbf{I} and \textbf{II} finitely many times.

Since property~\textbf{II} is trivial,  it suffices to  prove property~\textbf{I}.

If $(\zeta, \eta) \in {\bumpeq}$. Now we must have $\zeta = \alpha\gamma$ and $\eta =\beta \gamma$ such that $\alpha  \simeq_{S} \beta$.
There are two cases:
%\begin{enumerate}
%\item
%$\alpha \simeq_{S} \beta \simeq_{S}\epsilon$.  Since $R' = \mathsf{Rd}_{R}(\gamma)$, we have

% and $\alpha\gamma \stackrel{\ell}{\longrightarrow} \alpha'\gamma$.
%\end{enumerate}

\begin{enumerate}
\item
$\alpha \neq \epsilon$. In this case  $\zeta \stackrel{\ell}{\longrightarrow} \zeta'$ is induced by $\alpha\gamma \stackrel{\ell}{\longrightarrow} \alpha'\gamma$.
\begin{itemize}

\item
If  $\ell = \tau$ and $\alpha\gamma \not\asymp \alpha'\gamma$.  In this case, we must have  $\alpha'  \not\simeq_{S} \alpha$. According to the fact $\alpha \simeq_{S} \beta$ and Definition~\ref{def:R_beq},  we have
\[
\beta_{S}  \stackrel{\simeq_{S}}{\Longrightarrow}_{S}  \cdot \stackrel{\not\simeq_{S}}{\longrightarrow}_{S} \widehat{\beta}
\]
for some $\widehat{\beta}$ such that $\alpha' \simeq_{S} \widehat{\beta}$.
The above  path from $\beta_S$ to $\widehat{\beta}$ can be written as follows:
\[
\beta_S = \beta_0  \stackrel{\simeq_{S}}{\longrightarrow}_S \beta_1 \stackrel{\simeq_{S}}{\longrightarrow}_S \ldots \stackrel{\simeq_{S}}{\longrightarrow}_S  \beta_k  \stackrel{\not\simeq_{S}}{\longrightarrow}_S \widehat{\beta}.
\]
Consider this path.  We have two possibilities:
\begin{itemize}
\item
$\beta_i \neq \epsilon$ for every $0 \leq i \leq k$. If so, according to Lemma~\ref{lem:char_R_arrow}, we have
\[
\beta  \stackrel{\simeq_{S}}{\Longrightarrow} \cdot \stackrel{\not\simeq_{S}}{\longrightarrow} \beta' \simeq_S \widehat{\beta}
 \]
for some $\beta'$.  Actually  we have  $\beta \stackrel{\simeq_{S}}{\Longrightarrow}  \beta'' \stackrel{\tau}{\longrightarrow} \beta'$ with  $\alpha' \simeq_{S} \beta' $.
Therefore we have $\beta\gamma \stackrel{\bumpeq }{\Longrightarrow}  \beta'' \gamma \stackrel{\not\bumpeq}{\longrightarrow} \beta'\gamma$, and   $\alpha'\gamma \bumpeq \beta' \gamma$.
Furthermore,  because ${=_R} \subseteq {\simeq_R}$, we have 
\[
\beta\gamma_R \stackrel{\asymp}{\Longrightarrow}_R  \beta'' \gamma_R \stackrel{\not\asymp}{\longrightarrow}_R \beta'\gamma_R
\] 
with  $\alpha'\gamma \bumpeq \cdot \simeq_R \beta' \gamma_R$.  Note that the reason for $\beta'' \gamma_R \not\asymp \beta'\gamma_R$ is that $\beta'' \gamma_R \asymp \beta\gamma_R \asymp \alpha\gamma_R \not\asymp \alpha'\gamma_R \asymp \beta' \gamma_R$.

\item
$\beta_i = \epsilon$ for some $0 \leq i \leq k$.  Choose the largest $i$ such that $\beta_i = \epsilon$, Then according to Lemma~\ref{lem:char_R_arrow}, we have
\[
\beta_S \stackrel{\simeq_{S}}{\Longrightarrow}_S  \epsilon =_{S} X \stackrel{\simeq_{S}}{\Longrightarrow} \beta'' \stackrel{\not\simeq_{S}}{\longrightarrow} \beta'  \simeq_S \widehat{\beta}
\]
for some  $\beta'$.
Then we have the following facts.
\begin{enumerate}
\item
Because $\beta_S \stackrel{\simeq_{S}}{\Longrightarrow}_S \epsilon$, by Lemma~\ref{lem:groud_preserve} we have $\beta \stackrel{\simeq_{S}}{\Longrightarrow} \epsilon$. Then we have $\alpha \simeq_S \beta \simeq_S \epsilon$ and  $\beta\gamma \stackrel{\asymp}{\Longrightarrow} \gamma$.

\item
We know $X_S = \epsilon$, or equivalently $X \in \mathsf{Rd}_R(\gamma)$,  which means that $X\gamma \simeq_R \gamma$. Then, because $\alpha \simeq_S \beta \simeq_S \epsilon$, we have $\alpha \simeq_S X$, thus $\alpha\gamma \bumpeq X\gamma$.

\item
According to $\alpha\gamma \bumpeq X\gamma$ and $X \stackrel{\simeq_{S}}{\Longrightarrow} \beta'' \stackrel{\not\simeq_{S}}{\longrightarrow}  \beta'$, we can now take the way in the first possibility to obtain the following fact:
 $X \gamma \stackrel{\bumpeq}{\Longrightarrow}  \beta'' \gamma \stackrel{\not\bumpeq}{\longrightarrow} \beta'\gamma$ with $\alpha'\gamma \bumpeq \beta' \gamma$.

\item
Now remember  $X\gamma \simeq_R \gamma$, we have $\gamma_R  \stackrel{\simeq_R}{\Longrightarrow}_R \gamma_R'' \stackrel{\tau}{\longrightarrow}_R \gamma_R'$ for some $\gamma''$ and $\gamma'$ such that $ \beta'' \gamma \simeq_R \gamma''$ and $\beta' \gamma \simeq_R \gamma'$.
\end{enumerate}
In all, we have 
\[
\beta\gamma_R \stackrel{\asymp}{\Longrightarrow}_R \gamma_R  \stackrel{\asymp}{\Longrightarrow}_R \gamma_R'' \stackrel{\tau}{\longrightarrow}_R \gamma_R'
\] 
such that $\alpha\gamma \asymp \gamma''$ and $\alpha' \gamma \bumpeq \cdot \simeq_R  \gamma'$.  To see  $\gamma''  \not\asymp \gamma'$, we notice that $\gamma'' \asymp \beta\gamma \asymp \alpha\gamma \not\asymp \alpha'\gamma \asymp \gamma'$.
\end{itemize}

\item
$\ell \neq \tau$.  This case can be proved in the same way as the case $\ell = \tau$ and $\alpha\gamma \not\asymp \alpha'\gamma$.
\end{itemize}

\item
$\alpha = \epsilon$.  In this case  $\zeta \stackrel{\ell}{\longrightarrow} \zeta'$ is induced by $\gamma \stackrel{\ell}{\longrightarrow} \gamma'$.  Now we have $\beta \stackrel{\simeq_S}{\Longrightarrow} \epsilon$.  Thus $\beta\gamma_R \stackrel{\asymp}{\Longrightarrow}_R \gamma_R \stackrel{\ell}{\longrightarrow} \gamma_R'$, with $\alpha \gamma = \gamma \asymp \gamma$ and $(\gamma', \gamma') \in \mathcal{I}$.
\end{enumerate}

\end{proof}

\begin{lemma}\label{lem:relative_bis_if}
Suppose  $S = \mathsf{Rd}_{R}(\gamma)$,  then $\alpha \gamma \simeq_{R} \beta\gamma$ implies $\alpha \simeq_{S} \beta$.
\end{lemma}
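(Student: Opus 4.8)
The plan is to prove this direction exactly as the mirror of Lemma~\ref{lem:relative_bis_onlyif}: exhibit a single explicit $S$-bisimulation. Set
\[
{\asymp} \stackrel{\mathrm{def}}{=} \{(\mu,\nu) \;|\; \mu\gamma \simeq_{R} \nu\gamma\}.
\]
Since $\simeq_R$ is an equivalence relation, so is $\asymp$, and it suffices to show that $\asymp$ is an $S$-bisimulation; then, $\simeq_S$ being the largest one, $\alpha\gamma\simeq_R\beta\gamma$ gives $\alpha\asymp\beta$ and hence $\alpha\simeq_S\beta$. Note first that $S=\mathsf{Rd}_R(\gamma)$ is admissible (Lemma~\ref{lem:Rd_admissible}), so all the $S$-machinery of Section~\ref{sec:Relativized_bisimilarity} applies. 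As in the proof of Lemma~\ref{lem:relative_bis_onlyif} we may assume $\gamma\notin R^{*}$, for otherwise $S=\mathsf{Id}_R$, $\simeq_S=\simeq_R$ (Proposition~\ref{prop:R_bis_vs_IDR}), and $\alpha\gamma\simeq_R\beta\gamma$ together with $\gamma\simeq_R\epsilon$ yields $\alpha\simeq_R\beta$ via Lemma~\ref{lem:congruence_R_bis}.

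I would then dispatch the prerequisites of Definition~\ref{def:R_beq}. The inclusion ${=_S}\subseteq{\asymp}$ follows from Lemma~\ref{lem:redundant_constants_relative}(2) and Lemma~\ref{lem:congruence_R_bis}: if $\mu=\zeta\mu'$, $\nu=\zeta\nu'$ with $\mu',\nu'\in S^{*}$, then $\mu'\gamma\simeq_R\gamma\simeq_R\nu'\gamma$, whence $\mu\gamma=\zeta\mu'\gamma\simeq_R\zeta\gamma\simeq_R\zeta\nu'\gamma=\nu\gamma$. Ground preservation is obtained from invariance of the weak norm under $\simeq_R$: if $\mu\asymp\nu$ and $\mu\Longrightarrow\epsilon$ then $|\mu|_{\mathrm{wk}}=0$, so $|\nu|_{\mathrm{wk}}=|\nu\gamma|_{\mathrm{wk}}-|\gamma|_{\mathrm{wk}}=|\mu\gamma|_{\mathrm{wk}}-|\gamma|_{\mathrm{wk}}=0$, i.e.\ $\nu\Longrightarrow\epsilon$.

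The core is the transition matching. Fix $\mu\asymp\nu$. If $\mu\in S^{*}$ then $\mu\gamma\simeq_R\gamma$ (Lemma~\ref{lem:redundant_constants_relative}), hence $\nu\gamma\simeq_R\gamma$ and $\nu\in S^{*}$, so $\mu=_S\epsilon=_S\nu$ and the conditions hold trivially; thus assume $\mu,\nu\notin S^{*}$, so $\mu_S,\nu_S\neq\epsilon$, and $\gamma\notin R^{*}$ gives $(\nu\gamma)_R=\nu.\gamma_R$ with $\gamma_R\neq\epsilon$. Given a transition $\mu\stackrel{a}{\longrightarrow}\mu'$, or a transition $\mu\stackrel{\tau}{\longrightarrow}\mu'$ with $\mu\not\asymp\mu'$ (the two cases being parallel), lift it to $\mu\gamma\stackrel{\ell}{\longrightarrow}\mu'\gamma$ and apply the $R$-bisimulation property of $\simeq_R$ to $\mu\gamma\simeq_R\nu\gamma$: this yields a matching path $\nu\gamma_R\stackrel{\simeq_R}{\Longrightarrow}_R\zeta\stackrel{\ell}{\longrightarrow}_R\delta'_R$ with $\mu'\gamma\simeq_R\delta'$, whose initial segment consists of $\simeq_R$-preserving $\tau$-transitions. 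The key observation is that the $\gamma$-part is never consumed along this path: a process of the form $\gamma_R$ on it would satisfy $\gamma_R\simeq_R\gamma$, contradicting that every process strictly before the last is $\simeq_R\nu\gamma\not\simeq_R\gamma$; and if $\delta'_R=\gamma_R$, a length argument (using $\gamma_R\neq\epsilon$) shows the last step is $X\gamma_R\stackrel{\ell}{\longrightarrow}_R\gamma_R$ for a single constant $X$ with $X\gamma\simeq_R\nu\gamma\not\simeq_R\gamma$, hence $X\notin S$, and moreover $\mu'\in S^{*}$. Therefore every process on the path is $\nu_i.\gamma_R$ for a non-empty residue $\nu_i$ of $\nu$, the path projects (via Lemma~\ref{lem:char_R_arrow}, since $\nu\neq\epsilon$) to an honest sequence $\nu\stackrel{\tau}{\longrightarrow}^{*}\nu_{n-1}\stackrel{\ell}{\longrightarrow}\nu_n$ with $\nu_i\asymp\nu$ for $i<n$ and $\nu_n\asymp\mu'$, and passing to $S$-normal forms (Lemma~\ref{lem:R_transition}, using $\nu_S\neq\epsilon$ and ${=_S}\subseteq{\asymp}$) gives exactly $\nu_S\stackrel{\asymp}{\Longrightarrow}_S\cdot\stackrel{\ell}{\longrightarrow}_S(\nu_n)_S$ with $\mu'\asymp(\nu_n)_S$ (and in the subcase that the silent last step is $\not\asymp$, it stays $\not\asymp$ after taking $S$-nf's by transitivity of $\asymp$); in the boundary case $\delta'_R=\gamma_R$ one lands on $\nu_S\stackrel{\asymp}{\Longrightarrow}_S X\stackrel{\ell}{\longrightarrow}_S\epsilon$ with $\mu'\asymp\epsilon$.

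The main obstacle I expect is precisely the bookkeeping in this last step: rigorously justifying that the matched $R$-path ``factors through $\gamma$'', i.e.\ that it never touches the suffix $\gamma$ (equivalently never passes through $\gamma_R$) except possibly at the very last node, and carefully reconciling the $R$-normal form of $\nu\gamma$ with the $S$-normal form of $\nu$. This is where the hypotheses $\gamma\notin R^{*}$, $\nu\notin S^{*}$ and the defining identity $S=\mathsf{Rd}_R(\gamma)$ are all needed, together with the Computation Lemma for $\simeq_R$ (Lemma~\ref{lem:computation_lemma_R}) to handle the $\simeq_R$-preserving portions of the path.
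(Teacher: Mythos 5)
Your proposal is correct and follows essentially the same route as the paper: you take the very same candidate relation $\{(\mu,\nu)\mid\mu\gamma\simeq_R\nu\gamma\}$, verify ${=_S}\subseteq{\asymp}$ and ground preservation, and split the transition matching according to whether the left component lies in $S^{*}=(\mathsf{Rd}_R(\gamma))^{*}$, lifting transitions to $\mu\gamma$ and projecting the matched $R$-path back to $\nu$. The only (harmless) deviations are cosmetic: you derive ground preservation from additivity and $\simeq_R$-invariance of the weak norm rather than by matching the $\Longrightarrow\epsilon$ path, and you spell out explicitly the "path never touches $\gamma_R$" argument that the paper merely asserts.
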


\begin{proof}
Define the set
\[
{\asymp} \stackrel{\mathrm{def}}{=} \{(\alpha, \beta) \,|\, \alpha\gamma \simeq_R \beta\gamma\}
\]
As before we can assume $\gamma \not\in R^{*}$. Otherwise the conclusion of the lemma is relatively trivial.

We show that $\asymp$ is an $S$-bisimulation.  It is easy to see that $\asymp$ is an equivalence relation indeed.

Now we check the properties in Definition~\ref{def:R_beq}.
\begin{enumerate}
\item
We show ${=_{S}} \subseteq {\asymp}$.  Let $\alpha =_{S} \beta$. According to Definition~\ref{def:R_equal}, there exist  $\alpha', \beta' \in S^{*} = (\mathsf{Rd}_{R}(\gamma))^{*}$ and a process $\zeta$ such that $\alpha = \zeta\alpha'$ and $\beta = \zeta\beta'$. Now $\alpha\gamma = \zeta\alpha'\gamma \simeq_R \zeta\gamma \simeq_R \zeta\beta'\gamma = \beta\gamma$. Therefore $\alpha \asymp \beta$ by the definition of $\asymp$.

\item
If $\alpha \asymp \beta$ and  $\alpha \Longrightarrow \epsilon$,  we show $\beta \Longrightarrow \epsilon$. According to the definition of $\asymp$, $\alpha\gamma \simeq_R \beta\gamma$. Now $\alpha \gamma_R \Longrightarrow_R \gamma_R$ must be matched by $\beta \gamma_R$. Let us suppose that the matching is $\beta \gamma_R \Longrightarrow_R \beta'\gamma_R \simeq_R \gamma_R$, which is induced by $\beta \Longrightarrow \beta'$. Otherwise we will have $\beta \Longrightarrow \epsilon$ immediately.  Now we have $\beta'\gamma \simeq_R \gamma$, which implies $\beta'\Longrightarrow \epsilon$ and consequently $\beta \Longrightarrow \epsilon$.

\item
If $\alpha  \asymp  \beta$ and  $\alpha \stackrel{\not\asymp}{\longrightarrow} \alpha'$, then we show that
$\beta_S  \stackrel{\asymp}{\Longrightarrow}_{S} \cdot \stackrel{\not\asymp}{\longrightarrow}_{S} \beta'$ for some $\beta'$ such that $\alpha' \asymp \beta'$.
According to definition of $\asymp$, we have $\alpha\gamma \simeq_R \beta\gamma$. Moreover,  $\alpha \stackrel{\not\asymp}{\longrightarrow} \alpha'$ is equivalent to $\alpha \gamma \stackrel{\not\simeq_R}{\longrightarrow} \alpha'\gamma$. There are two cases:
\begin{itemize}
\item
$\alpha \not\in (\mathsf{Rd}_R(\gamma))^{*}$. In this case we also have $\beta  \not\in (\mathsf{Rd}_R(\gamma))^{*}$.   Thus the action $\alpha \gamma \stackrel{\not\simeq_R}{\longrightarrow} \alpha'\gamma$ must be matched by $\beta \gamma_R \stackrel{\simeq_R}{\Longrightarrow}_R  \beta''\gamma_R \stackrel{\not\simeq_R}{\longrightarrow}_R \beta'\gamma_R$ for some $\beta''$ and $\beta'$ with $\alpha'\gamma \simeq_R \beta'\gamma$.  This is equal to $\beta \stackrel{\asymp}{\Longrightarrow} \beta'' \stackrel{\not\asymp}{\longrightarrow}\beta'$ with $\alpha' \asymp \beta'$, which implies  $\beta_S \stackrel{\asymp}{\Longrightarrow}_S \beta_S'' \stackrel{\not\asymp}{\longrightarrow}_S\beta_S'$ with $\alpha' \asymp \beta'$ according to the fact ${=_{S}} \subseteq {\asymp}$.

\item
$\alpha \in (\mathsf{Rd}_R(\gamma))^{*}$.  In this case we have $\alpha \gamma \simeq_R \gamma \simeq_R  \beta \gamma$ hence $\beta \in (\mathsf{Rd}_R(\gamma))^{*}$.  In other words, $\alpha, \beta \in S^{*}$.  Now we have $\beta \gamma  \stackrel{\simeq}{\Longrightarrow}_R \gamma$, or equivalently $\beta \stackrel{\asymp}{\Longrightarrow} \epsilon$.  Now remember $\epsilon =_S \alpha$ and $\alpha \stackrel{\not\asymp}{\longrightarrow} \alpha'$.  Combine these transitions we have $\beta \stackrel{\asymp}{\Longrightarrow} \epsilon =_S \alpha \stackrel{\not\asymp}{\longrightarrow} \alpha'$, and  thus
$\beta_S \stackrel{\asymp}{\Longrightarrow}_S \epsilon = \alpha_S \stackrel{\not\asymp}{\longrightarrow}_S \alpha_S'$ and trivially $\alpha' \asymp \alpha_S'$.  We are done.
\end{itemize}

\item
If $\alpha  \asymp  \beta$ and  $\alpha \stackrel{a}{\longrightarrow} \alpha'$, then we show
$\beta_S \stackrel{\asymp}{\Longrightarrow}_{S} \cdot \stackrel{a}{\longrightarrow}_{S} \beta'$ for some $\beta'$ such that $\alpha' \asymp \beta'$. This case can be treated in the same way as the previous one.
\end{enumerate}
\end{proof}

\subsection{Proofs Concerning $R$-redundancy}
\subsubsection*{Proof of Lemma~\ref{lem:redundant_chain}}
Suppose $X \in \mathsf{Rd}_{\mathsf{Rd}_{R}(\delta)}(\gamma)$. By Definition~\ref{def:Id_Rd_R}, this is equivalent to $X\gamma  \simeq_{\mathsf{Rd}_{R}(\delta)} \gamma$. According to Theorem~\ref{thm:relative_bis_str}, $X\gamma\delta  \simeq_R \gamma\delta$, which means $X \in \mathsf{Rd}_R(\gamma \delta)$.  The proof of  the other direction is from the fact that  the above reasoning steps  are reversible.

\subsubsection*{Proof of Lemma~\ref{lem:Rd_admissible}}
We show that $\mathsf{Id}_{R} = R$. That is, $X \simeq_R \epsilon$ if and only if $X \in R$.  It is trivial that $X \simeq_R \epsilon$ whenever $X \in R$. So it suffices to show $X \in R$ whenever $X \simeq_R \epsilon$.

Assume that $X \simeq_R \epsilon$.  Since $R =\mathsf{Rd}_{R'}(\gamma)$, by  Theorem~\ref{thm:relative_bis_str}, $X\gamma \simeq_{R'} \gamma$ which implies $X \in \mathsf{Rd}_{R'}(\gamma) = R$ according to Definition~\ref{def:Id_Rd_R}.

\subsection{Proofs of Lemma~\ref{lem:R_prime_property} and Theorem~\ref{thm:QUDP_RBisimularity}}
\subsubsection*{Proof of Lemma~\ref{lem:R_prime_property}}
There are two cases.
\begin{enumerate}
\item
$\alpha X \simeq_{R} X$. In this case, we have $X \simeq_{R} \beta Y$. Since $X$ and $Y$ are both $\simeq_{R}$-primes, we must have $\beta Y \simeq_{R} Y$. Therefore $X \simeq_{R} Y$.

\item
$\alpha X \not\simeq_{R} X$.  In this case, we also have $\alpha X \not\simeq_{R} Y$, $\beta Y \not\simeq_{R} Y$ and $\beta Y \not\simeq_{R} X$. Consider the following sequence of transitions
\[
\alpha X \stackrel{\ell_1}\longrightarrow \alpha_1 X \stackrel{\ell_2}\longrightarrow \ldots \stackrel{\ell_k}{\longrightarrow} \alpha_k X \stackrel{\ell}{\longrightarrow}\alpha' X
\]
with $\alpha_i X \not\simeq_{R} X$ for $1 \leq i \leq k$ and $\alpha' X \simeq_{R} X$. This sequence must be matched by $\beta Y$ via
\[ \beta Y  \stackrel{\ell_1}\Longrightarrow \beta_1 Y \stackrel{\ell_2}\Longrightarrow  \ldots  \stackrel{\ell_k}\Longrightarrow \beta_k Y \stackrel{\ell}{\longrightarrow} \beta'Y,
\]
such that
\begin{itemize}
\item
$\beta_i Y \not\simeq_{R} Y$ for $1 \leq i \leq k$ and $\beta' Y \simeq_{R} Y$;  and

\item
$\alpha_i X  \simeq_{R} \beta_i Y $ for $1 \leq i \leq k$ and $\alpha' X  \simeq_{R} \beta' Y$.
\end{itemize}
Accordingly $X \simeq_{R} \alpha' X  \simeq_{R} \beta' Y \simeq_{R} Y$.
\end{enumerate}

\subsubsection*{Proof of  Theorem~\ref{thm:QUDP_RBisimularity}}

By Induction on $r$ or $s$.  Remember from
Definition~\ref{def:R_prime_decomposition}, $P_i$ is a $\simeq_{R_i}$-prime and $Q_j$ is a $\simeq_{S_j}$-prime.

Because $P_r  .  P_{r-1}  .  \ldots .  P_{2} .  P_{1} \simeq_{R_1} Q_s  .  Q_{s-1}  .  \ldots  .  Q_{2} .  Q_{1}$, by  Lemma~\ref{lem:R_prime_property}, we have $P_1 \simeq_{R_1} Q_1$, which implies  $\mathsf{Rd}_{R_1}(P_1) = \mathsf{Rd}_{R_1}(Q_1)$ by Lemma~\ref{lem:Rd_R_basic_facts}. In other words, $R_2 = S_2$.  According to Lemma~\ref{prop:Rd_congruence}, we have $P_r  .  P_{r-1}  .  \ldots .  P_{2} \simeq_{R_2} Q_s  .  Q_{s-1}  .  \ldots  .  Q_{2}$, and now the proof is accomplished by  using induction hypothesis.

\section{Proofs in Section~\ref{sec:naive-algorithm}}

\subsection{Proofs Concerning Relationships between Bases}

\subsubsection*{Proof of Lemma~\ref{lem:compare_ID}}

If $X \in \mathbf{Id}^{\mathcal{B}}_R$,  then $X  \stackrel{\mathcal{B}}{=}_{R} \epsilon$. Since ${\stackrel{\mathcal{B}}{=}_{R}} \subseteq  {\stackrel{\mathcal{D}}{=}_{R}}$,  we have $X  \stackrel{\mathcal{D}}{=}_{R} \epsilon$, hence $X \in \mathbf{Id}^{\mathcal{D}}_R$.

\subsubsection*{Proof of Lemma~\ref{lem:compare_PR}}

First note that, because $\mathbf{Id}^{\mathcal{B}}_R  = \mathbf{Id}^{\mathcal{D}}_R$, a reference set $R$ is $\mathcal{B}$-admissible if and only if it is $\mathcal{D}$-admissible.
The members in $\mathbf{Pr}_R$ and $\mathbf{Pr}_R'$ are both the $\mathbf{Id}_R$-blocks. Thus comparing $\mathbf{Pr}_R'$ with $\mathbf{Pr}_R$ makes sense.

We only need to prove the conclusion for $\mathcal{B}$-admissible (also $\mathcal{D}$-admissible) $R$'s. Keep in mind that $R = \mathbf{Id}^{\mathcal{B}}_R  = \mathbf{Id}^{\mathcal{D}}_R$.

Now suppose $[X]_{R} \in \mathbf{Cm}^{\mathcal{B}}_R$,  we will show $[X]_{R}  \in \mathbf{Cm}^{\mathcal{D}}_R$. Since $[X]_{R} \in \mathbf{Cm}^{\mathcal{B}}_R$, we can suppose that $X  \stackrel{\mathcal{B}}{=}_{R} \alpha . P$  in which $[P]_{R} \in \mathbf{Pr}^{\mathcal{B}}_R$ and $[X]_{R} \neq [P]_{R}$. (Caution: we cannot assert that $P <_R X$. )  Because  ${\stackrel{\mathcal{B}}{=}_{R}}  \subseteq {\stackrel{\mathcal{D}}{=}_{R}}$, we also have $X \stackrel{\mathcal{D}}{=}_{R} \alpha. P$.  Since $P \not\in R$, either  $[P]_{R} \in   \mathbf{Pr}^{\mathcal{D}}_{R}$ or $[P]_{R} \in   \mathbf{Cm}^{\mathcal{D}}_{R}$ will happen.

If $[P]_{R} \in   \mathbf{Pr}^{\mathcal{D}}_{R}$, it is done. Otherwise  we must  have $P  \stackrel{\mathcal{D}}{=}_{R} \alpha'. P'$ such that $[P']_{R}  \in \mathbf{Pr}^{\mathcal{D}}_{R}$.  Thus $X \stackrel{\mathcal{D}}{=}_{R} \alpha. \alpha'.P'$ with $[P']_{R}  \in \mathbf{Pr}^{\mathcal{D}}_{R}$. Therefore, in either case $[X]_{R}  \not\in \mathbf{Pr}^{\mathcal{D}}_{R}$. The only possibility is $[X]_{R} \in \mathbf{Cm}^{\mathcal{D}}_{R}$.

\subsubsection*{Proof of Lemma~\ref{lem:compare_RD}}

It suffices to prove $\mathbf{Rd}^{\mathcal{B}}_R \subseteq \mathbf{Rd}^{\mathcal{D}}_R$ for $\mathcal{B}_R$-admissible (also $\mathcal{B}_R$-admissible) $R$'s. Let $[P]_R$ be a $\mathcal{B}_R$-prime (also $\mathcal{D}_R$-prime).
According to the rules of $\mathcal{B}_R$-reduction defined in Section~\ref{subsec:decomposition_base}, $X \in \mathbf{Rd}^{\mathcal{B}}_R([P]_R)$ if and only if $X P \stackrel{\mathcal{B}}{=}_{R} P$. Thus  $X P\stackrel{\mathcal{D}}{=}_{R} P$ hence $ X \in \mathbf{Rd}^{\mathcal{D}}_R([P]_R)$, since  ${\stackrel{\mathcal{B}}{=}_{R}} \subseteq {\stackrel{\mathcal{D}}{=}_{R}}$.

\subsubsection*{Proof of Lemma~\ref{lem:compare_all}}

First we point out that $\mathbf{Cm}^{\mathcal{B}}_R = \mathbf{Cm}^{\mathcal{D}}_R$ whenever $\mathbf{Id}^{\mathcal{B}}_R  = \mathbf{Id}^{\mathcal{D}}_R$ and $\mathbf{Pr}^{\mathcal{B}}_R = \mathbf{Pr}^{\mathcal{D}}_R$.

Now we prove $\mathcal{B} = \mathcal{D}$.  Suppose on the contrary that $\mathcal{B} \subsetneq \mathcal{D}$. There is some  $\alpha$ and some  $\mathcal{B}_R$-admissible (also $\mathcal{B}_R$-admissible) $R$ such that $\mathtt{dcmp}_R^{\mathcal{B}}(\alpha) \neq \mathtt{dcmp}_R^{\mathcal{D}}(\alpha)$.  By examine the $\mathcal{B}$-reduction rules, The only possibility to make $\mathtt{dcmp}_R^{\mathcal{B}}(\alpha) \neq \mathtt{dcmp}_R^{\mathcal{D}}(\alpha)$  is the existence of $\mathcal{B}_R$-admissible (also $\mathcal{B}_R$-admissible) reference set $S$ together with  $[X]_S \in \mathbf{Cm}^{\mathcal{B}}_S  = \mathbf{Cm}^{\mathcal{D}}_S$ such that $\mathbf{Dc}^{\mathcal{B}}_S([X]_S) \neq \mathbf{Dc}^{\mathcal{D}}_S([X]_S)$.  Let us say $\mathbf{Dc}^{\mathcal{B}}_S([X]_S) = [X_r]_{S_{r}} [X_{r-1}]_{S_{r-1}}\ldots [X_2]_{S_{2}} [X_1]_{S_{1}}$, in which  $S_1 = S$, $[X_i]_{S_{i}} \in \mathbf{Pr}^{\mathcal{B}}_{S_{i}}$ and $S_{i+1} = \mathbf{Rd}^{\mathcal{B}}_{S_{i}}([X_i]_{S_{i}})$ for every $1 \leq i \leq r$.  Consequently, $[X_i]_{S_{i}} \in \mathbf{Pr}^{\mathcal{D}}_{S_{i}}$ and $S_{i+1} = \mathbf{Rd}^{\mathcal{D}}_{S_{i}}([X_i]_{S_{i}})$ for every $1 \leq i \leq r$ according to the condition of the lemma.
Now, because ${\stackrel{\mathcal{B}}{=}_S}   \subseteq {\stackrel{\mathcal{D}}{=}_S}$, we have  $X \stackrel{\mathcal{D}}{=} X_r.X_{r-1}.\ldots. X_1$. Thus $\mathtt{dcmp}_R^{\mathcal{D}}(X) = \mathtt{dcmp}_R^{\mathcal{D}}(X_r.X_{r-1}.\ldots. X_1)$. Amazingly, $\mathtt{dcmp}_R^{\mathcal{D}}(X_r.X_{r-1}.\ldots. X_1)
$ turns out to be $[X_r]_{S_{r}} [X_{r-1}]_{S_{r-1}}\ldots [X_2]_{S_{2}} [X_1]_{S_{1}}$, which implies $\mathbf{Dc}^{\mathcal{D}}_S([X]_S) = [X_r]_{S_{r}} [X_{r-1}]_{S_{r-1}}\ldots [X_2]_{S_{2}} [X_1]_{S_{1}} = \mathbf{Dc}^{\mathcal{B}}_S([X]_S)$.  This is a contradiction.

\subsection{Proofs of  the Properties of Construction}

\subsubsection*{Proofs of Lemma~\ref{lem:step3}}

Clearly  $R \subseteq \mathbf{Id}^{\mathcal{B}}_R$. According to  Proposition~\ref{prop:R_monotone}, ${\simeq_{R}} \subseteq {\simeq_{\mathbf{Id}^{\mathcal{B}}_R}}$.  On the other hand, ${\stackrel{\mathcal{B}}{=}_{\mathbf{Id}^{\mathcal{B}}_R}} = {\stackrel{\mathcal{B}}{=}_{R}}$.  Therefore, ${\simeq_{R}} \subseteq {\simeq_{\mathbf{Id}^{\mathcal{B}}_R}} \subseteq {\stackrel{\mathcal{B}}{=}_{\mathbf{Id}^{\mathcal{B}}_R}} = {\stackrel{\mathcal{B}}{=}_{R}}$.

\subsubsection*{Proofs of Lemma~\ref{lem:step1}}
It is a routine work to check that $\{ X \;|\; X \simeq_R \epsilon\}$ is an  $\mathbf{Id}^{\mathcal{B}}_R$-candidate.

\subsubsection*{Proofs of Lemma~\ref{lem:admissible_Rd}}

We need to show that, $\mathbf{Id}^{\mathcal{B}}_{\mathbf{Rd}^{\mathcal{B}}_R([X]_R)} = {\mathbf{Rd}^{\mathcal{B}}_R([X]_R)} $. To this end, we show $\mathbf{Id}^{\mathcal{B}}_{\mathbf{Rd}^{\mathcal{B}}_R([X]_R)}$ is an $\mathbf{Rd}^{\mathcal{B}}_R([X]_R)$-candidate.

As before we let  $T \stackrel{\mathrm{def}}{=} \{ W \;|\;  W.X \stackrel{\mathcal{D}}{=}_R X \}$.
First we confirm $\mathbf{Id}^{\mathcal{B}}_{\mathbf{Rd}^{\mathcal{B}}_R([X]_R)} \subseteq T$.
 By induction $T$ is a  $\mathcal{D}$-admissible set thus it is also $\mathcal{B}$-admissible set, hence  $\mathbf{Id}^{\mathcal{B}}_{T} = T$.  Because $\mathbf{Rd}^{\mathcal{B}}_R([X]_R) \subseteq T$, we have $ \mathbf{Id}^{\mathcal{B}}_{\mathbf{Rd}^{\mathcal{B}}_R([X]_R)} \subseteq \mathbf{Id}^{\mathcal{B}}_{T} = T$.

Now we check the conditions of $\mathbf{Rd}^{\mathcal{B}}_R([X]_R)$-candidate. Let
$Y \in \mathbf{Id}^{\mathcal{B}}_{\mathbf{Rd}^{\mathcal{B}}_R([X]_R)}$. If $Y \in \mathbf{Rd}^{\mathcal{B}}_R([X]_R)$, then nothing need to to. Now we suppose that $Y \not\in \mathbf{Rd}^{\mathcal{B}}_R([X]_R)$.

\begin{enumerate}
\item
If $Y  \stackrel{\tau}{\longrightarrow} \zeta$ and $\zeta \not\in T^{*}$.  In this case, because $ \mathbf{Id}^{\mathcal{B}}_{\mathbf{Rd}^{\mathcal{B}}_R([X]_R)} \subseteq T$, we have $\zeta \not\in (\mathbf{Id}^{\mathcal{B}}_{\mathbf{Rd}^{\mathcal{B}}_R([X]_R)})^{*}$.  Thus  $Y  \stackrel{\tau}{\longrightarrow}_{\mathbf{Rd}^{\mathcal{B}}_R([X]_R)} \widehat{\zeta} =_{{\mathbf{Rd}^{\mathcal{B}}_R([X]_R)}} \zeta$.
Now according to the definition of $\mathbf{Id}^{\mathcal{B}}_{\mathbf{Rd}^{\mathcal{B}}_R([X]_R)}$, we have $\epsilon  \stackrel{\tau}{\longrightarrow}_{\mathbf{Rd}^{\mathcal{B}}_R([X]_R)} \widehat{\eta}$ for some $ \widehat{\eta}$  such that $\mathtt{dcmp}_{\mathbf{Rd}^{\mathcal{B}}_R([X]_R)}^{\mathcal{D}}(\widehat{\zeta}) = \mathtt{dcmp}_{\mathbf{Rd}^{\mathcal{B}}_R([X]_R)}^{\mathcal{D}}(\widehat{\eta})$.
In other words, there is
$Z \in \mathbf{Rd}^{\mathcal{B}}_R([X]_R)$ and $Z  \stackrel{\tau}{\longrightarrow} \eta$ for some $\eta$  such that $\mathtt{dcmp}_{\mathbf{Rd}^{\mathcal{B}}_R([X]_R)}^{\mathcal{D}}(\zeta) = \mathtt{dcmp}_{\mathbf{Rd}^{\mathcal{B}}_R([X]_R)}^{\mathcal{D}}(\eta)$. This makes $\zeta X \stackrel{\mathcal{D}}{=}_R \eta X$.
Now, we use the fact that $\mathbf{Rd}^{\mathcal{B}}_R([X]_R)$ itself is an $\mathbf{Rd}^{\mathcal{B}}_R([X]_R)$-candidate. Thus it satisfies the relevant conditions. That is,
$[X]_{\mathbf{Id}_{R}}  \stackrel{\tau}{\longmapsto} \beta$  for some $\beta$  such that $\eta. X \stackrel{\mathcal{D}}{=}_R \beta$.  And finally we have $\zeta X \stackrel{\mathcal{D}}{=}_R \beta$.

\item
If $Y \stackrel{a}{\longrightarrow}  \zeta$.  The proof is complete the same as the first case.
\end{enumerate}

\subsubsection*{Proof of Lemma~\ref{lem:corrct_norm}}

Apparently we can only prove the proposition for  $\mathcal{B}$-admissible $R$'s.

The proof is by induction.  Assume at some time in the the execution of the algorithm, we have the set $\mathbf{V}$ which contains all the treated blocks, and we have a current value of $m$, and current block $[X]_R$. The induction hypothesis is the following:
\begin{quote}
If $[X]_R \in \mathbf{V}$, then $d_R[[X]_R] = \| X \|_{\stackrel{\mathcal{B}}{=}_R}$;  if $\alpha$ satisfies $\mathtt{dcmp}_{R}(\alpha) \in \mathbf{V}^{*}$, then $d_R(\alpha) = \| \alpha \|_{\stackrel{\mathcal{B}}{=}_R}$.
\end{quote}
According to the algorithm, it is clear that $d_R[[X]_R] \geq \| X \|_{\stackrel{\mathcal{B}}{=}_R}$. The reason is elaborated as follows.
\begin{enumerate}
\item
If $d_R[[X]_R] = m$ via the fact $[X]_R \stackrel{\ell}{\longmapsto} \gamma$ and $d_R(\gamma) = m - 1$. Then there is a path from $X$ to $\gamma$ with length $1$ and there is a path from $\gamma$ to $\epsilon$ with length $m-1$. Thus totally we have a path from $X$ to $\epsilon$ with length $m$.

\item
If $d_R[[X]_R] = m$ via the fact $[X]_R \stackrel{\ell}{\longmapsto} \gamma$, $d_R(\gamma) = m$ and $X \stackrel{\mathcal{B}}{=}_R \gamma$. Then there is a path from $X$ to $\gamma$ with length $0$ and there is a path from $\gamma$ to $\epsilon$ with length $m$. Thus totally we have a path from $X$ to $\epsilon$ with length $m$.
\end{enumerate}
Thus in both case we have  $d_R[[X]_R] \geq \| X \|_{\stackrel{\mathcal{B}}{=}_R}$.

Now assume, for contradiction, that $d_R[[X]_R] > \| X \|_{\stackrel{\mathcal{B}}{=}_R}$.  In other words, $m > \| X \|_{\stackrel{\mathcal{B}}{=}_R}$.  Then according to induction hypothesis, for every $[Y]_S \in \mathbf{V}$,  $d_{R}[[Y]_S] = \| Y \|_{\stackrel{\mathcal{B}}{=}_S}$.
Now consider the time when $m$ is assigned to $d_{R}[[X]_R]$. There are two possibilities:
\begin{enumerate}
\item
$[X]_R \stackrel{\ell}{\longmapsto} \gamma$ and $d_R(\gamma) = m - 1$.  In this case, by induction $d_{R}(\gamma) = \| \gamma \|_{\stackrel{\mathcal{B}}{=}_R} = m-1$.    There can not be other transition of $[X]_R$ such as $[X]_R \stackrel{\ell}{\longmapsto}  \zeta$ that
\begin{enumerate}
\item
either $d_{R}(\zeta) < m-1$,
\item
or $d_{R}(\zeta) = m-1$ and $\zeta  \stackrel{\mathcal{B}}{=}_R X$.
\end{enumerate}
If so, $m-1$ would be assigned to $d_{R}[[X]_R]$ and  the block $[X]_R$ should have already been put into $\mathbf{V}$.   This is a contradiction.

\item
$ [X]_R \stackrel{\ell}{\longmapsto} \gamma$, $d_R(\gamma) = m$ and $X \stackrel{\mathcal{B}}{=}_R \gamma$.  In this case, by induction $d_{R}(\gamma) = \| \gamma \|_{\stackrel{\mathcal{B}}{=}_R} = m$, thus we must have $ \|X  \|_{\stackrel{\mathcal{B}}{=}_R} = \| \gamma \|_{\stackrel{\mathcal{B}}{=}_R} = m$. This is a contradiction.

\end{enumerate}

\section{Proof of Theorem~\ref{thm:correctness}}

\subsection{Proof of Lemma~\ref{lem:unique_expansion}}
There are several cases according to the values of $k$ and $l$:
\begin{itemize}
\item
If $k, l > 1$.  In this case we can assume that $\|\gamma\|_{\stackrel{\mathcal{B}}{=}_R} =\|\delta\|_{\stackrel{\mathcal{B}}{=}_R}$, which is already known,  and we can tell whether a given action is $\stackrel{\mathcal{B}}{=}_R$-decreasing.  Suppose we have a decreasing transition of $\gamma \Longleftrightarrow \cdot \stackrel{\ell}{\longrightarrow}_{R} \eta. Y_{k-1}\ldots Y_1$  which is induced by $[Y_k]_{R_k} \stackrel{\ell}{\longmapsto}_{R_k} \eta$.  Now we have the matching transition  $\delta\Longleftrightarrow \cdot \stackrel{\ell}{\longrightarrow}_{R} \zeta. Z_{l-1}\ldots Z_1$ which is induced by $[Z_l]_{R_l} \stackrel{\ell}{\longmapsto}_{R_l} \zeta$.  Moreover, we have
\[
\eta. Y_{k-1}.\ldots. Y_1  \stackrel{\mathcal{B}}{=}_R \zeta.Z_{l-1}.\ldots. Z_1
\]
Now we must have $Y_1 = Z_1$. And the process $Y_k \ldots Y_2$ and $Z_l \ldots Z_2$ also satisfy the expansion property for $\mathcal{B}_R$. The result of the lemma can be obtained by induction.

\item
If $k = 1$ and $l > 1$. In this case, $[Y_1]_R$ must not be a prime.  According to our algorithm, one of the candidates will be defined as $\mathbf{Dc}^{\mathcal{B}}_R([Y_1]_R)$ if there exist some candidates which can pass the expansion testing.

\item
If $k = l = 1$.  If $[Z_1]_R <_R [Y_1]_R$, then this is the same as the above case. Otherwise we can change the role of $Y_1$ and $Z_1$.
\end{itemize}

\subsection{Preparations for the Proof of Theorem~\ref{thm:correctness}}

To make things clear, we introduce some new terminologies.  Note that the program in Fig.~\ref{fig:new_base_II} maintains a set $\mathbf{V}$ of the blocks which have already been treated.  During the execution of the algorithm, $\mathbf{V}$ start from $\emptyset$ and get larger and larger.  Intuitively these blocks in $\mathbf{V}$  contain part of information of ${\mathcal{B}}$.  Formally, we can define the {\em partial decomposition base} $\mathcal{B}_{\mathbf{V}} = \{ \mathcal{B}_{R,\mathbf{V}}\}_{R\subseteq \mathbf{C}_{G}}$ in which $\mathcal{B}_{R,\mathbf{V}} = (\mathbf{Id}^{\mathcal{B}}_R, \mathbf{Pr}^{\mathcal{B}}_{R,\mathbf{V}}, \mathbf{Cm}^{\mathcal{B}}_{R,\mathbf{V}} ,\mathbf{Dc}^{\mathcal{B}}_{R,\mathbf{V}}, \mathbf{Rd}^{\mathcal{B}}_{R,\mathbf{V}})$ where
\begin{itemize}
\item
$\mathbf{Pr}^{\mathcal{B}}_{R,\mathbf{V}} = \mathbf{Pr}^{\mathcal{B}}_{R} \cap \mathbf{V}$.

\item
$\mathbf{Cm}^{\mathcal{B}}_{R,\mathbf{V}} = \mathbf{Cm}^{\mathcal{B}}_{R} \cap \mathbf{V}$.

\item
$\mathbf{Dc}^{\mathcal{B}}_{R,\mathbf{V}}([X]_R) =\left\{
 \begin{array}{ll}
  \mathbf{Dc}^{\mathcal{B}}_{R}([X]_R)  & \;  \textrm{if  $[X]_R \in \mathbf{Cm}^{\mathcal{B}}_{R,\mathbf{V}}$} \\
  \mbox{undefined}                      & \; \textrm{otherwise}
 \end{array}
 \right.$

\item
$\mathbf{Rd}^{\mathcal{B}}_{R,\mathbf{V}}([X]_R) = \left\{
 \begin{array}{ll}
  \mathbf{Rd}^{\mathcal{B}}_{R}([X]_R)  & \;  \textrm{if  $[X]_R \in \mathbf{Pr}^{\mathcal{B}}_{R,\mathbf{V}}$} \\
  \mbox{undefined}                      & \; \textrm{otherwise}
 \end{array}
 \right.$
\end{itemize}
$\mathcal{B}_{\mathbf{V}}$ is called {\em partial} in the sense that $\mathbf{Pr}^{\mathcal{B}}_{R,\mathbf{V}} \cup \mathbf{Cm}^{\mathcal{B}}_{R,\mathbf{V}} = \mathbf{C}_{R} \cap \mathbf{V} \subseteq \mathbf{C}_{R}$. Comparatively, $\mathbf{Pr}^{\mathcal{B}}_{R} \cup \mathbf{Cm}^{\mathcal{B}}_{R} = \mathbf{C}_{R}$.

At some time in the execution of the algorithm,
we get a specific value of ${\mathbf{V}}$, then  $\mathcal{B}_{\mathbf{V}}$ is already known at that time. Now we can define  $\mathtt{dcmp}^{\mathcal{B}}_{R, \mathbf{V}}(\alpha)$  for any process $\alpha$.   $\mathtt{dcmp}^{\mathcal{B}}_{R, \mathbf{V}}(\alpha) = \mathtt{dcmp}^{\mathcal{B}}_{R}(\alpha)$ if the derivation of $\alpha \stackrel{\mathcal{B}}{\rightarrow}_R \mathtt{dcmp}^{\mathcal{B}}_{R}(\alpha)$  (refer to Section~\ref{subsec:decomposition_base})  only relies on  the information provided in $\mathcal{B}_{\mathbf{V}}$. Otherwise
$\mathtt{dcmp}^{\mathcal{B}}_{R, \mathbf{V}}(\alpha)$ is undefined. In the following, a process $\alpha$ is called {\em $\mathcal{B}_{R, \mathbf{V}}$-applicable} if $\mathtt{dcmp}^{\mathcal{B}}_{R, \mathbf{V}}(\alpha)$ is defined.

Now we prepare to confirm the important result:
 $\alpha \simeq_R \beta$ implies $\alpha \stackrel{\mathcal{B}}{=}_R \beta$.

First of all, we find that it is enough to prove the result under the assumption that $R$ is $\mathcal{B}$-admissible.  Note that If  $ {\simeq_R} \subseteq {\stackrel{\mathcal{B}}{=}_R}$ for every $\mathcal{B}$-admissible $R$'s, then for every $R$ we have ${\simeq_R} \subseteq {\simeq_{\mathbf{Id}^{\mathcal{B}}_R}} \subseteq {\stackrel{\mathcal{B}}{=}_{\mathbf{Id}^{\mathcal{B}}_R}} = {\stackrel{\mathcal{B}}{=}_R}$. Thus in the rest of this section we assume $R$ to be $\mathcal{B}$-admissible.

With the help of Lemma~\ref{lem:unique_expansion}, we are able to establish  Theorem~\ref{thm:correctness}.

We take the following approach to prove Theorem~\ref{thm:correctness}. Remember that our algorithm maintains a set $\mathbf{V}$, containing all the blocks which have been treated. We will
suppose that $R$ is $\mathcal{B}$-admissible. Let $[X]_R$ be a block which is about to be put into $\mathbf{V}$. We try to prove that if $[X]_R$ is a $\widehat{\mathcal{B}}_R$-composite, and let $\mathbf{Dc}^{\widehat{\mathcal{B}}}_{R}([X]_R) = [Z_t]_{R_t}\ldots [Z_1]_{R_1}$, then $X \stackrel{\mathcal{B}}{=}_R  Z_t\ldots Z_1$.

Apparently, the proof must be done by induction. However, this is not an easy task.  We will choose the following statements as our induction hypotheses:
\begin{enumerate}
\item[\textbf{I}.]
Let $R$ be an arbitrary $\mathcal{B}$-admissible set. Suppose that $\gamma$ is $\mathcal{B}_{R,\mathbf{V}}$-applicable, and $\mathtt{dcmp}^{\widehat{\mathcal{B}}}_{R}(\gamma) = [W_u]_{R_u} \ldots[W_1]_{R_1}$. Then $W_u\ldots W_1$ is $\mathcal{B}_{R,\mathbf{V}}$-applicable, and $\mathtt{dcmp}^{\mathcal{B}}_{R,\mathbf{V}}(\gamma) = \mathtt{dcmp}^{\mathcal{B}}_{R,\mathbf{V}}(W_u \ldots W_1)$. That is, $\gamma \stackrel{\mathcal{B}}{=}_R W_u \ldots W_1$.
\end{enumerate}
We remark that at the time the algorithm terminates when $\mathbf{V}$ contains every blocks, the statement~I implies Theorem~\ref{thm:correctness}.

The readers are suggested to imagine the following picture in mind. Although $R_1, \ldots, R_t$ are all $\widehat{\mathcal{B}}$-admissible according to the definition of decomposition base, we cannot draw the conclusion that $R_1, \ldots, R_t$ are all $\mathcal{B}$-admissible. It may indeed happen that $Z_i \in \mathbf{Id}^{\mathcal{B}}_{R_i}$, which means that $Z_i$ is $\mathcal{B}_{R_i}$-redundant. However, since $R_1 = R$ is $\mathcal{B}$-admissible, we do have $Z_1 \not\in \mathbf{Id}^{\mathcal{B}}_{R_1}$. This fact will be used in the proof.

\subsection{Proof of Lemma~\ref{lem:pre_correctness}}

This fact can be proved by induction on $d = \|W_u \ldots W_1\|_{\stackrel{\mathcal{B}}{=}_R}$, and by studying a witness path of $\simeq_R$-norm for $W$. Then using hypothesis I and by inspecting the algorithm we can show that when $W \simeq_R W_u \ldots W_1$, $W$ should have already been put into $\mathbf{V}$.

\subsection{Proof of Lemma~\ref{lem:applicable}}

We use $\delta$ to indicate $Z_t \ldots Z_1$.
The lemma  confirms that, when $[X]_R$ is about to be put into $\mathbf{V}$,  $\mathtt{dcmp}^{\mathcal{B}}_{R,\mathbf{V}}(Z_t\ldots Z_1)$ has already been defined.  This fact is proved by induction, using the induction hypotheses.
The way is to choose a process $\gamma$ such that $[X]_R \stackrel{\ell}{\longmapsto}_R \gamma$ is on a $\stackrel{\mathcal{B}}{=}_R$-witness path of $X$. Now $\gamma$ is $\mathcal{B}_{R,\mathbf{V}}$-applicable, thus we try to use induction hypotheses on $\gamma$.  There are two cases:
\begin{itemize}
\item
\sloppy
If there exist $\gamma$ such that $[X]_R \stackrel{\ell}{\longmapsto}_R \gamma$ and $\|\gamma \|_{\stackrel{\mathcal{B}}{=}_R}  = m-1$.
In this case, the algorithm is running in the first \textbf{while}-loop in Fig.~\ref{fig:new_base_II}.   Since $X \simeq_R \delta$, there is a matching of $[X]_R \stackrel{\ell}{\longmapsto}\gamma$ from $\delta$, say $\delta \stackrel{\simeq_R }{\Longrightarrow} \cdot \stackrel{\ell}{\longrightarrow} \zeta$ for some $\zeta$ that $\gamma \simeq_R \zeta$.
Because  $\delta$ is itself a $\simeq_{R}$-prime-decomposition, we must have $\delta \Longleftrightarrow \cdot \stackrel{\ell}{\longrightarrow} \zeta$, which is induced by $[Z_t]_{R_t} \stackrel{\ell}{\longmapsto}_{R_t} \eta$ for some $\eta$, and $\zeta = \eta.Z_{t-1}\ldots Z_1$. Suppose $\mathtt{dcmp}^{\widehat{\mathcal{B}}}_{R_t}(\eta) = [Y_s]\ldots[Y_1]$ ($s \geq 0$), then $\mathtt{dcmp}^{\widehat{\mathcal{B}}}_{R}(\zeta)$ must be in the form $[Y_s]\ldots[Y_1]. [Z_{t-1}]_{R_{t-1}}.\ldots [Z_1]_{R_1}$.
Because $\gamma \simeq_R \zeta$, we have $\mathtt{dcmp}^{\widehat{\mathcal{B}}}_{R}(\gamma) =[Y_s]\ldots[Y_1]. [Z_{t-1}]_{R_{t-1}}.\ldots [Z_1]_{R_1}$.
According to induction hypothesis I, $\gamma \stackrel{\mathcal{B}}{=}_R  Y_s\ldots Y_1.Z_{t-1}\ldots Z_1$, thus $\| Y_s\ldots Y_1.Z_{t-1}\ldots Z_1 \|_{\stackrel{\mathcal{B}}{=}_R} = m-1$. Now since $\mathtt{dcmp}^{\widehat{\mathcal{B}}}_{R}(\zeta) = [Y_s]\ldots[Y_1]. [Z_{t-1}]_{R_{t-1}}.\ldots [Z_1]_{R_1}$, by Lemma~\ref{lem:pre_correctness}, $\zeta \stackrel{\mathcal{B}}{=}_R  Y_s\ldots Y_1.Z_{t-1}\ldots Z_1$ and thus $\| \zeta \|_{\stackrel{\mathcal{B}}{=}_R} = m-1$. Since $\delta \Longleftrightarrow \cdot \stackrel{\ell}{\longrightarrow} \zeta$, we have $\| \delta \|_{\stackrel{\mathcal{B}}{=}_R} \leq m$.  In summary, we have:
\begin{itemize}
\item
$\|\zeta \|_{\stackrel{\mathcal{B}}{=}_R} = \|Y_s \ldots Y_1.Z_{t-1} \ldots Z_1 \|_{\stackrel{\mathcal{B}}{=}_R} =\|\eta.Z_{t-1} \ldots Z_1 \|_{\stackrel{\mathcal{B}}{=}_R} = m-1$.

\item
$\| Z_{t-1} \ldots Z_1 \|_{\stackrel{\mathcal{B}}{=}_R} \leq m-1$.

\item
$\| \delta \|_{\stackrel{\mathcal{B}}{=}_R} \leq m$.

\item
$\| Z_1 \|_{\stackrel{\mathcal{B}}{=}_R} > 0$.
\end{itemize}
There are two possibilities:
\begin{itemize}
\item
\textsc{Either} $\| Z_i \|_{\stackrel{\mathcal{B}}{=}_{R_i}} < m$ for every $1\leq i \leq t$. In this case  we have $\delta =  Z_t \ldots Z_1 $ is $\mathcal{B}_{R,\mathbf{V}}$-applicable trivially.

\item
\textsc{Or} $t=1$ and $\| Z_1 \|_{\stackrel{\mathcal{B}}{=}_R} = m$.  In this case, we have $[Z_1]_R <_R [X]_R$ and $\| Z_1 \|_{\stackrel{\mathcal{B}}{=}_R} = \| X \|_{\stackrel{\mathcal{B}}{=}_R}$.  Thus $[Z_1]_R \in \mathbf{V}$ and thus $\delta = Z_1$ is $\mathcal{B}_{R,\mathbf{V}}$-applicable.
\end{itemize}

\item
If for every $\gamma$ such that $[X]_R \stackrel{\ell}{\longmapsto}_R \gamma$, we do not have $\|\gamma \|_{\stackrel{\mathcal{B}}{=}_R}  < m$.
In this case, the algorithm is running in the second \textbf{while}-loop in Fig.~\ref{fig:new_base_II}.  We are able to find a $\gamma$ which is $\mathcal{B}_{R,\mathbf{V}}$-applicable such that $[X]_R \stackrel{\tau}{\longmapsto}_R \gamma$ and $X \stackrel{\mathcal{B}}{=}_R \gamma$, and $\|\gamma \|_{\stackrel{\mathcal{B}}{=}_R} = \| X \|_{\stackrel{\mathcal{B}}{=}_R} = m$.
If it happens that we can find such $\gamma$ satisfying $X \simeq_R \gamma$, we can use the induction hypothesis I to confirm immediately that $\delta$ is $\mathcal{B}_{R,\mathbf{V}}$-applicable and $\gamma \stackrel{\mathcal{B}}{=}_R \delta$ (hence $X \stackrel{\mathcal{B}}{=}_R \delta$).
Thus in the following we will assume that $\gamma  \not\simeq_R X$. That is, $[X]_R \stackrel{\tau}{\longmapsto}_R \gamma  \not\simeq_R X$.
Since $X \simeq_R \delta$, there is a matching of $[X]_R \stackrel{\tau}{\longmapsto}\gamma$ from $\delta$, say $\delta \stackrel{\simeq_R }{\Longrightarrow} \cdot \stackrel{\not\simeq_R}{\longrightarrow} \zeta$ for some $\zeta$ such that $\gamma \simeq_R \zeta$. Because  $\delta$ is itself a $\simeq_{R}$-prime-decomposition, we must have $\delta \Longleftrightarrow \cdot \stackrel{\tau}{\longrightarrow} \zeta \simeq_R \gamma$, which is induced by  $[Z_t]_{R_t} \stackrel{\tau}{\longmapsto}_{R_t} \eta$ for some $\eta$ , and $\zeta = \eta.Z_{t-1}\ldots Z_1$.
Suppose $\mathtt{dcmp}^{\widehat{\mathcal{B}}}_{R_t}(\eta) = [Y_s]\ldots[Y_1]$ ($s \geq 0$), then $\mathtt{dcmp}^{\widehat{\mathcal{B}}}_{R}(\zeta)$ must be in the form $[Y_s]\ldots[Y_1]. [Z_{t-1}]_{R_{t-1}}.\ldots [Z_1]_{R_1}$.
Because $\gamma \simeq_R \zeta$, we have $\mathtt{dcmp}^{\widehat{\mathcal{B}}}_{R}(\gamma) =[Y_s]\ldots[Y_1]. [Z_{t-1}]_{R_{t-1}}.\ldots [Z_1]_{R_1}$. According to induction hypothesis I, $\gamma \stackrel{\mathcal{B}}{=}_R  Y_s\ldots Y_1.Z_{t-1}\ldots Z_1$, thus $\mathtt{dcmp}^{\widehat{\mathcal{B}}}_{R}(\zeta)$ is $\mathcal{B}_{R,\mathbf{V}}$-applicable, and  $\| \mathtt{dcmp}^{\widehat{\mathcal{B}}}_{R}(\zeta) \|_{\stackrel{\mathcal{B}}{=}_R} = \| Y_s\ldots Y_1.Z_{t-1}\ldots Z_1 \|_{\stackrel{\mathcal{B}}{=}_R} = m$. In summary:
\begin{itemize}
\item
$\|Y_s \ldots Y_1. Z_{t-1} \ldots Z_1 \|_{\stackrel{\mathcal{B}}{=}_R} = m$.

\item
$\| Z_{t-1} \ldots Z_1 \|_{\stackrel{\mathcal{B}}{=}_R} \leq m$.

\item
$\| Z_1 \|_{\stackrel{\mathcal{B}}{=}_R} > 0$.
\end{itemize}
Now we have two possibilities:
\begin{itemize}
\item
If $t\geq 2$.  Because $\| Z_1 \|_{\stackrel{\mathcal{B}}{=}_R} > 0$, thus $\| Z_{t-1 } \ldots Z_1 \|_{\stackrel{\mathcal{B}}{=}_R} > 0$. Let $S = \mathbf{Rd}^{\mathcal{B}}_{R}(Z_{t-1 } \ldots Z_1)$.  By induction $R_t \subseteq S$.  Then
 we have $\|Y_s \ldots Y_1\|_{\stackrel{\mathcal{B}}{=}_{S}} < m$. Since $\eta \simeq_{R_t} Y_s \ldots Y_1$, it is clear $\eta \simeq_{S} Y_s \ldots Y_1$, by Lemma~\ref{lem:pre_correctness}, $\eta$ is $\mathcal{B}_{S,\mathbf{V}}$-applicable, $\eta \stackrel{\mathcal{B}}{=}_{S} Y_s \ldots Y_1$, and  $\|\eta \|_{\stackrel{\mathcal{B}}{=}_{S}} = \|Y_s \ldots Y_1\|_{\stackrel{\mathcal{B}}{=}_{S}}  < m$.  Now let us investigate $Z_t$. If $Z_t \in S$, $Z_t \ldots Z_1$ is trivially $\mathcal{B}_{R,\mathbf{V}}$-applicable. If  $Z_t \not\in S$, we have got the fact that $[Z_t]_{S} \stackrel{\tau}{\longmapsto} \eta$ and $\|\eta \|_{\stackrel{\mathcal{B}}{=}_{S}} < m$.  This fact tells us that $[Z_t]_{S}$ should have been treated before $[X]_R$. In other words, $[Z_t]_{S} \in \mathbf{V}$, which means that $\delta = Z_t \ldots Z_1$ is $\mathcal{B}_{R,\mathbf{V}}$-applicable.

\item
If $t = 1$. In this case, we have the following facts: $[Z_1]_{R} \stackrel{\tau}{\longmapsto}_{R} \eta$,   $\mathtt{dcmp}^{\widehat{\mathcal{B}}}_{R}(\eta)= [Y_s]\ldots[Y_1]$, and $\|Y_s \ldots Y_1 \|_{\stackrel{\mathcal{B}}{=}_R} = m$.  We can show $\| Y_1 \|_{\stackrel{\mathcal{B}}{=}_R} > 0$, using the same argument for proving $\| Z_1 \|_{\stackrel{\mathcal{B}}{=}_R} > 0$ before.
Let us say $\eta = \eta'. W$ ($\eta'$ can be $\epsilon$), and let $S = \mathsf{Rd}_{R}(W)$.
%$S = \mathbf{Rd}^{\mathcal{B}}_{R}(W)$.
Thus $\mathtt{dcmp}^{\widehat{\mathcal{B}}}_{R}(W)= [Y_i]\ldots[Y_1]$ and $\mathtt{dcmp}^{\widehat{\mathcal{B}}}_{S}(\eta')= [Y_s]\ldots[Y_{i+1}]$ for some $1 \leq i \leq s$. and $\eta'$ is $\mathcal{B}_{R,\mathbf{S}}$-applicable by induction.
\begin{enumerate}
\item
If  $\|Y_i \ldots Y_1 \|_{\stackrel{\mathcal{B}}{=}_R} < m$. we can use Lemma~\ref{lem:pre_correctness} to prove:
  \begin{itemize}
   \item
   $W$  is  $\mathcal{B}_{R,\mathbf{V}}$-applicable and $W \stackrel{\mathcal{B}}{=}_R Y_i\ldots Y_1$, and
   \item
   $S \subseteq \mathbf{Rd}^{\mathcal{B}}_{R}(W)$.
   \end{itemize}
Since we know $\eta' \simeq_S Y_s\ldots Y_{i+1}$,  then $\eta' \simeq_{\mathbf{Rd}^{\mathcal{B}}_{R}(W)} Y_s\ldots Y_{i+1}$. Because  $\|Y_s \ldots Y_{i+1} \|_{\stackrel{\mathcal{B}}{=}_{\mathbf{Rd}^{\mathcal{B}}_{R}(W)}} < m$, we can use induction to prove $\eta'$ is  $\mathcal{B}_{\mathbf{Rd}^{\mathcal{B}}_{R}(W),\mathbf{V}}$-applicable and $\eta' \stackrel{\mathcal{B}}{=}_{\mathbf{Rd}^{\mathcal{B}}_{R}(W)} Y_s \ldots Y_{i+1}$.  In summary, $\eta$ is  $\mathcal{B}_{R,\mathbf{V}}$-applicable and $\eta \stackrel{\mathcal{B}}{=}_{R} Y_s \ldots Y_{1}$.

\item
If $\|Y_i \ldots Y_1 \|_{\stackrel{\mathcal{B}}{=}_R} = m$.  In this case, $Y_s \ldots Y_{i+1} \in \mathbf{Rd}^{\mathcal{B}}_{R}(Y_i\ldots Y_1)$, this implies that $Y_s \ldots Y_{i+1} \Longrightarrow \epsilon$, and therefore
\[
 \eta  = \eta'.W \simeq_R Y_s \ldots Y_1 \Longrightarrow Y_i \ldots Y_1 \simeq_R  W
\]
which implies $\eta \Longrightarrow_R W$.   Now we have $Z_1 \Longrightarrow_R \eta \Longrightarrow_R W$, thus $[W]_R <_R [Z_1]_R$.  On the other hand, by $[Z_1]_R = \mathtt{dcmp}^{\widehat{\mathcal{B}}}_{R}([X]_R)$, we have $[Z_1]_R < [X]_R$. Therefore $W <_R Z_1 <_R X$.  By Lemma~\ref{lem:pre_correctness},  $W$  is  $\mathcal{B}_{R,\mathbf{V}}$-applicable and $W \stackrel{\mathcal{B}}{=}_R Y_i\ldots Y_1$.  Now in the same way of case~1, we can prove
$\eta'$ is  $\mathcal{B}_{\mathbf{Rd}^{\mathcal{B}}_{R}(W),\mathbf{V}}$-applicable and $\eta' \stackrel{\mathcal{B}}{=}_{\mathbf{Rd}^{\mathcal{B}}_{R}(W)} Y_s \ldots Y_{i+1} \stackrel{\mathcal{B}}{=}_{\mathbf{Rd}^{\mathcal{B}}_{R}(W)} \epsilon$. In summary, $\eta$ is  $\mathcal{B}_{R,\mathbf{V}}$-applicable and $\eta \stackrel{\mathcal{B}}{=}_{R} Y_s \ldots Y_{1}$.
\end{enumerate}
Up to now,  we have shown  that $[Z_1]_R <_R [X]_R$ and $[Z_1]_R \stackrel{\tau}{\longmapsto}_R \eta$ such that
$X \stackrel{\mathcal{B}}{=}_R \gamma \stackrel{\mathcal{B}}{=}_R \eta$ with $\| \eta\|_{\stackrel{\mathcal{B}}{=}_R} = m$.  This fact means that $[Z_1]_R$ should be chosen to test the expansion condition in the \textbf{while}-loop before $[X]_R$.  Now we can do without difficulty to check expansion conditions to ensure that $[Z_1]_R$ is put into $\mathbf{V}$ before $[X]_R$.
\end{itemize}
\end{itemize}

\subsection{Proof of Proposition~\ref{prop:correctness}}
We use $\delta$ to indicate $Z_t \ldots Z_1$.  By Lemma~\ref{lem:applicable}, $\delta$ is $\mathcal{B}_{R,\mathbf{V}}$-applicable. In other words, $\mathtt{dcmp}^{\mathcal{B}}_{R, \mathbf{V}}(\delta)$ is known.

The proof goes by directly exploring the expansion conditions. Only to remember the following fact:
\begin{enumerate}
\item
If $\alpha \simeq_R \beta$, then $\alpha \stackrel{\mathcal{D}}{=}_R \beta$.

\item
If $\alpha \simeq_R \beta$, and $\alpha, \beta$ are $\mathcal{B}_{R,\mathbf{V}}$-applicable, then $\alpha \stackrel{\mathcal{B}}{=}_R \beta$.
\end{enumerate}
By studying the the expansion conditions, we can confirm that $\mathtt{dcmp}^{\mathcal{B}}_{R, \mathbf{V}}(\delta)$ can successfully pass this testing.  Now it is important to take notice of Lemma~\ref{lem:unique_expansion}. It ensures that at most one decomposition candidate can pass the testing. Thus we can confirm that $\mathbf{Dc}^{\mathcal{B}}_R([X]_R) = \mathtt{dcmp}^{\mathcal{B}}_{R, \mathbf{V}}(\delta)$, which implies $X \stackrel{\mathcal{B}}{=}_R Z_t \ldots Z_1$.

% that's all folks
\end{document}